\newcommand{\commentComete}[1]{} 
\newcommand{\commentMS}[1]{\commentComete{{\textbf{MS: }} #1}} 
\def\Aset{\mathcal{A}}
\def\Bset{\mathcal{B}}
\def\Dset{\mathcal{D}}
\def\Fset{\mathcal{F}}
\def\Lset{\mathcal{L}}
\def\Oset{\mathcal{O}}
\def\Qset{\mathcal{Q}}
\def\Sset{\mathcal{S}}
\def\Tset{\mathcal{T}}
\def\Xset{\mathcal{X}}
\def\Yset{\mathcal{Y}}
\def\Wset{\mathcal{W}}
\newcommand{\Asym}[2]{a_{#1}^{#2}}
\newcommand{\Bsym}[2]{b_{#1}^{#2}}
\newcommand{\Fsym}[2]{f_{#1}^{#2}}
\newcommand{\Aseq}[2]{{\alpha_{#1}^{#2}}}
\newcommand{\Bseq}[2]{{\beta_{#1}^{#2}}}
\newcommand{\Fseq}[2]{{\phi_{#1}^{#2}}}
\newcommand{\prob}[2]{#1(#2)}
\newcommand{\Cprob}[3]{#1(#2 | #3)}
\newcommand{\IIHS}{\text{\textrm{IIHS}}\xspace}
\newcommand{\IIHSs}{\text{\textrm{IIHSs}}\xspace}
\def\Isys{{\mathscr I}} 
\newcommand{\si}{\hat{s}}
\def\Lfun{\mathit{labels}}
\theoremstyle{plain}
\newtheorem{theorem}{Theorem}
\newtheorem{lemma}[theorem]{Lemma}
\newtheorem{corollary}[theorem]{Corollary}
\newtheorem{proposition}[theorem]{Proposition}
\newtheorem{definition}[theorem]{Definition}
\newtheorem{convention}[theorem]{Convention}
\newtheorem{remark}[theorem]{Remark}
\newtheorem{example}{Example}
\let\phi\varphi
\newcommand{\PP}{{\mathbf P}}
\newcommand{\ihs}{\operatorname{{\mathcal I}}}
\newcommand{\val}{\operatorname{\mathit{val}}}
\newcommand{\last}{\operatorname{\mathit{last}}\xspace}
\newcommand{\distr}{{\operatorname{{\mathcal{D}}}}}
\newcommand{\cone}[1]{\langle {#1} \rangle}
\newcommand{\qi}{\hat{q}}
\newcommand{\fpaths}{{\operatorname{{Paths}^\star}}}
\newcommand{\paths}{{\operatorname{{Paths}}}}
\newcommand{\cpaths}{{\operatorname{{CPaths}}}}
\newcommand{\CM}{{\mathcal M}}
\def\bigsqcap{\mathop{\rule[-0.2ex]{.07em}{2.17ex}
                \rule[1.8ex]{0.5em}{.17ex}
                \rule[-0.2ex]{.07em}{2.17ex}}}
\DeclareMathOperator{\trace}{{\mathit{trace}}}
\newcommand{\Traces}{\mathit{Traces}}
\newcommand{\psp}{{\operatorname{\mathbf{P}}}}
\newcommand{\pow}{\mathcal{P}} 
\newcommand{\avail}{\operatorname{\mathit{enab}}}
\newcommand{\sift}{\operatorname{\mathit{sift}}}
\newcommand{\view}{\operatorname{\mathit{view}}}
\newcommand{\smallprobsum}[1]{\mkern4mu{\textstyle\circ\mkern-15.5mu\sum_{#1}\:}}
\newcommand{\bigprobsum}[1]{{\;\displaystyle\odot\mkern-21mu\sum_{#1}}}
\newcommand{\probsum}[1]{\mathchoice{\bigprobsum{#1}}{\smallprobsum{#1}}{\smallprobsum{#1}}{\smallprobsum{#1}}}
\newcommand{\calx}{\mathcal{X}}
\newcommand{\caly}{\mathcal{Y}}
\newcommand{\calz}{\mathcal{Z}}
\newcommand{\calk}{\mathcal{K}}
\newcommand{\calh}{\mathcal{H}}
\newcommand{\call}{\mathcal{L}}
\newcommand{\calu}{\mathcal{U}}
\newcommand{\maxj}[2]{\max^{#1}_{#2}}
\newcommand{\qm}[1]{``#1''} 
\newcommand{\gramor}{\ \,|\ \,}
\newcommand{\paral}{\;|\;}
\newcommand{\labarr}[1]{\overset{#1}{\longrightarrow}}
\newcommand{\smallsum}[1]{\textstyle{\sum_{#1}\:}}
\newcommand{\bigfrac}[2]{\frac{\displaystyle #1}{\displaystyle #2}}
\newcommand{\outp}[1]{\overline{#1}}
\newcommand{\ccsp}{CCS$_p$}
\def\defsym{\stackrel{\textrm{def}}{=}}
\newcommand{\mscite}[2]{\begin{flushright} \emph{``#1''} \\ #2 ~ \end{flushright} \vspace{0.5cm} } 
\def\diameter{\delta}
\newcommand{\border}[3]{#1_{\left\langle #2 \right\rangle}(#3)}
\def\vtt{VT^{+}}
\def\val{\mathit{Val}}
\def\ind{\mathit{Ind}}
\def\bound{\mathit{Bnd}}
\def\gain{\mathit{gain}}
\def\bingain{\mathit{gain}_{\mathit{bin}}}
\def\guess{\mathit{guess}}
\newcommand{\revision}[1]{#1}             
\begin{document}


\thispagestyle{empty}
	\begin{multicols}{2}
		\begin{large}
			\noindent
			\textsc{\'Ecole Polytechnique} \\ \\
			\textsc{PhD. Thesis - \emph{Th\`ese de Doctorat}} \\ 
			\textsc{Sp\'ecialit\'e Informatique}
		\end{large}
		\begin{flushright}
			\includegraphics[width=0.75\linewidth]{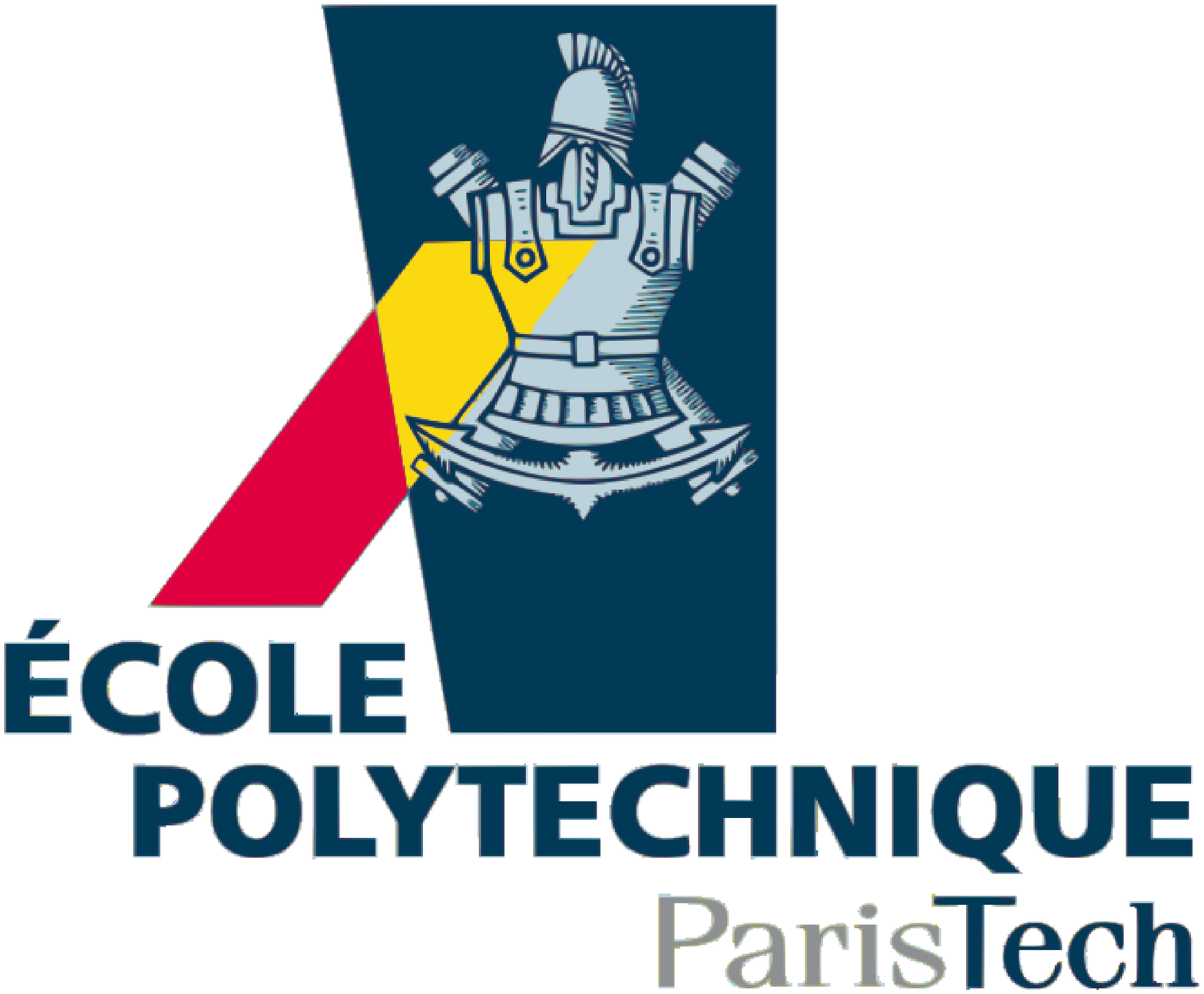}
		\end{flushright}
	\end{multicols}
	\ \\
\begin{center}
	\begin{LARGE}
			\textsc{{Formal approaches to information hiding:}}
	\end{LARGE}
	\begin{Large} \ \\
		\textsc{An analysis of interactive systems, statistical disclosure control, and refinement of specifications}
	\end{Large} \\
	\ \\ \ \\ \ \\
	\begin{large}
		\textsc{M\'ario S. Alvim} \\ 
	\end{large}
	\textsc{LIX, \'{E}cole Polytechnique} \\
	\textsc{Palaiseau, France}
	\ \\ \ \\ \ \\
	\emph{Supervisor}\\
	\textsc{Catuscia Palamidessi} \\
	\ \\ \ \\
	\emph{Rapporteurs}\ \\
	\textsc{Gilles Barthe} \\
	\textsc{Michael Mislove} \\
	\ \\
	\emph{Examinateurs} \ \\
	\textsc{B\'eatrice B\'erard} \\
	\textsc{St\'ephanie Delaune} \\
	\textsc{Lo\"{i}c H\'{e}lou\"{e}t} \\
	\textsc{Daniel Le M\'etayer} \\
	\textsc{Geoffrey Smith}  \\
	\ \\ \ \\ 
	\textsc{$12^{th}$ of October 2011}
\end{center}

\thispagestyle{empty}

\cleardoublepage
\thispagestyle{empty}


	\begin{multicols}{2}
		\begin{center}
			\includegraphics[width=0.75\linewidth]{figures/polytechnique_logo.eps} \\
			\'Ecole Polytechnique \\
			\includegraphics[width=0.75\linewidth]{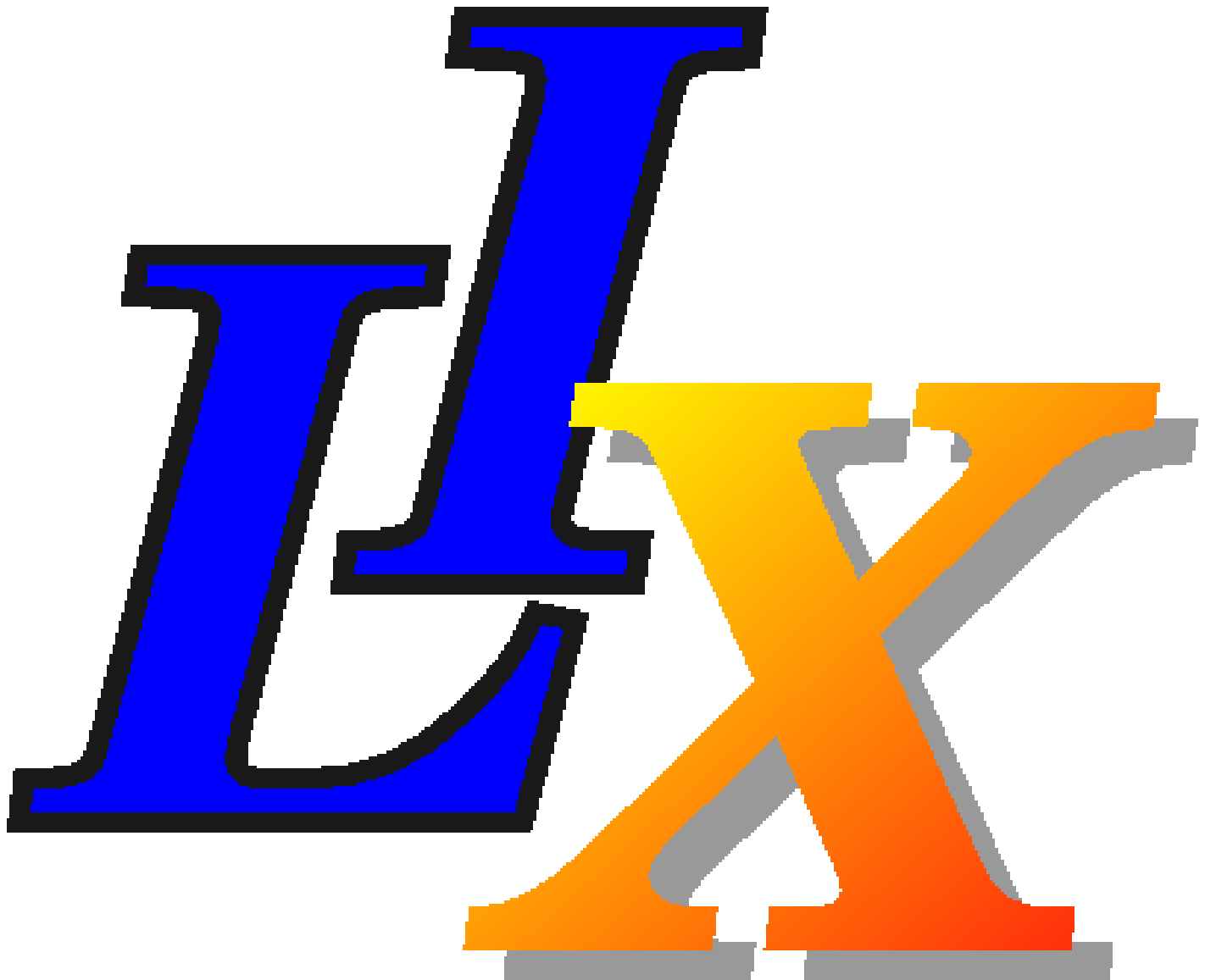} \\
			Laboratoire d'Informatique \\			
		\end{center}
	\end{multicols}
	
	\ \\ \ \\ \ \\ \ \\
	
	\begin{center}
		\includegraphics[width=0.5\linewidth]{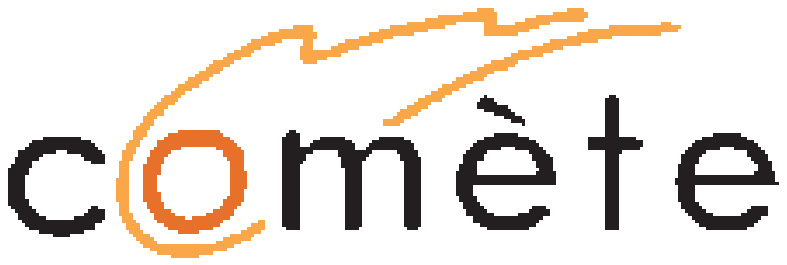} \\
		\'Equipe Com\`ete \\
	\end{center}
		
	\ \\ \ \\ \ \\ \ \\
	
	\begin{multicols}{2}
		\begin{center}
			\includegraphics[width=0.75\linewidth]{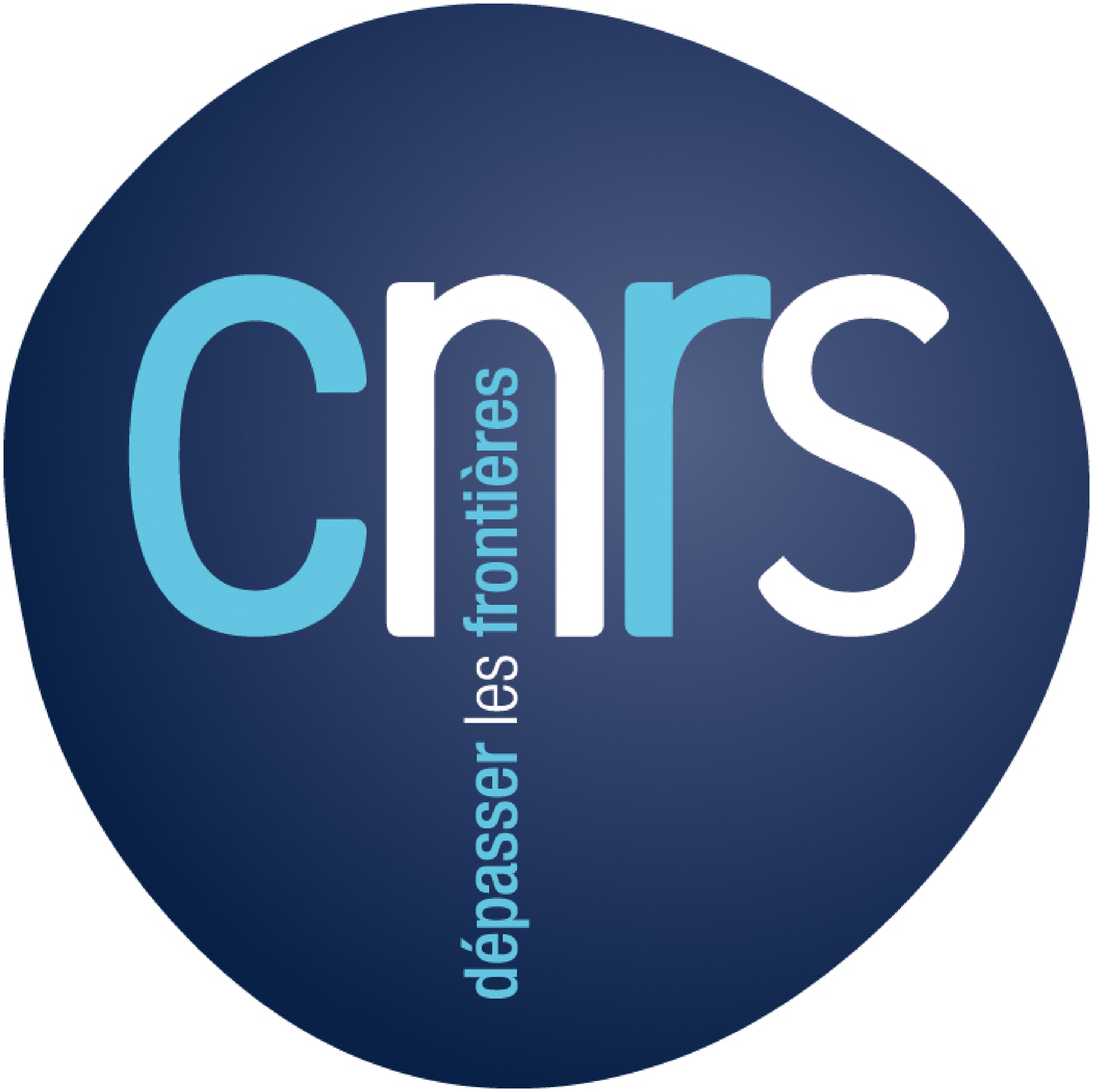} \\
			Centre National de la Recherche Scientifique \\
			\includegraphics[width=0.75\linewidth]{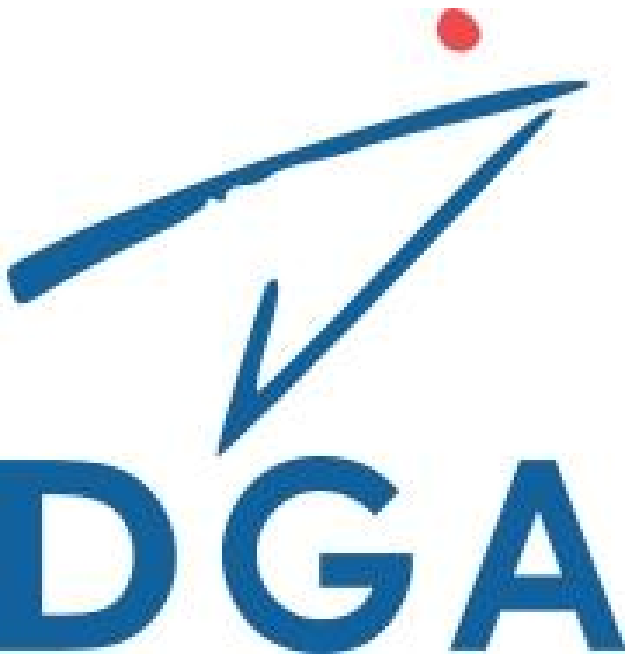} \\
			D\'el\'egation G\'en\'erale de l'Armement \\
		\end{center}
	\end{multicols}
	
\thispagestyle{empty}

\frontmatter
\setsecnumdepth{subsection}
\settocdepth{subsection}

\tableofcontents

\newpage
\listoffigures

\newpage
\listoftables

\newpage
\chapter{Acknowledgements}
\mscite{Praise the bridge that carried you over.}{George Colman}
Every piece of work is produced within a context, and naturally this thesis is no exception. I want to dedicate this space to express my gratitude to some people that have helped to create an environment of scientific, material, and emotional support, which was crucial to the development of my work over the past three years. I am deeply grateful to all these people, and the influence they have had in this work is only a small part of the influence and importance they have in my life.

First of all, I will always be deeply grateful to Catuscia Palamidessi for her outstanding work as my thesis supervisor. During these three years she has provided a stimulating and exciting scientific environment, endowed with all the material and logistic support a student could ever need. Her passion for science is contagious, and her brilliance and persistence are qualities that I can only hope to be fortunate enough to achieve someday. And not only is she a widely recognized researcher, but she is also a remarkable human being, whose kindness and ethics have set an example that I will always keep with me in academia and for life. It is with sincere joy that I can say that, besides the fruitful scientific cooperation, we were able to create a deep link of friendship, and I will do my best so that both can last for life. 

Another person of fundamental importance in my path to this day is Elaine Pimentel. As the first scientific tutor I have ever had, and later as my Master's program supervisor, she was the one welcoming me to the fascinating world of academia. She guided my first steps in research, and her dedication and intelligence are remarkable. Elaine was the strongest supporter I have ever had for doing a doctoral program abroad, especially in the early times when not even my family was convinced yet it was a good idea. More than once Elaine was a thoughtful friend and a wise advisor, who helped me figure out solutions for practical problems that, in some moments, made me doubt I could get to the end of this program. Thank you, Elaine, very much for it all.

I would also like to thank the CNRS (\emph{Centre National de la Recherche Scientifique}) and the DGA (\emph{Direction G\'en\'erale de l'Armement}) for providing the funds for these three years of research in France. I also thank INRIA for all the financial and logistical support with respect to scientific conferences, events, and work trips. 

I am grateful to the members of my jury, who kindly gave their time to go through my work and evaluate it. Thanks to B\'eatrice B\'erard, St\'ephanie Delaune, Lo\"{i}c H\'{e}lou\"{e}t, Daniel Le M\'etayer, and Geoffrey Smith. And special thanks to my \textit{rapporteurs} Michael Mislove and Gilles Barthe, who produced the evaluation report for my thesis. I am honored to have had the opportunity to have such a high qualified jury.

I would also like to thank all the people from the Graduate School (\emph{\'Ecole Doctorale}) of \'Ecole Polytechnique, especially Audrey L\'emarechal for her help with the documentation regarding my stay in France, Fabrice Baronnet for his administrative work, and Christine Ferret for everything involving the thesis defense.

I feel especially fortunate for having had the opportunity to work in such a stimulating environment as is the LIX laboratory (\emph{Laboratoire d'Informatique de l'\'Ecole Polytechnique}), and in particular the Com\`ete team. It is with a weight in my heart that I leave all these amazing people. I am deeply grateful to Frank Valencia, who gave me one of the warmest welcomes I got in my new life in Europe. Frank was not only a teacher, but a colleague, a gym companion, and a good friend. I am also grateful to his wife, Sara S\"odergren, and to their son, Felipe Valencia, for all the good moments shared. Thanks also to Andr\'es Aristiz\'abal for his kindness and for always being ready to help; to Carlos Olarte for the help, friendship and good moments shared together (I will never forget that it was Carlos who took me on my first walk in Paris, and introduced me to the Eiffel Tower); to Sophia Knight for the shared laughs, food, jokes and complaints that make our \qm{love-hate} friendship unique; and to Justin Dean, Sophia's husband, who is a remarkably kind and smart guy with an interesting view of life. Thanks to Dale Miller, for always having wise advice to offer when I needed it, and thanks as well to Catuscia and Dale's kids, Nadia and Alexis, for the good moments shared. I am also grateful to Christele Braun, Ehab El Salamouny, J\'eremy Dubrueil, Jesus Aranda, Lili Xu, Luis Pino, Marco Giunti, Marco Stronati, Nicol\'as Bordenabe, Raluca Diaconu, Romain Beauxis, and Sylvain Pradalier, who, even if I did not have the opportunity to work with them directly, helped make LIX such a great environment.

I would like to thank my co-authors, with whom I have had the opportunity not only to cooperate scientifically, but also to create friendships. Thanks to Miguel E. Andr\'es for our fruitful collaboration, the constant good mood, and the always stimulating \qm{joke-fights}. Thanks to Konstantinos (Kostas) Chatzikokolakis for all the work we developed together, the enlightening discussions about so many subjects, and the good moments shared. Thanks to Pierpaolo Degano, with whom I had the pleasure of collaborating and learning from.

The team of administrative support at LIX was also fundamental for my work. I would like to thank Marie-Jeanne Gaffard for her remarkable competence and dedication, which have frequently saved me from a great deal of trouble. Her professional behavior is a model to be followed, and I wish I could encounter people like her everywhere I will ever work. I am also grateful to Val\'erie Lecomte, for the countless times she helped me, even when it was not her duty, always with the characteristic competence and sympathy. I cannot forget Corinne Poulain, who guided me through the endless administrative maze when I arrived in France. Thanks also to James R\'egis for the technical support; and to Isabelle Biercewicz and Lydie Fontaine for the assistance in my first years at LIX. I would also like to say a couple of words about Ryna Lam Pech, whose cheerful smile and always good mood made each coffee time in the cafeteria an even more enjoyable moment.

I am also grateful to the experienced scientists who have shared part of their vast knowledge with me, either in conferences, workshops or informal meetings, and reinforced my view that people in academia are not only brilliant, but usually good human beings as well. Thanks especially to Geoffrey Smith for sharing his expertise with me in so many insightful conversations, and for organizing the exciting workshop on information flow at Florida International University. I will not forget the hospitality he, his wife Elena, his sons Daniel and David, and the adorable Yoshi offered in Miami. Thanks also to Prakash Pananganden for the lectures at the SFM-10:QAPL summer school in Bertinoro, and also for the opportunity to participate in the workshop on quantum and classic information flow at the Bellairs Research Institute. 

I cannot proceed without mentioning all the amazing friends I have in Brazil, who were fundamental in the background that brought me here. Even being far away, they are constantly in my mind, and I always count down the days to the next time I will see them again. Thanks to Aline Miranda, whom I have had the privilege of knowing and whose friendship I enjoy very much; to Aline Resende, an incredible friend, on whom I know I can always count on at any time of day or night, and with whom I have had some of the most joyful and memorable moments of all my life; to An\'isio Lacerda, the talented and sensible guy whom I always enjoyed talking to about any subject (serious or not); to Deznie Lopes, who always has a smile to offer; to Katia Lage, the sweet and kind friend who is always there to help others; to Lara Coelho, the funny and practical girl, whose visit to Paris was one of the highlights of my time in the city; and to Marina Cruz, my childhood friend, the one I have known for the longest in my life and whose love always warms up my heart. I am also deeply grateful to Adriani Quatrini, who played such an important role of support and understanding during one of the darkest moments in my last three years; and to Giselle Moura, who has cared so much for me and was the main force driving the process that literally changed my face and, therefore, my life (for much better).

I had never left Brazil at all until the day I moved to Paris, and when I arrived in Europe I did not know a single person on this side of the Atlantic Ocean. It was a big turning point in my life, and I am so glad that I decided to come, for these three years in Paris were not only a period of professional growth, but also of incredible personal learning. I have had the pleasure of meeting here some of the most remarkable human beings I have ever met, both at the professional and personal levels. In particular, our \qm{sweet, sweet Maunoury}, the building shared as home by so many foreign students at \'Ecole Polytechnique, has been the stage of countless adventures, memorable moments, and deep learning. Without the companionship of the people I met there, I would not have been able to enjoy my stay in France as much, and therefore my work would not have been as productive. I would like to thank each and every one of the people I met in Maunoury for the friendship that has marked me so deeply. Also, I want to thank each one for particular things that I will keep in my memory forever. Thanks to Saddaf Shabbir for all the philosophical discussions by the lake during summer (or until late night otherwise), that have enlightened me so much in so many subjects; to Andreas Engelhardt for the constant companionship and mutual-understanding which have so many times lightened the weight of being abroad; to Nadia Vertti for the happiness and cheerfulness that could always make me smile at any time; to Keesjan de Vries for all the awesome trips shared together (\emph{Do you wanna know why? Well...}); to Ricardo Kawahara for sharing the fun of nights out, and also the frustration of the way back home by the Noctilien 122; to Micha\l{} Zydor for the uncountable movies seen together in Paris; to Fabien Immler for being my \qm{German little brother}; to Alex Rinke for the hospitality during the winter holidays in Berlin in 2009/2010; to Oliver Valencia for the fun moments at B\^obar; to Kalle Backlund, Anna Folke Larsen, and Uli Schneider for all the unforgettable evenings at their place in Rue Guisarde and at Chez Georges; to Steffen Lohrey and Marie Le Mouel for the nice evenings watching Audrey Hepburn movies in my room; to Chiara Altomare, Manuele Aufiero, Paolo Carozzo and Lorenzo Sponza (the \qm{Italian mafia}) for the constant cheerfulness in our beloved international kitchen; to Benjamin Mosk for the energy to never say no to a night out dancing; to Maria Rosario (Charo) Mestre for the company not only in Paris, but also in Frankfurt; to \'Alvaro Izquierdo for the constant company in the gym, and the fun trips together; to Amy Gilson, Anton Karrman, Davi Vasconcellos, Leland Ellison, Lysandra Alves, and Michael Martin for the unforgettable Summer of 2009; to Citlali Cabrera for her kindness in every moment, and the nice dinners she offered to me; to Igor Reshetnyak for always being ready to help in anything; to Th\'eo Touvet for the rare example of confident and unique life choices; to Tom\'as Lungenstrass for the constant smile and good mood; and to Fran\c{c}ois Wirion and Julia Duras for the first moments shared in the doctoral program. Thanks also to Alex Lang, Alfredo Parra, Daniel Ruiz, Federico C\'ardenas, Benjamin Uekermann, Fredrik Hallgren, Henri de Belsunce, Herbert Mangesius, Ivan Moschevitin, Joe Gault, Nikita Kazarinov, Pedro Vit\'oria, Przemys\l{}aw Chojecki, Sara Rome, and Seydou Traor\'e for all the unforgettable moments. I cannot forget Hannah Schneider and Sofia Karlsson, who have not lived in Maunoury but are part of the family, and I would like to thank them for the friendship and hospitality when I visited both Cologne and Stockholm. 

It was not only on campus, however, that I met friends. Among the many amazing people I met in Paris, and all over the world, are Alexandra Silva, good company in several conferences and summer schools, whom I hope to meet often, both as a friend and as a colleague; Diogo Arbigaus, the kind and good friend who, even though he is Brazilian, I have met only in Paris; Maria Poulaki, whose refreshing company and kindness always make me feel good; Nicol\'as Lopez and all the Spanish crowd, whose parties in Rue Souflot will be always in my memory; and Izabel Rezende, a family member away from home, who was an essential and kind support during my stay in France.  

I often say that we do not have much control over our lives, and that the best we can do is to try to be prepared enough to catch a good opportunity when it shows up. Today I can look back and be glad to say that I caught at least two life-time opportunities in the past three years. The first one was on the $1^{st}$ of October 2008, when I landed in Paris to start my doctoral program at \'Ecole Polytechnique. The second one was on the $19^{th}$ of March 2010, when I met Trevor Ray Tisler. Meeting him was a turning point in my life, and his emotional support has paved the road so I could work with a lighter spirit. I am grateful for the patience with which he has revised my English writing so many times, the dedication he has shown to me even being overseas for over a year now, and for his love, support and presence in my life.

Finally, I would like to thank my family, of whom I am so proud, for the love and support during my whole life, and especially during the challenges these past three years have imposed on me. Thanks to my mother, Maria Ang\'elica, who has always been a model human being for me, as a strong yet sweet woman, and who gives me strength in hard moments and shares my joy in the good ones; to my brother Marco Ant\^onio, who has set an example for me with his dedication, ethical behavior and kindness that are a constant in everything he does; to my brother Marcus Vin\'icius, whose particular sense of humor and \qm{tough} behavior are not enough to hide a kind heart and a person one can always count on; to my step-father Mario Montoya, who is a remarkable human being, and who has given me more support, understanding and love than my biological father has ever done; to my sisters in law Luciana Salom\~ao and D\'ebora Pires, for being like real sisters, and for the countless joyful moments shared; and to my cousin Adriana de Lima, for always being by my side and supporting me.

I apologize to the several people that played an important role in my way and who have not found their name mentioned here: I am sorry if my memory played a trick on me.
\begin{flushright}
	\vspace{-0.23cm}
	\emph{M\'ario S. Alvim\\ Paris, December 2011}
\end{flushright}

\newpage
\thispagestyle{empty}
\begin{abstract}
	In this thesis we consider the problem of information hiding in the scenarios of interactive systems, statistical disclosure control, and refinement of specifications. We apply quantitative approaches to information flow in the first two cases, and we propose improvements for the usual solutions based on process equivalences for the third case.

In the first scenario we consider the problem of \emph{defining the information leakage in interactive systems} where secrets and observables can alternate during the computation and influence each other. We show that the information-theoretic approach which interprets such systems as (simple) noisy channels is not valid. The principle can be recovered, however, if we consider channels of a more complicated kind, that in information theory are known as channels with memory and feedback. We show that there is a complete correspondence between interactive systems and these channels, and we propose the use of directed information from input to output as the real measure of leakage in interactive systems. We also show that our model is a proper extension of the classical one, i.e. in the absence of interactivity the model of channels with memory and feedback collapses into the model of memoryless channels without feedback.

In the second scenario we consider the problem of \emph{statistical disclosure control}, which concerns how to reveal accurate statistics about a set of respondents while preserving the privacy of individuals. We focus on the concept of \emph{differential privacy}, a notion that has become very popular in the database community. Roughly, the idea is that a randomized query mechanism provides sufficient privacy protection if the ratio between the probabilities that two adjacent datasets give \revision{a certain} answer is bound by a constant. We observe the similarity of this goal with the main concern in the field of information flow, namely limiting the possibility of inferring the secret information from the observables. We show how to model the query system in terms of an information-theoretic channel, and we compare the notion of differential privacy with that of min-entropy leakage. We show that differential privacy implies a bound on the min-entropy leakage, and we also consider the utility of the randomization mechanism, which represents how close the randomized answers are, in average, to the real ones. Finally we show that the notion of differential privacy implies a tight bound on utility, and we propose a method that under certain conditions builds an optimal randomization mechanism.

Moving the focus away from quantitative approaches, in the third scenario we address the problem of using \emph{process equivalences to characterize information-hiding properties} (for instance secrecy, anonymity and non-interference). In \revision{the} literature, some works have used this approach, based on the principle that a protocol $P$ with a variable $x$ satisfies such property if and only if, for every pair of secrets $s_1$ and $s_2$, $P[ ^{s_1} / _x]$ is equivalent to $P[ ^{s_2} / _x]$. We show that, in the presence of nondeterminism, the above principle \revision{may rely} on the assumption that the scheduler \qm{works for the benefit of the protocol}, and this is usually not a safe assumption. Non-safe equivalences, in this sense, include complete-trace equivalence and bisimulation. This problem arises naturally when \emph{refining a specification into an implementation}, since usually the former is more abstract than the latter, and the refinement process involves reducing the nondeterminism. The scheduler is, in this sense, a final product of the refinement process, after all the nondeterminism is ruled out. We present a formalism in which we can specify admissible schedulers and, correspondingly, safe versions of complete-trace equivalence and bisimulation. We prove that safe bisimulation is still a congruence. Finally, we show that safe equivalences can be used to establish information-hiding properties.
\end{abstract}
\thispagestyle{empty}

\mainmatter
\setsecnumdepth{subsection}

\chapterstyle{demo}
\pagestyle{ruled}

\chapter{Introduction}
\label{chapter:introduction}
\mscite{There are two mistakes one can make along the road to truth: \\ not going all the way, and not starting.}{Gautama Siddharta}
\section{Information hiding}
\label{section:information-hiding}

In the last few decades the amount of information flowing through computational systems has increased dramatically. Never before in history has a society been so dependent on such a huge amount of information being generated, transmitted and processed. It is expected that this \revision{solid} trend of increase will continue in the near future, if not virtually indefinitely, reinforcing the need for efficient and safe ways to cope with this reality.

Although the efficient and broad dissemination of information is a goal in many situations, there are instances where the disclosure of information is undesirable or even unacceptable. The field of \emph{information hiding} concerns the problem of guaranteeing that part of the information relative to an event is kept secret. In computer science, the term information hiding encompasses a large spectrum of fields. Different fields have distinct historical motivations and the resulting research followed a unique path. The variation of the subfields of information hiding depends on three main factors: (i) \emph{what} one wants to keep secret; (ii) from \emph{which adversary or attacker} does one want to keep it secret; and (iii) \emph{how powerful} the adversary or attacker is.

The field of \emph{confidentiality} (or \emph{secrecy}) refers to the problem of keeping an action secret. One application of confidentiality \revision{is} \emph{cryptographic protocols}, where the sender and the receiver of a message can be known, but the contents of the message itself \revision{are} considered to be sensitive information. Generally, we can say that confidentiality concerns \emph{data}, while the field of \emph{privacy} concerns \emph{people's personal information}. When dealing with privacy, we may be interested in protecting the information about someone (a credit card number, for instance) or the person's identity itself. \emph{Anonymity} is the field that concerns the protection of the identities of agents involved in events. In principle, anonymity can be related to both the \emph{active agent} (often the \emph{sender} of a message), or to the \emph{passive agent} (often the \emph{receiver} of a message). For instance, in the case of a journalist receiving information from a confidential source, the identity of the sender is intended to be secret. As for the case of an intelligence agency sending a coded message to a spy, the identity of the receiver is confidential information. There is yet another kind of anonymity, sometimes referred to as \emph{unlinkability}, where the identity of agents and actions performed are public information, but the linkage between agents and the actions performed should not be determined. One example of unlinkability is a confidential voting system, where both the voters and the final vote count are in the public domain, but the relationship between the voters' identities and the ballots cast is protected.

One application of privacy that has drawn a lot of attention in recent years is the problem of statistical databases. A statistic is a quantity computed from a sample, and the goal of \emph{statistical disclosure control} is to enable the user of the database to learn properties of the population as a whole, while maintaining the privacy of individuals in the sample. The field of statistical databases highlights the delicate equilibrium between the benefits and the drawbacks of the spread of information. A practical example occurs in medical research, where it is desirable that a great number of individuals agree to \revision{give} their personal medical information. With the information acquired, researchers or public authorities can calculate a series of statistics from the sample (such as the average age of people with a particular condition) and decide, say, how much money the health care system should spend next year in the treatment of a specific disease. It is in the interest of each individual, however, that her participation in the sample will not harm her privacy. In our example, the individuals usually do not want to have disclosed their specific status with relation to a given disease, not even to the users querying the database. Some studies, e.g. \cite{Joison:01:EJSP}, suggest that when individuals are guaranteed anonymity and privacy they tend to be more cooperative in \revision{giving} personal information. 

Another important field of information hiding is \emph{information flow}, which concerns the leakage of classified information via public outputs in programs and systems. Consider a system that asks the users a password to grant their access to some functionality. Naturally, the password itself is intended to be secret, however an attacker trying to guess it will always get an observable reaction from the system, whether the response is an acceptance or a rejection of the entered code. In either case, the observable behavior of the system reveals some information about the password, because even if it is not guessed correctly, at least the search space is narrowed (even if, in this case, only slightly).

It is important to note that the subdivisions of information hiding are not mutually exclusive. In a system where public outputs can reveal the identity of agents, for instance, both the problems of information flow and of anonymity are present. The classification is usually based more on the contextual motivation for the problem than on a rigid taxonomy of subfields. In fact, in recent years there has been an active line of research exploring the similarities between problems such as the foundations of anonymity and information flow, and also privacy and information flow. The result has been an increasing convergence between these fields. In this thesis we explore the similarities between information flow, statistical databases, and anonymity.

In a broader context, the importance of information hiding goes far beyond the realm of computer science, and there are a lot of subtle questions that need to be considered carefully. From a political and even philosophical perspective, the unrestricted use of privacy protection can be controversial. Even though it is broadly accepted that people should have the right to exchange e-mails privately, to vote in democratic elections anonymously, and to express their ideas on the Internet freely, there are situations where information protection policies can be argued to have serious drawbacks. The same mechanism that grants a political activist anonymity and free speech on the Internet, while living under a repressive government, also grants a pedophile anonymity to broadcast harmful material. This balance between freedom and control in the virtual media has been the subject of passionate discussion. Independently of whether one's goal is to maximize or to minimize the degree of information protection in a given situation, it is anyway desirable to measure \emph{the extent to which} the information is protected, to define which specific \emph{definition of protection} the information falls under, and \emph{from whom} the information is protected.

In this thesis we avoid the controversy of deciding in which cases the application and extent of information hiding methods are justifiable. Rather, our focus is on measuring the degree of information protection offered by a system, thus making evaluation and comparison of different systems possible . Specifically, we are interested in using concepts of information theory to quantify the leakage of information.
 
\section{Qualitative and quantitative approaches to information hiding: a brief history} 
\label{section:history}

Historically, the research on information hiding has evolved from the simple but imprecise \emph{qualitative approach} toward the more refined, but at the same time more complex, \emph{quantitative approach}. In the following sections we will briefly overview both. We do not intend to provide here an exhaustive study of the subject, but rather to highlight some of the most important contributions of each of these lines of research to the field of information hiding. 

\subsection{The qualitative approach}

The qualitative approach emerged first in the literature of information hiding. The central idea is that, by observing the output of a system, the adversary cannot be completely sure of what the secret information is. The \emph{principle of confusion} says that for every observable output generated by a secret input, there is another secret that could also have generated the same output. In anonymity, for instance, this corresponds to the concept of \emph{possible innocence}, i.e. the impossibility of identifying the culprit with certainty by only observing the system's output. The principle of confusion does not take into consideration the adversary's certainty \revision{about} the value of the secret: it is enough that there be an alternative hypothesis, no matter how unlikely it is. This is also known as the \emph{possibilistic approach}.

One of the first developments in this field dates from 1976, when Bell and La Padula defined the model of \emph{multilevel security systems} \cite{Bell:76:Misc}. In this model the components of a system are classified as either \emph{subjects}, i.e. active entities such as users or processes, or as \emph{objects}, i.e. passive entities such as files. The subjects are divided into \emph{trusted} and \emph{untrusted} entities, and the authors define restrictions on how to manage untrusted objects. The rule \revision{\qm{no read up or write down}} states that untrusted entities can read only from objects of the same or lower levels, and that they can only write into objects of the same or higher levels. This model was developed to support different levels of security, and aimed to ensure that information only flows from lower to higher levels and never in the opposite direction. Each input into and output from the system is labeled with a security level. Any pair of an input and its \revision{corresponding} output is called an \emph{event}. A \emph{view} of a security level $l$ corresponds to the events at level $l$ or lower, and all the events of a higher level are \emph{hidden} to level $l$.

Usually in this model only two levels are distinguished: \emph{high} and \emph{low}. The high level corresponds to sensitive information, which should only be available to some users with special privileges, while the low level corresponds to public information accessible to everyone. The goal of \emph{secure information flow analysis} is, in this context, to avoid leakage from the high level to the low level.

Bell and La Padula's model, however, did not address the problem of leakage of information due to \emph{covert channels}. A covert channel is a way of transmitting information from the high to the low environment by means not designed or intended for this purpose. Consider, for instance, a system where a low user $\ell$ can send a file to a high user $h$, and $h$ has the power to redefine the access rights to the file. The user $h$ can either maintain the permission of $\ell$ to write in the file, or she can change the policy so $\ell$ no longer has access to it. In this scenario, a covert channel between a corrupted high user $h$ and low user $\ell$ can be established as follows. The low user sends a file to the high user, who then uses her power of deciding whether to grant or to deny $\ell$ further access to it to encode a message. In a later stage, $\ell$ tries to write in the file, and an access failure can be interpreted as the bit $0$, while a success can be interpreted as the bit $1$. In this way any message can eventually be sent through the covert channel from the corrupted high user to the low one.

To cope with the threat of covert channels, Goguen and Meseguer developed the concept of \emph{noninterference}\cite{Goguen:82:SAP}. A system is \emph{noninterfering} when the actions of high users do not alter what can be seen by low users. In other words, the low outputs of the system will only reflect the values of the low inputs, independently of what the high inputs are (if any). The authors proposed a model of noninterference that separated the system from the security policies. Their model, nevertheless, was only appropriate for deterministic systems. 

Noninterference, however, may be a too restrictive concept for several practical applications. It does not allow, for instance, the \emph{summarization of data}. It is often the case where a system allows statistical (or summarizing) functions (e.g. mean, total number) to be calculated on its high inputs and then disclosed to low users, even if the high inputs themselves are supposed to be kept secret. These systems are typical in the area of statistical databases, and we will discuss this issue in more detail in Section~\ref{section:privacy-db}. Clearly, a system that allows the summarization of high data for the low environment violates noninterference, since a change on the high input may affect the low output.

Considering this problem, in 1986 Sutherland \cite{Sutherland:86:NCSC} proposed the concept of \emph{nondeducibility on inputs}, which focuses not on whether the output is affected according to a change in the input, but on whether it is possible \emph{to deduce} the input from the output. Under this definition, a system may allow summarization of data and still be secure, since the output of a statistical function does not necessarily allow the adversary to deduce what the inputs are. One drawback of the concept of nondeducibility on inputs is that it assumes that the strongest form of the principle of confusion is enough to ensure security. Notably, it relies on the assumption that \qm{no high value can be ruled out after observing a low value}. This is not a strong enough security guarantee in many real systems. In some cases, even if no high value can be ruled out as a possibility, a single value (or a small set of values) can be much more likely than the others, and in practice it makes little sense to consider the alternatives. This criticism can be seen as an early attempt \revision{to} consider a quantitative approach for information flow, where it is taken into consideration \qm{how much} an attacker learns (or does not learn) about the secret matters.

Another important issue in security systems is the problem of \emph{compositionality}. In \cite{McCullough:87:SAP}, McCullough pointed out the importance of \emph{hook-up security}, i.e. the compositionality of multi-user systems. Usually, real systems are far too complex to be analyzed as a whole, especially because the task of designing and implementing a system is normally divided between teams. Each team is responsible for a number of components that, in a later stage, will be put to work together. It is highly desirable that security properties be verified in each component separately, and that this verification guarantee that the final composite system is also secure. McCullough showed that the concepts of multilevel security systems, noninterference, and nondeducibility on inputs are not composable. As a replacement, he proposed the concept of \emph{restrictiveness}, according to which no high level information should affect the \emph{behavior} of the system, as seen by a low user. 

In \cite{Wittbold:90:SAP} Wittbold \revision{and Johnson} addressed the question of nondeducibility on inputs under a different perspective, showing that it is not a guarantee of absence of leakage. Consider the following algorithm, where $H$ and $L$ stand for the high and the low environments, respectively. Here we assume the variables $x$ and $y$ are binary, and the randomized command $x \gets 0 \oplus_{0.5} 1$ assigns to $x$ either the value $0$ or the value $1$ with $0.5$ probability each.
\ \\
\begin{algorithmic}
	\WHILE{true} 
		\STATE $x \gets 0 \oplus_{0.5} 1$;
		\STATE output $x$ to $H$;
			\STATE input $y$ from $H$;
		\STATE output ($x$ XOR $y$) to $L$;
	\ENDWHILE
\end{algorithmic}
\ \\
In the above algorithm, the low environment only has access to the value ($x$ XOR $y$). Note, however, that the high environment $H$ learns the value of $x$ before having to choose the value of $y$, and therefore it can use this knowledge to encode a message: To transmit the bit $0$, $H$ chooses $y = x$, and to transmit the bit $1$, $H$ chooses $y = 1 - x$. It is clear that there is some flow of information from the high to the low environment, even though $L$ cannot deduce the high input $y$ from the low output ($x$ XOR $y$). Hence, satisfying nondeducibility on inputs does not guarantee a system to be secure. Wittbold \revision{and Johnson} defined, then, the concept of \emph{nondeducibility on strategies}, which means that regardless of what view $L$ has of the machine, no strategy is excluded from being used by $H$.

\subsection{The quantitative approach}

The qualitative approach, although simple and easy to apply, does not reflect reality in many practical situations. In many cases some information leakage is tolerable or even intentional. Take an election protocol. After the final vote count is released, there are fewer possible \revision{hypotheses} concerning who voted for whom than the \revision{hypotheses} available before the votes were cast. In this example there is a natural leakage of information, since the uncertainty about the sensitive information decreases after the observation of the protocol's output. This leakage occurs, however, as a necessary functionality of the protocol. 

In fact, in \revision{most real systems} noninterference cannot be achieved, as typical systems will always leak some information. This does not mean, however, that all systems are equally good or bad, because the amount of leakage usually varies from system to system. Therefore it is important to quantify \emph{how much} leakage a system allows. Quantitative methods are useful to evaluate the extent to which a system is secure, and to compare it to other systems. 

One of the first attempts to quantify information leakage was made by Denning in 1982. In \cite{Denning:82:Misc} she defined the leakage from a state $s$ to a state $s'$ as the decrease in uncertainty about the high information in $s$ resulting from the low information in $s'$. She used the concept of conditional entropy\footnote{The concepts of \emph{entropy}, \emph{conditional entropy} and \emph{mutual information} will be defined formally in Chapter~\ref{chapter:probabilistic-info-flow}. For the moment it is enough to know that \emph{entropy} is a measure of the uncertainty of a random variable; \emph{conditional entropy} is a measure of the uncertainty of one random variable given another random variable; and \emph{mutual information} is a measure of how much information two random variables share.} $H(h_{s}|\ell_{s'})$, where $h_{s}$ is the high information in $s$ and $\ell_{s'}$ is the low information in $s'$. Her definition of leakage was:
\begin{equation*}
	\revision{M_{1} = H(h_{s}|\ell_{s}) - H(h_{s}|\ell_{s'})}
\end{equation*}

If the quantity $M_{1}$ is positive, then it is considered to be the leakage of information. This measure of leakage, however, does not consider the history of low inputs, a problem pointed out by Clark, Hunt and Malacaria in \cite{Clark:07:JCS}. Without the history one cannot summate the increase in knowledge (or decrease in uncertainty) that accumulates between the low states $s$ and $s'$. They proposed, instead, the following measure of leakage:
\begin{equation*}
	\revision{M_{2} = H(h_{s}|\ell_{s}) - H(h_{s}|\ell_{s'},\ell_{s})}
\end{equation*}

Since $H(X|Y,Z) \leq H(X|Y)$ for all random variables $X$, $Y$ and $Z$, we have $M_{1} \leq M_{2}$. The quantity $M_{2}$ corresponds to the Shannon conditional mutual information $I(h_{s};\ell_{s'}|\ell_{s})$.

In 1987, Millen made a formal connection between information flow and Shannon information theory by relating noninterference and mutual information \cite{Millen:87:SAP}. In Millen's model, a computer system is seen as a channel whose input is a sequence $W$, possibly generated by a set of users, and whose output (after the computation is completed) is $Y$. The random variable $X$ represents a subsequence of $W$ generated by a user $U$, while $\overline{X}$ represents the high inputs generated by users other than $U$. Millen showed that in deterministic systems if $X$ and $\overline{X}$ are independent and $X$ is not interfering with $Y$, then the Shannon mutual information $I(X;Y)$ between $X$ and $Y$ is zero. In other words, noninterference is a sufficient condition for absence of information flow.

In 1990, Massey gave an important contribution to the field of information theory, which influenced the further development of quantitative information flow. In~\cite{Massey:90:SITA} he showed that the usual definition of discrete memoryless (i.e. history-independent) channels used at that time in fact did not take into account the possibility for the use of feedback. He highlighted the conceptual difference between causality and statistical dependence, and presented an accurate mathematical description of discrete memoryless channels that allowed feedback. Then he introduced the concept of \emph{directed information}, which captures the idea of causality between the input and the output of a channel, and argued that in the presence of feedback, directed information is a more appropriate measure of the flow of information from input to output than mutual information.

In the same year, McLean also considered the concept of time in the description of systems by proposing his \emph{Flow Model} \cite{McLean:90:SSP}. According to this model, there is a flow of information only when a high user $H$ assigns values to objects in a state that precedes the state in which a low user $L$ makes her assignment. In this situation only part of the correlation between high and low information is considered as leakage. This addressed the problem of causality, but this model was too general, and relatively difficult to apply.

In \cite{Gray:91:SSP} Gray worked on bridging the gap between the overly complicated Flow Model and the more practical, yet restricted, approach of Millen. Gray used a general-purpose probabilistic (as opposed to nondeterministic) state machine that resembled Millen's model. In Gray's model, the value \revision{$\mathcal{T}(s,I,s',O)$} represents the probability of a given state $s$ evolving into another state $s'$, under the input $I$, and producing output $O$. The channels are partitioned into two sets, $H$ and $L$, representing the channels connected to high and low processes, respectively. The high and the low environments can communicate only through their interactions with the system, as no other form of communication between them is allowed. Gray wanted to take time and causality into consideration in his definition of leakage, and he did so by allowing feedback and memory in his model. His formulation of a security guarantee was the following:
\begin{equation}
	\label{eq:gray-flow}
	\begin{split}
		P(L^{I} \cap L^{O} \cap H^{I} \cap H^{O}) > 0 \quad \quad \implies \\
		P(\ell | L^{I} \cap L^{O} \cap H^{I} \cap H^{O}) = P(\ell | L^{I} \cap L^{O})		
	\end{split}
\end{equation}

\noindent where $L^{I}$ and $L^{O}$ represent the history of low inputs and outputs, respectively, and $H^{I}$ and $H^{O}$ represent the history of high inputs and outputs, respectively. The symbol $\ell$ represents the final output event channels in the low environment. The formulation \eqref{eq:gray-flow} states that the probability of a low output may depend on the previous history of the low environment, but not on the previous history of the high environment.

Gray also tried to generalize the concept of capacity to the case of channels with memory and feedback. He provided a formula expressing the flow of information from the whole history of inputs and outputs (\revision{during the time period $0 \ldots t-1$}) to the the low output (at time $t$), and conjectured that the capacity of the channel would be:
\begin{equation}
	\label{eq:gray-capacity-1}
	C \defsym \lim_{n \rightarrow \infty} C_n
\end{equation}

\noindent where
\begin{equation}
	\label{eq:gray-capacity-2}
		\begin{aligned}
			C_{n} \defsym & \max_{H,L} \frac{1}{n} \sum_{i=1}^{n} I( \mathit{In\_Seq\_Event}_{H,t}, \mathit{Out\_Seq\_Event}_{H,t} ; \\
			 & \mathit{Final\_Out\_Event}_{L,t} | \mathit{In\_Seq\_Event}_{L,t}, \mathit{Out\_Seq\_Event}_{L,t} ) \\
		\end{aligned}
\end{equation}

\noindent and $\mathit{In\_Seq\_Event}_{A,t}$ is the input history at channel $A$ (where $A$ stands for $L$ or $H$) up to time $t-1$, $\mathit{Out\_Seq\_Event}_{A,t}$ is the output history at channel $A$ up to time $t-1$, and $\mathit{Final\_Out\_Event}_{L,t}$ is the low output event at time $t$. Gray showed that the absence of information flow implies that \revision{capacity as formulated in} \eqref{eq:gray-capacity-1} is zero. He also conjectured that this definition of capacity would correspond to the notion of maximum transmission rate supported by the channel. As pointed out in \cite{Alvim:11:JCS}, however, the problem with Gray's conjecture is the following. For an output at time $t$, the only causal relation considered is the one with the history of inputs up to time $t-1$, while the effect that the input at time $t$ itself may have on the output is ignored. In this way, \eqref{eq:gray-capacity-1} does not express the complete causal relation between input and output. The correct notion of capacity in the presence of memory and feedback, which corresponds to the maximum transmission rate for the channel, was proposed in 2009 by Tatikonda and Mitter \cite{Tatikonda:09:TIT}, and it will be discussed later on in Chapter~\ref{chapter:interactive-systems}.

A similar formal approach, although with different motivations, was presented by McIver and Morgan in \cite{McIver:03:book}. They focused on the problem of preserving security guarantees while refining specifications into implementations. The authors used an equation similar to \eqref{eq:gray-capacity-2}, but in the context of sequential programing languages enriched with probabilities. Their aim was to protect the high values during the whole execution of the program, instead of the initial high values only. In other words, they wanted to assure that if the high information is not known by the low environment at the beginning of the computation, then it cannot be inferred at any later stage. They proved that, for deterministic programs, if the final values of the high objects are protected, then the initial values are protected as well. McIver and Morgan also defined the concept of \emph{information escape} as:
\begin{equation*}
	H(h|\ell) - H(h'|\ell')
\end{equation*}

\noindent where $H(h|\ell)$ represents the uncertainty (conditional entropy) of the high information given the low information at the beginning of the computation, and $H(h'|\ell')$ represents the same uncertainty at the end of the computation. They defined the channel capacity as the least upper bound of information escape over all possible input distributions. In this context, a system is considered secure if it has capacity equal to zero. One advantage of this model is that it is not necessary to keep track of the whole history of the computation, but on the other hand it can be applied only in scenarios where the adversary does not have memory. 

In Chapter~\ref{chapter:probabilistic-info-flow} we will take up again the discussion of quantitative approaches to information flow based on information theory. For the moment we will focus on some topics related to information hiding that are of special relevance for this thesis.

\section{Case studies of information hiding}
\label{section:case-studies}

In this section we present three case studies of information hiding that we address in this thesis.

\begin{enumerate}
	\item The case of \emph{quantitative information flow}, i.e. how much about the secret information an adversary can learn by observing the system's output, and by knowing how the system works. We give special attention to the broadly studied problem of anonymity, which can be seen as \revision{a} particular case of the more general problem of information flow where the secret information is the identity of the agents.
	
	\item The question of \emph{statistical disclosure control}, which concerns the problem of allowing users of a database to obtain meaningful answers to statistical queries, while protecting the privacy of the individuals participating in the database. We focus on differential privacy, an approach to this problem that has drawn a lot of attention in recent years.
	
	\item The problem of \emph{preserving security guarantees \revision{while} deriving implementations from specifications}. Usually specifications are more abstract than implementations, i.e. they present more nondeterminism. The task of implementing a system reduces the nondeterminism of the specification, and if it is not done carefully, an implementation may rule out possibilities allowed by specification that are essential for the security guarantees. 
\end{enumerate}
 
\subsection{Quantitative information flow and anonymity}
\label{section:anonymity}

Anonymity is one of the most studied subjects of information hiding. The research in this area has been active in the past several years, and the advances made can be extended to the more general scenario of information flow. As briefly introduced in Section~\ref{section:information-hiding}, anonymity concerns the protection of the identities of the agents involved in the events. 

With the advent of the Internet, the protection of anonymity has become an issue in the daily life of millions of people around the world. The importance of anonymity is even more evident concerning the protection of freedom of speech, a situation that is particularly delicate in countries under repressive regimes.

Pfitzmann, Dresden and Hansen \cite{Pfitzmann:08:Misc} have proposed a standard terminology for anonymity concepts. In their work there are three different notions of anonymity based on the agents involved:

\begin{itemize}
	\item \emph{Sender anonymity}: when the identity of the originator should be protected;
	\item \emph{Receiver anonymity}: when the identity of the recipient should be protected;
	\item \emph{Unlinkability}: when it might be known that an agent $A$ originated a message and an agent $B$ received a message, yet it should not be known whether the message sent by $A$ was actually the one received by $B$.
\end{itemize}

Reiter and Rubin also gave a classification of the types of adversary in an anonymity system in \cite{Reiter:98:TISS}, where they also proposed the anonymity protocol Crowds (see Section~\ref{section:examples-anonymity-protocols}). In their work, they considered that the adversary can be an eavesdropper simply observing the traffic of messages on the network, or she can be an active attacker (i.e. a collaboration between senders, between receivers, or between others taking part in the system), or even a combination of the previous two types. The authors also defined a hierarchy of anonymity degrees that a system can provide. In decreasing order of strength, the proposed scale is listed below. In this list, let $s, s'$ denote secrets and $o$ an observable, i.e. a particular action or output of the system that is distinguishable from the point of view of the attacker.

\begin{description}
	\item[Strong anonymity] From the attacker's point of view, the observables produced by the system do not increase her knowledge about the secret information, i.e. the identity of the individual involved in an event. Chaum also described the concept of strong anonymity in his work on the Dining Cryptographers protocol \cite{Chaum:88:JC}. It represents the ideal situation where the execution of the protocol does not give to the adversary any extra information about the secrets. The concept was formalized as follows.
\begin{equation}
	\label{eq:strong-anonymity-chaum}
	\forall s,o \quad p(s|o) = p(s)
\end{equation}
	
This definition is the equivalent of \qm{probabilistic noninterference}. In \cite{Chatzikokolakis:06:TCS}, Chatzikokolakis and Palamidessi showed that the condition expressed by \eqref{eq:strong-anonymity-chaum} is equivalent to:
\begin{equation}
	\label{eq:strong-anonymity-cp05}
	\forall s,s',o \quad p(o|s) = p(o|s')
\end{equation}
	
\noindent	i.e. the probability of the system producing an observable is the same, no matter what the secret information is. This definition is known as \emph{equality of likelihoods} and is advantageous as it does not \revision{depend on} the probability distribution on secrets. 

Another definition of strong anonymity, more restrictive, was proposed by Halpern and O'Neill \cite{Halpern:03:CSF,Halpern:05:JLMCS}. It is equivalent to each of the previous definitions (\eqref{eq:strong-anonymity-chaum} or \eqref{eq:strong-anonymity-cp05}) plus the assumption that the input probability is uniform. Halpern and O'Neill focused on the adversary's lack of confidence in her guess about the secret, and defined strong anonymity as:
\begin{equation}
	\label{eq:strong-anonymity-ho}
	\forall s,s',o \quad p(s|o) = p(s'|o)
\end{equation}
	
The formulation \eqref{eq:strong-anonymity-ho} is also known as \emph{conditional anonymity} and corresponds to the level of anonymity called \emph{beyond suspicion} in Reiter and Rubin's classification.
	
	\item[Beyond suspicion] From the attacker's point of view, an agent is no more likely to be the culprit than any other agent in the system. It can be formalized as in \eqref{eq:strong-anonymity-ho}.
	
	\item[Probable innocence] From the attacker's point of view, an agent does not appear more likely to be involved in an event than not to be involved. Formally:
	\begin{equation}
		\label{eq:probable-innocence-1}
		\forall s,o \quad p(s|o) \leq 0.5
	\end{equation}
	
	The formulation \eqref{eq:probable-innocence-1}, however, is not broadly accepted as the definition of probable innocence. In~\cite{Chatzikokolakis:06:TCS}, Chatzikokolakis and Palamidessi showed that the property that Reiter and Rubin indeed proved for the Crowds protocol in~\cite{Reiter:98:TISS} was:
	\begin{equation}
		\label{eq:probable-innocence-2}
		\forall s,o \quad p(o|s) \leq 0.5
	\end{equation}
	
	\item[Possible innocence] From the attacker's point of view, there is always a \revision{non-zero} probability that the agent involved in the event is someone else. Formally:
	\begin{equation*}
		\forall s,o.\left(p(s|o) > 0 \implies \exists s'.p(s'|o) > 0 \right)
	\end{equation*}
	
\end{description}

The above hierarchy gives a richer classification of the degree of protection offered by a system than would be possible with simpler possibilistic models. 

Among the quantitative approaches to anonymity, two are of our special interest: the ones based on information-theoretic concepts and the ones based on the Bayes risk. In the following section we give a brief overview of these two approaches. These concepts will be revisited in more detail in Chapter~\ref{chapter:probabilistic-info-flow}.

\subsubsection{Anonymity protocols as noisy channels} 

Information theoretic approaches to anonymity, and \revision{more generally} to information flow, rely on concepts such as entropy and mutual information to measure the adversary's lack of information about the secret before and after observing the system's output. Typically the system is seen as a noisy channel and the concept of noninterference corresponds to the converse of the channel capacity.

There are several works in the literature that have proposed measures of degrees of anonymity in terms of the entropy and mutual information, for instance \cite{Serjantov:02:PET, Diaz:02:PET, Zhu:05:ICDCS, Deng:06:FAST}. In \cite{Chatzikokolakis:08:IC} Chatzikokolakis, Palamidessi and Pananganden proposed the concept of \emph{conditional capacity} to cope with the situation where some leakage of information is intended by the system. Consider again the election protocol example. By design, the final vote counting needs to be announced and it usually increases the attacker's knowledge about the secret. In this situation, the leakage should be calculated modulo the information that is supposed to be disclosed, i.e. the vote count. In this work the authors also proposed methods to calculate the channel capacity exploiting some symmetries present in several practical systems. 

\subsubsection{Hypothesis testing and Bayes risk} 

In some real world situations an individual faces the following situation: she is interested in the value of some random variable $A \in \Aset$ but she has access only to the values of another random variable $O \in \Oset$. She knows that $A$ and $O$ are correlated by a known conditional probability distribution. This situation occurs in several fields, for instance in medicine (to make a diagnosis, the physician has access to a list of symptoms, but not to the disease itself). The attempt to infer $A$ from $O$ is known as the problem of \emph{hypothesis testing}. Here we are interested in the use of hypothesis testing in the context of anonymity (and information flow). More specifically, the adversary tries to infer the secret $A$ given that she has access to the observables $O$ and she knows how the system works, i.e. how the probabilities of $O$ are conditioned with relation to $A$. 

A commonly studied approach to the problem is based on the Bayesian method and consists of assuming the a priori probability distribution on $A$ as known, and then deriving from that and from the knowledge about how the system works, an a posteriori probability distribution after some fact has been observed. It is well known that the best strategy for the adversary is to apply the MAP rule (Maximum A posteriori Probability rule), which as the name suggests, chooses the hypothesis with the maximum probability for the given observation. Here, by \qm{best} strategy we mean the one that induces the smallest probability of error in guessing the hypothesis, that in this case corresponds to the \emph{Bayes risk}.

In \cite{Chatzikokolakis:08:JCS} Chatzikokolakis, Palamidessi and Pananganden explored the hypothesis testing approach to anonymity, in a scenario where the adversary has one single try to guess the secret (after exactly one observation). They associated the level of anonymity to the probability of error, i.e. the probability of an attacker making a wrong guess about the secret. In order to consider the worst case scenario and to give upper bounds for the level of anonymity provided, the adversary is assumed to use the MAP rule strategy. In this case, the probability of error corresponds to the Bayes risk, and the degree of protection offered by a protocol corresponds to the Bayes risk associated with the channel matrix.

In \cite{Smith:07:TGC,Smith:09:FOSSACS} Smith also considered the scenario of one-try attacks and proposed the notion of \emph{vulnerability}, which takes into consideration the probability that the adversary can guess the secret correctly after observing the behavior of the system only once. Smith proposed the framework of \emph{min-entropy leakage}, which is closely related to the Bayes risk, but is different as it uses the concept of entropy (more precisely min-entropy) and formalizes leakage in information theoretic terms.

In Chapter~\ref{chapter:probabilistic-info-flow} we will present a deeper discussion about the use of information theory for the formalization of information flow, including the notions of Shannon entropy, mutual information and the framework of min-entropy leakage for one-try attacks. First, however, we will review some fundamental anonymity protocols in literature.

\subsubsection{Examples of anonymity protocols}
\label{section:examples-anonymity-protocols}

On the Internet, every computer has a unique IP address which specifies the computer's logical location in the topology of the network. This IP address is usually sent along with any request originating from the computer. Even if the computer uses an IP address for a single session via an ISP (Internet Service Provider), the identification can be logged and retrieved later with the ISP's compliance. One common way to try to preserve anonymity is to use a \emph{proxy}, i.e. an intermediary computer that gathers all the requests of a group of computers and serves as a unique gate for any communication with the world outside of the network. For practical purposes, it is as if all the requests originated from the proxy, and the members of the group are indistinguishable from the point of view of an outside observer. One drawback presented by the use of proxies is that it creates single points of failures, decreasing the network's robustness.

The problem illustrated above is one of the motivations for the use of communication protocols specifically designed to protect anonymity. In this section we review two of the most fundamental, and probably most famous, examples of anonymity protocols in literature: \emph{the dining cryptographers} protocol, and the \emph{Crowds} protocol.

\paragraph{The dining cryptographers}

The dining cryptographers protocol was proposed by Chaum in \cite{Chaum:88:JC}. It is one of the first anonymity protocols in the literature, and it is one of the few protocols that can assure strong anonymity.

The protocol is usually presented in a simplified scenario, where three cryptographers employed by the NSA (The National Security Agency of the United States) are having dinner in a restaurant. At the end of the dinner, the NSA decides whether it will pay the bill itself or whether it will assign the duty of paying to one of the cryptographers at the table. In the case the NSA decides that one of the cryptographers will pay, it announces the decision secretly to the chosen one. The goal of the protocol is to reveal whether one cryptographer will pay the bill or not, without revealing the identity of the payer. In other words, to an external observer (and to the non-paying cryptographers as well), the only accessible information is whether the NSA is paying or not, but not the identity of the cryptographer paying (if any). We assume that the NSA does not disclose its decision to anyone but to the cryptographer it chooses (again, if any), and that the solution should be distributed, i.e. only message passing between agents is allowed, and no centralized agent coordinates the process.

The dining cryptographers protocol solves this problem as shown schematically in Figure~\ref{fig:dc-protocol}. Each cryptographer ($\mathit{Crypt_0}$, $\mathit{Crypt_1}$ and $\mathit{Crypt_2}$) tosses a coin that is visible only to himself and to his right-hand neighbor. In this way every cryptographer has a shared coin with each of the other two. After all three coins ($c_0$, $c_1$ and $c_2$) are tossed, each cryptographer checks whether the two coins visible \revision{to him} agree (both are heads or both are tails) or disagree (one is head and the other is tails). Then they announce publicly \emph{agree} or \emph{disagree}, according to the result they obtained with their coins. The only exception is that, if a cryptographer is paying, he will announce the opposite of what he sees, i.e. he will announce \emph{disagree} in the case that his coins agree and \emph{agree} if they do not. It can be proven that if the number of \emph{disagrees} is even, then the NSA is paying, and if the number of \emph{disagrees} is odd, then one of the cryptographers is paying. Moreover, if the coins are all fair, the protocol offers strong anonymity in the following sense: The execution of the protocol does not provide to an external observer enough evidence to change her knowledge about which cryptographer is the payer, if any. In other words the probability of any cryptographer being the payer, under the adversary's point of view, does not change after the observation of the protocol's execution.

\begin{figure}[!tb]
	\centering
	\includegraphics[width=0.6\columnwidth]{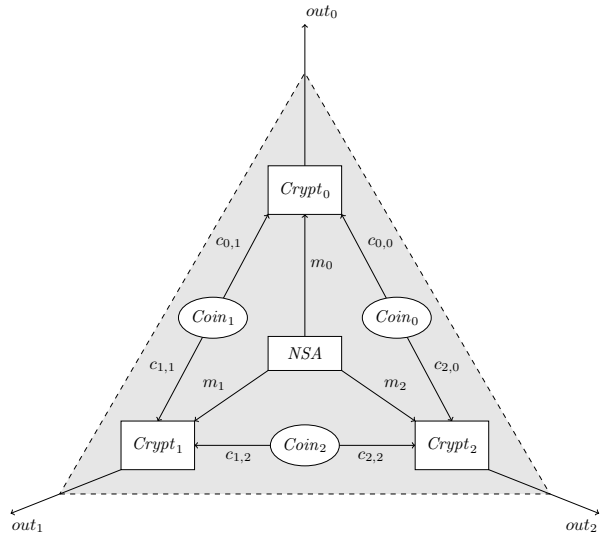}
	\caption{An example of the dining cryptographers protocol}
	\label{fig:dc-protocol}
\end{figure}

The dining cryptographers protocol can be generalized to any number of graph nodes (i.e. cryptographers) and any type of graph connectivity (i.e. the shared coins between pairs of cryptographers). Then the same solution can be used for anonymous communication as follows. Each pair of nodes share a common secret (the value of the coin) of length $n$, equal to the length of the transmitted data. It is assumed that the coins are drawn uniformly from the set of possible secrets. Each node then computes the binary sum (XOR operation) of all its shared secrets and announces the result. The only exception is that the node that wants to transmit adds the datum, also of length $n$, to the sum it announces. It can be shown that the total sum of the announcements of all nodes is equals to the data to be transmitted, since each secret is counted twice (once by each node that can see it) and, therefore, is canceled out by the XOR operation. The protocol works under the assumption that only one node at a time tries to transmit, and if it is the case that more than one sender wants to transmit at the same time, the conflict needs to be solved by some sort of coordinator.


One drawback of the dining cryptographers protocol is its inefficiency: whenever a single node wants to transmit, all the nodes in the graph need to collaborate to make it happen, at the cost of a large number of message exchanges. Moreover, as previously stated, in the case where more than one node wants to transmit at the same time, a coordinator is necessary to solve the conflict.

\paragraph{Crowds}

The Crowds protocol was first presented in \cite{Reiter:98:TISS} and it allows Internet users to perform web transactions without revealing their identity. Usually, on the Internet, when a user communicates with a server the latter can discover the IP address of the originator. The idea behind Crowds is to gather users into a crowd and randomly redirect the request multiple times inside the group before finally letting it reach the server. In this situation, it is impossible for the server, and for any other user, to identify the initiator of the request once it receives the message: whenever someone sends a message there is a considerable probability that she is only a forwarder for someone else.

To be more precise, a \emph{crowd} is a group of $m$ users who participate in the protocol. It is possible that a subgroup of $c$ users are corrupted and collaborate to disclose the identity of the original sender. Also, we assume that the protocol has a parameter $p_{f} \in \left(0,1\right]$. We call \emph{originator} or \emph{initiator} the user who wants to make a request to the server. The originator needs to create a \emph{path} between herself and the server in order to have her request reach the final destination, as shown in Figure~\ref{fig:crowds}.

\begin{figure}[!tb]
	\centering
	\includegraphics[width=0.75\columnwidth]{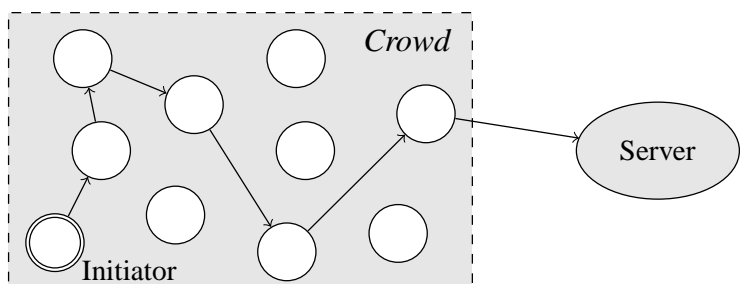}
	\caption{The Crowds protocol at work}
	\label{fig:crowds}
\end{figure}

The protocol works as follows:

\begin{itemize}
	\item At the first step the initiator chooses, according to a uniform probability distribution, another user in the crowd (possibly herself) and forwards the request to this user;
	\item The user who receives the message then makes a random choice. With probability $p_{f}$ she forwards the message to the server, and with probability $1-p_{f}$ she decides to forward the message to some user in the crowd. If this is the case, she chooses a user (possibly herself) according to a uniform probability distribution, and forwards the message to this user. This step is then repeated by the new message holder.
\end{itemize}

The response from the server to the originator follows the same path, in the opposite direction. Moreover, all the communications in a path are encrypted using a \emph{path key}, which protects the path from threats posed by local eavesdroppers. Each user has access to the communications in which she participates, but it is assumed that a user cannot intercept messages exchanged between other users. It can be proven that the protocol is strongly anonymous with respect to the web server. Intuitively this is the case because at least one forward step is always performed, and after this step any user can be the holder of the message with equal probability. Therefore, from the server's point of view any user is equally likely to be the originator of the request.

A more interesting case is to analyze the level of anonymity ensured with respect to a corrupted user. If in the very first step of the execution of the protocol the message is forwarded to a corrupted user, she can gain more information about the possible originator than the server. A user, whether the originator or not, is said to be \emph{detected} if she sends a message to a corrupted user. Since the originator always appears in a path, she is more likely to be detected than the rest of the users. Detecting a user (at least for the first time in a path) increases the probability that this user is the originator. Therefore, strong anonymity cannot hold with relation to corrupted users. 

In \cite{Reiter:98:TISS} it is proven that if the number $c$ of corrupted users is not too large, the protocol can at least ensure the level protection of probable innocence. More precisely, if the number $m$ of users in the crowd satisfies 
\begin{equation*}
	m \geq \frac{p_{f}}{p_{f} - \frac{1}{2}}(c+1)
\end{equation*}

\noindent then the protocol ensures probable innocence in the sense of \eqref{eq:probable-innocence-2}.

\subsection{Statistical disclosure control}
\label{section:privacy-db}

The field of \emph{statistical disclosure control} concerns the problem of revealing accurate statistics about a set of respondents while preserving the privacy of individuals. In statistical databases, the data of a (large) number of participants is compiled, and users are allowed to pose statistical queries (such as average or total counting) about the sample. This kind of database is of special importance in many areas. For instance, medical databases can provide information about how a disease spreads, and a census database can help authorities to decide how to spend the next year's budget. 

The data in a statistical database can be obtained in different ways. It can be collected in a census, for instance, it can be obtained opportunistically by monitoring the traffic in a network, or it can even be \revision{given} by the participants by their own choice. No matter how the data is obtained, however, it is still important to ensure that the individual's participation in the database will not harm her privacy. This is not a trivial goal to achieve: the main purpose of a statistical database, in the first place, is to reveal some information about the population as a whole, i.e. to let users infer \qm{general truths} about this population. As an example, suppose that a statistical database of individuals of a certain country indicates that, in this population, the life expectancy for women is $5$ years longer than for men. Clearly this piece of information reveals something about the whole population, \emph{even about individuals not present in the database}.

There are several approaches to dealing with the problem of preserving privacy in statistical databases. One of them is based on ensuring \emph{large query sets}, i.e. that no query can be posed for a small set of individuals. The problem with this approach is that, even if two query sets are \qm{large enough}, their combination may not be. Consider the following two queries: \qm{How many people have disease $y$?} and \qm{How many people, not named $X$, have disease $y$?}. Both queries operate on large sets, but clearly the superposition of the two queries immediately reveals sensitive information about the individual named $X$. Another attempt to achieve privacy is based on the \emph{encryption of the data} in the dataset. This is not a general solution since, as we have seen, the privacy threats do not concern only the individuals in the database and, therefore, the encryption of the data will not address this issue.

Another possible solution is to apply some sort of \emph{query auditing}: the curator of the database checks whether or not a query is possibly disclosing before deciding to provide an answer to it. This approach would cope with the problem of the two superposing queries mentioned above, yet it presents two serious drawbacks: first, automatic tools to check every query are practically infeasible; and, second, the \revision{refusal} to answer a query can be in itself a disclosing act. Another attempt to deal with the problem is by using \emph{subsampling} of the dataset. We normally view a dataset as a collection of rows, where each row contains the data of an particular participant. The idea of subsampling is to randomly choose a subset of the rows, compute the answer to the query based on this subsample, and then report it as the final answer. If the subset is large enough, it should reflect the statistical properties of the whole database. This approach, however, protects a participant only to the extent to which it is unlikely that she is in the subsample. If being in the subsample has catastrophic results, then someone will always be seriously harmed.

The \emph{input perturbation} approach is based on modifying either the data or the query in hope of confusing the adversary. For instance, a \emph{randomized response} mechanism can be used at the moment \revision{the data} is acquired. This modification is permanent and not even the curator knows what the original data was. The queries to the database are then made taking into consideration the randomized noise.

Yet another approach is to add randomized noise \emph{to the answer of the query}. The idea is to compute the answer on the complete set of (the original) values in the database, and then randomize the response before reporting it to the user. If this is done naively, however, it can easily be taken care of by the adversary. Suppose that the noise is chosen to be a Gaussian additive noise with mean zero. If the query is repeated a sufficient number of times, a statistical analysis of the answers can easily estimate with high accuracy what the real answer is. Even if the curator of the database opts to record the query and always report the same answer for it, it may not solve the problem: syntactically different queries can be semantically equivalent, and if the query language is rich enough the semantic equivalence is undecidable.

In this context, it is clear that the problem of statistical disclosure control is not trivial. Yet another issue to be considered is \emph{auxiliary (or side) information}. Auxiliary information is any piece of data about individuals that the attacker has and that does not come from the database itself. It may originate from priors, beliefs, newspapers or even other databases. Some decades ago, Dalenius~\cite{Dalenius:77:ST} considered the problem of auxiliary information and proposed a famous \qm{ad omnia} privacy desideratum: nothing about an individual should be learnable from the database that could not be learned without access to the database. In other words, if the adversary has some side information and gains some knowledge about the individuals using it, by learning the response from the database this knowledge about individuals should not increase. Dalenius' property is, however, too strong to be useful in practice: Dwork showed in \cite{Dwork:06:ICALP} that no useful database can \revision{satisfy} it. She then proposed the notion of \emph{differential privacy}, which is based on the idea that the presence or absence of an individual in the database, or the individual's particular value, should not significantly change the probability of obtaining a certain answer for a given query \cite{Dwork:06:ICALP,Dwork:10:SODA,Dwork:11:CACM,Dwork:09:STOC}. 

The concept of differential privacy can be formalized as follows. Let $\calx$ be the set of all possible databases, and $\calz$ be the set of possible answers to a query. Two databases $x,x' \in \calx$ are \emph{adjacent} (or \emph{neighbors}), written $x \sim x'$, if they differ in the value of exactly one individual. Then, for some $\epsilon > 0$:

\begin{definition}[\cite{Dwork:11:CACM}]
	\label{def:diff-privacy-intro}
	A randomized function $\mathcal{K}$ from $\calx$ to $\calz$ satisfies {$\epsilon$-differential privacy} if for all pairs $x,x'\in \calx$, with $x\sim x'$, and all $S \subseteq \calz$, we have:
	\begin{equation*}
		\mathit{Pr}[\mathcal{K}(x) \in S] \leq e^{\epsilon} \revision{\cdot} \mathit{Pr}[\mathcal{K}(x') \in S]		
	\end{equation*}	
	
\end{definition}

The concept of differential privacy has had an extraordinary impact in the database community, and we will discuss the meaning and implications of the above formulation in greater depth in Chapter~\ref{chapter:differential-privacy}. For the moment, it is enough to note that this definition intuitively ensures that individuals can opt in or out of the database without significantly changing the probability of any given answer to a query to be reported. In other words, it is \qm{safe} for an individual to join (or to leave) the database. Dwork also showed that in order to ensure differential privacy it is enough to consider a Laplacian mechanism of noise~\cite{Dwork:06:ICALP}.

Although differential privacy is a promising approach to the question of statistical disclosure control, the fact that it relies on the randomization of the query response poses some challenges with respect to the \emph{utility} of the query mechanism. If the noise is not added with sufficient care, the reported answer can be so \qm{different} from the real answer that the informative purpose of the database is compromised. In Chapter~\ref{chapter:differential-privacy} we will come back to the question of how to apply differential privacy and, at the same time, provide maximum utility to the query mechanism.

\subsection{Refining specifications into implementations}
\label{section:spec-implementations}

Deriving implementations of a system given its specification, while respecting security constraints, is a challenging problem in information hiding and, \revision{more generally}, in security. A specification $S$ is refined by an implementation $P$ if $P$ preserves all \revision{logically} expressible properties of $S$. One needs to be careful, however, when refining a specification in the realm of information hiding. According to Morgan \cite{Morgan:09:SCP}: 

\begin{quote} 
	A rigorous definition of how specifications relate to implementations, as part of reasoning, must ensure that implementations reveal no more than their specifications: they must, in effect, preserve ignorance.
\end{quote}

By \qm{ignorance}, the author means what the user does not know about what she cannot see. This notion is closely related to the problem of information flow, i.e. determining how much about the secret behavior of a system an adversary can infer from an observation and her knowledge about how the system works. 
	
To illustrate the problem, we will discuss the following example, adapted from the original one in \cite{Morgan:09:SCP}. Consider a partition of the program states into \emph{visible} ($v$) and \emph{hidden} ($h$). Assume that the two variables $v$ and $h$ have the same domain $\mathbb{N}$ (the natural numbers), and in a specification $S$, after the value of $h$ is assigned, the following is stated: \emph{choose $v$ from the domain $\mathbb{N}$}. Then we can ask \qm{from the final value of $v$, what can the observer deduce about the value of $h$, given that she knows how the system works?}. Of course the answer will depend on how the implementation $I$ of the specification is done. If $I$ is simply $v := 0$, then nothing is learned, since what the user knows about the value of $h$ is exactly what she already knew before. If the implementation is $v := h \ \texttt{mod} \ 2$, then she can learn $h$'s parity. If the implementation is $v := h$, then she learns the exact value of $h$. Intuitively, the three implementations are in increasing order according to the loss of ignorance they induce. 
	
It is desirable that the implementation of a specification be \qm{ignorance preserving}, in the sense that the implementation should not reveal more about the secrets than the specification does. Some works in the literature suggest that one should be careful when dealing with secure refinements if one wants to preserve information-flow security properties. In \cite{Jacob:89:SSP}, for instance, Jacob shows that even if an implementation is a consistent refinement with respect to a specification, it does not imply that the (information-flow) security properties of the specification are preserved in the implementation.

As pointed out in \cite{Chatzikokolakis:09:FOSSACS}, nondeterminism is often used in system specifications as a way of abstracting from implementation details (such as scheduler policy). Implementations are obtained from specifications by refinement algebras, which reduce nondeterminism. As we have seen in a previous example, if we assume $v$ and $h$ are both of type $\mathbb{N}$, then the specification \emph{choose $v$ from the domain $\mathbb{N}$} can be refined to $v := h$, which is simply a reduction of nondeterminism. This is known as the \qm{refinement paradox} \cite{Morgan:09:SCP}, because it does not preserve ignorance. While the specification does not tell anything about the value of $h$, the refinement completely reveals it.
	
The process of reducing nondeterminism by refinements is related to the notion of \emph{schedulers} in nondeterministic systems: \emph{designing an implementation of a specification involves choosing a scheduler to solve all the nondeterminism of the specification.} The scheduler is indeed a final result of the refinement process, after all the nondeterminism is ruled out.
	
According to this perspective, similar concerns about refinement algebras should be taken into consideration when dealing with schedulers. Indeed, it can be shown that, given a specification $S$ and a scheduler that leads to a consistent implementation $P$ with respect to $S$, it is not guaranteed that the security properties of $S$ are preserved in $P$. 
	
In the domain of refinement of specifications, the solution proposed in \cite{Morgan:09:SCP} is to apply some principles to the refinement algebra in order to assure the preservation of ignorance. These principles restrict the refinement relation, eliminating the cases that do not preserve ignorance. 

A similar problem arises in the context of concurrent systems, where the scheduler that \revision{resolves} the nondeterminism can violate security properties. In Chapter~\ref{chapter:safe-equivalences} we focus on this problem and we propose restrictions on the schedulers that also lead to ignorance-preserving refinements.

\section{Plan of the thesis and contribution}
\label{section:contribution}

In Chapter~\ref{chapter:preliminaries} we review some basic notions necessary for the development of this thesis, including the concepts of probability spaces, probabilistic automata and \ccsp{} (a probabilistic version of the process algebra of concurrent communicating processes).

In Chapter~\ref{chapter:probabilistic-info-flow} we review the main approaches that have been considered to quantify the notion of information leakage using concepts of information theory. We explain concepts such as entropy, conditional entropy, mutual information and capacity. We focus on how distinct notions of entropy can model attackers with different levels of power, and we introduce the mathematical background necessary for most of this thesis. Finally we compare the main notions of uncertainty and leakage in the literature.

In Chapter~\ref{chapter:interactive-systems} we consider the problem of defining the information leakage in interactive systems where secrets and observables can alternate during the computation. We show that the information-theoretic approach \revision{that} interprets such systems as classic channels is not valid. The principle can be recovered, however, if we consider channels of a more complicated kind, namely channels with memory and feedback. We show that there is a complete correspondence between interactive systems and such channels. We also propose the use of directed information, as opposed to mutual information, to represent leakage in interactive systems. This proposal is based on recent results in information theory that have shown that, in channels with memory and feedback, the transmission rate does not correspond to the maximum mutual information (the standard notion of capacity), but rather to the maximum (normalized) directed information. We show that our model is a proper extension of the classical one, i.e. in the absence of interactivity the model of channels with memory and feedback collapses into the model of memoryless channels without feedback. Finally, we show that the capacity of the channels associated with interactive systems is a continuous function with respect to a pseudometric based on the Kantorovich metric.

In Chapter~\ref{chapter:differential-privacy} we analyze critically the notion of differential privacy in the light of the conceptual framework provided by  min-entropy leakage. We show that there is a close relationship between differential privacy and leakage, due to the graph symmetries induced by the adjacency relation on databases. Furthermore, we consider the utility of the randomized answer, which measures its expected degree of accuracy. We focus on certain kinds of utility functions called \qm{binary}, which have a close correspondence with the notion of min-entropy leakage and the Bayes risk. Again, there can be a tight correspondence between differential privacy and utility, depending on the symmetries induced by the adjacency relation and by the query. Using these symmetries we can, in some cases, build an optimal-utility randomization mechanism while preserving the required level of differential privacy. We also provide a study of the kind of structures that can be induced by the adjacency relation and the query, and how to use them to derive bounds on the leakage and achieve the optimal utility.

In Chapter~\ref{chapter:safe-equivalences} we move away from the quantitative realm and focus on the problem of nondeterminism in systems specifications. In the field of security, process equivalences have been used to characterize various information-hiding properties (for instance secrecy, anonymity and noninterference) based on the principle that a protocol $P$ with a variable $x$ satisfies such a property if and only if, for every pair of secrets $s_1$ and $s_2$, $P[^{s_1}/ _x]$ is equivalent to $P[^{s_2}/ _x]$. We argue that, in the presence of nondeterminism, the above principle relies on the assumption that the scheduler \qm{works for the benefit of the protocol}, and this is usually not a safe assumption. Non-safe equivalences, in this sense, include complete-trace equivalence and bisimulation. We present a formalism in which we can specify admissible schedulers and, correspondingly, safe versions of these equivalences. We prove that safe bisimulation is still a congruence. Then we show that safe equivalences can be used to establish information-hiding properties.

Finally, in Chapter~\ref{chapter:conclusion} we make our final \revision{observations}.

\section{Publications}

Most of the results in this thesis have already been \revision{the} subject of scientific publications. More precisely:

\begin{itemize}
	
	\item \revision{Chapter~\ref{chapter:probabilistic-info-flow}} is based on the paper \textbf{Probabilistic Information Flow} \cite{Alvim:10:LICS} that appeared in the proceedings of \emph{$25^{th}$ Annual IEEE Symposium on Logic in Computer Science} (LICS 2010).
	
	\item Chapter~\ref{chapter:interactive-systems} is based on the papers:
		\begin{itemize} 
			\item \textbf{Information Flow in Interactive Systems} \cite{Alvim:10:CONCUR} that appeared in the proceedings of the \emph{$21^{st}$ International Conference on Concurrency Theory} (CONCUR 2010);
			\item \textbf{Quantitative Information Flow in Interactive Systems} \cite{Alvim:11:JCS} to appear in the \emph{Journal of Computer Security}.
		\end{itemize}  
	
	\item Chapter~\ref{chapter:differential-privacy} is based on two complementary works:
		\begin{itemize}
			\item The paper \textbf{On the relation between Differential Privacy and Quantitative Information Flow} \cite{Alvim:11:ICALP} to appear in the proceedings of the \emph{38th International Colloquium on Automata, Languages and Programming} (ICALP 2011);
			\item The technical report \textbf{Differential Privacy: on the trade-off between Utility and Information Leakage} \cite{Alvim:11:TechRep}.
		\end{itemize}
		
	\item Chapter~\ref{chapter:safe-equivalences} is based on the paper \textbf{Safe Equivalences for Security Properties} \cite{Alvim:10:IFIP-TCS} that appeared in the the proceedings of the \emph{6th IFIP International Conference on Theoretical Computer Science} (IFIP-TCS 2010).
	
\end{itemize}

\chapter{Preliminaries}
\label{chapter:preliminaries}
\mscite{I can make just such ones if I had tools, and I could make tools \\ if I had tools to make them with.}{Eli Whitney}

In this chapter we review some technical concepts from the literature that will be used throughout this thesis.

\section{Probability spaces}
\label{section:prob-spaces}

In this section we recall some concepts about probability spaces.

Let $\Omega$ be a set and $\pow(\Omega)$ represent its powerset, i.e. the collection of all subsets of $\Omega$. A \emph{$\sigma$-algebra} (also called \emph{$\sigma$-field}) over $\Omega$ is a non-empty collection of sets $\Fset \subseteq \pow(\Omega)$ that is closed under complementation and countable union. For any $\sigma$-field $\Fset$, the property $\Omega \in \Fset$ holds, and also that $\Fset$ is closed under countable intersection (by De Morgan's laws). 

A \emph{\revision{(positive)} measure} on $\Fset$ is a function $\mu: \Fset \rightarrow [0, \infty)$ such that

\begin{enumerate}
	\item $\mu(\emptyset) = 0$, and
	\item $\mu(\bigcup_{i} C_{i}) = \sum_{i} \mu(C_{i})$, where $\{C_{i}\}_{i}$ is a countable collection of pairwise disjoint sets in $\Fset$.
\end{enumerate}

A \emph{probability measure} on $\Fset$ is a measure $\mu$ on $\Fset$ such that $\mu(\Omega) = 1$. A \emph{probability space} is a tuple $(\Omega, \Fset, \mu)$ where $\Omega$ is a non-empty set called the \emph{sample space}, $\Fset$ is a $\sigma$-algebra on $\Omega$ called the \emph{event space}, and $\mu$ is a probability measure on $\Fset$. In the discrete case, we have
\begin{equation*}
	\forall C \in \Fset. \quad  \mu(C) = \sum_{x \in C} \mu(\{x\}) 
\end{equation*}

In this case we can construct $\mu$ from a function $p: \Omega \rightarrow [0,1]$ satisfying $\sum_{x \in \Omega}p(x) = 1$ by assigning $\mu(\{x\}) = p(x)$. The function $p$ is called a \emph{probability distribution} over $\Omega$.

The set of all probability measures with sample space $\Omega$ will be denoted by $\Dset(\Omega)$. We will also denote by \revision{$\delta_x(\cdot)$} (called the \emph{Dirac measure} on $x$ or also a \emph{point mass}) the probability distribution such that $\mu(\{x\}) = 1$.

If $A$ and $B$ are events, i.e. elements of a $\sigma$-field $\Fset$, then $A \cap B$ is also an event. If $\mu(A) > 0$ then we can define the \emph{conditional probability} $p(B|A)$ as
\begin{equation*}
	p(B|A) = \frac{\mu(A \cap B)}{\mu(A)}
\end{equation*}

\noindent representing the probability of B given that A holds. Note that $p(\cdot|A)$ is a new probability measure on $\Fset$. For the scope of this thesis we are interested only in the discrete case, so it is enough to use the definition above and make sure that we never condition on an event $A$ with zero probability.

Let $\mathcal{F}, \mathcal{F}'$ be two $\sigma$-fields on $\Omega,\Omega'$ respectively. A \emph{random variable} $X$ is a function $X : \Omega \mapsto \Omega'$ that is \emph{measurable}, meaning that the inverse of every element of $\mathcal{F}'$ belongs to $\mathcal{F}$:
\begin{equation*}
	\forall C \in \mathcal{F}'. \quad X^{-1}(C) \in \mathcal{F}
\end{equation*}

Then, given a probability measure $\mu$ on $\mathcal{F}$, $X$ induces a probability measure $\mu'$ on $\mathcal{F}'$ as   
\begin{equation*}
	\forall C \in \mathcal{F}'. \quad \mu'(C) = \mu(X^{-1}(C))
\end{equation*}

If $\mu'$ is a discrete probability measure then it can be constructed by a probability distribution over $\Omega'$, called \emph{probability mass function (pmf)}, defined as
\begin{equation*}
	P([X = x]) = \mu(X^{-1}(x))
\end{equation*}  

\noindent for each $x \in \Omega'$. The random variable in this case is called discrete. If $X,Y$ are discrete random variables then we can define a discrete random \revision{variable} $(X,Y)$ by its pmf 
\begin{equation*}
	P([X=x, Y=y]) = \mu(X^{-1}(x) \cap  X^{-1}(y))
\end{equation*}

If $X$ is a real-valued discrete random variable then its \emph{expected value} (or \emph{expectation}) is defined as
\begin{equation*} 
	E(X) = \sum_{i} x_i\,P([X=x_i]) 
\end{equation*}

A family  $\rho = \{\prob{p_v}{\cdot}\}_{v}$ of probability measures parametrized on $v$ \revision{(where $v$ can range over $\{0,\ldots,n\}$ for some natural $n$)} is called a \emph{stochastic kernel}.\footnote{The general definition of stochastic kernel is more complicated (cfr.~\cite{Tatikonda:09:TIT}), but it reduces to this one in the discrete case, which is what we use in this thesis.}.	

\paragraph{Notation:} We will use capital letters $A,B,X,Y,Z$ to denote random
variables and calligraphic letters $\mathcal{A},\mathcal{B},\mathcal{X},\mathcal{Y},\mathcal{Z}$ to denote their image.
With a slight abuse of notation we will use $p$ (and $p(x),p(y)$) to denote either

\begin{itemize}
	\item a probability distribution, when $x,y \in \Omega$, or
	\item a probability measure, when $x,y \in \mathcal{F}$ are events, or
	\item the probability mass function $P([X=x]),P([Y=y])$ of the random variables $X,Y$ respectively, when $x\in \calx, y\in \caly$.
\end{itemize}

\commentMS{Introduce definition of the Kantorovich metric}


\section{Probabilistic automata}
\label{section:prob-automata}

Let $\mu \colon \Sset \to [0,1]$ be a discrete probability distribution on a countable set~$\Sset$, and let the set of all discrete probability distributions on $\Sset$ be $\Dset(\Sset)$. 

A \emph{probabilistic automaton} \cite{Segala:95:PhD} is a quadruple $M = (\Sset, \Lset, \si, \vartheta )$ where $\Sset$ is a countable set of \emph{states}, $\Lset$ is a finite set of \emph{labels} or \emph{actions}, $\si$ is the \emph{initial} state, and $\vartheta$ is a \emph{transition function} $\vartheta: \Sset \to \pow(\distr(\Lset \times \Sset))$. If $\vartheta(s) = \emptyset$ then $s$ is a \emph{terminal} state. We write $s {\to} \mu$ for $\mu \in \vartheta(s), \ s\in \Sset$.  Moreover, we write $s \smash{\stackrel{\ell}{\to}} r$ for $s, r \in \Sset$ whenever $s {\to} \mu$ and $\mu(\ell,r) > 0$. A \emph{fully probabilistic automaton} is a probabilistic automaton satisfying $|\vartheta(s)|\leq 1$ for all states. In such an automaton, when $\vartheta(s)\not= \emptyset$, we overload the notation and denote by $\vartheta(s)$ the distribution outgoing from $s$.

A \emph{path} in a probabilistic automaton is a sequence $\sigma = s_0 \stackrel{\ell_1}{\to} s_1 \stackrel{\ell_2}{\to} \cdots$ where $s_i \in \Sset$, $\ell_i \in \Lset$ and $s_i \smash{\stackrel{\ell_{i+1}}{\to}} s_{i+1}$. A path can be \emph{finite} in which case it ends with a state. A path is \emph{complete} if it is either infinite, or finite ending in a terminal state. Given a finite path $\sigma$, $\last(\sigma)$ denotes its last state. Let $\paths_s(M)$ denote the set of all paths, $\fpaths_s(M)$ the set of all finite paths, and $\cpaths_s(M)$ the set of all complete paths of an automaton $M$, starting from the state $s$. We will omit $s$ if $s=\si$. Paths are ordered by the prefix relation, which we denote by $\leq$. The \emph{trace} of a path is the sequence of actions in $\Lset^{*} \cup \Lset^{\infty}$ obtained by removing the states, hence for the above $\sigma$ we have $\trace(\sigma) = l_1l_2\ldots$. If $\Lset'\subseteq \Lset$, then $\trace_{\Lset'}(\sigma)$ is the projection of $\trace(\sigma)$ on the elements of $\Lset'$. 

Let $M = (\Sset, \Lset, \si, \vartheta )$ be a (fully) probabilistic automaton, $s\in \Sset$ a state, and let $\sigma \in \fpaths_{\!\!\!s}(M)$ be a finite path starting in $s$. The \emph{cone} generated by $\sigma$ is the set of complete paths $\cone{\sigma} = \{ \sigma^\prime \in \cpaths_s(M) \mid \sigma \leq \sigma^\prime\}.$ Given a fully probabilistic automaton $M=(\Sset, \Lset, \si, \vartheta)$ and a state $s$, we can calculate the \emph{ probability value} $\PP_s(\sigma)$ of any finite path $\sigma$ starting in $s$ as follows: 
\begin{align*}
	\PP_s(s) = 1 & \text{, and} \\
	\PP_s(\sigma \, \stackrel{\ell}{\to}\, s') = \PP_s(\sigma)\ \mu(\ell,s') & \text{~where~} \last(\sigma) \to \mu
\end{align*}

Let $\Omega_{s} \defsym \cpaths_s(M)$ be the sample space, and let $\mathcal F_{s}$ be the smallest $\sigma$-algebra induced by the cones generated by all the finite paths of $M$. Then $\PP$ induces a unique \emph{probability measure} on $\mathcal F_{s}$ (which we will also denote by $\PP_s$) such that $\PP_s(\cone{\sigma}) = \PP_s(\sigma)$ for every finite path $\sigma$ starting in $s$.  For $s=\si$ we write $\PP$ instead of $\PP_{\si}$.

A \revision{(total)} \emph{scheduler} for a probabilistic automaton $M$ is a function \revision{defined as} $\zeta \colon \fpaths(M) \to (\Lset \times \distr(\Sset)\cup \lbrace\bot\rbrace)$ such that for all finite \revision{paths} $\sigma$, if $\vartheta(\last(\sigma))\not=\emptyset$ then $\zeta(\sigma)\in\vartheta(\last(\sigma))$, and $\zeta(\sigma)=\bot$ otherwise. Hence, a scheduler $\zeta$ selects one of the available transitions in each state, and determines therefore a fully probabilistic automaton, obtained by pruning from $M$ the alternatives that are not chosen by $\zeta$. A scheduler is history dependent since it takes into account the path and not only the current state. \revision{It is possible to define partial schedulers, i.e. schedulers that may halt the execution at any time.} In this thesis, however, we will consider only total schedulers, to be more in line with the standard semantics of CCS.


\section{CCS with internal probabilistic choice}
\label{section:ccs}

In this section we present an extension of standard CCS (\cite{Milner:89:BOOK}) obtained by adding internal probabilistic choice. The resulting calculus can be seen as a simplified version of the probabilistic $\pi$-calculus presented in \cite{Herescu:00:FOSSACS,Palamidessi:05:TCS} and it is similar to the one considered in \cite{Deng:05:BookJW}. The restriction to CCS and to internal choice is suitable for the scope of this thesis. 

Let $a$ range over a countable set of \emph{channel names}. 

The syntax of \ccsp{} is the following:
\[
\begin{array}[t]{@{\textrm{\hspace{20pt}}}l@{}l@{\textrm{\hspace{20pt}}}l}
	\multicolumn{2}{l}{\alpha ::= a \gramor \bar{a} \gramor \tau}		& \textrm{\textbf{prefixes}} \\[2pt]

	\multicolumn{2}{l}{P,Q ::= }	& \textrm{\textbf{processes}} \\[2pt]
			& \alpha.P				& \textrm{prefix} \\[2pt]
	\gramor & P \paral Q			& \textrm{parallel} \\[2pt]
	\gramor & P + Q					& \textrm{nondeterministic choice} \\[2pt]
	\gramor & \smallsum{i}p_iP_i	& \textrm{internal probabilistic choice} \\[2pt]
	\gramor & (\nu a)P				& \textrm{restriction} \\[2pt]
	\gramor & !P					& \textrm{replication} \\[2pt]
	\gramor & 0						& \textrm{nil}
\end{array}
\]

\noindent where the $p_i$'s in the probabilistic choice should be non-negative and their sum should be $1$. We will also use the notation $P_1 +_p P_2$ to represent a binary sum $\smallsum{i} p_iP_i$ with $p_1=p$ and $p_2=1-p$. 

\begin{figure}[tbp]
	$
	\begin{array}{ll@{\textrm{\hspace{15pt}}}ll}
		\textrm{ACT} &
			\bigfrac
				{}
				{\alpha.P \labarr{\alpha} \delta(P)} &

		\textrm{RES} &
			\bigfrac
				{P  \labarr{\alpha} \mu \qquad \alpha \neq a,\outp{a}}
				{(\nu a)P \labarr{\alpha} (\nu a) \mu } \\[20pt]

		\textrm{SUM1} &
			\bigfrac
				{P \labarr{\alpha} \mu }
				{P + Q \labarr{\alpha} \mu} &

		\textrm{SUM2} &
			\bigfrac
				{Q \labarr{\alpha} \mu }
				{P + Q \labarr{\alpha} \mu} \\[20pt]

		\textsc{PAR1} &
			\bigfrac
				{P \labarr{\alpha} \mu }
				{P \paral Q  \labarr{\alpha} \mu \paral Q} &

		\textsc{PAR2} &
			\bigfrac
				{Q \labarr{\alpha} \mu }
				{P \paral Q  \labarr{\alpha} P \paral \mu} \\[20pt]

		\textrm{COM} & 
			\bigfrac
				{P  \labarr{a} \delta(P') \quad Q \labarr{\outp{a}}{} \delta(Q')}
				{P \paral Q \labarr{\tau} \delta(P' \paral Q')} &
	
		\textrm{PROB} &
			\bigfrac
				{}
				{\smallsum{i} p_iP_i \labarr{\tau} \smallsum{i}p_i\,\delta(P_i)} \\[20pt]

		\textrm{REP1} &
			\bigfrac
				{P  \labarr{\alpha} \mu }
				{!P  \labarr{\alpha} \mu \paral !P} &

		\textrm{REP2} & 
			\bigfrac
				{P  \labarr{a} \delta(P_1) \quad P \labarr{\outp{a}}{} \delta(P_2)}
				{!P \labarr{\tau} \delta(P_1 \paral P_2 \paral !P)}
	\end{array}
	$
	\caption{The semantics of \ccsp}
	\label{fig:ccsp_sem}
\end{figure}

The semantics of a \ccsp{} term is a probabilistic automaton defined inductively on the basis of the syntax according to the rules in Figure \ref{fig:ccsp_sem}.  We write $s \labarr{a} \mu$ when $(s,a,\mu)$ is a transition of the probabilistic automaton. Given a process $Q$ and a measure $\mu$, we denote by $\mu\paral Q$ the measure $\mu'$ such that $\mu'(P\paral Q) = \mu(P)$ for all processes $P$ and $\mu'(R) = 0$ if $R$ is not of the form $P\paral Q$. Similarly $(\nu a)\mu = \mu'$ such that $\mu'( (\nu a)P ) = \mu(P)$.

A transition of the form $P \labarr{a} \delta(P')$, i.e. a transition having for target  a Dirac measure, corresponds to a transition of a non-probabilistic automaton (a standard labeled transition system). Note that each rule of \ccsp{} corresponds to one rule of CCS, except for PROB. The latter models the internal probabilistic choice: a silent $\tau$ transition is available from the sum to a measure containing all of its operands, with the corresponding probabilities. 

Note that in the produced probabilistic automaton, all transitions to non-Dirac measures are silent. This is similar to the \emph{alternating model} \cite{Hansson:89:SRTS}, however our case is more general because the silent and non-silent transitions are not necessarily alternated.  On the other hand, with respect to the simple probabilistic
automata the fact that the probabilistic transitions are silent looks like a restriction. It has been proved by Bandini and Segala \cite{Bandini:01:ICALP}, however, that the simple probabilistic automata and the alternating model are essentially equivalent, so, being in between, our model is equivalent as well.

\paragraph{Encoding message passing into \ccsp{}} Sometimes it is convenient to make message passing explicit in the notation of \ccsp. Namely, we enrich its syntax by \revision{allowing} the prefixes to be $c(a) \gramor c \langle x \rangle \gramor \tau$, where $c,a,x$ are names, and the semantic rule COM is substituted by:
$$	
	\textrm{COM'} \quad
		\bigfrac
			{P  \labarr{ c \langle a \rangle } \delta(P') \quad Q \labarr{ c(x) }{} \delta(Q')}
			{P \paral Q \labarr{\tau} \delta(P' \paral Q' \left[ ^{a} / _{x}  \right])}
$$

\noindent where $P  \labarr{ c \langle a \rangle } \delta(P')$ denotes a process that sends the name $a$ through channel $c$ and then evolves to $P'$, and $Q \labarr{ c(x) }{} \delta(Q')$ denotes a process that receives the name $x$ through channel $c$ and then evolves to $Q'$. Here $Q'\left[ ^{a} / _{x}  \right]$ is the process $Q'$ in which every occurrence of $x$ is replace by $a$. 

The expressive power of \ccsp{} with message passing and without it is the same\revision{~\cite{Milner:89:BOOK}}. In this thesis we will use this fact and consider explicit message passing as an alias for the \revision{corresponding} encoding into the presentation of \ccsp{} given in Figure~\ref{fig:ccsp_sem}.

\chapter{The rationale behind the use of information theory for leakage}
\label{chapter:probabilistic-info-flow}
\mscite{Why, only why?}{Nadia Vertti}
In this chapter we review the most important concepts related to the \emph{information theoretic} approach to quantitative information flow. We aim at presenting these concepts in a contextualized way, discussing the intuition behind them and interpreting what they mean in terms of security.

\paragraph{Plan of the Chapter} Section~\ref{section:info-theory-communication} gives a brief overview on information theory for communication. Section~\ref{section:info-theory-info-flow} introduces the information theoretic approach to information flow. Section~\ref{section:uncertainty-leakage} presents and compares several different notions based on information theory that have been used in \revision{the} literature to characterize uncertainty and leakage.

\section{Information theory and communication}
\label{section:info-theory-communication}

The study of information theory started with Claude E. Shannon's work on the problem of coding messages to be transmitted through unreliable (or noisy) channels. A \emph{communication channel} is a (physical) \revision{means} through which information can be transmitted. The input is fed \revision{into} the channel, but due to noise or any other \revision{problems} that can occur during the transmission, the output of the channel may not reflect with fidelity the input. It is usual to describe the unreliable behavior of the channel in a probabilistic way. In the discrete (finite) case, if $\Aset = \{ a_{1}, a_{2}, \ldots, a_{n} \}$ represent the possible inputs for the channel, and $\Bset = \{b_{1}, b_{2}, \ldots, b_{m} \}$ represent the possible outputs, the channel's probabilistic behavior can be represented as a channel matrix $M_{n \times m}$ where each element $M_{i,j}$ ($1\leq i \leq n$, $1 \leq j \leq m$) is defined as the probability of the channel outputting $b_j$ when the input is $a_i$. In this way, we can see the input and output as two correlated random variables linked by the channel's probabilistic behavior\revision{~\footnote{\revision{Note that we are assuming that channels are \emph{loseless}, since the rows are probability distributions instead of sub-probability distributions.}}}. 

A unique feature of information theory is its use of a numerical measure of the amount of information gained when the contents of a message are learned. More specifically, information theory reasons about the degree of uncertainty of a certain random variable, and the amount of information that it can reveal about another random variable. Among the tools provided by information theory there are concepts as \emph{entropy}, \emph{conditional entropy}, \emph{mutual information} and \emph{channel capacity}, which will be reviewed in Section~\ref{section:shannon-entropy}. We consider here only the discrete case, since this is enough for the scope of this thesis.

\section{Information theory and information flow}
\label{section:info-theory-info-flow}

Several works in the literature use an information theoretic approach to model the problem of information flow and define the leakage in a quantitative way, as for example \cite{Zhu:05:ICDCS,Clark:05:JLC,Malacaria:07:POPL,Malacaria:08:PLAS,Moskowitz:03:CNIS,Moskowitz:03:WPES,Chatzikokolakis:08:IC}. The idea is to model the computational system as \revision{an} \emph{information theoretic channel}. The input represents the secret, the output represents the observable, and the correlation between the input and output (\emph{mutual information}) represents the information leakage. The worst case leakage corresponds then to the \emph{capacity} of the channel, which is by definition the maximum mutual information that can be obtained by varying the input distribution. 

In the works mentioned above,  the notion of mutual information is based on \emph{Shannon entropy}, which (because of its mathematical properties) is the most  established measure of uncertainty. From the security point of view, this measure corresponds to a particular model of attack and a particular way of estimating the security  threat (vulnerability of the secret). Other notions have been considered, and argued to be more appropriate for security in certain scenarios. These include: \emph{min-entropy}~\cite{Renyi:61:Berkeley,Smith:09:FOSSACS}, \emph{Bayes risk}~\cite{Cover:91:BOOK,Chatzikokolakis:08:JCS}, \emph{guessing entropy}~\cite{Massey:94:IT}, and \emph{marginal guesswork}~\cite{Pliam:00:INDOCRYPT}. In Section~\ref{section:uncertainty-leakage} we will discuss their meaning and show how they relate (or do not relate) to each other and to Shannon entropy.  

Whatever definition of uncertainty (i.e. vulnerability) we want to adopt, the notion of leakage is inherent to the system and can be expressed in a uniform way as the difference between the initial uncertainty, i.e. the degree of ignorance about the secret \emph{before} we run the system, and the remaining uncertainty, i.e. the degree of ignorance about the secret \emph{after} we run the system and observe its outcome. Following the principle advocated by Smith \cite{Smith:09:FOSSACS}, and by many others: 
\begin{equation}
	\label{eqn:leakage}
	\revision{\mathit{information \ leakage}}    \ =   
	\begin{array}[t]{l}
	 \mathit{initial \ uncertainty}\\-\\\mathit{remaining \  uncertainty}
	 \end{array}
\end{equation}

In \eqref{eqn:leakage}, the initial uncertainty depends solely on the input distribution, aka \revision{the} \emph{a priori distribution} or \emph{prior}. Intuitively, the more uniform it is, the less we know about the secret (in the probabilistic sense). After we run the system, if there is a probabilistic correlation between input and output, then the observation of  the output should increase our knowledge of the secret. This is determined by the fact that the distribution on the input changes: in fact we can update the probability  of each input with the corresponding conditional probability of the same input, given the output. The new distribution is called \revision{the} \emph{a posteriori distribution}. In case \revision{the} input and output are independent, then the a priori and the a posteriori distributions coincide and the  knowledge should remain the same. We will use  the attributes \qm{a priori} (or \qm{prior})  and \qm{a posteriori} to refer to before and after the observation of the output, respectively. 

The above intuitions should be reflected by any reasonable notion of uncertainty: it should be higher on more uniform distributions, and it should decrease or remain equal with the observation of related events. 

If the uncertainty is expressed in terms of Shannon entropy, then the initial uncertainty is the entropy of the input, the remaining uncertainty is the conditional entropy of the input given the output, and  \eqref{eqn:leakage} matches exactly the definition of mutual information. This justifies the notion of leakage adopted in the works mentioned before (\cite{Zhu:05:ICDCS,Clark:05:JLC,Malacaria:07:POPL,Malacaria:08:PLAS,Moskowitz:03:CNIS,Moskowitz:03:WPES,Chatzikokolakis:08:IC}). 

The analogy between information flow in a system and  a (simple) channel works well when:

	\begin{enumerate}[(i)]
	
		\item \label{item:rest-a} there is no nondeterminism, i.e. either the system is deterministic, or purely probabilistic; and
	
		\item \label{item:rest-b} there is a precise temporal relation between secrets and observables in the computations; namely, the value of the  secret is chosen at the beginning of  the computation, and the computation of the system produces  an observable outcome with  a probability that depends solely on the chosen input and on the system. Furthermore, each new run of the system is independent from the previous ones. 

	\end{enumerate}

\revision{Restriction~{(\ref{item:rest-a})} implies that  for each secret there is exactly one conditional probability distribution on the observables, where the condition is the secret value. If a system is deterministic, then under the same input each run produces always the same output, with probability $1$. Therefore the matrix contains only $0$'s and $1$'s. Yet the problem of inferring the secret is interesting, because the same output may correspond to different inputs. If the system is probabilistic, i.e. it uses some randomized mechanisms, then the matrix usually contains probabilities different from $0$ and $1$. 

Restriction~{(\ref{item:rest-b})} ensures that this conditional distribution depends uniquely on the system (not on the input distribution). These conditional probabilities constitute the channel matrix. Note that in a (basic) information-theoretic  channel the matrix must be invariant with respect to the input distribution, which is exactly what condition~{(\ref{item:rest-b})} guarantees.}

Unfortunately, usually conditions~{(\ref{item:rest-a})} and {(\ref{item:rest-b})} are too restrictive for real-life systems:  

	\begin{itemize}

		\item Specifications typically need to use  nondeterminism in order to abstract from implementation details. This is particularly  compelling in the case of concurrent and distributed systems: The order in which the various components get executed and their interactions depend on scheduling policies that may differ from implementation to implementation. Furthermore, even if the scheduling  policy is fixed, there are run time circumstances that may influence the relative speed of the processes. Nondeterminism is, in practice, an unavoidable aspect of concurrency. 
	
		\item Secrets and observables often alternate and  interact during an execution. In particular, the choice of a new secret may depend on previous observables. Furthermore, new \revision{executions} of the systems may depend on previous ones. This may be due to the way the system works, or to the  presence of an active adversary that may use the knowledge derived from previous  observations to try to  tamper with the mechanisms of the system, with the purpose of increasing the leakage. Examples of such systems, that we call here \emph{interactive} systems (where interaction refers to the interplay between secrets and observables), can be found in the areas of game theory, auction protocols, web servers, GUI applications, etc.
		
	\end{itemize}

In this thesis we consider the challenges of extending the information-theoretic approach to cases where these conditions are \revision{relaxed}. More specifically, Chapter~\ref{chapter:interactive-systems} concerns the suppression of condition~{(\ref{item:rest-b})}, and Chapter~\ref{chapter:safe-equivalences} deals with the suppression of  condition~{(\ref{item:rest-a})}. 

\section{Uncertainty and leakage}
\label{section:uncertainty-leakage}

In this section we recall various definitions of uncertainty based on information theory proposed in \revision{the} literature, and we discuss the relation with  security attacks and the way of measuring their success. In general we consider the kind of threats that in the model of K\"opf and Basin \cite{Kopf:07:CCS} are called \emph{brute-force guessing attacks}, which can be summarized as follows: The goal of the adversary is to determine the value of a random variable. He can make a series of queries to an oracle. Each query must have a  \emph{yes}/\emph{no} answer. In general the adversary is \emph{adaptive}, i.e. he can choose the next query depending on the \revision{answers} to the previous ones. We assume that the adversary knows the \revision{a priori} probability distribution. In this section, when we talk about the meaning in security of a particular measure of uncertainty, we refer to the work in \cite{Kopf:07:CCS}.

In the following, $A, B$ denote two discrete random variables with finitely many values $\Aset = \{\Asym{1}{}, \ldots, \Asym{n}{}\}, \ 	\Bset = \{ \Bsym{1}{}, \ldots, \Bsym{m}{} \}$, and  probability distributions $\prob{p_{A}}{\cdot}$,  $\prob{p_{B}}{\cdot}$,  respectively. We will use  $A\wedge B$ to represent the random variable with carrier $\Aset\times\Bset$ and joint probability distribution $\prob{p_{A \wedge B}}{a,b} = \prob{p_{A}}{a} \cdot \prob{p}{b\mid A=a}$, while  $A \cdot B$ will denote the random variable with carrier $\Aset\times\Bset$ and probability distribution defined as product, i.e. $\prob{p_{A\cdot B}}{a,b} = \prob{p_{A}}{a}\cdot \prob{p_{B}}{b}$. Clearly, if $A$ and $B$ are independent, we have $A\, \wedge \,B =  A \,\cdot \,B$. We shall omit the subscripts on the probabilities when they are clear from the context. In reference to a channel, in general $A$ will denote the input (secret), and $B$ the output (observable).
 
\subsection{Shannon entropy}
\label{section:shannon-entropy}
	
The (Shannon) \emph{entropy} of $A$ is defined as
\begin{equation*}
	H(A) = -\sum_{{\mathcal A}} \prob{p}{a} \log{\prob{p}{a}}
\end{equation*}

The entropy measures the uncertainty of $A$. It takes its minimum value $H(A) = 0$ when $p_{A}(\cdot)$ is a point mass (also called delta of Dirac). The maximum value $H(A) = \log{|{\mathcal A}|}$ is obtained when $p_{A}(\cdot)$ is the uniform distribution. Usually the base of the  logarithm is set to  be $2$ and the entropy is measured in \emph{bits}. Roughly speaking, $m$ bits of entropy means that we have $2^m$ values to choose from, assuming a uniform distribution.

The \emph{conditional entropy} of $A$ given $B$ is defined as 
\begin{equation}
	\label{eqn:conditional}
		\begin{array}{lcl}
			H(A\mid B) &=& 
		 		{\displaystyle\sum_{b\,\in\, \Bset}
				\prob{p}{\Bsym{}{}} \  H(A \mid B = \Bsym{}{} )}
		\end{array}
\end{equation}
	
\noindent where 
$$
	\begin{array}{lcl}
		H(A \mid B = \Bsym{}{} ) &= & 	{\displaystyle -\sum_{a\,\in\,\Aset}
		\Cprob{p}{\Asym{}{}}{\Bsym{}{}}\allowbreak
		\log \,{\Cprob{p}{\Asym{ }{}}{\Bsym{ }{}}}  }
	\end{array}
$$
The conditional entropy measures the uncertainty of $A$ when $B$ is known. It is well-known that $0 \leq H(A|B) \leq H(A)$. The minimum value, $0$, is obtained when $A$ is completely determined by $B$. The maximum value $H(A)$ is obtained when $A$ and $B$ are independent. 

The \emph{mutual information} between $A$ and $B$ is defined as 
\begin{equation}
	\label{eqn:ShannonMutualInfo}
	I(A;B) = H(A) - H(A|B)
\end{equation}

The mutual information measures the amount of information about $A$ that we gain by observing $B$. It can be shown that $I(A;B) = I(B;A)$ and $0 \leq I(A;B) \leq H(A)$. If $C$ is a third random variable, the \emph{conditional mutual information} between $A$ and $B$ given  $C$ is defined as 
\begin{equation*}
	I(A;B|C) = H(A|C) - H(A|B,C)
\end{equation*}
	
The (conditional) entropy and mutual information respect the \emph{chain rules}. Namely, given the random variables $A_{1}, A_{2}, \ldots, A_{k}$, $B$ and $C$, we have:
\begin{equation*}
	H(A_{1}, A_{2}, \ldots, A_{k} | C) = \sum_{i=1}^{k}
	H(A_{i} | A_{1}, \ldots, A_{i-1}, C)
\end{equation*}

\begin{equation}
	\label{eq:chain-rule-mutual-information}
	I(A_{1}, A_{2}, \ldots, A_{k}; B| C) = \sum_{i=1}^{k}
	I(A_{i} ; B | A_{1}, \ldots, A_{i-1}, C)
\end{equation}
	
A \emph{discrete memoryless channel} is a tuple $({\mathcal A}, {\mathcal B}, p(\cdot|\cdot))$, where ${\mathcal A}, {\mathcal B}$ are the sets of input and output symbols, respectively, and $p(b|a)$ is the probability of observing the output symbol $b$ when the input symbol is $a$. These conditional probabilities constitute the \emph{channel matrix}. An input distribution $p_A(\cdot)$ over ${\mathcal A}$ together with the channel determine the joint distribution $p(a,b) = p(a|b) \cdot p(a)$ and consequently  $I(A;B)$. The maximum $I(A;B)$ over all possible input distributions is the channel's \emph{capacity} $C$:
\begin{equation*}
	C = \max_{p_A(\cdot)} I(A;B)
\end{equation*}

The famous \emph{Channel Coding Theorem} by Shannon relates the capacity of the channel to its maximum transmission rate. In brief, the channel capacity is a tight upper bound for the maximum rate by which information can be \revision{reliably} transmitted using the channel. \revision{Given an acceptable probability of error $\xi$, there is a natural number $n$ and a coding for which $n$ uses of the channel will result in messages being transmitted with at most the acceptable probability of error $\xi$.}

\paragraph{Meaning in security} To explain what $H(A)$ represents from  the security point of view, consider a partition $\{{\mathcal A}_i\}_{i\in I}$ of $\mathcal A$. The adversary is allowed to ask questions of the form \qm{does $A \in {\mathcal A}_i$?} according to some strategy. Let $n(a)$ be the number of questions that are needed to determine the value of $a$, when $A=a$. Then $H(A)$ represents the lower bound to the expected value of $n(\cdot)$, with respect to all possible partitions and strategies of the adversary \cite{Pliam:00:INDOCRYPT,Kopf:07:CCS}.

\subsection{Min-entropy}
\label{section:minEntropy}
	
In \cite{Renyi:61:Berkeley}, R\'enyi introduced a one-parameter family of entropy measures, intended as a generalization of Shannon entropy. The R\'enyi entropy of order $\alpha$ ($\alpha > 0$,  $\alpha \neq 1$) of  a random variable $A$ is defined as
\begin{equation*}
	H_\alpha(A) \ =\  \frac{1}{1-\alpha}\log\sum_{a \,\in\,\Aset} \prob{p}{\Asym{}{}}^\alpha
\end{equation*}
	
R\'enyi's motivations were of \revision{an} axiomatic nature: Shannon entropy satisfies four axioms, namely symmetry, continuity, value $1$ on the Bernoulli uniform distribution, and the chain rule\footnote{The original axiom, called the grouping axiom, does not mention the conditional entropy. It corresponds, however, to the chain rule if the conditional entropy is defined as in (\ref{eqn:conditional}).}: 
\begin{equation}
	\label{eqn:chain}
		H(A \wedge B) \ =\  H(A) + H(B\,|\,A)
\end{equation}

The entropy of the joint probability, $H(A \wedge B)$,  is more commonly denoted by $H(A,B)$. We will use the latter notation in the following.
	
Shannon entropy is also the \emph{only} function that satisfies those axioms. If we replace, however, \eqref{eqn:chain} with a weaker property representing the additivity of entropy for independent distributions:
	\begin{equation*}	
		H(A \cdot B) \ = \ H(A) + H(B)
	\end{equation*}

\noindent then there are more functions satisfying the axioms, among which \revision{are} all those of R\'enyi's family. 
	
Shannon entropy is obtained by taking the limit of $H_\alpha$ as $\alpha$ approaches $1$. In fact we can easily prove, using l'H{\^o}pital's rule, that 
\begin{equation*}
	H_1(A) \ \defsym \lim_{\alpha\rightarrow 1}H_\alpha(A) \ =\  -\sum_{a \,\in\,\Aset} \prob{p}{\Asym{}{}} \log\,	 \prob{p}{\Asym{}{}}
\end{equation*}
	
We are particularly interested in the limit of $H_\alpha$ as $\alpha$ approaches $\infty$. This is  called \emph{min-entropy}. It can \revision{easily} be proven that
\begin{equation*}
	H_\infty(A) \ \defsym \lim_{\alpha\rightarrow \infty}H_\alpha(A) \ =\  - \log\,\max_{ \Asym{}{}\in\Aset}\,\prob{p}{\Asym{}{}}
\end{equation*}
	
R\'enyi considered also  the $\alpha$-generalization of the Kullback-Liebler divergence, which is defined  as (assuming that $p$ and $q$ are distributions on the same set $\mathcal X$):
\begin{equation*}
	D_{KL}(p\parallel q)\ = \ \sum_{x\,\in\,{\mathcal X}}p(x)\,\log\frac{p(x)}{q(x)}
\end{equation*}
	
R\'enyi's $\alpha$-generalization  is:
\begin{equation*}
	D_\alpha(p\parallel q) \ =\ \frac{1}{1-\alpha}\log\sum_{x\,\in\,{\mathcal X}}p(x)^\alpha \,q(x)^{\alpha-1}
\end{equation*}
	
The standard case, i.e. the Kullback-Liebler divergence, is again obtained by taking the limit of $D_\alpha$ as $\alpha\rightarrow 1$.
		
The interest of the above for our purposes lies on the fact that Shannon mutual information can equivalently be defined in terms of the Kullback-Liebler divergence (see  for instance \cite{Cover:91:BOOK}):
\begin{equation*}
	I(A;B) \ = \ D_{KL}(A\wedge B\parallel A\cdot B)
\end{equation*}

Therefore, it seems natural to define  the $\alpha$-generalization of the  mutual information as:
\begin{equation*}
	I_\alpha^{*}(A;B) \ = \ D_{\alpha}(A\wedge B\parallel A\cdot B)	
\end{equation*}	
	
Other $\alpha$-generalizations of the mutual information, based on the same idea, are explored in \cite{Csiszar:95:TIT}.
        
As $\alpha\rightarrow\infty$, the above definition gives the following min-version of the mutual information: 
\begin{equation}
	\label{eqn:alphaMutualInfo}
   I_\infty^{*}(A ; B) \ \defsym \lim_{\alpha\rightarrow \infty} I_\alpha(A;B) \ = \ \log \, \max_{a,b} \frac{p(a,b)}{p(a)\ p(b)}
\end{equation}

Another natural way to generalize $I(A;B)$  would be to replace $H$ by $H_\alpha$ in Definition~\ref{eqn:ShannonMutualInfo}. R\'enyi did not define, however, the $\alpha$-generalization of the  conditional entropy, and there is no agreement on what it should be. 
	
Various researchers, including Cachin~\cite{Cachin:97:PhD}, have considered  the following definition, based on \eqref{eqn:conditional}:
\begin{equation*}
 	\begin{array}{lcl}
 		H^{\textrm Cachin}_\alpha(A\mid B) &=& 
 			{\displaystyle\sum_{b\,\in\, \Bset}
			\prob{p}{\Bsym{}{}} \  H_\alpha(A \mid B = \Bsym{}{} )}
	\end{array}
\end{equation*}
	
\noindent which, as $\alpha\rightarrow \infty$, becomes
\begin{equation}
	\label{eqn:CachinCondEntropyInfty}
	\begin{array}{lclcl}
 		H^{\textrm Cachin}_\infty(A\mid B)  &=& - {\displaystyle\sum_{b\,\in\, \Bset} \prob{p}{\Bsym{}{}} \ \log\, \max_{a\in{\mathcal A}} p(a\mid b)}
	\end{array}
\end{equation}
	
An alternative proposal for $H_\infty(\cdot\mid\cdot)$ came from Smith~\cite{Smith:09:FOSSACS}\footnote{The same formulation had been already used by Dodis et al. in \cite{Dodis:04:EUROCRYPT}, and Smith proposed it independently. Since it is Smith's work on the subject that motivates the approach used in this thesis, we opt to refer to this formulation as Smith's.}:
\begin{equation}
	\label{eqn:SmithCondEntropyInfty-prem}
 	\begin{array}{lcl}
		H^{\textrm Smith}_\infty(A\mid B) \ = \  - \log \sum_{b\in {\mathcal B}} \max_{a\in {\mathcal A}} \ p(a,b)
	\end{array}
\end{equation}
	
Using \eqref{eqn:CachinCondEntropyInfty} and \ref{eqn:SmithCondEntropyInfty-prem}), and the analogue of \eqref{eqn:ShannonMutualInfo} we can define $I^{\textrm Cachin}_\infty$	and $I^{\textrm Smith}_\infty$~\footnote{The notation $I^{\textrm Smith}_\infty$ is ours. Smith himself opts for not adopting it, since $I^{\textrm Smith}_\infty$ is not symmetric.}.

\paragraph{Meaning in security}

The min-entropy can be related to a model of  adversary who is allowed to ask exactly one question, which must be of the form \qm{is $A = a?$} (one-try attacks). More precisely, the min-entropy $H_\infty(A)$ represents the (logarithm of the inverse of the) probability of success for this kind of \revision{attack} and with the best strategy, which consists, of course, in choosing the $a$ with the maximum probability. 

As for $H_\infty(A\mid B)$ and $I_\infty(A;B)$, the most interesting versions in terms of security seem to be those of Smith. In fact, in this thesis we adopt his approach to information leakage, and we will, from now on, use the following notation:

\begin{itemize}
	\item $H_\infty(A\mid B)$ stands for $H^{\textrm Smith}_\infty(A\mid B)$ and is referred to as \emph{conditional min-entropy};
	
	\item $I_\infty(A;B)$ stands for $I^{\textrm Smith}_\infty(A;B)$ and is referred to as \emph{min-entropy leakage}.
\end{itemize}

In fact, the conditional min-entropy $H_\infty(A\mid B)$ represents \revision{the log of} the inverse of the (expected value of the) probability that the same kind of adversary succeeds in guessing the value of $A$ \emph{a posteriori}, i.e.  after observing the  result of $B$. The complement of this probability is also known as \emph{probability of error} or \emph{Bayes risk}. Since in general $B$ and $A$ are correlated, observing $B$ increases the probability of success. In fact, we can prove formally that $H_\infty(A\mid B) \leq H_\infty(A)$, with equality if $A$ and $B$ are independent. The min-entropy leakage $I_\infty(A;B)$ corresponds to the \emph{ratio} between the probabilities of success a priori and a posteriori, which is a natural notion of leakage. Here $I_\infty(A;B)$ is in the format of \eqref{eqn:leakage}, but the difference becomes a ratio due to the presence of the logarithms. Note that $I_\infty(A;B)\geq 0$, which seems desirable for a good notion of leakage. It has been proven in~\cite{Braun:09:MFPS} that $C_\infty$ is obtained at the uniform distribution, and that it is equal to the sum of the maxima of each column in the channel matrix, i.e. $C_\infty = \sum_{b \,\in\,\mathcal{B}} \max_{a \,\in\,\mathcal{A}} p(b\mid a)$.

The definition of $I_\infty^{*}(A;B)$ in \eqref{eqn:alphaMutualInfo} has also an interpretation in security: it represents the maximum gain in the probability of success, i.e. the maximum ratio between the a posteriori and the a priori probability. Note that also $I_\infty^{*}(A;B)$ is always non-negative and it is $0$ if and only if $A$ and $B$ are independent. \revision{More generally}, $D_{KL}(p\parallel q)$ and its $\alpha$-extension $D_{\alpha}(p\parallel q)$ should represent the \qm{inefficiency} of an adversary who bases its strategy on the distribution $q$, when in fact the real distribution is $p$. Hence $I_{\alpha}^{*}(A;B)$ defined as $D_{\alpha}(A\wedge B\parallel A\cdot B)$ should represent the gain of the adversary in revising his strategy according to the knowledge of the correlation between $A$ and $B$. 

Concerning $H^{\textrm Cachin}_\alpha$ and $I^{\textrm Cachin}_\alpha$, they have some nice properties. For instance they enjoy weak versions of the chain rule \eqref{eqn:chain}. More precisely, the \qm{$=$} in \eqref{eqn:chain} becomes \qm{$\geq$} for $\alpha < 1$, and \qm{$\leq$} for $\alpha > 1$. There is no general relation between $H^{\textrm Cachin}_\infty(A\mid B)$ and $H_\infty(A)$, and in particular $I^{\textrm Cachin}_\infty$ is not guaranteed to be non-negative. 

\subsection{Guessing entropy}
\label{section:guessing}

The notion of guessing entropy was introduced by Massey in \cite{Massey:94:IT}. Let us assume, for simplicity, that the elements of $\mathcal A$ are ordered  by decreasing probabilities, i.e. if $1\leq i<j\leq n$ then $p(a_i)\geq p(a_j)$. Then the guessing entropy is defined as follows:
\begin{equation*}
	H_G(A) \, =\, \sum_{1\leq i \leq |{\mathcal A}|} i \, p(a_i)
\end{equation*}

Massey did not define the notion of conditional guessing entropy. In some works, like \cite{Cachin:97:PhD,Kopf:07:CCS}, it is defined analogously to \eqref{eqn:conditional}:
\begin{equation*}
	H_G(A\mid B)  \ =\ {\displaystyle\sum_{b\,\in\, \Bset} \prob{p}{\Bsym{}{}} \  H_G(A \mid B = \Bsym{}{} )}
\end{equation*}

\paragraph{Meaning in security}

Guessing entropy represents an adversary who is allowed to ask repeatedly questions of the form \qm{is $A = a?$}. More precisely, $H_G(A)$ represents the expected number of questions that the adversary needs to ask to determine the value of $A$, assuming that he follows  the best strategy, which consists, of course, in choosing the $a$'s in order of decreasing probability. 

$H_G(A\mid B)$ represents the expected number of questions \emph{a posteriori}, i.e. after observing the value of $B$ and reordering the queries according to the updated probabilities (i.e. the queries will be chosen in order of decreasing a posteriori probabilities). 

Also in this case, $H_G(A\mid B)$ is not necessarily smaller than or equal to $H_G(A)$, so the corresponding notion of mutual information is not guaranteed to be non-negative\footnote{This problem is inherent to the probabilistic case, and therefore it does not occur in \cite{Kopf:07:CCS}, since that work considers only deterministic systems.}. 

\subsection{Marginal guesswork}

The marginal guesswork is a variant of guessing entropy that was proposed by Pliam  \cite{Pliam:00:INDOCRYPT}. It is parametric \revision{in} a number $\eta>0$, and is defined as follows. Again, we assume that the elements of $\mathcal A$ are ordered by decreasing probabilities.
\begin{equation*}
	H_\eta(A) \, =\,\min\{j\mid  \sum_{1\leq i \leq j}  \, p(a_i) > \eta\}
\end{equation*}	

Pliam did not define the  conditional version of marginal guesswork, but in \cite{Kopf:07:CCS}  it is defined following \eqref{eqn:conditional}:
\begin{equation*}
	H_\eta(A\mid B)  \ =\ {\displaystyle\sum_{b\,\in\, \Bset} \prob{p}{\Bsym{}{}} \  H_\eta(A \mid B = \Bsym{}{} )}
\end{equation*}	
	
\paragraph{Meaning in security}

Consider again an adversary who is allowed to ask repeatedly questions of the form \qm{is $A = a?$}. $H_\eta(A)$ represents the minimum number of questions that the adversary needs to ask to determine the value of $A$ with probability at least $\eta$. 

$H_\eta(A\mid B)$ represents the same notion, but using the a posteriori probabilities. Again, it is not necessarily the case that $H_\eta(A\mid B) \leq H_\eta(A)$. 

\subsection{Comparison and discussion}

The  various notions of entropy discussed in this section have been carefully compared with Shannon entropy, to conclude that in general there is no tight relation. Fano's inequality gives a lower bound to the Bayes risk in terms of (conditional) Shannon entropy, and R\'enyi \cite{Renyi:61:Berkeley}, Hellman-Raviv \cite{Hellman:70:TIT}, and Santhi-Vardi \cite{Santhi:06:ACCCC} give upper bounds as well, but all these are rather weak. Smith has shown in \cite{Smith:09:FOSSACS} that the orderings induced on channels by the Bayes risk and by Shannon entropy are in general unrelated.

Massey has shown that the exponential of the Shannon entropy is a lower bound for the guessing entropy, and that, in case of a geometric distribution, the bound is tight. Massey has also shown that in the general case the Shannon entropy can be arbitrarily close to $0$ while the guessing entropy is constant \cite{Massey:94:IT}.

As for the marginal guesswork. Pliam has shown that it is essentially unrelated with Shannon entropy \cite{Pliam:00:INDOCRYPT}.

In this thesis we focus on the concepts of leakage based on Shannon entropy (Chapter~\ref{chapter:interactive-systems}) and min-entropy (Chapter~\ref{chapter:differential-privacy}).

\chapter{Information flow in interactive systems}
\label{chapter:interactive-systems}
\mscite{True interactivity is not about clicking on icons or downloading files, \\ it's about encouraging communication.}{Edwin Schlossberg}
The key idea behind the information-theoretic approaches to information flow is to interpret the system as an information-theoretic channel, where the secrets are the input and the observables are the output. The channel matrix consists of the conditional probabilities $p(b\,|\, a)$, defined as the  measure of the executions producing the observable $b$, relative to those which contain  the secret $a$. The leakage is represented by the mutual information, and the worst-case leakage by the capacity of the channel \revision{(see Chapter~\ref{chapter:probabilistic-info-flow} for reference)}. 

In information theory, however, there are several different models of channels. So far the works in the literature about information theory applied to information flow have focused on the simplest kind of channels: discrete memoryless channels where the absence of feedback is implicitly assumed. This classical approach has been successfully used in scenarios where the secret value is assumed to be chosen at the beginning of the computation. In this chapter, however, we are interested in the more general scenario in which secrets can be chosen at any point. More precisely, we consider \emph{interactive systems}, i.e. systems in which the generation of secrets and the occurrence of observables can alternate during the computation and influence each other. Examples of interactive systems include \emph{auction protocols} like \cite{Vickrey:61:JoF,Subramanian:98:SRDS,Stajano:99:IH}. Some of these have become very popular thanks to their integration in  Internet-based electronic commerce platforms~\cite{eBay,eBid,Mercadolibre}. Other examples of interactive programs include web servers, GUI applications, and command-line programs \cite{Bohannon:09:CCS}.

Unfortunately, the information-theoretic approach which interprets interactive systems as classical channels is not valid. More specifically, in such systems the channel matrix is not invariant with \revision{respect} to the input distribution, so the channel capacity cannot be calculated in the traditional way. Therefore, the notion of maximum leakage as standard capacity is also compromised.

The goal of this chapter is to extend the classical information-theoretic approach to information flow to the more complicated scenario of interactive systems. 

\textbf{Contribution} The main contributions of this chapter can be summarized as follows.

\begin{itemize}
	
	\item We show that by considering the richer channels that support memory and feedback it is possible to retrieve the correspondence between systems and channels. We prove that there is a complete correspondence between interactive systems and channels with memory and feedback, and we show how to model the latter as the former.
	
	\item We propose the use of directed information, as opposed to mutual information, to represent leakage in interactive systems. Recent results in information theory~\cite{Tatikonda:09:TIT} have shown that, in channels with memory and feedback, the transmission rate does not correspond to the maximum mutual information (the standard notion of capacity), but rather to the maximum normalized \emph{directed information}, a concept introduced by Massey~\cite{Massey:90:SITA}. We argue that in interactive channels the real leakage is due to the directed information from secrets to observables, whereas the directed information from observables to secrets (corresponding to feedback) is a characteristic of the system itself and should not be counted as leakage.
	
	\item We show that our model is a proper extension of the classical one, i.e. in the absence of interactivity the model of channels with memory and feedback collapses into the model of memoryless channels without feedback. Moreover, in that case also the concepts of mutual information and directed information from input to output coincide, the same \revision{holds} for the concepts of capacity and directed capacity. We argue that in the classical approach mutual information is a good measure of leakage exactly because of this property: in the absence of feedback mutual information and directed information from input to output are the same.
	
	\item We show that the capacity of the channels associated to interactive systems is a continuous function with respect to a pseudometric based on the Kantorovich metric. The continuity of the channel capacity was also proved in \cite{Desharnais:02:LICS} for simple channels, but the proof does not adapt to the case of channels with memory and feedback and we had to devise a different technique.

\end{itemize}

\textbf{Plan of the Chapter} This chapter is organized as follows. In Section~\ref{section:int-systems} we introduce the concept of interactive systems and we show why channels without memory and feedback are inadequate in this scenario. In Section~\ref{section:dcmfb} we review the notion of channels with memory and feedback, which is the core of the model we propose. We discuss the concept of directed information and also the concept of capacity in the presence of feedback. Section~\ref{section:our-model} contains the main contribution in this chapter: We explain how  Interactive Information Hiding Systems (\IIHSs) can be modeled using channels with memory and feedback. In particular we show that for any \IIHS there is always a channel that simulates its probabilistic behavior. In Section~\ref{section:anonymity-properties} we discuss our notion of adversary and we define the quantification of information leakage as the channel's directed information from input to output, or as the directed capacity, depending on whether the input distribution is fixed or not. In Section~\ref{section:full-example} we apply our model to an example, the Cocaine Auction protocol. In Section~\ref{section:topological-properties} we propose a pseudometric structure on \IIHSs  based on the Kantorovich metric. We also show that the capacity of the channels associated to interactive systems is a continuous function with respect to this pseudometric. In Section~\ref{section:related-work-interactive-systems} we present some related work, and in Section~\ref{section:conclusion-interactive-systems} we review and discuss the main results of the chapter, and consider future work.

\section{Interactive systems}
\label{section:int-systems}

In this section we exemplify the problems that arise when we try to apply the classical information-theoretic approach to interactive systems. In order to derive an information-theoretic channel, at a first glance it would seem natural to  define the channel matrix by using the definition of $p(b\,|\, a)$ in terms of the joint and marginal probabilities $p(a, b)$ and $p(b)$. Namely,  the entry $p(b\,|\, a)$ would be defined as the measure of the traces with (secret, observable)-projection $(a,b)$, divided by the measure of the traces with secret projection $a$. An approach of this kind was proposed in \cite{Desharnais:02:LICS}. In the interactive case, however, this construction does not really produce an information-theoretic channel. In fact, by definition a channel should be invariant with respect to the input distribution, and this is not the case here, as shown by the following example.

\begin{example} 
	\label{exa:website}
	Figure~\ref{fig:website-example-short} represents a web-based interaction between one seller and two possible buyers, \emph{rich} and \emph{poor}. The seller can offer two   different products, \emph{cheap} and \emph{expensive}, with given probabilities. Once the product is offered, each buyer may try to buy it, with a certain probability. For simplicity we assume that the buyers' offers are mutually exclusive. We assume that the offers are observables, in the sense that they are made public on the website, while the identity of the buyer that actually buys the product should be kept  secret from an external observer.  The symbols $r$, $q_1$, $q_2$, $\overline{r}$, $\overline{q_1}$, $\overline{q_2}$ represent probabilities, with the convention that $\overline{r} = 1 - r$ (and the same for the pairs $q_1$, $\overline{q_1}$ and $q_2$, $\overline{q_2}$).
\end{example}

	\begin{figure}[!htb]
		\centering
		\includegraphics[width=0.30\textwidth]{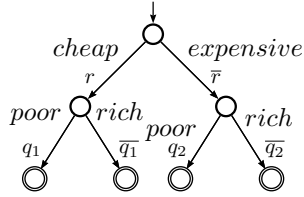}
		\caption{Interactive system of Example~\ref{exa:website}}
		\label{fig:website-example-short}
	\end{figure}

Following~\cite{Desharnais:02:LICS} we can compute the conditional probabilities \revision{using} $\Cprob{p}{b}{a} = \frac{\prob{p}{a,b}}{\prob{p}{a}}$, thus obtaining the matrix in Table~\ref{tab:matrix}.	The matrix  however is not invariant with respect to  the input distribution. For instance for $r = \overline{r} = \frac{1}{2}$, $q_1=\frac{2}{3}$, and $q_2=\frac{1}{3}$ we obtain the  matrix in Table~\ref{tab:matrices-a}. If we change the input distribution, for instance by changing the value of $q_2$ to be   $\frac{1}{6}$, also the matrix changes. We obtain, indeed, the new matrix illustrated in Table~\ref{tab:matrices-b}.
	
	\begin{table}[!htb]
		\centering       
		\begin{tabular}{|c||c|c|}
			 \hline
			 		& ${\;\textit{cheap}\;}$ & $\;\textit{expensive}\;$ \\
			 \hline \hline
			 		& & \\ [-2ex]
			 		$\textit{poor}$ &	$\frac{r q_1}{r q_1+\overline{r} q_2}$ & $\frac{\overline{r} q_2 }{r q_1+\overline{r} q_2}$ \\
			 		& & \\[-2ex]
			 \hline
			 		& & \\[-2ex]
			 		$\textit{rich}$ & $\frac{r \overline{q_1}}{r \overline{q_1}+\overline{r} \overline{q_2}}$ & $\frac{\overline{r} \overline{q_2}}{r \overline{q_1}+\overline{r} \overline{q_2}}$ \\
					[-2ex] 
					& & \\
			 \hline
		\end{tabular}
		\caption{Channel matrix for Example~\ref{exa:website}}
		\label{tab:matrix}
	\end{table}	
	
Consequently, when the secrets occur \emph{after} the observables and \emph{depend on them}, we cannot consider the conditional probabilities (of the observables given the secrets) as representing a classical channel from secrets to observables, and we cannot apply the standard information-theoretic concepts. In particular, we cannot  use \qm{the capacity of the matrix} (defined by considering the matrix as a channel matrix, and taking the maximum mutual information over all possible inputs) because in general the maximum is given by a distribution different from the one that \revision{was used to define} the matrix, hence the result would be unsound.

	\begin{table}[!htb]
			\centering 
			\subbottom[$r = \frac{1}{2}, q_1 = \frac{2}{3}, q_2 = \frac{1}{3}$]{ 
				\centering
						\begin{tabular}{|c||c|c||c|}
						\hline
										 & $\;cheap\;$ & $\;expensive\;$ &\; Input distr. \;\\ \hline \hline
										 & & & \\ [-2ex]
								$poor$ & $\frac{2}{3}$  & $\frac{1}{3}$  & $\;\prob{p}{poor} = \frac{1}{2}\;$ \\
							 			 & & & \\ [-2ex] \hline & & & \\ [-2.ex]							 			
							 	$rich$ & $\frac{1}{3}$ & $\frac{2}{3}$ & $\prob{p}{rich} = \frac{1}{2}$ \\
							 			 [-2ex] & & & \\ \hline
						\end{tabular}
				\label{tab:matrices-a}
			} \subbottom[$r = \frac{1}{2}, q_1 =	\frac{2}{3}, q_2 = \frac{1}{6}$]{
				\centering
						\begin{tabular}{|c||c|c||c|}
						\hline
										 & $\;cheap\;$ & $\;expensive\;$ &\; Input distr. \;\\ \hline \hline
										 & & & \\ [-2ex]
								$poor$ & $\frac{4}{5}$  & $\frac{1}{5}$  & $\;\prob{p}{poor} = \frac{5}{12}\;$ \\
									   & & & \\ [-2ex] \hline & & & \\ [-2.ex]							 			
								$rich$ & $\frac{2}{7}$ & $\frac{5}{7}$ & $\prob{p}{rich} = \frac{7}{12}$ \\
										 [-2ex] & & & \\ \hline
						\end{tabular}
				\label{tab:matrices-b}
			}
			\caption{Two different channel matrices	induced by two different input distributions for Example~\ref{exa:website}}
			\label{tab:matrices}
		\end{table}		

The first contribution of this chapter is to consider an extension of the theory of channels which  makes the information-theoretic approach applicable also in the case of interactive systems. A richer notion of channels, known in information theory as \emph{channels with memory and feedback}, serves our purposes. The dependence of inputs on previous outputs corresponds  to feedback, and the dependence of outputs on previous inputs and outputs corresponds to memory.	Recent results in information theory~\cite{Tatikonda:09:TIT} have shown that, in such channels, the transmission rate does not correspond to the maximum mutual information (the standard notion of capacity), but rather to the maximum normalized \emph{directed information}, a concept introduced by Massey~\cite{Massey:90:SITA}. We propose to adopt this latter notion to represent leakage.

Our model of attacker is the interactive version of the attacker associated to Shannon entropy in the classification of K\"opf and Basin \cite{Kopf:07:CCS}, discussed in Chapter~\ref{chapter:probabilistic-info-flow}. In the case of a standard single-use channel, the invulnerability degree of the secret \emph{before} the attacker observes the output is the entropy of the input, determined by its a priori distribution. The  invulnerability degree \emph{after} the attacker observes the output is the conditional entropy of the input given the output, determined by its a posteriori distribution. The latter is always smaller than or equal to the first. The difference between these invulnerability degrees corresponds to the mutual information, and represents the leakage of the system. In our interactive framework we consider the same scenario, but iterated. At each time step, we consider the input sequence so far; and the increase of its vulnerability caused by the observation of the new output is given by the contribution of the present step to the leakage. The sum of all  these contributions represents the total leakage and, as we will see, corresponds to Massey's directed information. We will come back to the  model of attacker in Section~\ref{section:anonymity-properties}, and discuss also a variant of this interpretation.

A second contribution of our work is the proof that the channel capacity is a continuous function of a pseudometric on interactive systems based on the  Kantorovich metric. The reason why we are interested in the continuity of the capacity is for computability purposes. Given a function $f$ from a (pseudo)metric space ${X}$ to a (pseudo)metric space ${Y}$ the continuity of $f$ means that, given a sequence of objects $x_1, x_2,\ldots \in \mathcal{X}$ converging to $x\in \mathcal{X}$, the \revision{sequence} $f(x_1), f(x_2),\ldots \in \mathcal{Y}$ converges to $f(x)\in \mathcal{Y}$. Hence $f(x)$ can be approximated by the objects $f(x_1), f(x_2), \ldots$. The typical use of this property is in the case of execution trees generated by programs containing loops. Generally the automaton expressing the semantics of the program can be seen as the (metric) limit of the sequence of trees generated by unfolding the loop to an increasingly deeper level. The continuity of the capacity means that we can approximate the real capacity by the capacities of these trees.
 
\section{Discrete channels with memory and feedback}
\label{section:dcmfb}

In this section we present the notion of channel with memory and feedback. We assume a scenario in which the channel is used repeatedly, in a finite temporal sequence of steps $1,\ldots , T$. Intuitively, memory means that the output at time $t (1 \leq t \leq T)$ depends on the input and output histories, i.e. on the inputs up to time $t$, and on the output up to time $t-1$. Feedback means that the input at time $t$ depends on the outputs up to time $t-1$. 
	
We adopt the following notation.
	
\begin{convention}
	\label{convention:notation1}
	Given sets of symbols (alphabets) \revision{$\Aset = \{\Asym{1}{}, \ldots, \Asym{n}{}\}$}, $\Bset = \{\Bsym{1}{}, \ldots, \Bsym{n}{}\}$, we use a Greek letter ($\alpha$, $\beta$, \ldots) to denote a sequence of symbols ordered in time. Given a sequence $\alpha = a_{i_1} a_{i_2} \ldots  a_{i_m}$,  the notation $\Aseq{t}{}$ represents the symbol at time $t$, i.e. $a_{i_t}$, while $\Aseq{}{t}$ represents the sequence $\Aseq{i_1}{} \Aseq{i_2}{} \ldots \Aseq{i_t}{}$. For instance, in the sequence $\alpha = \Asym{3}{} \Asym{7}{} \Asym{5}{}$, we have $\Aseq{2}{} = \Asym{7}{}$ and $\Aseq{}{2} = \Asym{3}{} \Asym{7}{}$. Analogously, if	$X$ is a random variable, then $X^{t}$ denotes the sequence of $t$ consecutive instances $X_{1},\ldots,X_{t}$ of $X$.	
\end{convention}
	
We now define formally the concepts of memory and feedback. Consider a channel from input $A$ to output $B$. The channel behavior after $T$ uses can be fully described by the joint distribution of $A^T\times B^T$, namely by the probabilities $\prob{p}{\Aseq{}{T},\Bseq{}{T}}$. Using the chain rule, we can decompose these probabilities as follows:
\begin{equation}
	\label{eq:joint-prob}
		\prob{p}{\Aseq{}{T},\Bseq{}{T}} = \prod_{t=1}^{T} \Cprob{p}{\Aseq{t}{}}{\Aseq{}{t-1},\Bseq{}{t-1}}\Cprob{p}{\Bseq{t}{}}{\Aseq{}{t},\Bseq{}{t-1}} 
\end{equation}
			
\begin{definition}We say that a channel \emph{has feedback} if, in general, \linebreak $\Cprob{p}{\Aseq{t}{}}{\Aseq{}{t-1},\Bseq{}{t-1}}\neq \Cprob{p}{\Aseq{t}{}}{\Aseq{}{t-1}}$, i.e. the probability of $\Aseq{t}{}$  depends not only on  $\alpha^{t-1}$, but also on $\beta^{t-1}$. Analogously, we say that the channel \emph{has memory} if,  in general, $\Cprob{p}{\Bseq{t}{}}{\Aseq{}{t},\Bseq{}{t-1}}\neq \Cprob{p}{\Bseq{t}{}}{\Aseq{t}{}}$, i.e. the probability of $\Bseq{t}{}$  depends on $\Aseq{}{t}$ and  $\Bseq{}{t-1}$.
\end{definition}
	
Note that in the opposite case, i.e. when	$\Cprob{p}{\Aseq{t}{}}{\Aseq{}{t-1},\Bseq{}{t-1}}$ coincides with $\Cprob{p}{\Aseq{t}{}}{\Aseq{}{t-1}}$ and $\Cprob{p}{\Bseq{t}{}}{\Aseq{}{t},\Bseq{}{t-1}}$ coincides with $\Cprob{p}{\Bseq{t}{}}{\Aseq{t}{}}$, we have a classical channel (memoryless, and without feedback), in which each use is independent from the previous ones. The only possible dependency on the history is the one of  $a_t$ on $a^{t-1}$. This is because $A_1, \ldots, A_T$ are in general correlated, due to the fact that they are produced by an encoding function. Note that in absence of memory and feedback \eqref{eq:joint-prob} reduces to:	
\begin{align}
	\label{eq:classical-joint}
	\nonumber \prob{p}{\Aseq{}{T},{\Bseq{}{T}}} & = \prod_{t=1}^{T} \Cprob{p}{\Aseq{t}{}}{{\Aseq{}{t-1}}}\,\Cprob{p}{\Bseq{t}{}}{\Aseq{t}{}} & \text{} \\
			                                & = \prob{p}{\Aseq{}{T}}  \prod_{t=1}^{T} \Cprob{p}{\Bseq{t}{}}{\Aseq{t}{}} & \text{(by the chain rule)}
\end{align}	
 		
\noindent from which we can derive the standard formula for a classical channel after $T$ uses.
\begin{align*}
	\Cprob{p}{\Bseq{}{T}}{\Aseq{}{T}} & = \frac{\prob{p}{\Aseq{}{T},{\Bseq{}{T}}}}{\prob{p}{\Aseq{}{T}}} & \text{} \\
			                              & = \prod_{t=1}^{T} \Cprob{p}{\Bseq{t}{}}{\Aseq{t}{}} & \text{(by \eqref{eq:classical-joint})} \\
\end{align*}		
	
So far we have given a very abstract description of a channel with memory and feedback. We now discuss a more concrete notion  following the presentation of ~\cite{Tatikonda:09:TIT}. Such a channel, represented in Figure~\ref{fig:dcmfb}, consists of  a sequence of components formally defined as  a family of stochastic kernels  $\{ \Cprob{p}{\cdot\,}{\Aseq{}{t},\Bseq{}{t-1} } \}_{t=1}^{T}$ over $\Bset$. 

The probabilities $\Cprob{p}{\Bseq{t}{}}{\Aseq{}{t},\Bseq{}{t-1}}$ represent \emph{innermost behavior} of the channel at time $t$, $1 \leq t \leq T$: the internal channel takes the input $\Aseq{t}{}$ and, depending on the history of inputs and outputs so far, it produces an output symbol $\Bseq{t}{}$. The output is then fed back to the encoder with delay one. On the input side, at time $t$ the encoder takes the message and the past output symbols $ \Bseq{}{t-1}$ and produces a channel input symbol $\Aseq{t}{}$ according to the code function $\Fseq{t}{}$ (we will explain this concept in the next paragraph). At final time $T$ the decoder takes all the channel outputs $\Bseq{}{T}$ and produces the decoded message $\hat{W}$. The order in time is the following:
	
	$$ \mbox{Message \ } W, \quad \Aseq{1}{}, \Bseq{1}{},  \quad\Aseq{2}{},	\Bseq{2}{},  \quad\ldots, \quad \Aseq{T}{}, \Bseq{T}{}, \quad \mbox{Decoded Message \ } \hat{W} $$
	
	\begin{figure}[!htb]
  	\centering
  	\includegraphics[width=\linewidth]{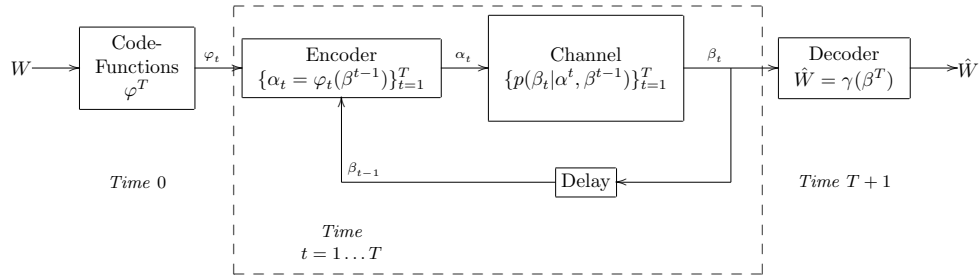}	
		\caption{Model for discrete channel with memory and feedback}
		\label{fig:dcmfb}
	\end{figure}	
	
	Let us now explain the concept of code function. Intuitively, a code function is a strategy to encode the message into a suitable representation to be transmitted through the channel. There is a code function for each possible message, and the functions are fixed at the very beginning of the transmission (time $t=0$). The encoding, however, can use the information provided via feedback, so each component $\Fseq{t}{}$ ($1 \leq t \leq T$) of the code function takes as parameter the history of feedback $\Bseq{}{t-1}$ to generate the next input symbol $\Aseq{t}{}$.
			
Formally, let $\Fset_{t}$ be the set of all measurable maps $\Fseq{t}{}: \Bset^{t-1} \rightarrow \Aset$ endowed with a probability distribution, and let $F_t$ be the corresponding random variable. Let $\Fset^{T}$, $F^T$ denote the Cartesian product on the domain and the random variable, respectively. A \emph{channel code function}  is an element $\Fseq{}{T} = ( \Fseq{1}{}, \ldots, \Fseq{T}{} ) \in \Fset^{T}$.
	
Note that, by the chain rule, $\prob{p}{\Fseq{}{T}} = \prod_{t=1}^{T} \Cprob{p}{\Fseq{t}{}}{\Fseq{}{t-1}}$. Hence the distribution on $\Fset^{T}$ is uniquely determined by a sequence $\{\Cprob{p}{\Fseq{t}{}}{\Fseq{}{t-1}} \}_{t=1}^{T}$. The notation $\Fseq{}{t}(\Bseq{}{t-1})$ will represent the $\mathcal A$-valued $t$-tuple  $( \Fseq{1}{}, \Fseq{2}{}(\Bseq{}{1}), \ldots, \Fseq{t}{}(\Bseq{}{t-1}) )$.

In Information Theory this kind of channel is used to encode and transmit messages. If $\Wset$ is a set of messages of cardinality $M$ with typical element $w$, endowed with a probability distribution, a \emph{channel code} is a set of $M$ channel code functions $\Fseq{}{T}[w]$, interpreted as follows: for message $w$, if at time $t$ the channel feedback is $\Bseq{}{t-1}$, then the channel encoder outputs $\Fseq{t}{}[w](\Bseq{}{t-1})$. A \emph{channel decoder} is a map from $ \Bset^{T}$ to $ \Wset$ which attempts to reconstruct  the input message after observing all the output history $\Bseq{}{T}$ from the channel.
	
\subsection{The power of feedback}
\label{section:channel-feedback-example}

The original purpose of \emph{communication channel} models is to represent data transmission from a source to a receiver. Shannon's Channel Coding Theorem states that for every channel there is an encoding scheme that allows a transmission rate arbitrarily close to the channel capacity with a negligible probability of error (if the number of uses of the channel is large enough). A general way to find an optimal encoding scheme that is also easy to decode has not been found yet. The use of feedback, however, can simplify the design of the encoder and of the decoder. The following example illustrates the idea. 

\begin{table}[!htb]
	\centering
	\begin{tabular}{|c||c|c|c|}
		\hline
	 	 & $0$ & $1$ & \texttt{e} \\ \hline \hline
	 	 $0$ & $0.8$ & 0 & $0.2$ \\ \hline
	 	 $1$ & 0 & $0.8$ & $0.2$ \\ \hline
	\end{tabular}
	\caption{Channel matrix for binary erasure channel}
	\label{tab:example}
\end{table}

\begin{example}
	\label{exa:erasure}
		
		Consider a discrete memoryless binary channel $\{\Aset, \Bset, \Cprob{p}{.}{.} \}$ with $\Aset = \{0,1\}$, $\Bset = \{0,1,\texttt{e}\}$ and the channel matrix of Table~\ref{tab:example}. This kind of channel is called \emph{erasure channel} because it can lose (or \emph{erase}) bits  during the transmission with a certain probability.  Namely, any bit has $0.8$ probability of being correctly transmitted, and $0.2$ probability of being lost. On the output side the encoder is able to detect whether the bit was erased (by receiving an \texttt{e} symbol), but it cannot tell which was the actual value of the original bit. The Channel Coding Theorem guarantees that the maximum information transmission rate in this channel is ($2$ to the power of) the channel capacity, i.e. $0.8$ bits per use of the channel.
	
		Following simple principles described in~\cite{Cover:06:BOOK}, an encoding that achieves the capacity can be easily obtained if the channel can be used with feedback. The idea is an adaptation of the stop-and-wait protocol~\cite{Sallings:06:BOOK,Tanenbaum:89:BOOK}. Suppose that every bit received on the output end of the channel is fed back noiselessly to the source with delay $1$. Define the encoding as follows: for each bit transmitted, the encoder checks via feedback whether the bit was erased. If not, the encoder moves on to transmit the text of the message. If yes, the encoder transmits the same bit again.
	
		It is easy to see that with this encoding scheme the transmission rate is $0.8$ bit per usage of the channel, since in $80\%$ of the cases the bit is transmitted properly, and in $20\%$ it is lost and a retransmission is needed. 
\end{example}

We now proceed to illustrate in more detail the design and the function of the  encoder and decoder. 

\subsubsection{An example illustrating the the encoder/decoder design}
\label{section:encoder-decoder}

We proceed with the erasure channel of Example~\ref{exa:erasure} to show how the enriched model of channels with memory and feedback can be used to transmit the message, and in particular how the feedback can be used to design the encoder. We assume that the set $\Wset$ of possible messages consists of all finite sequences of bits. The role of the code functions is to encode the message $W$ into a suitable representation for the stochastic kernels within the channel. The input and output alphabets for the stochastic kernels are $\Aset = \{0,1\}$ and $\Bset = \{0,1, \texttt{e} \}$, respectively. We assume that at most $T$ uses of the channel are allowed and we use $t$, with $1\leq t \leq T$, to represent the $t^{th}$ time step.

We consider a sort of memory that depends only on the input history and we abstract from its specific form by defining a function $\eta: \pow(\Aset^{t}) \rightarrow [0,1]$ that maps each possible input history to a correction factor to be added to (or subtracted from) a base probability value. We compute the contribution of $\eta$ to the base values using \revision{arithmetic} modulo $2$, in such a way that the resulting values are still a probability distribution. More precisely, the stochastic kernels are defined as follows.
\begin{equation}
	\begin{array}{l}
		\Cprob{p}{\Bseq{t}{}=0}{\Aseq{}{t-1}0, \Bseq{}{t-1}} = 0.8 - \eta(\Aseq{}{t-1}) \\
		\Cprob{p}{\Bseq{t}{}=1}{\Aseq{}{t-1}0, \Bseq{}{t-1}} = 0 \\
		\Cprob{p}{\Bseq{t}{}=\texttt{e}}{\Aseq{}{t-1}0, \Bseq{}{t-1}} = 0.2 + \eta(\Aseq{}{t-1}) \\
		\Cprob{p}{\Bseq{t}{}=0}{\Aseq{}{t-1}1, \Bseq{}{t-1}} = 0 \\
		\Cprob{p}{\Bseq{t}{}=1}{\Aseq{}{t-1}1, \Bseq{}{t-1}} = 0.8 - \eta(\Aseq{}{t-1}) \\
		\Cprob{p}{\Bseq{t}{}=\texttt{e}}{\Aseq{}{t-1}1, \Bseq{}{t-1}} = 0.2 + \eta(\Aseq{}{t-1}) \\
	\end{array}
\end{equation}

Correspondingly, the general form of the channel matrix for each time $1 \leq t \leq T$ is shown in Table~\ref{tab:general-form}.

\begin{table}[!htb]
	\centering
			\setlength{\extrarowheight}{3pt}	
			$$
				\begin{array}{|c||c|c|c|}
					\hline
															 				 & 0 & 1 & \texttt{e} \\ \hline \hline
						\Aseq{t}{}=0, \Bseq{}{t-1} & 0.8 - \eta(\Aseq{}{t-1}) & 0 & 0.2 + \eta(\Aseq{}{t-1}) \\ \hline
						\Aseq{t}{}=1, \Bseq{}{t-1} & 0 & 0.8 - \eta(\Aseq{}{t-1}) & 0.2 + \eta(\Aseq{}{t-1}) \\ \hline  
				\end{array}
			$$
			\setlength{\extrarowheight}{0pt}	
	\caption{General form of channel matrix}
	\label{tab:general-form}
\end{table}

The code functions are chosen at time $t=0$, based on the message to be transmitted. For illustration purposes, let us suppose that the message is the sequence of three bits $W = 011$. The other cases of $W$ are analogous. 

At time $t=1$, the channel is used for its first time and the feedback history so far is empty $\Bseq{}{0} = \epsilon$. The encoder selects the input symbol $\Aseq{0}{} = 0$, as in \eqref{eq:f1}.
\begin{equation}
	\label{eq:f1}
	\Fsym{1}{}[W=011](\Bseq{}{0}=\epsilon) = 0
\end{equation}

At time $t=2$, the feedback history consists of only one symbol, and in principle the possibilities are either $\Bseq{}{1} = 0$, $\Bseq{}{1} = 1$ or $\Bseq{}{1} = \texttt{e}$. In the first case, the first bit was successfully transmitted and the encoder can go on to the second bit of the message. By the way the channel is defined, the second case is not really possible, so it is not important how the reaction function is defined for this case. We will denote this indifference by attributing to the function the symbol $\mathtt{x}$ instead of a $0$ or a $1$. In the last case, $\Bseq{}{1} = \texttt{e}$, the first bit was erased and the encoder tries to retransmit the bit $0$. We can write it formally as below.
\begin{equation}
	\label{eq:f2}
	\begin{array}{l}
		\Fsym{2}{}[W=011](\Bseq{}{1}=0) = 1 \\
		\Fsym{2}{}[W=011](\Bseq{}{1}=1) = \mathtt{x} \\
		\Fsym{2}{}[W=011](\Bseq{}{1}=\texttt{e}) = 0   \\
	\end{array}
\end{equation}

At time $t=3$ the feedback histories allowed by the channel are $\Bseq{}{2} \in \{01, 0\texttt{e}, \texttt{e}0, \texttt{e}\texttt{e}\}$ (the other ones have zero probability). In the first case, $\Bseq{}{2} = 01$ the two first bits of the message have been transmitted correctly and the encoder can send the third bit. If $\Bseq{}{2} = 0\texttt{e}$, the transmission of the first bit was successful, but the second bit was erased and needs to be resent. In the case $\Bseq{}{2} = \texttt{e}0$, the first bit was erased in the first try but was successfully transmitted in the second try, so now the encoder can move to the second bit of the message. In the last case, $\Bseq{}{2} = \texttt{e}\texttt{e}$, the two tries were unsuccessful and the encoder still needs to transmit the first bit of the message. Formally:
\begin{equation}
	\label{eq:f3}
	\begin{array}{l}
		\Fsym{3}{}[W=011](\Bseq{}{2}=00) = \mathtt{x} \\
		\Fsym{3}{}[W=011](\Bseq{}{2}=01) = 1 \\
		\Fsym{3}{}[W=011](\Bseq{}{2}=0\texttt{e}) = 1 \\
		\Fsym{3}{}[W=011](\Bseq{}{2}=10) = \mathtt{x} \\ 
		\Fsym{3}{}[W=011](\Bseq{}{2}=11) = \mathtt{x} \\
		\Fsym{3}{}[W=011](\Bseq{}{2}=1\texttt{e}) = \mathtt{x} \\
		\Fsym{3}{}[W=011](\Bseq{}{2}=\texttt{e}0) = 1 \\
		\Fsym{3}{}[W=011](\Bseq{}{2}=\texttt{e}1) = \mathtt{x} \\
		\Fsym{3}{}[W=011](\Bseq{}{2}=\texttt{e}\texttt{e}) = 0 \\
	\end{array}
\end{equation}

We can easily extend the construction of code functions $\Fsym{t}{}$ for $3 \leq t \leq T$ using this encoding scheme.

The decoder is very simple: once all time steps $1, \ldots, T$ have taken place, it just takes the whole output trace $\Bseq{}{T}$ and removes the occurrences of the erased bit symbol $\texttt{e}$ in order to recover the original message. 

Table~\ref{tab:dcmf-example} shows a possible behavior of a binary erasure channel with memory and feedback in a scenario where the message is $W=011$ and the channel can be used at most $T=3$ times. Note that in this particular example the maximum number of uses of the channel is achieved before the whole message is successfully sent: the decoder can recover only the two first bits of the original message.

\begin{table}[!htb]
	\centering
	\setlength{\extrarowheight}{3pt}
	\tiny
	\begin{tabular}{|c||c|c|c|c|c|}
		\hline
		Time & Code  & Feedback & Encoder & Channel & Decoder \\ 
		 $t$ & functions & history & $\Aseq{t}{} =$ & $\Cprob{p}{\Bseq{t}{}}{\Aseq{}{t},\Bseq{}{t-1}}$ & $\hat{W} = $ \\
		     & $\Fsym{t}{}(\Bseq{}{t-1})$                           & $\Bseq{}{t-1}$ & $\Fsym{t}{}[W](\Bseq{}{t-1})$ &           & $\gamma(\Bseq{}{T})$ \\ \hline \hline
		        & Code       &  &  &  &  \\
		$t = 0$ & functions  & --------- & --------- & --------- & --------- \\ 
		        & for $W=011$ &  &  &  &  \\
		        & are selected.  &  &  &  &  \\ \hline
		        &  &  & $\Aseq{1}{} =$ & According to &  \\
		$t = 1$ & As in (\ref{eq:f1})  & $\epsilon$ & $\Fsym{1}{}[W=011](\epsilon)$ &  $\Cprob{p}{\Bseq{1}{}}{0, \epsilon}$ & --------- \\	
						&      &  &   $= 0$   &  produces & \\
						&      &  &           &   $\Bseq{1}{} = \texttt{e}$ & \\ \hline
		        &  &  & $\Aseq{2}{} = $ & According to &  \\
		$t = 2$	& As in (\ref{eq:f2})  & $\texttt{e}$ & $\Fsym{2}{}[W=011](\texttt{e})$ & $\Cprob{p}{\Bseq{2}{}}{00, \texttt{e}}$ &  --------- \\ 
						& &  & $=0$ & produces  &  \\
						& &  &      & $\Bseq{2}{} = 0$ &  \\ \hline
		        &  &  & $\Aseq{3}{} = $ & According to &  \\
		$t = 3$ & As in (\ref{eq:f3}) & $\texttt{e}0$ & $ \Fsym{3}{}[W=011](\texttt{e}0)$  & $\Cprob{p}{\Bseq{3}{}}{001, \texttt{e}0}$ & --------- \\ 
						&  &  & $= 1$ & produces &  \\
						&  &  &       & $\Bseq{3}{} = 1$ &  \\ \hline
		        &  &  &  &  & Decoded \\		
		$t = 4$ & --------- & --------- & --------- & --------- &  message $\hat{W} = $\\
		        &  &  &  &  & $ \gamma(\Bseq{}{3} = \texttt{e}01)$\\
		        &  &  &  &  & $= 01$\\		\hline		
	\end{tabular}
	\setlength{\extrarowheight}{0pt}
	\caption{A possible evolution of the binary channel with time, for $W=011$ and $T=3$}
	\label{tab:dcmf-example}
\end{table}	

We can observe that the channel capacity in the above example does not increase with the addition of feedback (it is $0.8$ bit per usage of the channel with or without feedback).  This is because the channel is memoryless: \emph{feedback does not increase the capacity of discrete memoryless channels} \cite{Cover:06:BOOK}. In general however, feedback \emph{does} increase the capacity \revision{of channels with memory}.

\subsection{Directed information and capacity of channels with feedback}
\label{section:dirinfo}
	
In classical Information Theory, the channel capacity, which  is related to the channel's transmission rate by Shannon's Channel Coding Theorem, can be obtained as the supremum of the mutual information over all possible input distributions. In the presence of feedback, however, this correspondence no longer holds.	More specifically, mutual information no longer represents the information flow from $A^{T}$ to $B^{T}$. Intuitively, this is due to the fact that mutual information expresses correlation, and therefore it is increased by feedback \revision{(Example~\ref{exa:small} in Section~\ref{section:anonymity-properties} depicts this fact)}. Yet feedback, i.e. the way the output influences the next input, is not part of the information to be transmitted. If we want to maintain the correspondence between the transmission rate and  capacity, we need to replace the mutual information with \emph{directed information}~\cite{Massey:90:SITA}.

\begin{definition}
	\label{def:directed-information}
	In a channel with feedback, the directed information from input $A^{T}$ to output $B^{T}$ is defined as
		$$ I(A^{T} \rightarrow B^{T}) = \sum_{t=1}^{T}	I(A^{t};B_{t} | B^{t-1})$$
	In the other direction, the directed information from   $B^{T}$ to   $A^{T}$ is defined as
		$$ I(B^{T} \rightarrow A^{T}) = \sum_{t=1}^{T} I(A_{t};B^{t-1} | A^{t-1})$$
\end{definition}

In Section~\ref{section:anonymity-properties} we will discuss the relation between directed information and mutual information, as well as the correspondence with information leakage. For the moment, we only present the extension of the concept of capacity.
	
Let $\Dset_{T} = \{ \Cprob{p}{\Aseq{t}{}}{\Aseq{}{t-1},\Bseq{}{t-1}} \}_{t=1}^{T}$ be the set of all  input distributions in presence of feedback. For finite $T$, the capacity of a channel with memory and feedback  is:
\begin{equation}
	\label{def:cmfb-capacity}
	C_{T} = \sup_{\Dset_{T}} \frac{1}{T} I(A^{T} \rightarrow B^{T})
\end{equation}
	
The capacity is also defined when $T$ is infinite, see~\cite{Tatikonda:09:TIT}. In this thesis, however, we only need to consider the finite case.
	
\section{Interactive systems as channels with memory and feedback}
\label{section:our-model}

Interactive Information Hiding Systems (\IIHS) were introduced in \cite{Andres:10:TACAS} to represent systems where secrets (inputs) and observables (outputs) can interleave and influence each other. They are  a variant of probabilistic automata in which  actions are divided \revision{into} secrets and observables.  They can be of two kinds:   \emph{fully probabilistic}, and \emph{secret-nondeterministic} (or \emph{input-nondeterministic}). In the former there is no nondeterminism, while in the latter every secret choice is  fully nondeterministic. In this chapter we consider \emph{normalized} \IIHSs, in which secrets and observables alternate, and the actions at the first level are secrets. We note that this is not really a restriction,  because given an \IIHS which is not normalized, it is always possible to transform it into a normalized \IIHS which is equivalent to the former one up to a given execution level. The reader can find \revision{further below in this Section} the formal definition of the transformation. Furthermore, we require that for each state $s$ and each action $\ell$ there is at most one state that can be reached from $s$ by performing an $\ell$ transition. 

In this section we formalize the notion of \IIHS and we show how to associate to an IIHS a channel with memory and feedback.
	
\begin{definition}
	\label{def:IIHS}
	A (normalized) \IIHS\ is a triple $\Isys=(M, \Aset, \Bset )$, where $\Aset$ and $\Bset$ are disjoint sets of secrets and observables  respectively,  $M$ is a probabilistic automaton $(\Sset, \Lset, \si, \vartheta)$ with $\Lset=\Aset\cup\Bset $, and, for each $s\in\Sset$:

	\begin{enumerate}
		\item \label{item:restr-theta1} either $\vartheta(s)\subseteq \distr(\Aset  \times \Sset)$ or  $\vartheta(s)\subseteq \distr(\Bset  \times \Sset)$. We call $s$ a \emph{secret state} in the first case, and an \emph{observable state} in the second case;
		
		\item \label{item:restr-theta2} if $\smash{ s\stackrel{\ell}\rightarrow r}$ then: if $s$ is a secret state then $r$ is an observable state, and if $s$ is an observable state then $r$ is a secret state;
			
		\item \label {item:restr-initial} $\si$ \ is a secret state;
		
		\item \label{item:restr-theta3} if $s$ is an observable state then $|\vartheta(s)|\leq 1$ ;
		
		\item \label{item:restr-theta4} either:
				
			\begin{enumerate}[(i)]
			 \item \label{item:restr-theta4-a} for every secret state $s$ we have $|\vartheta(s)|\leq 1$ \emph{(fully probabilistic \IIHS)}, 
			
			or
			
				\item \label{item:restr-theta4-b} for every secret state $s$ there exist $a_{i}$ and $s_{i}$ ($i=1,\ldots,n$) such that $\vartheta(s)$ $=$ $\{\delta(a_i,s_i)\}_{i=1}^n$, where $\delta(a_i,s_i)$ is the Dirac measure \emph{(secret-nondeterministic \IIHS);}
				\end{enumerate}
				
		\item \label{item:restr-add} for every state $s$ and action $\ell$ there exists a unique state $r$ such that $\smash{ s\stackrel{\ell}\rightarrow r}$. 	

	\end{enumerate}				
\end{definition}	

In the rest of the \revision{chapter} we will omit the adjective \qm{normalized} for simplicity. In the above definition, Conditions \ref{item:restr-theta1} and \ref{item:restr-theta2} imply that the \IIHS is alternating between secrets and observables. Moreover, all the transitions between nodes at two consecutive depths have either secret actions only, or observable actions only. Condition~\ref{item:restr-initial} means that the first level contains secret actions. Condition~\ref{item:restr-theta3} means that all observable transitions are fully probabilistic. Condition~\ref{item:restr-theta4} means that \revision{either all secret transitions are fully probabilistic, either they are all fully nondeterministic}. The term \qm{nondeterministic} is justified by the fact that the scheme of \revision{Condition~\ref{item:restr-theta4-b}} represented in Figure~\ref{fig:nondeterministic-dirac-inputs}, is equivalent to the one of Figure~\ref{fig:nondeterministic-input}.		
			
			\begin{figure}[!htb]
				\centering
				\subbottom[Nondeterministic input using Dirac measures]{
					\centering
					\includegraphics[width=0.6\linewidth]{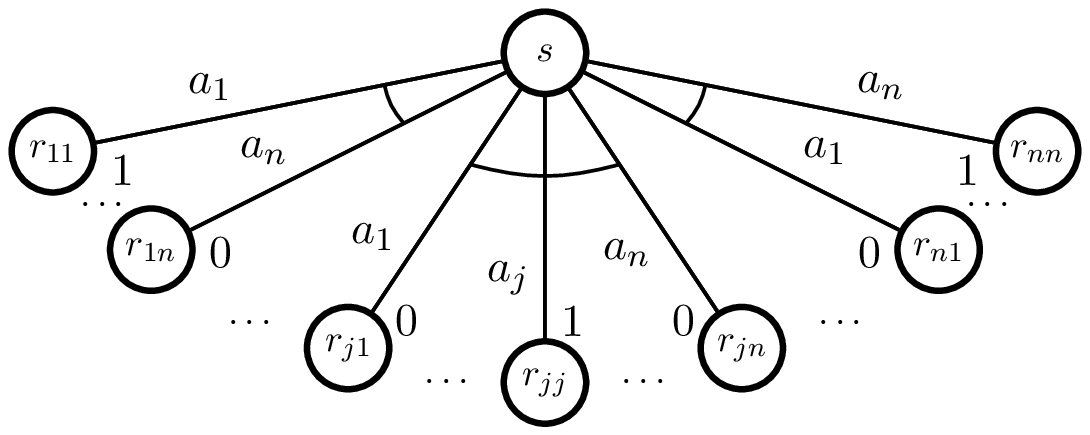}%
					\label{fig:nondeterministic-dirac-inputs}%
				}
				\subbottom[Equivalent scheme]{
					\centering
					\includegraphics[width=0.3\linewidth]{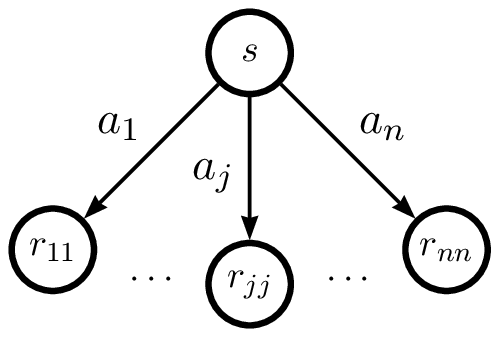}%
					\label{fig:nondeterministic-input}%
				}
				\caption{Scheme of secret transitions for secret-nondeterministic \IIHSs }
				\label{fig:inout-scheme}
			\end{figure}
			
Note that we do not consider here internal nondeterminism \revision{which can arise} from interleaving of concurrent processes. This means that we make a rather restricted use of probabilistic automata, but this is enough for our purposes. The nondeterminism generated by concurrency gives rise to a new set of problems (see for example~\cite{Chatzikokolakis:08:IC}) which are orthogonal to those considered in this \revision{chapter}. 	

Condition~\ref{item:restr-add} means that the secret and observable actions  determine the states. As a consequence, the actions are enough to retrieve the path. This is expressed by the following proposition:
				
\begin{proposition}
	\label{prop:traces-states}
	Given an  \IIHS, consider two paths $\sigma$ and $\sigma'$. If  $\trace_{\Aset}(\sigma)=\trace_{\Aset}(\sigma')$ and $\trace_{\Bset}(\sigma)=\trace_{\Bset}(\sigma')$, then $\sigma =\sigma'$.
\end{proposition}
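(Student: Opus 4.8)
The plan is to reduce the statement to two elementary structural observations about a normalized IIHS: first, that the full trace $\trace(\sigma)$ can be recovered from the pair of projections $(\trace_{\Aset}(\sigma),\trace_{\Bset}(\sigma))$ because secrets and observables strictly alternate along every path; and second, that the full trace together with the common starting state $\si$ recovers the path itself, because each action determines its successor state uniquely. To set up notation, I would first note that by Conditions~\ref{item:restr-theta1}, \ref{item:restr-theta2} and~\ref{item:restr-initial} every path out of $\si$ has the alternating shape
\[
\sigma = \si \stackrel{a_1}{\to} s_1 \stackrel{b_1}{\to} s_2 \stackrel{a_2}{\to} s_3 \stackrel{b_2}{\to} \cdots,
\]
where each $a_i\in\Aset$ labels a transition out of a secret state (at even depth) and each $b_i\in\Bset$ labels a transition out of an observable state (at odd depth). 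Consequently the actions occupy fixed positions according to their type: secrets fill the odd positions of $\trace(\sigma)$ and observables the even ones.

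For Step~1, I would observe that the hypotheses $\trace_{\Aset}(\sigma)=\trace_{\Aset}(\sigma')$ and $\trace_{\Bset}(\sigma)=\trace_{\Bset}(\sigma')$ force the secret projections to have equal length and the observable projections to have equal length, so $\trace(\sigma)$ and $\trace(\sigma')$ have the same total length. Reconstructing the interleaving position by position—placing the $k$-th secret at position $2k-1$ and the $k$-th observable at position $2k$—then yields $\trace(\sigma)=\trace(\sigma')$. This reconstruction is well defined precisely because alternation pins down, for each position, whether it must carry a secret or an observable.

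For Step~2, the essential ingredient is Condition~\ref{item:restr-add}: from any state $s$ and any action $\ell$ there is a unique $r$ with $s\stackrel{\ell}{\to} r$. I would argue by induction on the position $i$ that the $i$-th states of $\sigma$ and $\sigma'$ coincide. The base case holds since both paths begin in $\si$. For the inductive step, assuming the $i$-th states agree and using that the $(i{+}1)$-th actions agree (since $\trace(\sigma)=\trace(\sigma')$ by Step~1), uniqueness of the successor gives that the $(i{+}1)$-th states agree as well. As the states and the labels of $\sigma$ and $\sigma'$ then coincide at every position, we conclude $\sigma=\sigma'$.

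The argument is almost entirely structural, so there is no genuine computational obstacle; the only point requiring care is making Step~1 precise—checking that the alternation conditions really do determine a unique interleaving of the two projections, and in particular that equality of projections forces equality of total length (so that infinite paths are handled uniformly alongside finite ones). This proposition is exactly what later legitimizes identifying a path with the pair consisting of its secret-trace and its observable-trace, which is what makes it possible to view the IIHS as a channel with memory and feedback.
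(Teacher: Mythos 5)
Your proof is correct and follows essentially the same route as the paper's: an induction along the path that uses the alternation conditions to pin down the interleaving of secrets and observables, and Condition~\ref{item:restr-add} (uniqueness of the $\ell$-successor) to propagate equality of states. The paper merely folds your two steps into a single induction on the length of the traces, so the only difference is presentational.
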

			
\begin{proof}
	By induction on the length of the traces. The initial state of the automaton is  uniquely determined by the empty  (secret and observable) traces. Assume now we are in a state $s$ uniquely determined by secret and observable traces $\Aseq{}{}$ and $\Bseq{}{}$, respectively. If $s$ makes a secret transition $\smash{ s\stackrel{a}\rightarrow s'}$, then by Condition~\ref{item:restr-add} there is only one state $s'$ reachable from $s$ via an $a$-transition, and therefore $s'$  is uniquely determined by the secret trace $\Aseq{}{}'= \Aseq{}{}a$ and the observable trace $\Bseq{}{}$.  The case in which $s$ makes an observable transition is similar.  		
\end{proof}
			
\subsubsection{The normalization of \IIHS\ trees}
\label{section:tree-trans}

In this section we will address the problem of \emph{normalizing} an \IIHS, namely transforming it into a stratified automaton in which secret and observable actions alternate level by level. The process of normalization described bellow is general enough to be applied to any \IIHS\ without loss of generality or expressive power.
	
Let $\Aset$ and $\Bset$ represent the secret and observable actions, respectively. Consider a general \IIHS\ $\Isys = (M, \Aset, \Bset)$ with $M=(Q, \Lset, \si, \vartheta)$, where $\Lset = \Aset \cup \Bset$. Assume that we are only interested in executions that involve up to $T$ interactions, i.e. $T$ uses of the system, with one secret taking place and one observable produced at each time. 
	
In the normalization process, we unfold the automaton up to level $2T$, since there is one secret symbol and one observable symbol for each step. We also extend the secret alphabet $\Aset$ with a new symbol $\Asym{*}{} \notin \Aset$ and  the observable alphabet $\Bset$ with a new symbol $\Bsym{*}{} \notin \Bset$. These new symbols will be used as placeholders when we need to re-balance the tree. Let $\Aset' = \Aset \cup \{ \Asym{*}{} \}$ and $\Bset' = \Bset \cup \{ \Bsym{*}{} \}$.
	
For a given level $t$ let $\Lfun(\Isys,t)$ be the set of all labels of transitions that can be performed with a non-zero probability from the states at the $t^{th}$ level of the automaton. Formally:

\[\Lfun(\Isys,t) \equiv \{ \ell \in \Lset \ | \ \exists \sigma, s \ . \ |\sigma| = t, \ \last(\sigma) \stackrel{\ell}{\to} s\}\]
	
The normalization of the \IIHS $\Isys$ leads to an equivalent \IIHS\ $\Isys' = (M', \Aset', \Bset')$, where $M'=(Q', \Lset', \si', \vartheta')$ and $\Lset'= \Aset' \cup \Bset'$; and such that, for every $1 \leq t \leq 2T$:
	
	\begin{enumerate}
		\item \label{eq:norm-cond1}
		$\Lfun(\Isys',t) \subseteq \Aset' \mbox{\ \ \ or \ \ \ } \Lfun(\Isys',t) \subseteq \Bset'$;
		\item \label{eq:norm-cond3}
		$ \Lfun(\Isys',t) \subseteq \Aset' \mbox{\ \ \ if and only if \ \ \ } \Lfun(\Isys',t+1) \subseteq \Bset'$, for $1\leq t \leq T-1$; 
		\item \label{eq:norm-cond4}
		$\Lfun(\Isys',1) \subseteq \Aset'$;
	\end{enumerate}

Condition~\ref{eq:norm-cond1} states that each level consists of either the secret actions only, or the observable actions only. Condition~\ref{eq:norm-cond3} states that secret and observable levels alternate. Condition~\ref{eq:norm-cond4} says that  the automaton starts with a secret level. 
	
The proof is straightforward.  First, the new symbols $\Asym{*}{}$ and $\Bsym{*}{}$ are placeholders for the absence of a secret and observable symbol, respectively. If in a given level $t$ we want to have only secret symbols, we can postpone the occurrences of observable symbols at this level as follows: add $\Asym{*}{}$ to the secret level and \qm{move} all the observable symbols to the subtree of $\Asym{*}{}$. Figure~\ref{fig:tree-trans-exa} exemplifies the local transformations we need to make on the tree.

	\begin{figure}[!htb]
		\centering        
		\subbottom[Local nodes of the tree before the transformation]{\label{fig:trans-exa-T1}
			\centering
			\includegraphics[width=0.45\linewidth]{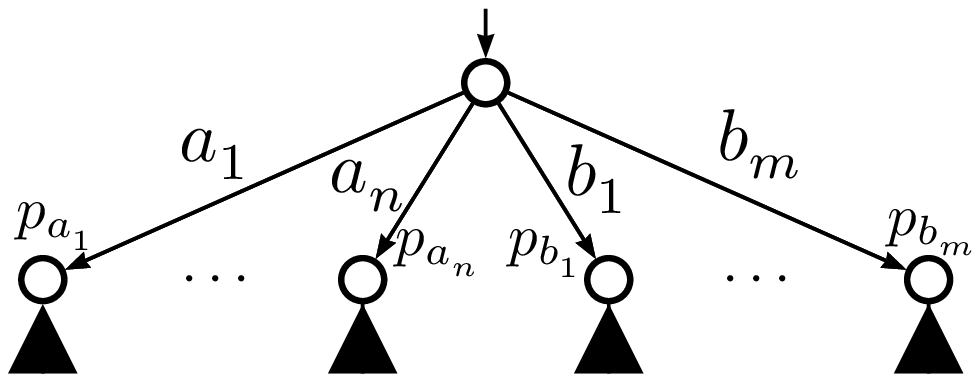}
		}
		\subbottom[Local nodes of the tree after the transformation]{\label{fig:trans-exa-T2}
			\centering
			\includegraphics[width=0.35\linewidth]{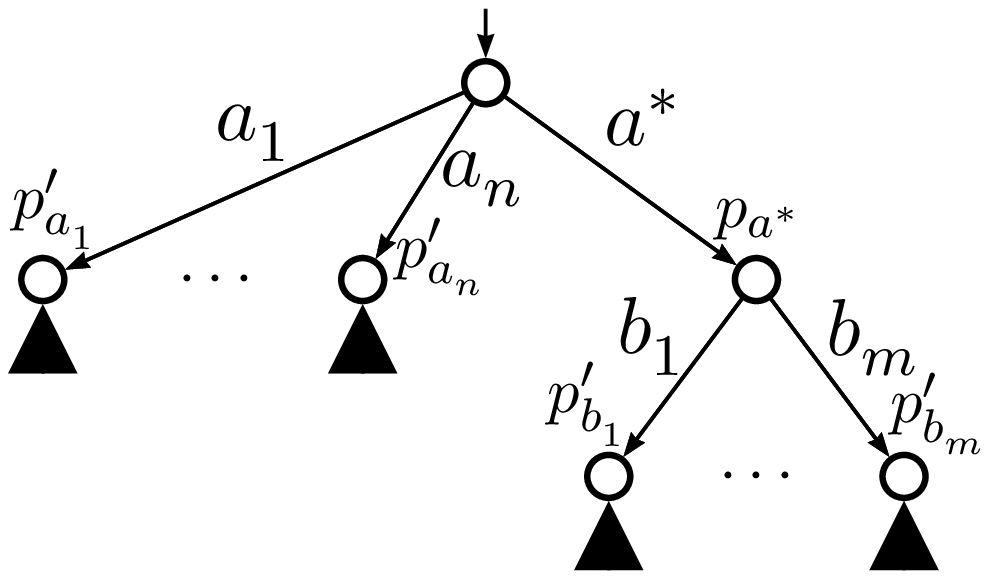}
		}
		\caption{Local transformation in an \IIHS tree}
		\label{fig:tree-trans-exa}
	\end{figure}

	Note that in~\ref{fig:trans-exa-T2} the introduction of new nodes changed the probabilities of the transitions in the tree. In general, whenever we need to introduce ${\Asym{*}{}}$ in order to postpone the observable symbols, the probabilities change as follows:
	
	\begin{enumerate}
		\item For every $\Asym{i}{}$, $1 \leq i \leq n$, the associated probability is maintained as $p_{\Asym{i}{}}' = p_{\Asym{i}{}}$;
		\item The probability of the new symbol $\Asym{*}{}$ is introduced as $p_{\Asym{*}{}} = \sum_{k=0}^{m}p_{\Bsym{k}{}}$;
		\item If $p_{\Asym{*}{}} \neq 0$, then for $1 \leq i \leq m$, the associated probability of $\Bsym{j}{}$ is updated to $p_{\Bsym{j}{}}' = {p_{\Bsym{j}{}}} / {p_{\Asym{*}{}}} = {p_{\Bsym{j}{}}} / {\sum_{k=0}^{m}p_{\Bsym{k}{}}}$. If $p_{\Asym{*}{}} = 0$, then $p_{\Bsym{j}{}}' = 0$, for $1 \leq i \leq m$, and $p_{\Bsym{*}{}} = 1$.
	\end{enumerate}
	
The subtrees of each node of the original tree are preserved as they are, until we apply the same transformation to them. If a node does not have a subtree (i.e. no descendants), we create a subtree by adding all the possible actions in $\Bset$ with probability $0$, and the action $\Bsym{*}{}$ with probability $1$.
		
If we are normalizing an observable level, the same rules apply, guarding the proper symmetry between secrets and observables. We then \revision{proceed in} the same way on the deeper levels of the tree. Figure~\ref{fig:tree-trans} shows an example of a full transformation on a tree (for the sake of readability, we omit the levels where only $\Asym{*}{} = 1$ or $\Bsym{*}{} = 1$).
	
	\begin{figure*}[!htb]
		\centering
		\subbottom[Tree before transformation]{
			\centering
			\label{fig:trans-T1}
			\includegraphics[width=0.3\linewidth]{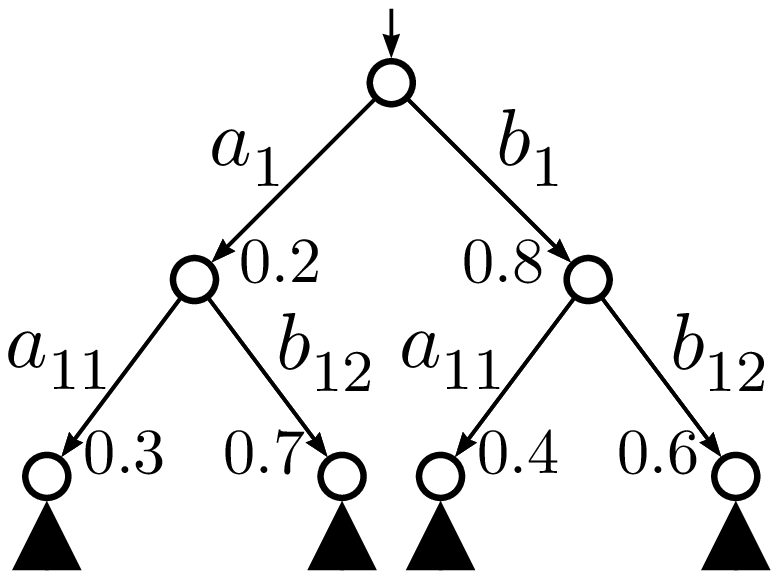}
		}
		\subbottom[Tree after transformation]{
			\centering
			\label{fig:trans-T2}
			\includegraphics[width=0.6\linewidth]{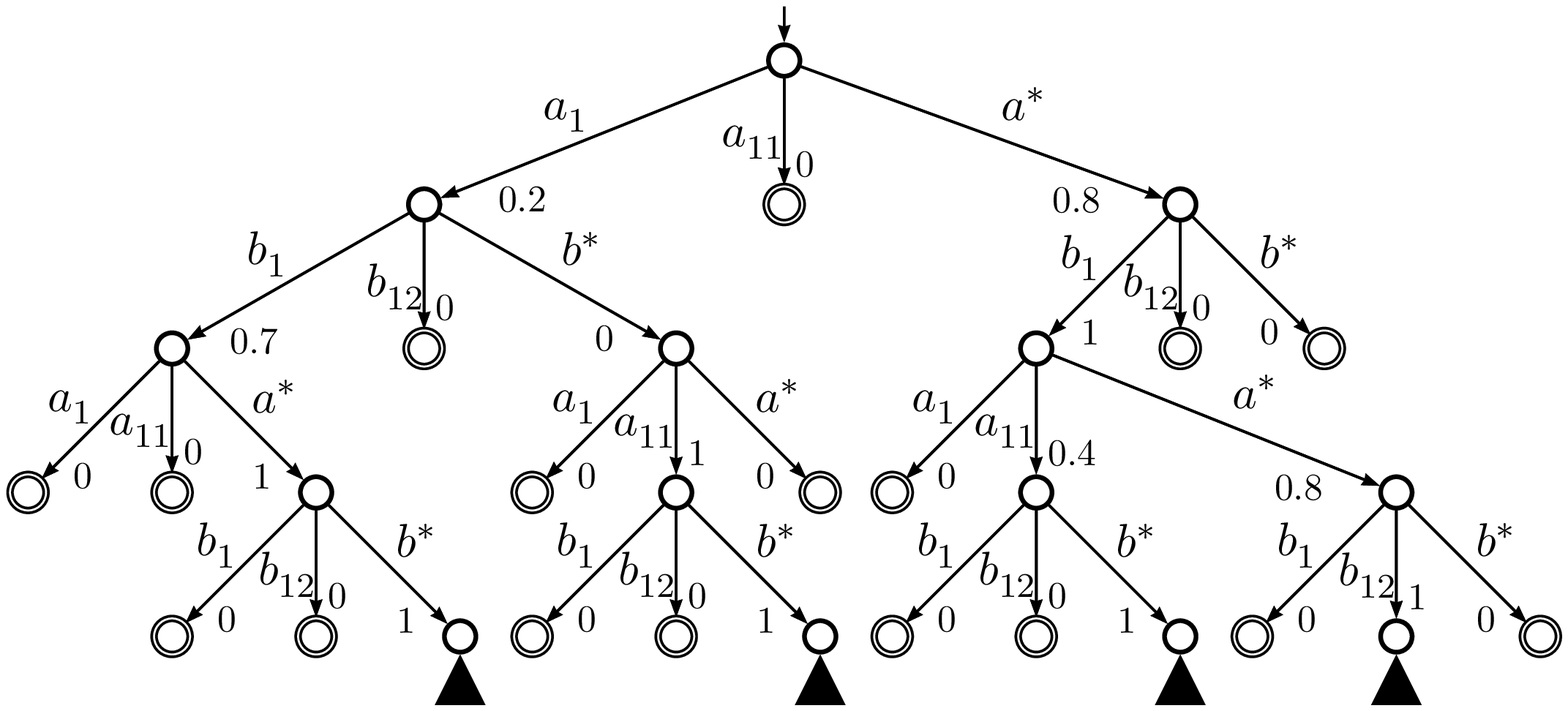}
		}
		\caption{Transformation in an \IIHS tree}
		\label{fig:tree-trans}
	\end{figure*}

	\subsection{Construction of the channel associated to an \IIHS}			
	We now show how to associate a channel to an \IIHS. 
	
In an interactive system secrets and observables may interleave and influence each other. Considering a channel with memory and feedback is a way to capture this rich behavior. Secrets have a causal influence on observables via the channel, and, in the presence of interactivity, observables have a causal influence on secrets via feedback. This alternating mutual influence between secrets and observables can be modeled by repeated uses of the channel. Each time the channel is used it represents a different state of the computation, and the conditional probabilities of observables on secrets can depend on this state. The addition of memory to the model allows expressing the dependency of  the channel matrix on such a state.

We will see that a secret-nondeterministic \IIHS\  determines a channel as specified by its stochastic kernels, while a fully probabilistic \IIHS\ determines, additionally, the input distribution.
	
In Section~\ref{section:full-example} we will give an extensive and detailed example of how to make  such a construction for an actual security protocol. 
	
Given a path $\sigma$ of length $2t-1$, we will denote $\trace_{\Aset}(\sigma)$ by $\Aseq{}{t}$, and $\trace_{\Bset}(\sigma)$ by $\Bseq{}{t-1}$. 				

	\begin{definition} \label{def:def-matrix}
	Let  $\Isys$ be an \IIHS.
	For each $t$, the channel's stochastic kernel corresponding to  $\Isys$
		is defined as
 			$\Cprob{p}{\Bseq{t}{}}{\Aseq{}{t},\Bseq{}{t-1}}
			= \vartheta(s)(\Bseq{t}{},s')$,
		 where $s$ is the state reached from the root via the path $\sigma$ whose
	secret and observable traces are  $\Aseq{}{t}$ and  $\Bseq{}{t-1}$ respectively.	
	\end{definition}
	
	Note that $s$ and $s'$ in the previous definition are well defined: by Proposition~\ref{prop:traces-states}, $s$
	is unique, and since the choice of $\Bseq{t}{}$ is fully probabilistic, $s'$ is also unique. 
	
	The following example illustrates how to apply Definition~\ref{def:def-matrix}, with the help of Proposition~\ref{prop:traces-states}, to
	build the channel matrix of a simple example.
	
	\begin{example}
		\label{exa:channel-construction}
		
		Let us consider an extended version of the website interactive system of Figure~\ref{fig:website-example-short}. We maintain the general definition of
		the system, i.e. there are two possible buyers ($rich$ and $poor$, represented by $rc.$ and $pr.$, respectively) and two possible products ($cheap$ and $expensive$, represented by $chp.$ and $exp.$, respectively). We still assume that offers are observable, since they are visible to everyone on the website, but the identity of buyers should be kept secret. We consider two consecutive rounds of offers and buys, which implies that, after normalization, $T = 3$. Figure~\ref{fig:website-example-full} shows an automaton for this example in normalized form. Transitions with null probability are omitted, and the symbol $\Asym{*}{}$ is used as a place holder to achieve the normalized \IIHS.
		
		To construct the stochastic kernels $\{ \Cprob{p}{\Bseq{t}{}}{\Aseq{}{t},\Bseq{}{t-1}} \}_{t=1}^{T}$, we need to determine the conditional probability of an observable at time $t$ given the history up to time $t$. 
		
		Let us take  the case $t=2$ and compute the conditional probability of the observable $\Bseq{2}{} = cheap$ given that the history of secrets up to time $t=2$ is $\Aseq{}{2} = \Asym{*}{}, poor$ and the history of observables is $\Bseq{}{1} = expensive$. Applying Definition~\ref{def:def-matrix}, we see that $\Cprob{p}{\Bseq{2}{} = cheap}{\Aseq{}{2} = \Asym{*}{}, poor, \Bseq{}{1} = expensive} = \vartheta(s)(cheap,s')$.  By Proposition~\ref{prop:traces-states}, the traces $\Aseq{}{2} = \Asym{*}{}, poor, \Bseq{}{1} = expensive$ determine a unique state $s$ in the automaton, namely, the state $s = 5$. Moreover, from the state $5$ a unique transition labeled with the action $cheap$ is possible, leading to the state $s' = 11$. Therefore, we can conclude that $\Cprob{p}{\Bseq{2}{} = cheap}{\Aseq{}{2} = \Asym{*}{}, poor, \Bseq{}{1} = expensive} = \vartheta(s = 5)(cheap, s' = 11) = p_{23}$.
		
		Similarly, with $t=1$ and history $\Aseq{}{1} = \Asym{*}{}, \Bseq{}{0} = \epsilon$, the observable symbol $\Bseq{1}{} = expensive$ can be observed with probability $\Cprob{p}{\Bseq{1}{} = expensive}{\Aseq{}{1} = \Asym{*}{}, \Bseq{}{0} = \epsilon} = \vartheta(s = 0)(cheap, s' = 2) = \overline{p_{1}}$.
		
	\end{example}		
	
	If $\Isys$ is fully probabilistic, then it determines also the input distribution and the dependency of $\Aseq{t}{}$ on $\Bseq{}{t-1}$ (feedback) and on $\Aseq{}{t-1}$.
			
	\begin{definition}\label{def:input-distribution}
	Let  $\Isys$ be an \IIHS. If   $\Isys$ is fully probabilistic, the associated channel has a conditional input distribution for each $t$ defined as
	   $\Cprob{p}{\Aseq{t}{}}{\Aseq{}{t-1},\Bseq{}{t-1}}
			= \vartheta(s)(\Aseq{t}{},s')$,		
		where $s$ is the state reached from the root via the path $\sigma$ whose
		secret and  observable traces are $\Aseq{}{t-1}$ and $\Bseq{}{t-1}$ respectively.
	\end{definition}
	
	\begin{example}
		\label{exa:input-construction}
		
		Since the system of Example~\ref{exa:channel-construction} is fully probabilistic, we can calculate the values of the conditional probabilities $\{ \Cprob{p}{\Aseq{t}{}}{\Aseq{}{t-1},\Bseq{}{t-1}} \}_{t=1}^{T}$.
		
		Let us take, for instance, the case where $t=2$ and compute the conditional probability of secret $\Aseq{2}{} = poor$ given that the history of secrets up to time $t=2$ is $\Aseq{}{1} = \Asym{*}{}$ and the history of observables is $\Bseq{}{1} = expensive$. Applying Definition~\ref{def:input-distribution}, we see that $\Cprob{p}{\Aseq{2}{} = poor}{\Aseq{1}{} = \Asym{*}{}, \Bseq{}{1} = expensive} = \vartheta(s)(poor,s')$. By Proposition~\ref{prop:traces-states}, the traces $\Aseq{}{1} = \Asym{*}{}, \Bseq{}{1} = expensive$ determine a unique state $s$  in the automaton, namely, the state $s = 2$. Moreover, from the state $2$ a unique transition labeled with the action $poor$ is possible, leading to the state $s' = 5$. Therefore, we can conclude that $\Cprob{p}{\Aseq{2}{} = poor}{\Aseq{1}{} = \Asym{*}{}, \Bseq{}{1} = expensive} = \vartheta(s = 2)(poor,s' = 5) = q_{12}$.
		
		Similarly, with $t=3$ and history $\Aseq{}{2} = \Asym{*}{}, rich, \Bseq{}{2} = cheap, expensive$, the secret symbol $\Aseq{3}{} = rich$ can be observed with probability $\Cprob{p}{\Aseq{3}{} = rich}{\Aseq{}{2} = \Aseq{*}{},rich, \Bseq{}{0} = cheap,expensive} = \vartheta(s = 10)(cheap, s' = 22) = \overline{q_{24}}$.
		
	\end{example}		

	\begin{figure}[!htb]%
		\includegraphics[width=\linewidth]{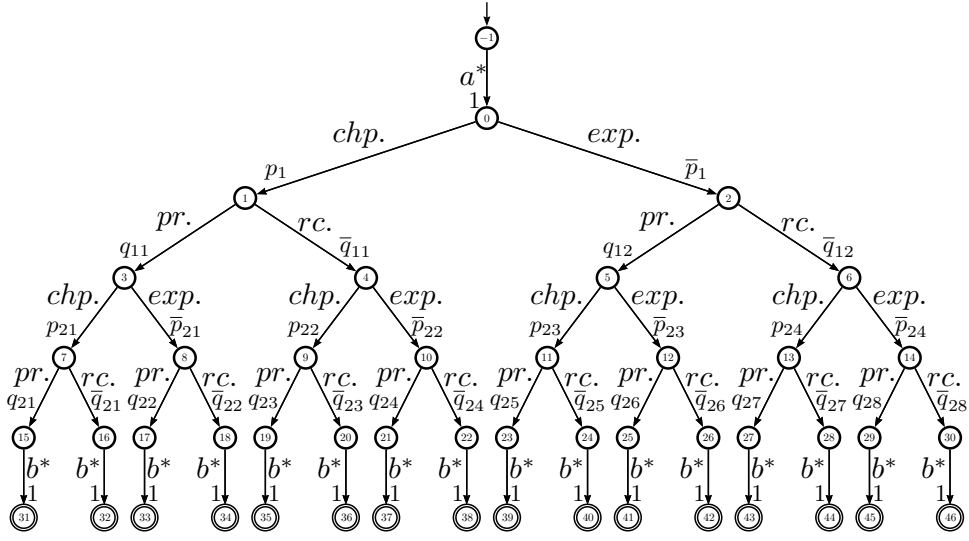}%
		\caption{The normalized \IIHS for the extended website example}
		\label{fig:website-example-full}%
	\end{figure}

	\subsection{Lifting the channel inputs to reaction functions}
	\label{section:reaction-functions}

	Taken together, Definitions~\ref{def:def-matrix} and~\ref{def:input-distribution} show how to obtain the the joint probabilities 	$\prob{p}{\Aseq{}{t},\Bseq{}{t}}$ for a fully probabilistic  \IIHS. We still need to show, however, in what sense this joint probability distribution defines an information-theoretic channel.
	
	The  $\{ \Cprob{p}{\Bseq{t}{}}{\Aseq{}{t},\Bseq{}{t-1}} \}_{t=1}^{T}$ determined by the \IIHS\ trivially correspond to a channel's stochastic kernel. The problem 	resides in the conditional probabilities $\{ \Cprob{p}{\Aseq{t}{}}{\Aseq{}{t-1},\Bseq{}{t-1}} \}_{t=1}^{T}$. In an information-theoretic channel, the value of $\Aseq{t}{}$ is determined in the encoder by a deterministic function $\Fseq{t}{}(\Bseq{}{t-1})$. Therefore, inside the encoder there is no possibility for a probabilistic description of $\Aseq{t}{}$. The solution is to externalize this probabilistic behavior to the code functions.
	
	As shown in~\cite{Tatikonda:09:TIT}, the original channel with feedback from input symbols $\Aset^{T}$ to output symbols $\Bset^{T}$ can be lifted to an equivalent channel without feedback from code functions $\Fset^{T}$ to output symbols $\Bset^{T}$. This transformation also allows us to calculate the channel capacity. Let $\{\Cprob{p}{\Fseq{t}{}}{\Fseq{}{t-1}}\}_{t=1}^{T}$ be a sequence of code function stochastic kernels and let $\{ \Cprob{p}{\Bseq{t}{}}{\Aseq{}{t},\Bseq{}{t-1}} 	\}_{t=1}^{T}$ be a channel with memory and feedback. The channel from  $F^{T}$ to $B^{T}$ is constructed using a joint  measure $Q(\Fseq{}{T},\Aseq{}{T},\Bseq{}{T})$ that respects the following constraints:
	
	\begin{definition}
	\label{def:Q}
	A measure $\prob{Q}{\Fseq{}{T},\Aseq{}{T},\Bseq{}{T}}$ is said to be \emph{consistent} with respect to the code function stochastic kernels $\{\Cprob{p}{\Fseq{t}{}}{\Fseq{}{t-1}}\}_{t=1}^{T}$ and the channel $\{ \Cprob{p}{\Bseq{t}{}}{\Aseq{}{t},\Bseq{}{t-1}} \}_{t=1}^{T}$ if, for each $t$:	

		\begin{enumerate}

			\item There is no feedback to the code functions:
				
				$$\Cprob{Q}{\Fseq{t}{}}{ \Fseq{}{t-1},  \Aseq{}{t-1}, \Bseq{}{t-1}} =	\Cprob{p}{\Fseq{t}{}}{\Fseq{}{t-1}}$$

			\item  The input is a function of	the past outputs:
				
				$$\Cprob{Q}{\Aseq{t}{}}{ \Fseq{}{t},  \Aseq{}{t-1}, \Bseq{}{t-1}} = \delta_{\{\Fseq{t}{}(\Bseq{}{t-1})\}}(\Aseq{t}{})$$
			
			\noindent where $\delta$ is the Dirac measure;

			\item  The properties of the underlying channel are preserved:
				
				$$\Cprob{Q}{\Bseq{t}{}}{F^{t}=\Fseq{}{t}, A^{t}=\Aseq{}{t}, B^{t-1} = \Bseq{}{t-1}} = \Cprob{p}{\Bseq{t}{}}{\Aseq{}{t},\Bseq{}{t-1}}$$

		\end{enumerate}

	\end{definition}
	
	The following result states that there is only one consistent measure \linebreak $\prob{Q}{\Fseq{}{T},\Aseq{}{T},\Bseq{}{T}}$.
	
	\begin{theorem}[\cite{Tatikonda:09:TIT}]
			Given the probability distributions $\{\Cprob{p}{\Fseq{t}{}}{\Fseq{}{t-1}}\}_{t=1}^{T}$ and a channel defined by $\{\Cprob{p}{\Bseq{t}{}}{\Aseq{}{t},\Bseq{}{t-1}}\}_{t=1}^{T}$, there exists only one   consistent measure $Q(\Fseq{}{T},\Aseq{}{T},\Bseq{}{T})$. Furthermore the channel from $\Fset^{T}$ to $\Bset^{T}$ is given by:
		
		$$ \Cprob{Q}{\Bseq{t}{}}{\Fseq{}{t}, \Bseq{}{t-1}} = \Cprob{p}{\Bseq{t}{}}{\Fseq{}{t}(\Bseq{}{t-1}), \Bseq{}{t-1}} $$
		
	\end{theorem}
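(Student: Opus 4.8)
The plan is to exploit the observation that the three conditions of Definition~\ref{def:Q} pin down exactly the chain-rule factors of $\prob{Q}{\Fseq{}{T},\Aseq{}{T},\Bseq{}{T}}$ once the joint measure is expanded in the causal temporal order $\Fseq{1}{}, \Aseq{1}{}, \Bseq{1}{}, \Fseq{2}{}, \Aseq{2}{}, \Bseq{2}{}, \ldots, \Fseq{T}{}, \Aseq{T}{}, \Bseq{T}{}$. Expanding along this order yields
$$\prob{Q}{\Fseq{}{T},\Aseq{}{T},\Bseq{}{T}} = \prod_{t=1}^{T} \Cprob{Q}{\Fseq{t}{}}{\Fseq{}{t-1},\Aseq{}{t-1},\Bseq{}{t-1}} \, \Cprob{Q}{\Aseq{t}{}}{\Fseq{}{t},\Aseq{}{t-1},\Bseq{}{t-1}} \, \Cprob{Q}{\Bseq{t}{}}{\Fseq{}{t},\Aseq{}{t},\Bseq{}{t-1}},$$
and the conditioning sets appearing here are precisely the variables that precede $\Fseq{t}{}$, $\Aseq{t}{}$, and $\Bseq{t}{}$ in that order. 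Hence each factor is fixed: the first by the first condition of Definition~\ref{def:Q} to be $\Cprob{p}{\Fseq{t}{}}{\Fseq{}{t-1}}$, the second by the second condition to be the Dirac mass $\delta_{\{\Fseq{t}{}(\Bseq{}{t-1})\}}(\Aseq{t}{})$, and the third by the third condition to be $\Cprob{p}{\Bseq{t}{}}{\Aseq{}{t},\Bseq{}{t-1}}$.

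This already settles both parts of the first claim. For uniqueness, any measure consistent with Definition~\ref{def:Q} must carry exactly these chain-rule factors, so it must coincide with the single product
$$\prob{Q}{\Fseq{}{T},\Aseq{}{T},\Bseq{}{T}} = \prod_{t=1}^{T} \Cprob{p}{\Fseq{t}{}}{\Fseq{}{t-1}} \, \delta_{\{\Fseq{t}{}(\Bseq{}{t-1})\}}(\Aseq{t}{}) \, \Cprob{p}{\Bseq{t}{}}{\Aseq{}{t},\Bseq{}{t-1}}.$$
For existence, I would verify that this product is a genuine probability measure: it is a finite product of discrete stochastic kernels chained along the temporal order, so summing over the variables from the last to the first — using that each conditional, and the Dirac mass summed over $\Aseq{t}{}$, totals $1$ — shows it sums to $1$, while nonnegativity is immediate; one then checks directly that it satisfies the three conditions.

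To obtain the lifted channel from $\Fset^{T}$ to $\Bset^{T}$, I compute $\Cprob{Q}{\Bseq{t}{}}{\Fseq{}{t},\Bseq{}{t-1}}$ by marginalizing the secret variables $\Aseq{}{t}$ out of the length-$t$ prefix of the joint. Since the marginal onto the first $t$ steps keeps the same product form (the later factors sum to $1$), and since each Dirac factor $\delta_{\{\Fseq{s}{}(\Bseq{}{s-1})\}}(\Aseq{s}{})$ forces $\Aseq{s}{}=\Fseq{s}{}(\Bseq{}{s-1})$, the sum over $\Aseq{}{t}$ collapses to the single surviving term with $\Aseq{}{t} = \Fseq{}{t}(\Bseq{}{t-1})$. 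This gives $\prob{Q}{\Fseq{}{t},\Bseq{}{t}} = \prod_{s=1}^{t} \Cprob{p}{\Fseq{s}{}}{\Fseq{}{s-1}} \, \Cprob{p}{\Bseq{s}{}}{\Fseq{}{s}(\Bseq{}{s-1}),\Bseq{}{s-1}}$; summing the final factor over $\Bseq{t}{}$ (which contributes $1$) produces $\prob{Q}{\Fseq{}{t},\Bseq{}{t-1}}$, and taking the ratio leaves exactly $\Cprob{Q}{\Bseq{t}{}}{\Fseq{}{t},\Bseq{}{t-1}} = \Cprob{p}{\Bseq{t}{}}{\Fseq{}{t}(\Bseq{}{t-1}),\Bseq{}{t-1}}$, the asserted formula.

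The step I expect to require the most care is this marginalization: one must justify that restricting to the length-$t$ prefix of the joint is legitimate (i.e.\ that the product measure is consistent under truncation) and track precisely how the chain of Dirac measures turns the free secret history $\Aseq{}{t}$ into the fixed $\mathcal{A}$-valued tuple $\Fseq{}{t}(\Bseq{}{t-1})$. Everything else — the chain-rule bookkeeping and the verification that each factor is a stochastic kernel — is routine in the discrete setting assumed throughout the thesis.
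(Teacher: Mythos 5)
Your argument is correct. Note that the thesis states this theorem without proof, importing it directly from \cite{Tatikonda:09:TIT}, so there is no in-paper proof to compare against; your route --- chain-rule expansion of $\prob{Q}{\Fseq{}{T},\Aseq{}{T},\Bseq{}{T}}$ in the causal order so that the three conditions of Definition~\ref{def:Q} pin down each factor, yielding both uniqueness and an explicit product form, then marginalizing the secret history through the Dirac factors to get the lifted channel --- is precisely the standard argument by which the result is established in that source. The one point worth making explicit is that the conditions of Definition~\ref{def:Q} only constrain conditionals on prefixes of positive $Q$-measure; uniqueness still follows by induction on $t$, since a zero-measure prefix forces all of its extensions to have measure zero, so the joint is determined everywhere.
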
 	
	
	Since in our setting the concept of encoder makes little sense as there is no information to encode, we externalize the probabilistic behavior of $\Aseq{t}{}$ as follows. Code functions become \emph{a single set of reaction functions $\{\Fseq{t}{}\}_{t=1}^{T}$ with $\Bseq{}{t-1}$ as parameter} (the message $w$ does not play a role any more). Reaction functions can be seen as a model of how the environment reacts to given system outputs, producing new system inputs (they do not play a role of encoding a message). These reaction functions are endowed with a probability distribution  that generates the probabilistic behavior of the values of $\Aseq{t}{}$.
	
	\begin{definition}
		\label{def:reactor}
		A \emph{reactor} is a distribution on reaction functions, i.e. a sequence of stochastic kernels $\{ \Cprob{p}{\Fseq{t}{}}{\Fseq{}{t-1}} \}_{t=1}^{T}$. A reactor $R$ is \emph{consistent with  a fully probabilistic  \IIHS\ $\ihs$} if it induces the compatible distribution $Q(\Fseq{}{T},\Aseq{}{T},\Bseq{}{T})$ such that, for every $1 \leq t \leq T$, $\Cprob{Q}{\Aseq{t}{}}{\Aseq{}{t-1}, \Bseq{}{t-1}} = \Cprob{p}{\Aseq{t}{}}{\Aseq{}{t-1}, \Bseq{}{t-1}}$, where the latter is the probability distribution induced by $\Isys$.
	\end{definition}

	The main result of this section states that for any fully probabilistic \IIHS\ there is a reactor that generates the probabilistic behavior of the \IIHS. Before moving to this result, we need to introduce a lemma.
	
	\begin{lemma}
		\label{lemma:prob1}
		Let $\Xset, \Yset$ be non-empty finite sets, and let $\tilde{x} \in \Xset, \tilde{y} \in \Yset$. Let $p: \Xset \times \Yset \rightarrow [0,1]$ 
		be a function such that, for every $x \in \Xset$, we have: $\sum_{y \in \Yset}p(x,y) = 1$. Then:
		$$ 
			\sum_{\begin{array}{c}
							\scriptstyle f \in \Xset \rightarrow \Yset \\[-1 mm]
							\scriptstyle f(\tilde{x}) = \tilde{y}
			\end{array}}
			 \prod_{x \in \Xset} p(x, f(x)) = p(\tilde{x},\tilde{y}) 
		$$
	\end{lemma}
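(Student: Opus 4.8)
The plan is to exploit the \emph{generalized distributive law} — the identity that turns a sum of products into a product of sums — after first peeling off the coordinate that is constrained by the condition $f(\tilde{x}) = \tilde{y}$. The whole argument is elementary; there is no real analytic content, only careful combinatorial bookkeeping.

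First I would isolate the factor corresponding to $\tilde{x}$. Since every $f$ in the index set satisfies $f(\tilde{x}) = \tilde{y}$, the factor $p(\tilde{x}, f(\tilde{x})) = p(\tilde{x}, \tilde{y})$ is constant across the whole sum and can be pulled out:
$$
	\sum_{\substack{f\colon \Xset \to \Yset \\ f(\tilde{x}) = \tilde{y}}} \prod_{x \in \Xset} p(x, f(x))
	\;=\; p(\tilde{x}, \tilde{y}) \sum_{\substack{f\colon \Xset \to \Yset \\ f(\tilde{x}) = \tilde{y}}} \prod_{x \in \Xset \setminus \{\tilde{x}\}} p(x, f(x)).
$$
Next I would note that the restriction map $f \mapsto f_{\upharpoonright \Xset \setminus \{\tilde{x}\}}$ is a bijection between the functions $f\colon \Xset \to \Yset$ with $f(\tilde{x}) = \tilde{y}$ and the functions $g\colon \Xset \setminus \{\tilde{x}\} \to \Yset$ (each such $g$ extends uniquely to an $f$ by setting $f(\tilde{x}) = \tilde{y}$). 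Reindexing the sum along this bijection leaves the product unchanged, and it remains to evaluate
$$
	\sum_{g\colon \Xset \setminus \{\tilde{x}\} \to \Yset} \prod_{x \in \Xset \setminus \{\tilde{x}\}} p(x, g(x)).
$$

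Then I would apply the generalized distributive law over the finite index set $\Xset \setminus \{\tilde{x}\}$, namely
$$
	\sum_{g\colon \Xset \setminus \{\tilde{x}\} \to \Yset} \prod_{x \in \Xset \setminus \{\tilde{x}\}} p(x, g(x))
	\;=\; \prod_{x \in \Xset \setminus \{\tilde{x}\}} \left( \sum_{y \in \Yset} p(x, y) \right).
$$
By the hypothesis $\sum_{y \in \Yset} p(x, y) = 1$ for every $x$, each factor on the right equals $1$, so the entire product is $1$, and the original sum reduces to $p(\tilde{x}, \tilde{y}) \cdot 1 = p(\tilde{x}, \tilde{y})$, as claimed.

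The only step that deserves a genuine justification is the distributive law itself, which I would prove by a short induction on $|\Xset \setminus \{\tilde{x}\}|$: the base case is the empty domain, where there is a single (empty) function and the empty product equals $1$, matching the empty product on the right; the inductive step follows by splitting off one element of the domain and applying ordinary distributivity $\sum_y \prod = (\sum_y \cdot)(\sum_g \prod)$. I do not expect any real obstacle; the main care needed is to keep the bijection between constrained functions on $\Xset$ and arbitrary functions on $\Xset \setminus \{\tilde{x}\}$ straight, and to treat the degenerate case $\Xset = \{\tilde{x}\}$ correctly — there the residual product over the empty set is $1$ and the identity still holds.
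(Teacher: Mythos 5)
Your proof is correct and uses essentially the same mechanism as the paper's: distributivity over the finite function space combined with the row-normalization hypothesis $\sum_{y}p(x,y)=1$, applied one coordinate at a time by induction. The only difference is organizational — you factor out the constrained coordinate $\tilde{x}$ first and then invoke the sum-of-products/product-of-sums identity on the unconstrained remainder, whereas the paper carries the constraint $f(\tilde{x})=\tilde{y}$ through an induction on $|\Xset|$ that peels off elements $\mathring{x}\neq\tilde{x}$ — but the two arguments are the same calculation.
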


\begin{proof}
	
	By induction on the number of elements of $\Xset$.
	
	\begin{description}
	
		\item[Base case:] $\Xset = \{\tilde{x}\}$. In this case: 
\begin{equation*}
	\sum_{\begin{array}{c}\scriptstyle  f \in \Xset \rightarrow \Yset \\[-1mm] \scriptstyle  f(\tilde{x}) = \tilde{y} \end{array}} \prod_{x \in \Xset} p(x, f(x)) = p(\tilde{x},f(\tilde{x})) = p(\tilde{x},\tilde{y})
\end{equation*}
	
		\item[Inductive case:] Let $\Xset = \Xset' \cup \{\mathring{x}\}$, with $\tilde{x} \in \Xset'$ and $\mathring{x} \notin \Xset'$. Then:
	\begin{align*}
			\sum_{\begin{array}{c} \scriptstyle f \in \Xset' \cup \{\mathring{x}\} \rightarrow \Yset \\[-1mm] \scriptstyle f(\tilde{x}) = \tilde{y} \end{array}} \prod_{x \in \Xset' \cup \{ \mathring{x} \}} p(x, f(x)) & = & \text{(by distributivity)} \\
			\left ( \sum_{\begin{array}{c} \scriptstyle f \in \Xset' \rightarrow \Yset \\[-1mm]
				\scriptstyle  f(\tilde{x}) = \tilde{y} \end{array}} \prod_{x \in \Xset'} p(x, f(x)) \right ) \sum_{g \in \{\mathring{x}\} \rightarrow \Yset }p(\mathring{x},g(\mathring{x})) & = &  \text{(by the assumption)} \\
			\sum_{\begin{array}{c} \scriptstyle  f \in \Xset' \rightarrow \Yset \\[-1mm] \scriptstyle f(\tilde{x}) = \tilde{y} \end{array}} \prod_{x \in \Xset'} p(x, f(x)) & = &  \text{(by the ind. hyp.)} \\
			p(\tilde{x},\tilde{y}) & & 
	\end{align*}
	
	\end{description}

\end{proof}

\begin{theorem}
\label{thm:channelFromTree}
	Let $\Isys$ be a fully probabilistic \IIHS inducing the joint probability distribution $p(\Aseq{}{t},\Bseq{}{t})$, $1 \leq t \leq T$, on secret and observable traces. 	It is always possible to construct a channel with memory and feedback, and an associated probability distribution $Q({\Fseq{}{T},\Aseq{}{T},\Bseq{}{T}})$, which corresponds to 	$\Isys$ in the sense that, for every $1 \leq t \leq T$, $\Aseq{}{t}$, $\Bseq{}{t}$, the equality $ Q(\Aseq{}{t},\Bseq{}{t}) = p(\Aseq{}{t},\Bseq{}{t})$ holds.
\end{theorem}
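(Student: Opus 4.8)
The plan is to build the required channel directly from the two families of conditional probabilities that the IIHS already supplies, and then to reproduce the probabilistic generation of inputs by a suitable distribution over reaction functions. Concretely, Definition~\ref{def:def-matrix} gives the stochastic kernels $\{\Cprob{p}{\Bseq{t}{}}{\Aseq{}{t},\Bseq{}{t-1}}\}_{t=1}^T$, which I take verbatim as the kernels of the channel with memory and feedback; there is nothing to prove about these. All the work concerns the input side: the IIHS prescribes $\Aseq{t}{}$ through a genuine probability distribution $\Cprob{p}{\Aseq{t}{}}{\Aseq{}{t-1},\Bseq{}{t-1}}$ (Definition~\ref{def:input-distribution}) that depends on the whole past of secrets \emph{and} observables, whereas in a channel with feedback $\Aseq{t}{}$ must be a \emph{deterministic} function $\Fseq{t}{}(\Bseq{}{t-1})$ of the observable history alone. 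The idea is to externalise this randomness, together with the dependency on $\Aseq{}{t-1}$, into a reactor.

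First I would define the reactor $\{\Cprob{p}{\Fseq{t}{}}{\Fseq{}{t-1}}\}_{t=1}^T$ as follows. Fix $\Fseq{}{t-1}$; together with any observable history $\Bseq{}{t-1}$ this determines, via $\Aseq{s}{}=\Fseq{s}{}(\Bseq{}{s-1})$ for $s<t$, a secret history $\Aseq{}{t-1}$. I then set
\[
  \Cprob{p}{\Fseq{t}{}}{\Fseq{}{t-1}} \;=\; \prod_{\Bseq{}{t-1}\in\Bset^{t-1}} \Cprob{p}{\Fseq{t}{}(\Bseq{}{t-1})}{\Aseq{}{t-1},\Bseq{}{t-1}},
\]
i.e. the value of the reaction function at each observable history is drawn independently from the IIHS input distribution, conditioned on the secret history that this history determines. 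Each factor is a probability distribution over $\Aset$, so the product is a distribution over the maps $\Bset^{t-1}\to\Aset$, as a reactor requires. Feeding this reactor and the channel kernels into the theorem of Tatikonda and Mitter~\cite{Tatikonda:09:TIT} yields the unique consistent measure $\prob{Q}{\Fseq{}{T},\Aseq{}{T},\Bseq{}{T}}$.

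Next I would verify that this reactor is consistent with $\Isys$ in the sense of Definition~\ref{def:reactor}, i.e. that $\Cprob{Q}{\Aseq{t}{}}{\Aseq{}{t-1},\Bseq{}{t-1}} = \Cprob{p}{\Aseq{t}{}}{\Aseq{}{t-1},\Bseq{}{t-1}}$. By the Dirac constraint~(2) of Definition~\ref{def:Q}, under $Q$ one has $\Aseq{t}{}=\Fseq{t}{}(\Bseq{}{t-1})$; hence, conditioning on $\Aseq{}{t-1},\Bseq{}{t-1}$ and reading off the law of $\Aseq{t}{}$ amounts, for each reaction history $\Fseq{}{t-1}$ compatible with $\Aseq{}{t-1},\Bseq{}{t-1}$, to summing $\Cprob{p}{\Fseq{t}{}}{\Fseq{}{t-1}}$ over all $\Fseq{t}{}$ with $\Fseq{t}{}(\Bseq{}{t-1})=\Aseq{t}{}$. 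Compatibility forces the secret history determined along $\Bseq{}{t-1}$ to be exactly the conditioned $\Aseq{}{t-1}$, so this is precisely the setting of Lemma~\ref{lemma:prob1}, instantiated with $\Xset=\Bset^{t-1}$, $\Yset=\Aset$, $\tilde{x}=\Bseq{}{t-1}$ and $\tilde{y}=\Aseq{t}{}$. The lemma collapses the sum to $\Cprob{p}{\Aseq{t}{}}{\Aseq{}{t-1},\Bseq{}{t-1}}$, a value independent of the particular compatible $\Fseq{}{t-1}$, which yields the claimed consistency.

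Finally I would assemble the two matchings via the chain rule. For every $t$,
\[
  \prob{Q}{\Aseq{}{t},\Bseq{}{t}} = \prod_{s=1}^{t} \Cprob{Q}{\Aseq{s}{}}{\Aseq{}{s-1},\Bseq{}{s-1}}\,\Cprob{Q}{\Bseq{s}{}}{\Aseq{}{s},\Bseq{}{s-1}},
\]
and likewise for $\prob{p}{\Aseq{}{t},\Bseq{}{t}}$. The input factors agree by the consistency just established, and the output factors agree because constraint~(3) of Definition~\ref{def:Q} preserves the channel kernels; hence the two products coincide term by term and $\prob{Q}{\Aseq{}{t},\Bseq{}{t}}=\prob{p}{\Aseq{}{t},\Bseq{}{t}}$ for all $1\le t\le T$. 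I expect the main obstacle to be the middle step: reconciling the IIHS's history-dependent probabilistic choice of secrets with the purely deterministic reaction functions of a feedback channel. The resolution rests on two observations — that the dependency on $\Aseq{}{t-1}$ is recoverable because $\Aseq{}{t-1}$ is itself a function of $\Fseq{}{t-1}$ and $\Bseq{}{t-1}$, and that Lemma~\ref{lemma:prob1} is exactly the combinatorial identity showing that a product measure over functions marginalises back to the underlying conditional distribution.
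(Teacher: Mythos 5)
Your proposal is correct and follows essentially the same route as the paper's proof: the same reactor definition (a product over observable histories of the IIHS input distributions, with the secret history recovered from $\Fseq{}{t-1}$ and $\Bseq{}{t-1}$), the same appeal to the constraints of Definition~\ref{def:Q}, and the same use of Lemma~\ref{lemma:prob1} to collapse the sum over reaction functions agreeing at $\Bseq{}{t-1}$ back to $\Cprob{p}{\Aseq{t}{}}{\Aseq{}{t-1},\Bseq{}{t-1}}$. The paper merely packages the argument as a single induction on $t$ rather than factoring it through reactor consistency plus the chain rule, which is a presentational rather than a substantive difference.
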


\begin{proof}

	First note that, by laws of probability, $ \prob{Q}{\Aseq{}{t},\Bseq{}{t}} = \sum_{\Fseq{}{t}} \prob{Q}{\Fseq{}{t},\Aseq{}{t},\Bseq{}{t}} $. So we need to show that $\sum_{\Fseq{}{t}} \prob{Q}{\Fseq{}{t},\Aseq{}{t},\Bseq{}{t}} = \prob{p}{\Aseq{}{t},\Bseq{}{t}} $ by induction on $t$.
	\begin{description}
		\item[Base case:] $t=1$. Let us define $\Cprob{Q}{\Fseq{1}{}}{\epsilon} = \prob{p}{\Fseq{1}{}(\epsilon)}$ and $\Cprob{Q}{\Bseq{1}{}}{\Aseq{}{1},\epsilon} = \Cprob{p}{\Bseq{1}{}}{\Aseq{1}{}}$. Then:			
		\begin{align*}
				\sum_{\Fseq{}{1}}\prob{Q}{\Fseq{}{1},\Aseq{}{1},\Bseq{}{1}} & = & \\
				\sum_{\Fseq{1}{}}\prob{Q}{\Fseq{1}{},\Aseq{1}{},\Bseq{1}{}} & = & \mbox{(by the chain rule)} \\
				\sum_{\Fseq{1}{}} \left( \Cprob{Q}{\Fseq{1}{}}{\epsilon,\epsilon,\epsilon} \cdot \Cprob{Q}{\Aseq{1}{}}{\Fseq{1}{},\epsilon,\epsilon} \cdot\right. & & \\
				\left. \quad \quad  \Cprob{Q}{\Bseq{1}{}}{\Fseq{1}{},\Aseq{1}{},\epsilon} \right) & = & \mbox{(by Definition~\ref{def:Q})} \\
				\sum_{\Fseq{1}{}} \Cprob{Q}{\Fseq{1}{}}{\epsilon} \delta_{\{\Fseq{1}{}(\epsilon)\}}(\Aseq{1}{}) \Cprob{Q}{\Bseq{1}{}}{\Aseq{}{1},\epsilon}  & = & \mbox{(by construction of $Q$)} \\
				\sum_{\Fseq{1}{}} \prob{p}{\Fseq{1}{}(\epsilon)} \delta_{\{\Fseq{1}{}(\epsilon)\}}(\Aseq{1}{}) \Cprob{p}{\Bseq{1}{}}{\Aseq{1}{}}  & = & \mbox{(by definition of $\delta$)} \\
				\prob{p}{\Aseq{1}{}} \Cprob{p}{\Bseq{1}{}}{\Aseq{1}{}} & = & \\
				\prob{p}{\Aseq{1}{},\Bseq{1}{}} & = & \\
				\prob{p}{\Aseq{}{1},\Bseq{}{1}} & &
		\end{align*}
	
		\item[Inductive case:]
		
			Let us define $\Cprob{Q}{\Bseq{t}{}}{\Aseq{}{t},\Bseq{}{t-1}} = \Cprob{p}{\Bseq{t}{}}{\Aseq{}{t},\Bseq{}{t-1}}$, and			
			\begin{equation*}
				\Cprob{Q}{\Fseq{t}{}}{\Fseq{}{t-1}} = \prod_{\Bseq{}{t-1}} \Cprob{p}{\Fseq{t}{}(\Bseq{}{t-1})}{\Fseq{}{t-1}(\Bseq{}{t-2}), \Bseq{}{t-1}}
			\end{equation*}
			
			Note that, if we consider $\Xset = \{\Bseq{}{t-1} \mid  \Bseq{i}{} \in \Bset, 1\leq i \leq t-1 \}$, $\Yset = \Aset $, and $\prob{p}{\Bseq{}{t-1},\Aseq{t}{}} = \Cprob{p}{\Aseq{t}{}}{\Fseq{}{t-1}(\Bseq{}{t-2}),\Bseq{}{t-1}}$, then $\Xset$, $\Yset$ and $p$ satisfy the hypothesis of Lemma \ref{lemma:prob1}.
			
			Then:	
			\begin{align*}				
					\sum_{\Fseq{}{t}} \prob{Q}{\Fseq{}{t},\Aseq{}{t},\Bseq{}{t}} & = & \mbox{(by the chain rule)} \\																													
					\sum_{\Fseq{}{t}} \left( \prob{Q}{\Fseq{}{t-1},\Aseq{}{t-1},\Bseq{}{t-1}} \cdot \right. & & \\
					\left. \Cprob{Q}{\Fseq{t}{}}{\Fseq{}{t-1},\Aseq{}{t-1},\Bseq{}{t-1}} \cdot \right. & & \\
					\left. \Cprob{Q}{\Aseq{t}{}}{\Fseq{}{t},\Aseq{}{t-1},\Bseq{}{t-1}} \cdot \Cprob{Q}{\Bseq{t}{}}{\Fseq{}{t},\Aseq{}{t},\Bseq{}{t-1}} \right) & = & \mbox{(by Definition~\ref{def:Q})} \\				
					\sum_{\Fseq{}{t}} \left( \prob{Q}{\Fseq{}{t-1},\Aseq{}{t-1},\Bseq{}{t-1}} \cdot \Cprob{Q\textsc{}}{\Fseq{t}{}}{\Fseq{}{t-1}} \right. & & \\ 			
					\left. \delta_{\{\Fseq{t}{}(\Bseq{}{t-1})\}}(\Aseq{t}{}) \cdot \Cprob{Q}{\Bseq{t}{}}{\Aseq{}{t},\Bseq{}{t-1}} \right) & = & \mbox{(by constr. of $Q$)} \\				
					\sum_{\Fseq{}{t}} \left( \prob{Q}{\Fseq{}{t-1},\Aseq{}{t-1},\Bseq{}{t-1}} \cdot \right. & & \\
					\left ( \prod_{\Bseq{}{'t-1}} \Cprob{p}{\Fseq{t}{}(\Bseq{}{'t-1})}{\Fseq{}{t-1}(\Bseq{}{'t-2}),\Bseq{}{'t-1}}  \right ) \cdot & & \\
					\left. \delta_{\{ \Fseq{t}{}(\Bseq{}{t-1}) \}}(\Aseq{t}{}) \cdot \Cprob{p}{\Bseq{t}{}}{\Aseq{}{t},\Bseq{}{t-1}} \right) & = & \mbox{(by definition of $\delta$)} \\			
					\sum_{{\begin{array}{c}\scriptstyle\Fseq{}{t} \\[-1mm] \scriptstyle\Fseq{t}{}(\Bseq{}{t-1}) = \Aseq{t}{} \end{array}}} \left( \prob{Q}{\Fseq{}{t-1},\Aseq{}{t-1},\Bseq{}{t-1}} \right. \cdot & & \\
					\left ( \prod_{\Bseq{}{'t-1}} \Cprob{p}{\Fseq{t}{}(\Bseq{}{'t-1})}{\Fseq{}{t-1}(\Bseq{}{'t-2}),\Bseq{}{'t-1}}  \right ) \cdot & & \\ 
					\left. \Cprob{p}{\Bseq{t}{}}{\Aseq{}{t},\Bseq{}{t-1}} \right) & = &\mbox{} \\				
					\sum_{\Fseq{}{t-1}} \left( \right. \prob{Q}{\Fseq{}{t-1},\Aseq{}{t-1},\Bseq{}{t-1}} \Cprob{p}{\Bseq{t}{}}{\Aseq{}{t},\Bseq{}{t-1}} & & \\
					\sum_{\begin{array}{c} \scriptstyle\Fseq{t}{} \\[-1mm] \scriptstyle \Fseq{t}{} (\Bseq{}{t-1}) = \Aseq{t}{} \end{array}} \prod_{\Bseq{}{'t-1}} \Cprob{p}{\Fseq{t}{}(\Bseq{}{'t-1})}{\Fseq{}{t-1}(\Bseq{}{'t-2}),\Bseq{}{'t-1}} \left. \right) & = & \mbox{(by Lemma~\ref{lemma:prob1})} \\				
					\sum_{\Fseq{}{t-1}} \left( \right. \prob{Q}{\Fseq{}{t-1},\Aseq{}{t-1},\Bseq{}{t-1}} \cdot \Cprob{p}{\Bseq{t}{}}{\Aseq{}{t},\Bseq{}{t-1}} \cdot & & \\
					\Cprob{p}{\Aseq{t}{}}{\Aseq{}{t-1},\Bseq{}{t-1}} \left. \right) & = & \mbox{} \\				
					\Cprob{p}{\Bseq{t}{}}{\Aseq{}{t},\Bseq{}{t-1}} \cdot \Cprob{p}{\Aseq{t}{}}{\Aseq{}{t-1},\Bseq{}{t-1}} \cdot & & \\
					\sum_{\Fseq{}{t-1}} \prob{Q}{\Fseq{}{t-1},\Aseq{}{t-1},\Bseq{}{t-1}} & = & \mbox{(by ind. hyp.)} \\				
					\Cprob{p}{\Bseq{t}{}}{\Aseq{}{t},\Bseq{}{t-1}} \cdot \Cprob{p}{\Aseq{t}{}}{\Aseq{}{t-1},\Bseq{}{t-1}} \cdot \prob{p}{\Aseq{}{t-1},\Bseq{}{t-1}} & = & \mbox{(by the chain rule)} \\				
					\prob{p}{\Aseq{}{t},\Bseq{}{t}} & & \\				
			\end{align*}

	\end{description}

\end{proof}

\begin{corollary}
		\label{corollary:distribution-F}
		Let $\Isys$ be a fully probabilistic \IIHS. Let $\{ \Cprob{p}{\Bseq{t}{}}{\Aseq{}{t},\Bseq{}{t-1}} \}_{t=1}^{T}$  be a sequence of stochastic kernels and $\{ \Cprob{p}{\Aseq{t}{}}{\Aseq{}{t-1},\Bseq{}{t-1}} \}_{t=1}^{T}$ a sequence of input distributions defined by $\Isys$ according to Definitions~\ref{def:def-matrix} and~\ref{def:input-distribution}. Then the reactor $R = \{ \Cprob{p}{\Fseq{t}{}}{\Fseq{}{t-1}} \}_{t=1}^{T}$ compatible with respect to the $\Isys$ is given by:
	\begin{align}
		\prob{p}{\Fseq{1}{}} & = &\Cprob{p}{\Aseq{1}{}}{\Aseq{}{0},\Bseq{}{0}} = \prob{p}{\Aseq{1}{}}\\
		\Cprob{p}{\Fseq{t}{}}{\Fseq{}{t-1}} & = &\prod_{\Bseq{}{t-1}}\Cprob{p}{\Fseq{t}{}(\Bseq{}{t-1})}{\Fseq{}{t-1}(\Bseq{}{t-2}),\Bseq{}{t-1}},  \ \ \ 2 \leq t \leq T
	\end{align}
		
\end{corollary}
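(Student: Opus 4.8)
The plan is to read the result off directly from the construction already carried out in the proof of Theorem~\ref{thm:channelFromTree}, which in fact exhibits exactly the reactor $R$ written in the statement. In the base case of that proof one sets $\Cprob{Q}{\Fseq{1}{}}{\epsilon} = \prob{p}{\Fseq{1}{}(\epsilon)}$; since a time-$1$ reaction function $\Fseq{1}{}$ has domain $\Bset^{0} = \{\epsilon\}$, it is identified with the single value $\Fseq{1}{}(\epsilon) = \Aseq{1}{}$, whence $\prob{p}{\Fseq{1}{}} = \prob{p}{\Aseq{1}{}} = \Cprob{p}{\Aseq{1}{}}{\Aseq{}{0},\Bseq{}{0}}$, which is the first line. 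In the inductive case the proof defines precisely $\Cprob{Q}{\Fseq{t}{}}{\Fseq{}{t-1}} = \prod_{\Bseq{}{t-1}} \Cprob{p}{\Fseq{t}{}(\Bseq{}{t-1})}{\Fseq{}{t-1}(\Bseq{}{t-2}),\Bseq{}{t-1}}$, the second line. Hence the only thing left to prove is that this $R$ meets the requirement of Definition~\ref{def:reactor}, namely that the measure $Q$ it induces through Definition~\ref{def:Q} satisfies $\Cprob{Q}{\Aseq{t}{}}{\Aseq{}{t-1},\Bseq{}{t-1}} = \Cprob{p}{\Aseq{t}{}}{\Aseq{}{t-1},\Bseq{}{t-1}}$ for every $1 \leq t \leq T$.

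First I would invoke Theorem~\ref{thm:channelFromTree} itself: with this choice of reactor, the induced joint measure satisfies $\prob{Q}{\Aseq{}{s},\Bseq{}{s}} = \prob{p}{\Aseq{}{s},\Bseq{}{s}}$ for every $s \leq T$. I will use this at the two consecutive levels $s = t-1$ and $s = t$. Marginalizing the level-$t$ identity over the most recent observable yields $\prob{Q}{\Aseq{}{t},\Bseq{}{t-1}} = \sum_{\Bseq{t}{}} \prob{Q}{\Aseq{}{t},\Bseq{}{t}} = \sum_{\Bseq{t}{}} \prob{p}{\Aseq{}{t},\Bseq{}{t}} = \prob{p}{\Aseq{}{t},\Bseq{}{t-1}}$, while the level-$(t-1)$ identity gives directly $\prob{Q}{\Aseq{}{t-1},\Bseq{}{t-1}} = \prob{p}{\Aseq{}{t-1},\Bseq{}{t-1}}$.

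Then I would form the conditional and substitute:
\begin{equation*}
\Cprob{Q}{\Aseq{t}{}}{\Aseq{}{t-1},\Bseq{}{t-1}} = \frac{\prob{Q}{\Aseq{}{t},\Bseq{}{t-1}}}{\prob{Q}{\Aseq{}{t-1},\Bseq{}{t-1}}} = \frac{\prob{p}{\Aseq{}{t},\Bseq{}{t-1}}}{\prob{p}{\Aseq{}{t-1},\Bseq{}{t-1}}} = \Cprob{p}{\Aseq{t}{}}{\Aseq{}{t-1},\Bseq{}{t-1}},
\end{equation*}
which is exactly the consistency condition of Definition~\ref{def:reactor}, so $R$ is compatible with $\Isys$. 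The same computation covers the base case $t=1$ under the convention that the empty-history probability is $1$. The conditioning is always taken over histories $\Aseq{}{t-1},\Bseq{}{t-1}$ of positive probability, in line with the convention of Section~\ref{section:prob-spaces} of never conditioning on a null event.

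Since Theorem~\ref{thm:channelFromTree} has already done all the combinatorial work — in particular the application of Lemma~\ref{lemma:prob1} that collapses the sum over reaction functions — there is no substantial obstacle here; the corollary is essentially a repackaging. The only points requiring a little care are the identification of $\Fseq{1}{}$ with $\Aseq{1}{}$ in the base case, and the marginalization step, where one must respect the time ordering $\Aseq{1}{},\Bseq{1}{},\ldots,\Aseq{T}{},\Bseq{T}{}$ so as to sum out the correct (most recent) observable $\Bseq{t}{}$ while retaining the full secret history $\Aseq{}{t}$.
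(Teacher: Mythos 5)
Your proposal is correct and matches the paper's (implicit) argument: the paper gives no separate proof of this corollary precisely because the two formulas are the definitions of $\prob{Q}{\Fseq{1}{}}$ and $\Cprob{Q}{\Fseq{t}{}}{\Fseq{}{t-1}}$ introduced in the base and inductive cases of the proof of Theorem~\ref{thm:channelFromTree}. Your extra step --- deriving the compatibility condition of Definition~\ref{def:reactor} by marginalizing the identity $\prob{Q}{\Aseq{}{t},\Bseq{}{t}}=\prob{p}{\Aseq{}{t},\Bseq{}{t}}$ at levels $t$ and $t-1$ and forming the quotient --- is a sound and welcome explicit verification of what the paper leaves tacit.
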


  Figure~\ref{fig:dcmfb-iihs} depicts the model for \IIHS. Note that, in relation to Figure~\ref{fig:dcmfb}, there are some simplifications: (1) no message $W$ is needed; 2) the encoder becomes an \qm{interactor}; (3) the decoder is not used. At the beginning, a reaction function sequence $\Fseq{}{T}$ is chosen and then the channel is used $T$ times. At each usage $t$, the interactor produces the next input symbol $\Aseq{t}{}$ by applying the reaction function $\Fseq{t}{}$ to the fed back output $\Bseq{}{t-1}$. Then the channel produces an output $\Bseq{t}{}$ based on the stochastic kernel $\Cprob{p}{\Bseq{t}{}}{\Aseq{}{t},\Bseq{}{t-1}}$. The output is then fed back to the encoder, which uses it for producing the next input.

	\begin{figure}[!htb]
    \centering
		\includegraphics[width=0.8\linewidth]{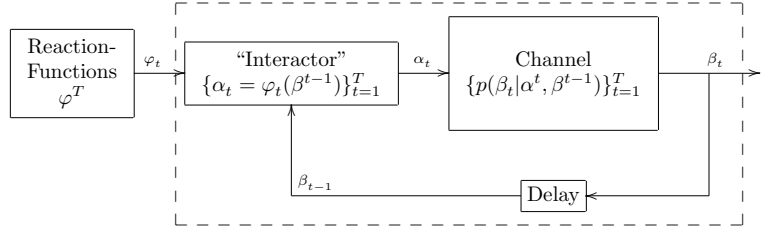}
		\caption{Channel with memory and feedback model for \IIHS}
		\label{fig:dcmfb-iihs}
	\end{figure}
		
    We conclude this section by remarking on an intriguing coincidence: The notion of reaction function sequence $\Fseq{}{T}$, on the \IIHSs, corresponds to the notion of deterministic scheduler~\cite{Segala:95:PhD}. In fact, each reaction function $\Fseq{t}{}$ selects the next step, $\Aseq{t}{}$, on the basis of the $\Bseq{}{t-1}$ and $\Aseq{}{t-1}$ (generated by $\Fseq{}{t-1}$), and $\Bseq{}{t-1}$, $\Aseq{}{t-1}$ represent the path \revision{up to that state}.

\section{Leakage in interactive systems}
\label{section:anonymity-properties}
	
In this section we propose a definition for the notion of leakage in interactive systems. We first argue that mutual information is not the correct notion, and we propose to replace it with the directed information instead.
	 
In the case of channels with memory and feedback, mutual information   is defined as $I(A^T;B^T) = H(A^T) - H(A^T|B^T)$, and it is still symmetric (i.e.  $I(A^T;B^T) = I(B^T;A^T)$). Since the roles of $A^T$ and $B^T$ in $I(A^T;B^T)$ are interchangeable,  this concept cannot capture \emph{causality}, in the sense that it does not imply that $A^T$ causes $B^T$, nor conversely.	Mutual information expresses	\emph{correlation} between the sequences of random variables $A^T$ and $B^T$. 
		
Mathematically the mutual information $I(A^{T};B^{T})$ for $T$ uses of the channel can be expressed with the help of the chain rule of \eqref{eq:chain-rule-mutual-information} in the following way.
\begin{equation*}
		I(A^T;B^T)= \sum_{t=1}^{T}I(A^{T};B_{t}|B^{t-1}) \nonumber
\end{equation*}
	
	In the equation above, each term of the sum is the mutual information between the random variable $B_{t}$ and the whole sequence of random
	variables $A^T = A_{1}, \ldots, A_{T}$, given the history $B^{t-1}$. The equation emphasizes that at time $1 \leq t \leq T$, even though only the inputs $\Aseq{}{t} = \Aseq{1}{}, \Aseq{2}{}, \ldots, \Aseq{t}{}$ have been fed to the channel, the whole sequence $A^{T}$, including $A_{t+1}, A_{t+2}, \ldots, A_{T}$, has a statistical correlation with $B_{t}$. Indeed, in the presence of feedback, $B_{t}$ may influence $A_{t+1}, A_{t+2}, \ldots, A_{T}$.
	
In order to show how the concept of directed information contrasts with the above, let us recall its definition:
	
		$$
			I(A^{T} \rightarrow B^{T}) = \sum_{t=1}^{T}	I(A^{t};B_{t} | B^{t-1}).
		$$
		$$
			I(B^{T} \rightarrow A^{T}) = \sum_{t=1}^{T} I(A_{t};B^{t-1} | A^{t-1}).
		$$
		
These notions capture the concept of \emph{causality}, to which the definition of mutual information is indifferent. The correlation between inputs and outputs $I(A^{T}; B^{T})$ is split into the information $I(A^{T} \rightarrow B^{T})$ that flows from input to output through the channel and the information $I(B^{T} \rightarrow A^{T})$ that flows from output to the input via feedback. Note that the directed information is not symmetric: the flow from $A^T$ to $B^T$ takes into account the correlation between $A^{t}$ and $B_{t}$, while the flow from $B^{T}$ to $A^{T}$ takes into account the correlation between  $B^{t-1}$ and $A_{t}$.
	
It was proved in~\cite{Tatikonda:09:TIT} that 
\begin{equation}
	\label{eq:directed-info-sum}
	I(A^{T};B^{T}) = I(A^{T} \rightarrow B^{T}) + I(B^{T}\rightarrow A^{T} )	
\end{equation}

\noindent i.e. the mutual information is the sum of the directed information flow in both senses. Note that this formulation highlights the symmetry of mutual information from yet another perspective.

Once we split mutual information into directed information in the two opposite directions, it is important to understand the different roles that the information flow in each direction plays. $I(A^{T} \rightarrow B^{T})$ represents the system behavior: via the channel the information flows from inputs to outputs according to the specification of the system, modeled by the channel stochastic kernels. This flow represents the amount of information an attacker can gain from the inputs by observing the outputs, and we argue that this is the real information leakage.

On the other hand, $I(B^{T} \rightarrow A^{T})$ represents how the environment reacts to the system: given the system outputs, the environment produces new inputs. We argue that the information flow from outputs to inputs is independent of any particular system: it is a characteristic of the environment itself. Hence, if an attacker knows how the environment reacts to outputs (the probabilistic behavior of the reactions of the environment given the system outputs), this knowledge is part of the \emph{a priori} knowledge of the adversary. As a further justification, observe that this is a natural extension of the classical approach, where the choice of secrets is seen as external to the system, i.e. determined by the environment. The probability distribution on the secrets constitutes the a priori knowledge and does not count as leakage. In order to encompass the classical approach, in our extended model we should preserve this principle, and a natural way to do so is to consider the secret choices, at every stage of the computation, as external. Their probability distributions, which are now in general conditional probability distributions depending on the history of secrets and observables, should therefore be considered as part of the external knowledge, and   not    counted as leakage.

The following example supports our claim that, in the presence of feedback, mutual information is not a correct notion of leakage.

\begin{example} 
	\label{exa:small}		
	
	Consider the discrete memoryless channel with secret alphabet $\Aset = \{\Asym{1}{},\Asym{2}{}\}$ and observable alphabet $\Bset=\{\Bsym{1}{},\Bsym{2}{}\}$ whose matrix is represented in Table~\ref{tab:small-example}. 

	\begin{table}[!htb]
		\centering
		\begin{tabular}{|c||c|c|}
		\hline
									 & $\Bsym{1}{}$ & $\Bsym{2}{}$ \\ \hline \hline
			$\Asym{1}{}$ & $0.5$    		& $0.5$			   \\ \hline
			$\Asym{2}{}$ & $0.5$    		& $0.5$			   \\ \hline			
		\end{tabular}
		\caption{Channel matrix for Example~\ref{exa:small}}
		\label{tab:small-example}
	\end{table}
			
Suppose that the channel is used with feedback, in such a way that, for all $1 \leq t \leq T$, we have $\Aseq{t+1}{}  = \Asym{1}{}$ if $\Bseq{t}{} = \Bsym{1}{}$, and $\Aseq{t+1}{}  = 	 \Asym{2}{}$ if $\Bseq{t}{} = \Bsym{2}{}$. It is easy to show that if $T\geq 2$ then  $I(A^{T};B^{T}) \neq 0$. Yet there is no leakage from  $A^T$ to $B^T$, since the rows of the matrix are all equal. We have indeed that $I(A^{T} \rightarrow B^{T}) = 0$, and the mutual information $I(A^{T};B^{T})$ is only due to the feedback information flow $I(B^{T} \rightarrow A^{T})$.	

\end{example}
		
Having in mind the above discussion, we now propose a notion of information flow based on our model.  We  follow the idea of defining leakage and maximum leakage using the concepts of mutual information and capacity, making the necessary adaptations.
	
As discussed in Chapter~\ref{chapter:probabilistic-info-flow}, in the \revision{non-interactive} case the definition of leakage as mutual information, for a single use of the channel, is \[ I(A;B) = H(A) - H(A|B)\] (cfr. for instance \cite{Chatzikokolakis:08:IC,Kopf:07:CCS}). This amounts to viewing the leakage as the difference between the a priori invulnerability and the a posteriori one. As explained in Chapter~\ref{chapter:probabilistic-info-flow}, these correspond to $H(A)$ and $H(A|B)$, respectively. This corresponds to the model of an attacker based on Shannon entropy discussed by K\"opf and Basin in~\cite{Kopf:07:CCS}. 
		
In the interactive case, we can extend this notion by considering the leakage at every step $t$ as given by \[I(A^t;B_t | B^{t-1}) = H(A^t|B^{t-1}) - H(A^t|B_t,B^{t-1})	\] The notion of attack is the same modulo the fact that we consider all the input from the beginning up to step $t$, and the difference in its vulnerability induced by the observation of $B_t$ (the output at step $t$), taking into account the observation history $B^{t-1}$. It is then natural to consider as total leakage the summation of the contributions $I(A^t;B_t | B^{t-1})$ for all the steps $t$. This is exactly the notion of directed information (cfr. Definition~\ref{def:directed-information}):
\[
\revision{I(A^{T} \rightarrow B^{T})} = \sum_{t=1}^{T} I(A^t;B_t | B^{t-1})
\]
			
	\begin{definition}
		\label{def:leakage}
		The information leakage of a fully probabilistic \IIHS\ is defined as the directed information
		$I(A^{T} \rightarrow B^{T})$ of the associated channel with memory and feedback.
	\end{definition}
	
	We now show an equivalent formulation of directed information that leads to a new interpretation in terms of an attack model. First we need the following lemma.
	
	\begin{lemma}
		\label{lemma:feedback-directed-info}
		$I(B^{T} \rightarrow A^{T}) = H(A^{T}) -\sum_{t=1}^{T} H(A_{t}|A^{t-1}, B^{t-1})$
	\end{lemma}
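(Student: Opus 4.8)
The plan is to unfold the definition of directed information and collapse the resulting sum by means of the chain rule for entropy; this is a short calculation and I do not anticipate any genuine obstacle. First I would start from Definition~\ref{def:directed-information}, which gives
\[
I(B^{T} \rightarrow A^{T}) = \sum_{t=1}^{T} I(A_{t};B^{t-1} | A^{t-1}),
\]
and then rewrite each summand with the definition of conditional mutual information $I(A;B|C) = H(A|C) - H(A|B,C)$, instantiated with $A := A_{t}$, $B := B^{t-1}$ and $C := A^{t-1}$, to obtain
\[
I(A_{t};B^{t-1} | A^{t-1}) = H(A_{t}| A^{t-1}) - H(A_{t}| A^{t-1}, B^{t-1}).
\]

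Next I would sum these identities over $t = 1,\ldots,T$ and separate the right-hand side into two sums:
\[
I(B^{T} \rightarrow A^{T}) = \sum_{t=1}^{T} H(A_{t}| A^{t-1}) - \sum_{t=1}^{T} H(A_{t}| A^{t-1}, B^{t-1}).
\]
The key observation is that the first sum is precisely the chain-rule decomposition of the joint entropy of the input sequence: applying the chain rule for entropy with trivial conditioning, one has $H(A^{T}) = \sum_{t=1}^{T} H(A_{t}| A^{t-1})$, where $A^{0}$ is read as the empty sequence so that the $t=1$ term is simply $H(A_{1})$. Substituting this into the previous display yields the claimed equality.

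The only point deserving a little care is the boundary at $t=1$, where $A^{0}$ and $B^{0}$ are empty and the conditional entropies reduce to unconditional ones; there the summand $I(A_{1};B^{0} | A^{0})$ correctly evaluates to $0$, which matches the fact that no feedback can precede the first input. A useful consistency check is that the same manipulation applied to the forward directed information $I(A^{T} \rightarrow B^{T}) = \sum_{t=1}^{T} I(A^{t};B_{t} | B^{t-1})$ produces the dual identity $I(A^{T} \rightarrow B^{T}) = H(B^{T}) - \sum_{t=1}^{T} H(B_{t}| A^{t}, B^{t-1})$, and adding the two expressions recovers $I(A^{T};B^{T})$ in agreement with the decomposition \eqref{eq:directed-info-sum}.
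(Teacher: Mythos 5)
Your proof is correct and follows essentially the same route as the paper's: unfold Definition~\ref{def:directed-information}, expand each summand via the definition of conditional mutual information, and collapse $\sum_{t=1}^{T} H(A_{t}\mid A^{t-1})$ to $H(A^{T})$ by the chain rule. The boundary remark at $t=1$ and the consistency check against \eqref{eq:directed-info-sum} are sound but not needed beyond what the paper records.
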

	
	\begin{proof}
		\begin{align*}
				I(B^{T} \rightarrow A^{T}) & = \sum_{t=1}^{T} I(A_{t};B^{t-1}|A^{t-1}) & \mbox{(by Definition~\ref{def:directed-information})} \\
				 & = \sum_{t=1}^{T} \left( H(A_{t}|A^{t-1}) \right. & \\
				 & \left. - H(A_{t}|A^{t-1}, B^{t-1}) \right)& \mbox{(by def. of mutual info.)} \\
				 & = H(A^{T}) -\sum_{t=1}^{T} H(A_{t}|A^{t-1}, B^{t-1}) & \mbox{(by the chain rule)} \\
		\end{align*}			
	\end{proof}
		
		The next proposition points out the announced alternative formulation of directed information from input to output:
		
		\begin{proposition}
			\label{prop:directedinfo}
				$I(A^{T} \rightarrow B^{T}) = \sum_{t=1}^{T} H(A_{t}|A^{t-1},B^{t-1}) - H(A^{T}|B^{T})$
		\end{proposition}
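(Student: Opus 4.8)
The plan is to derive the identity purely by algebraic combination of three facts already established in the excerpt, with no fresh information-theoretic manipulation required. The three ingredients are: the decomposition of mutual information into the two directed informations, namely \eqref{eq:directed-info-sum}, which states $I(A^{T};B^{T}) = I(A^{T} \rightarrow B^{T}) + I(B^{T}\rightarrow A^{T})$; the standard expansion of mutual information $I(A^{T};B^{T}) = H(A^{T}) - H(A^{T}\mid B^{T})$ (the sequence-valued analogue of \eqref{eqn:ShannonMutualInfo}); and Lemma~\ref{lemma:feedback-directed-info}, which gives $I(B^{T} \rightarrow A^{T}) = H(A^{T}) -\sum_{t=1}^{T} H(A_{t}\mid A^{t-1}, B^{t-1})$.

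First I would solve \eqref{eq:directed-info-sum} for the quantity of interest, obtaining $I(A^{T} \rightarrow B^{T}) = I(A^{T};B^{T}) - I(B^{T}\rightarrow A^{T})$. Then I would substitute the mutual-information expansion for the first term and Lemma~\ref{lemma:feedback-directed-info} for the second term, giving
\begin{equation*}
I(A^{T} \rightarrow B^{T}) = \bigl(H(A^{T}) - H(A^{T}\mid B^{T})\bigr) - \Bigl(H(A^{T}) -\sum_{t=1}^{T} H(A_{t}\mid A^{t-1}, B^{t-1})\Bigr).
\end{equation*}
The term $H(A^{T})$ cancels, leaving exactly $\sum_{t=1}^{T} H(A_{t}\mid A^{t-1},B^{t-1}) - H(A^{T}\mid B^{T})$, which is the claimed formula.

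There is essentially no genuine obstacle here: the proposition is a one-line consequence of already-proved results, and the only point worth noting is the clean cancellation of $H(A^{T})$, which is what makes the two directed-information contributions combine into the asymmetric-looking right-hand side. The only thing I would double-check is that the unqualified identity $I(A^{T};B^{T}) = H(A^{T}) - H(A^{T}\mid B^{T})$ is legitimately applied to the vector-valued random variables $A^{T}$ and $B^{T}$, which it is, since mutual information and conditional entropy are defined for arbitrary (here, product-alphabet) discrete random variables. If one preferred a self-contained derivation avoiding \eqref{eq:directed-info-sum}, one could instead expand each term $I(A^{t};B_{t}\mid B^{t-1}) = H(A^{t}\mid B^{t-1}) - H(A^{t}\mid B^{t})$ in Definition~\ref{def:directed-information}, telescope the conditional-entropy differences across $t$, and reconcile the result with the chain-rule expansion of $H(A^{T}\mid B^{T})$; but this is strictly more laborious than the three-ingredient route above, so I would present the latter as the proof.
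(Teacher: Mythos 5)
Your proposal is correct and follows exactly the same route as the paper's proof: rearrange \eqref{eq:directed-info-sum}, substitute $I(A^{T};B^{T}) = H(A^{T}) - H(A^{T}\mid B^{T})$ and Lemma~\ref{lemma:feedback-directed-info}, and cancel $H(A^{T})$. Nothing is missing.
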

		
		\begin{proof}
			\begin{align*}	
				 I(A^{T} \rightarrow B^{T}) & = I(A^{T};B^{T}) - I(B^{T} \rightarrow A^{T}) & \mbox{(by (\ref{eq:directed-info-sum}))} \\
				  & = I(A^{T};B^{T}) - H(A^{T}) & \\
				  &   + \sum_{t=1}^{T} H(A_{t}|A^{t-1}, B^{t-1}) & \mbox{(by Lemma~\ref{lemma:feedback-directed-info})} \\
				  & = H(A^{T}) - H(A^{T}|B^{T}) - H(A^{T})& \\
				  &   + \sum_{t=1}^{T} H(A_{t}|A^{t-1}, B^{t-1}) & \mbox{(by def. of mutual info.)} \\
				  & = \sum_{t=1}^{T} H(A_{t}|A^{t-1}, B^{t-1}) - H(A^{T}|B^{T}) & \\
			\end{align*}
		\end{proof}		
	
 		We note that the term $\sum_{t=1}^{T} H(A_{t}|A^{t-1},B^{t-1})$ can be seen as the entropy $H_{R}$ of the reactor $R$, i.e. the entropy of the inputs, taking into account their dependency on the previous outputs. This brings us to an intriguing alternative interpretation of leakage. 
		
		\begin{remark}
		The leakage can be seen as the difference between the a priori invulnerability degree of the whole secret $A^T$, assuming that the attacker knows the distribution of the reactor, and the a posteriori invulnerability degree, after the adversary has observed the whole output $B^T$.
		\end{remark}

In Section~\ref{section:full-example} we give an extensive and detailed example of how to calculate the leakage for an actual security protocol.
	
In the case of secret-nondeterministic \IIHS, we have a stochastic kernel but no distribution on the reaction functions. In this case it seems natural to consider the worst leakage over all possible distributions on reaction functions. This is exactly the concept of capacity.
	
	\begin{definition}
		\label{def:maximum-leakage}
		The \emph{maximum leakage} of a secret-nondeterministic \IIHS is defined as the capacity $C_T$ of the associated channel with memory and feedback (cfr. \mbox{\textrm{{(\ref{def:cmfb-capacity})}}}).
	\end{definition}
	
A comparison with the definition of Gray (cfr. \cite{Gray:91:SSP}, Definition 5.3) is in order. As explained in the introduction, Gray's model is more complicated than ours, because it assumes that low and high variables are present at both ends of the channel. If we restrict the definition of Gray's capacity $C^G$ to our case, by eliminating the low input and the high output, we obtain the following formula:
\begin{equation}
	\label{def:Gray-capacity}
	C^{G }_{T} = \sup_{\Dset_{T}} \frac{1}{T} \sum_{t=1}^{T} I(A^{t-1};B_t|B^{t-1})
\end{equation}
		
By comparing (\ref{def:cmfb-capacity}), which is based on Definition~\ref{def:directed-information}, to (\ref{def:Gray-capacity}), we can see that the only difference is that (\ref{def:Gray-capacity}) considers the correlation between $B_t$ and $A^{t-1}$ instead of $A^t$. This seems to be intentional (cfr. \cite{Gray:91:SSP}, discussion after Definition 4.1). We are not sure why $C^G$ is defined in this way, our best guess is that the high values must be those of the previous time step in order to encompass the theory of McLean~\cite{McLean:90:SSP}. In any case, Gray's conjecture that $C^{G }_{T}$ corresponds to the channel transmission rate does not hold. For instance, it is easy to see that for $T=1$ we always have $C^{G }_{T}=0$, but there obviously are channels which can transmit a non-zero amount of information even with one single use.
	
We conclude this section by showing that our approach to the notion of leakage generalizes the classical approach (based on mutual information) to the case of feedback. The idea is that, if a channel does not have feedback, then $I(B^{T} \rightarrow A^{T}) = 0$ and therefore $I(A^{T};B^{T}) = I(A^{T} \rightarrow B^{T})$. In our opinion, the fact that mutual information turns out to be a particular case of directed information helps to justify the former as a good measure of information flow, despite its symmetry: in channels without feedback it is a good measure \emph{because it coincides with directed information} from input to output.
	
	\begin{lemma}
		\label{lemma:zero-feedback}
		In absence of feedback, $I(B^T \rightarrow A^T) = 0$
	\end{lemma}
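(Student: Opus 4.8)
The plan is to work directly from Definition~\ref{def:directed-information}, which expresses the feedback directed information as a sum of conditional mutual informations, and to show that each summand vanishes under the no-feedback hypothesis. Recall that
\[
I(B^{T} \rightarrow A^{T}) = \sum_{t=1}^{T} I(A_{t};B^{t-1} | A^{t-1}),
\]
so it suffices to prove that $I(A_{t};B^{t-1} | A^{t-1}) = 0$ for every $1 \leq t \leq T$. I would expand each term using the definition of conditional mutual information as a difference of conditional entropies, namely
\[
I(A_{t};B^{t-1} | A^{t-1}) = H(A_{t}\mid A^{t-1}) - H(A_{t}\mid A^{t-1},B^{t-1}).
\]

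The crucial observation is that the hypothesis ``absence of feedback'' is, by the very definition introduced after \eqref{eq:joint-prob}, precisely the statement that $\Cprob{p}{\Aseq{t}{}}{\Aseq{}{t-1},\Bseq{}{t-1}}=\Cprob{p}{\Aseq{t}{}}{\Aseq{}{t-1}}$ holds for all $t$. In other words, the conditional distribution of $A_{t}$ given the joint history $(A^{t-1},B^{t-1})$ coincides with its conditional distribution given $A^{t-1}$ alone; equivalently, $A_{t}$ is conditionally independent of $B^{t-1}$ given $A^{t-1}$. First I would substitute this identity of conditional distributions into the defining sum for $H(A_{t}\mid A^{t-1},B^{t-1})$, so that the averaging over the $B^{t-1}$ histories becomes vacuous and the two conditional entropies in the expansion above become literally equal. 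Their difference is then $0$, hence $I(A_{t};B^{t-1} | A^{t-1}) = 0$.

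Finally I would sum these vanishing contributions over $t$ from $1$ to $T$ to conclude $I(B^{T} \rightarrow A^{T}) = 0$, which is exactly the claim. I do not expect any genuine obstacle here: the result is essentially a restatement of the defining property of feedback in information-theoretic language. The only point requiring a little care is the bookkeeping that confirms the no-feedback condition is exactly conditional independence of $A_{t}$ and $B^{t-1}$ given $A^{t-1}$ (and not some weaker or stronger statement), since it is this precise equivalence that forces the conditional entropies to agree term by term. Everything else is a routine manipulation of the chain-rule decomposition already established in the chapter.
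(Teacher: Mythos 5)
Your proposal is correct and follows essentially the same route as the paper's proof: expand each summand $I(A_t;B^{t-1}\mid A^{t-1})$ as a difference of conditional entropies and use the no-feedback condition to make the two entropies coincide. If anything, you are slightly more careful than the paper, which asserts that $B^{t-1}$ and $A_t$ are ``independent'' where the precise hypothesis is conditional independence given $A^{t-1}$ --- exactly the point of bookkeeping you flag.
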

	
	\begin{proof}
		When feedback is not allowed, $B^{t-1}$ and $A_t$ are independent for \revision{every} $1 \leq t \leq T$. Then:
		\begin{align*}
			 I(B^T \rightarrow A^T) = & \sum_{t=1}^{T} I(A_t;B^{t-1}|A^{t-1}) & \mbox{(by Definition~\ref{def:directed-information})} \\
			 & = \sum_{t=1}^{T} ( H(A_t|A^{t-1}) & \\
			 & \quad - H(A_t|A^{t-1}, B^{t-1}) ) & \mbox{(by def. of mutual info.)} \\
			 & = \sum_{t=1}^{T} ( H(A_t|A^{t-1}) & \\
			 & \quad - H(A_t|A^{t-1}) ) & \mbox{($B^{t-1}$ and $A^{t}$ are independent)} \\
			 & = 0 & 
		\end{align*}
	\end{proof}
	
	\begin{proposition}\label{prop:generalization}
		In absence of feedback, leakage can be equivalently defined as directed information or as mutual information. 
		Similarly, in absence of feedback, the maximum leakage can be equivalently defined as directed capacity or as capacity. 
		\label{prop:collapse}
	\end{proposition}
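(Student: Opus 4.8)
The plan is to derive both equalities directly from the decomposition of mutual information into the two directed informations, equation~\eqref{eq:directed-info-sum}, together with Lemma~\ref{lemma:zero-feedback}; no new machinery is needed.

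First I would dispatch the leakage claim. By~\eqref{eq:directed-info-sum} we have $I(A^{T};B^{T}) = I(A^{T}\rightarrow B^{T}) + I(B^{T}\rightarrow A^{T})$ for the associated channel. In the absence of feedback, Lemma~\ref{lemma:zero-feedback} gives $I(B^{T}\rightarrow A^{T})=0$, so the decomposition collapses to $I(A^{T};B^{T}) = I(A^{T}\rightarrow B^{T})$. Since the leakage is defined in Definition~\ref{def:leakage} as the directed information $I(A^{T}\rightarrow B^{T})$, it now coincides with the mutual information $I(A^{T};B^{T})$, which is exactly the classical (non-interactive) notion of leakage recalled in Chapter~\ref{chapter:probabilistic-info-flow}. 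This settles the first assertion.

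Next, for the maximum leakage, recall that the directed capacity is $C_{T} = \sup_{\Dset_{T}}\frac{1}{T}I(A^{T}\rightarrow B^{T})$ by~\eqref{def:cmfb-capacity}, while the classical capacity is $\sup_{\Dset_{T}}\frac{1}{T}I(A^{T};B^{T})$. The key point is that the pointwise equality $I(A^{T}\rightarrow B^{T}) = I(A^{T};B^{T})$ established above holds for \emph{every} input distribution in $\Dset_{T}$, not merely at the optimizer. Because the two functionals being maximized agree on the entire domain $\Dset_{T}$, their suprema agree, and hence directed capacity equals capacity. This proves the second assertion.

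There is essentially no hard step: once Lemma~\ref{lemma:zero-feedback} and~\eqref{eq:directed-info-sum} are in hand, the leakage case is a one-line substitution and the capacity case is a \qm{same integrand implies same supremum} argument. The only subtlety worth stating explicitly is that the capacity equivalence must not be read off from equality at a single distribution; one must observe that \emph{absence of feedback} is a structural property of the channel (inherited from the underlying \IIHS), so Lemma~\ref{lemma:zero-feedback} applies uniformly at every point of $\Dset_{T}$, which is precisely what licenses replacing the directed information by the mutual information inside the supremum.
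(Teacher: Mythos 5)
Your proof is correct and follows exactly the paper's route: the paper's own proof is the one-line observation that the claim "follows directly from Lemma~\ref{lemma:zero-feedback} and \eqref{eq:directed-info-sum}." Your additional remark that the pointwise equality over all of $\Dset_{T}$ is what justifies equating the suprema is a worthwhile elaboration of a step the paper leaves implicit, but it is not a different argument.
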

	
	\begin{proof}
		It follows directly from Lemma~\ref{lemma:zero-feedback} and \eqref{eq:directed-info-sum}.
	\end{proof}
	
\section{An example: the Cocaine Auction protocol}
\label{section:full-example}
In this section we show the application of our approach to the \emph{Cocaine Auction Protocol}~\cite{Stajano:99:IH}. The formalization of this protocol in terms of IIHSs using our framework makes it possible to prove the claim in \cite{Stajano:99:IH} suggesting that if the seller knows the identity of the bidders then the (strong) anonymity guaranties are no longer assured.
	
Let us consider a scenario in which several mobsters are gathered around a table. An auction is about to be held in which one of them  offers his next shipment of cocaine to 	the highest bidder. The seller describes the merchandise and proposes a starting price. The others then bid increasing amounts until there are no bids for, say, $30$ consecutive seconds. At that point the seller declares the auction closed and arranges a secret appointment with the winner to deliver the goods.
		
The basic protocol is fairly simple and is organized as a succession of rounds of bidding. Round $i$ starts with the seller announcing the bid price $b_i$ for that round. Buyers have $t$ seconds to make an offer (i.e. to say yes, meaning \qm{I'm willing to buy at the current	bid price $b_i$}). As soon as one buyer anonymously says yes, he becomes the winner $w_i$ of that round and a new round begins. If nobody says anything for $t$ seconds, round $i$ is concluded by timeout and the auction is won by the winner $w_{i-1}$ of the previous round, if one exists. If the timeout occurs during round $0$, this means that nobody made any offers at the initial price $b_0$, so there is no sale.

Although our framework allows the formalization of this protocol for an arbitrary number of bidders and bidding rounds, for illustration purposes we will consider the case of two bidders (\emph{Candlemaker} and \emph{Scarface}) and two rounds of bids. Furthermore, we assume that the initial bid is always $100$ euros, so the first bid does not need to be announced by the seller. In each turn the seller can choose how much he wants to increase the current bid value. This is done by adding an increment to the last bid. There are two options of increments, namely $inc_{1}$ ($100$ euros) and $inc_{2}$ ($200$ euros). In that way, $b_{i+1}$ is either $b_{i} + inc_{1}$ or $b_{i} + inc_{2}$. We can describe this protocol as a \emph{normalized} \IIHS $\ihs = (M, \Aset, \Bset)$, where $\Aset = \{ \mbox{Candlemaker}, \mbox{Scarface}, \Asym{}{*} \}$ is 	the set of secret actions, $\Bset = \{ inc_{1}, inc_{2}, \Bsym{*}{} \}$ is the set of observable actions, and the probabilistic automaton $M$ is represented in Figure~\ref{fig:cocaine-example}.	For clarity reasons, transitions with probability $0$ are not represented in the automaton. Note that the special secret action $\Asym{*}{}$ represents the situation where neither \emph{Candlemaker} nor \emph{Scarface} bid. The special observable action $\Bsym{*}{}$ represents the end of the auction and it can only occur if no one has bid in the round. 

	\begin{figure}
		\center
		\includegraphics[width=\textwidth]{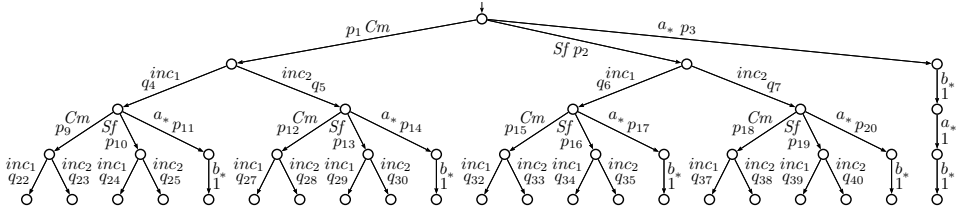}
		\caption{Cocaine auction example}
		\label{fig:cocaine-example}
	\end{figure}

Table~\ref{tab:sk-cocaine-example} shows all the stochastic kernels for this example. 

	\begin{table}[!htb]		
		\centering		
		\subbottom[$t\!=\!1,\Cprob{p}{\Bseq{1}{}}{\Aseq{}{1},\Bseq{}{0}}$]{
			\centering
			\small
			\begin{tabular}{|l||c|c|c|}
				\hline
				$\ \ \ \Aseq{1}{} \rightarrow \Bseq{1}{}$ & \ \ \ $inc_{1}$ \ \ \  &\ \ \ $inc_{2}$ \ \ \ & $\Bsym{*}{}$ \\ \hline \hline
				\textit{Candlemaker} & $q_{4}$ & $q_{5}$ & 0 \\ \hline
				\textit{Scarface} & $q_{6}$ & $q_{7}$ & 0 \\ \hline
				$\Asym{}{*}$  & $0$ & $0$ & 1 \\ \hline
			\end{tabular}
			\label{tab:sk-t1}	
		}					
		\subbottom[$t=2, \Cprob{p}{\Bseq{2}{}}{\Aseq{}{2},\Bseq{}{1}}$]{
			\centering
			\small
			\begin{tabular}{|l||c|c|c|}
				\hline
				$\Aseq{1}{},\Bseq{1}{},\Aseq{2}{} \rightarrow \Bseq{2}{}$ & $inc_{1}$ & $inc_{2}$ & $\Bsym{*}{}$ \\ \hline \hline
				\textit{Candlemaker},$inc_{1}$,\textit{Candlemaker}						& $q_{22}$ & ${q}_{23}$ & 0 \\ \hline				
				\textit{Candlemaker},$inc_{1}$,\textit{Scarface} 							& $q_{24}$ & ${q}_{25}$ & 0 \\ \hline				
				\textit{Candlemaker},$inc_{1}$,$\Asym{*}{}$ 					& 0 & 0 & 1 \\ \hline
				\textit{Candlemaker},$inc_{2}$,\textit{Candlemaker} 					& $q_{27}$ & ${q}_{28}$ & 0 \\ \hline				
				\textit{Candlemaker},$inc_{2}$,\textit{Scarface} 							& $q_{29}$ & ${q}_{30}$ & 0 \\ \hline			
				\textit{Candlemaker},$inc_{2}$,$\Asym{*}{}$					& 0 & 0 & 1 \\ \hline	
				\textit{Scarface},$inc_{1}$,\textit{Candlemaker}								& $q_{32}$ & ${q}_{33}$ & 0 \\ \hline				
				\textit{Scarface},$inc_{1}$,\textit{Scarface} 									& $q_{34}$ & ${q}_{35}$ & 0 \\ \hline			
				\textit{Scarface},$inc_{1}$,$\Asym{*}{}$ 							& 0 & 0 & 1 \\ \hline
				\textit{Scarface},$inc_{2}$,\textit{Candlemaker} 							& $q_{37}$ & ${q}_{38}$ & 0 \\ \hline			
				\textit{Scarface},$inc_{2}$,\textit{Scarface} 								& $q_{39}$ & ${q}_{40}$ & 0 \\ \hline				
				\textit{Scarface},$inc_{2}$,$\Asym{*}{}$							& 0 & 0 & 1 \\ \hline	
				$\Asym{*}{}$,$\Bsym{*}{}$,$\Asym{*}{}$	& 0 & 0 & 1 \\ \hline	
				All other lines													& 0 & 0 & 1 \\ \hline
			\end{tabular}
			\label{tab:sk-t2}	
		}			
		\caption{Stochastic kernels for the Cocaine Auction example}
		\label{tab:sk-cocaine-example}	
	\end{table}
	
The next step is to construct all possible reaction functions $\{\Fseq{t}{}(\Bseq{}{t-1})\}_{t=1}^{T}$. As seen in Section~\ref{section:reaction-functions}, the reaction functions \revision{correspond to} the encoder in the channel. They take the feedback story and decide how the world will react to this situation. Table~\ref{tab:rf-cocaine-example} contains the reaction functions for each time $t \leq 2$.

\setlength{\extrarowheight}{3pt}
\begin{table}[!htb]
	\centering
	\small
	\subbottom[All $3$ reaction functions $\Fseq{1}{}$]{
		\centering
		\begin{tabular}{|c||c|c|c|}
			\hline
			$\Bseq{}{0}$ & $\Fsym{1(1)}{}$ & $\Fsym{1(2)}{}$ & $\Fsym{1(3)}{}$  \\ \hline \hline
			$\emptyset$  & \textit{Candlemaker} & \textit{Scarface} & $\Asym{*}{}$ \\ \hline
		\end{tabular}
	}
	\subbottom[All $27$ reaction functions $\Fseq{2}{}(\Bseq{}{1})$]{
		\centering
		\begin{tabular}{|c||c|c|c|c|}
			\hline 
			$\Bseq{}{1}$	& $\Fsym{2(1)}{}(\Bseq{}{1})$ & $\Fsym{2(2)}{}(\Bseq{}{1})$ & $\Fsym{2(3)}{}(\Bseq{}{1})$ & $\Fsym{2(4)}{}(\Bseq{}{1})$ \\ \hline \hline
			$inc_{1}$	    & \textit{Candlemaker}        & \textit{Candlemaker}        & \textit{Candlemaker}        & \textit{Candlemaker}        \\ \hline	 	
			$inc_{2}$     & \textit{Candlemaker}        & \textit{Candlemaker}        & \textit{Candlemaker}        & \textit{Scarface}           \\ \hline		
			$\Bsym{*}{}$	& \textit{Candlemaker}        & \textit{Scarface}           & $\Asym{*}{}$                & \textit{Candlemaker}        \\ \hline 
			\hline
			$\Bseq{}{1}$	& $\Fsym{2(5)}{}(\Bseq{}{1})$ & $\Fsym{2(6)}{}(\Bseq{}{1})$ & $\Fsym{2(7)}{}(\Bseq{}{1})$ & $\Fsym{2(8)}{}(\Bseq{}{1})$ \\ \hline \hline
			$inc_{1}$	    & \textit{Candlemaker}        & \textit{Candlemaker}        & \textit{Candlemaker}        & \textit{Candlemaker}        \\ \hline	 	
			$inc_{2}$	    & \textit{Scarface}           & \textit{Scarface}           & $\Asym{*}{}$                & $\Asym{*}{}$                \\ \hline		
			$\Bsym{*}{}$	& \textit{Scarface}           & $\Asym{*}{}$                & \textit{Candlemaker}        & \textit{Scarface}           \\ \hline 
			\hline		
			$\Bseq{}{1}$  & $\Fsym{2(9)}{}(\Bseq{}{1})$ & $\Fsym{2(10)}{}(\Bseq{}{1})$ & $\Fsym{2(11)}{}(\Bseq{}{1})$ & $\Fsym{2(12)}{}(\Bseq{}{1})$ \\ \hline \hline
			$inc_{1}$	    & \textit{Candlemaker}        & \textit{Scarface}            & \textit{Scarface}            & \textit{Scarface}            \\ \hline	
			$inc_{2}$	    & $\Asym{*}{}$                & \textit{Candlemaker}         & \textit{Candlemaker}         & \textit{Candlemaker}         \\ \hline			
			$\Bsym{*}{}$	& $\Asym{*}{}$                & \textit{Candlemaker}         & \textit{Scarface}            & $\Asym{*}{}$                 \\ \hline 
			\hline 		
			$\Bseq{}{1}$  & $\Fsym{2(13)}{}(\Bseq{}{1})$ & $\Fsym{2(14)}{}(\Bseq{}{1})$ & $\Fsym{2(15)}{}(\Bseq{}{1})$ & $\Fsym{2(16)}{}(\Bseq{}{1})$ \\ \hline \hline
			$inc_{1}$	    & \textit{Scarface}            & \textit{Scarface}            & \textit{Scarface}            & \textit{Scarface}            \\ \hline	
			$inc_{2}$	    & \textit{Scarface}            & \textit{Scarface}            & \textit{Scarface}            & $\Asym{*}{}$                 \\ \hline			
			$\Bsym{*}{}$	& \textit{Candlemaker}         & \textit{Scarface}            & $\Asym{*}{}$                 & \textit{Candlemaker}         \\ \hline 
			\hline 		
			$\Bseq{}{1}$  & $\Fsym{2(17)}{}(\Bseq{}{1})$ & $\Fsym{2(18)}{}(\Bseq{}{1})$ & $\Fsym{2(19)}{}(\Bseq{}{1})$ & $\Fsym{2(20)}{}(\Bseq{}{1})$ \\ \hline \hline
			$inc_{1}$	    & \textit{Scarface}            & \textit{Scarface}            & $\Asym{*}{}$                 & $\Asym{*}{}$                 \\ \hline	
			$inc_{2}$	    & $\Asym{*}{}$                 & $\Asym{*}{}$                 & \textit{Candlemaker}         & \textit{Candlemaker}         \\ \hline			
			$\Bsym{*}{}$	& \textit{Scarface}            & $\Asym{*}{}$                 & \textit{Candlemaker}         & \textit{Scarface}            \\ \hline 
			\hline 		
			$\Bseq{}{1}$  & $\Fsym{2(21)}{}(\Bseq{}{1})$ & $\Fsym{2(22)}{}(\Bseq{}{1})$ & $\Fsym{2(23)}{}(\Bseq{}{1})$ & $\Fsym{2(24)}{}(\Bseq{}{1})$ \\ \hline \hline
			$inc_{1}$	    & $\Asym{*}{}$                 & $\Asym{*}{}$                 & $\Asym{*}{}$                 & $\Asym{*}{}$                 \\ \hline	
			$inc_{2}$	    & \textit{Candlemaker}         & \textit{Scarface}            & \textit{Scarface}            & \textit{Scarface}            \\ \hline
			$\Bsym{*}{}$	& $\Asym{*}{}$                 & \textit{Candlemaker}         & \textit{Scarface}            & $\Asym{*}{}$                 \\ \hline
			\hline 		
			$\Bseq{}{1}$  & $\Fsym{2(25)}{}(\Bseq{}{1})$ & $\Fsym{2(26)}{}(\Bseq{}{1})$ & $\Fsym{2(27)}{}(\Bseq{}{1})$ & --- \\ \hline \hline
			$inc_{1}$	    & $\Asym{*}{}$                 & $\Asym{*}{}$                 & $\Asym{*}{}$                 & --- \\ \hline	
			$inc_{2}$	    & $\Asym{*}{}$                 & $\Asym{*}{}$                 & $\Asym{*}{}$                 & --- \\ \hline			
			$\Bsym{*}{}$	& \textit{Candlemaker}         & \textit{Scarface}            & $\Asym{*}{}$                 & --- \\ \hline \hline 	
		\end{tabular}
	}		
	\caption{Reaction functions for the cocaine auction example}
	\label{tab:rf-cocaine-example}
\end{table}
\setlength{\extrarowheight}{0pt}	
	
Now we need to define the reactor, i.e. the probability distribution on reaction functions. Corollary~\ref{corollary:distribution-F} shows that we can do so by using the following equations:
\begin{align*}
		\prob{p}{\Fseq{1}{}} & = \Cprob{p}{\Aseq{1}{}}{\Aseq{}{0},\Bseq{}{0}} = \prob{p}{\Aseq{1}{}} \\
		\Cprob{p}{\Fseq{t}{}}{\Fseq{}{t-1}} & = \prod_{\Bseq{}{t-1}}\Cprob{p}{\Fseq{t}{}(\Bseq{}{t-1})}{\Fseq{}{t-1}(\Bseq{}{t-2}),\Bseq{}{t-1}},  \ \ \ 2 \leq t \leq T 
\end{align*}
		
For instance, $\prob{p}{\Fsym{1(1)}{}} = \prob{p}{\textit{Candlemaker}} = p_{1}$. In the same way, $\prob{p}{\Fsym{1(2)}{}} = \prob{p}{\mbox{\textit{Scarface}}} = p_{2}$ and $\prob{p}{\Fsym{1(3)}{}} = \prob{p}{\Asym{*}{}} = p_{3}$.

Let us take as an example the calculation of \revision{$\Cprob{p}{\Fsym{2(6)}{}}{{\Fsym{1(1)}{}}}$}:
	\begin{align*}
		\Cprob{p}{\Fsym{2(6)}{}}{{\Fsym{1(1)}{}}} & = \prod_{\Bseq{}{1}} \Cprob{p}{\Fsym{2(6)}{}(\Bseq{}{1}){}}{\Fseq{1(1)}{},\Bseq{}{1}} \\
			& = \Cprob{p}{\Fsym{2(6)}{}(inc_{1})}{\mbox{\textit{Candlemaker}},inc_{1}} \cdot \\
			& \quad \quad  \Cprob{p}{\Fsym{2(6)}{}(inc_{2})}{\mbox{\textit{Candlemaker}},inc_{2}} \cdot \\
			& \quad \quad  \Cprob{p}{\Fsym{2(6)}{}(\Bsym{*}{})}{\mbox{\textit{Candlemaker}},\Bsym{*}{}} \\
			& = \Cprob{p}{\mbox{\textit{Candlemaker}}}{\mbox{\textit{Candlemaker}},inc_{1}}	\cdot \\
			& \quad \quad \Cprob{p}{\mbox{\textit{Scarface}}}{\mbox{\textit{Candlemaker}},inc_{2}} \\
			& \quad \quad  \Cprob{p}{\Asym{*}{}}{\mbox{\textit{Candlemaker}},\Bsym{*}{}} \\
			& = p_{9} \cdot p_{13} \cdot 1 \\
			& = p_{9} p_{13} \\
	\end{align*}
	
Note that some reaction functions can have probability $0$, which is consistent with the probabilistic automaton. For instance:
	\begin{align*}
		\Cprob{p}{\Fsym{2(25)}{}}{{\Fsym{1(3)}{}}} & = \prod_{\Bseq{}{1}} \Cprob{p}{\Fsym{2(25)}{}(\Bseq{}{1}){}}{\Fseq{1(3)}{},\Bseq{}{1}} \\
			& = \Cprob{p}{\Fsym{2(25)}{}(inc_{1})}{\Asym{*}{},inc_{1}} \cdot \Cprob{p}{\Fsym{2(25)}{}(inc_{2})}{\Asym{*}{},inc_{2}} \cdot \\
			& \quad \quad  \Cprob{p}{\Fsym{2(25)}{}(\Bsym{*}{})}{\Asym{*}{},\Bsym{*}{}} \\
			& = \Cprob{p}{\Asym{*}{}}{\Asym{*}{},inc_{1}}	\cdot \Cprob{p}{\Asym{*}{}}{\Asym{*}{},inc_{2}} \cdot \Cprob{p}{\mbox{\textit{Candlemaker}}}{\Asym{*}{},\Bsym{*}{}} \\
			& = 1 \cdot 1 \cdot 0 \\
			& = 0 \\
	\end{align*}
	 		
\subsection{Calculating the information leakage}
	
Let us now calculate the information leakage for this example using the concepts from Section~\ref{section:anonymity-properties}. We will analyze three different scenarios:
	
\begin{description}
	
	\item [Example \texttt{a}:] There is feedback, but the probability of an observable does not depend on the history of secrets. In the auction protocol, this corresponds to a scenario where the probability of one of the mobsters to bid can depend on the increment imposed by the seller, but the history of who has previously bid in the past has no influence on how the seller chooses the bid increment in the coming turns. In other words, the seller cannot use the information of who has been bidding to change his strategy of defining the new increments. This situation corresponds to the original description of the protocol in~\cite{Stajano:99:IH}, where the seller does not have access to the identity of the bidder, for the sake of anonymity preservation. In general, we have $\Cprob{p}{\Bseq{t}{}}{\Aseq{}{t},\Bseq{}{t-1}} = \Cprob{p}{\Bseq{t}{}}{\Bseq{}{t-1}}$ for every $1 \leq t \leq T$. There is an exception, however: if there is no bidder, the case modeled by
the secret being $\Asym{*}{}$, then the auction terminates, which is signaled by the observable $\Bsym{*}{}$.
				
	\item [Example \texttt{b}:] This is the most general case, without any restrictions. The presence of feedback allows the probability of the bidder to depend of the \revision{increment in} the price. For instance, if \textit{Candlemaker} is richer than \textit{Scarface}, it is more likely that the former bids if the increment in the price is $inc_{2}$ instead of $inc_{1}$. Also, the probability of an observable can depend on the history of secrets, i.e. in general $\Cprob{p}{\Bseq{t}{}}{\Aseq{}{t},\Bseq{}{t-1}} \neq \Cprob{p}{\Bseq{t}{}}{\Bseq{}{t-1}}$ for $1 \leq t \leq T$. This scenario can represent a situation where the seller is corrupted and can use his information to affect the outcome of the auction. As an example, suppose that the seller is a friend of \textit{Scarface} and he wants to help him in the auction. One way of doing so is to check who was the winner of the last bidding round. Whenever the winner is \textit{Candlemaker}, the seller chooses as increment the small value $inc_{1}$, hoping that it will give  \textit{Scarface} a good chance to bid in the next round. On the other hand, whenever the seller detects that the winner is \textit{Scarface}, he chooses as the next increment the greater value $inc_{2}$, hoping that it will minimize the chances of \textit{Candlemaker} to bid in the next round (and therefore maximizing the chances of the auction to end up having \textit{Scarface} as the final winner).
		
	\item [Example \texttt{c}:] There is no feedback. In the cocaine auction, we can have the (perhaps unrealistic) situation in which the increment added to the bid has no influence on the probability of \textit{Candlemaker} or \textit{Scarface} being the bidder. Mathematically, we have $\Cprob{p}{\Aseq{t}{}}{\Aseq{}{t-1},\Bseq{}{t-1}} = \Cprob{p}{\Aseq{t}{}}{\Aseq{}{t-1}}$ for every $1 \leq t \leq T$. As in Example \texttt{b}, however, we do not impose any restriction on $\Cprob{p}{\Bseq{t}{}}{\Aseq{}{t},\Bseq{}{t-1}}$.
		
\end{description}
	
For each scenario we need to fill in the values of the probabilities in the protocol tree in Figure~\ref{fig:cocaine-example}. The probabilities for each example are listed in Table~\ref{tab:example-probabilities}.
\begin{table}[!htb]
	\centering
	\small
	\begin{tabular}{|c||c|c|c|}
		\hline
		Probability & Example \texttt{a} & Example \texttt{b} & Example \texttt{c} \\
		variable & value        & value        & value        \\ \hline \hline
		$p_{1}$  & 0.75 & 0.70 & 0.70 \\ \hline
		$p_{2}$  & 0.24 & 0.24 & 0.24 \\ \hline
		$p_{3}$  & 0.01 & 0.01 & 0.01 \\ \hline
			
		$q_{4}$  & 0.50 & 0.55 & 0.30 \\ \hline
		$q_{5}$  & 0.50 & 0.45 & 0.70 \\ \hline
		$q_{6}$  & 0.50 & 0.45 & 0.70 \\ \hline
		$q_{7}$  & 0.50 & 0.55 & 0.30 \\ \hline
			
		$p_{9}$  & 0.04 & 0.80 & 0.75 \\ \hline
		$p_{10}$ & 0.95 & 0.19 & 0.20 \\ \hline
		$p_{11}$ & 0.01 & 0.01 & 0.05 \\ \hline
		$p_{12}$ & 0.95 & 0.19 & 0.75 \\ \hline
		$p_{13}$ & 0.04 & 0.80 & 0.20 \\ \hline
		$p_{14}$ & 0.01 & 0.01 & 0.05 \\ \hline
		$p_{15}$ & 0.04 & 0.90 & 0.65 \\ \hline
		$p_{16}$ & 0.95 & 0.09 & 0.35 \\ \hline
		$p_{17}$ & 0.01 & 0.01 & 0.05 \\ \hline
		$p_{18}$ & 0.95 & 0.09 & 0.65 \\ \hline
		$p_{19}$ & 0.04 & 0.90 & 0.35 \\ \hline
		$p_{20}$ & 0.01 & 0.01 & 0.05 \\ \hline

		$q_{22}$ & 0.50 & 0.80 & 0.45 \\ \hline
		$q_{23}$ & 0.50 & 0.20 & 0.55 \\ \hline
		$q_{24}$ & 0.50 & 0.20 & 0.55 \\ \hline
		$q_{25}$ & 0.50 & 0.80 & 0.45 \\ \hline
		$q_{27}$ & 0.45 & 0.75 & 0.45 \\ \hline
		$q_{28}$ & 0.55 & 0.25 & 0.55 \\ \hline
		$q_{29}$ & 0.45 & 0.35 & 0.55 \\ \hline
		$q_{30}$ & 0.55 & 0.65 & 0.45 \\ \hline
		$q_{32}$ & 0.50 & 0.55 & 0.45 \\ \hline
		$q_{33}$ & 0.50 & 0.45 & 0.55 \\ \hline
		$q_{34}$ & 0.50 & 0.40 & 0.55 \\ \hline
		$q_{35}$ & 0.50 & 0.60 & 0.45 \\ \hline
		$q_{37}$ & 0.45 & 0.60 & 0.45 \\ \hline
		$q_{38}$ & 0.55 & 0.40 & 0.55 \\ \hline
		$q_{39}$ & 0.45 & 0.35 & 0.55 \\ \hline
		$q_{40}$ & 0.55 & 0.55 & 0.45 \\ \hline
			
	\end{tabular}

	\caption{Values of the probabilities in Figure~\ref{fig:cocaine-example} for  Examples \texttt{a}, \texttt{b}, and \texttt{c}}
	\label{tab:example-probabilities}		
\end{table}
Table~\ref{tab:example-values} shows a comparison between some relevant values \revision{for} the three cases.
\setlength{\extrarowheight}{3pt}	
\begin{table}[!htb]		
	\centering
	\small
		\begin{tabular}{|l|l||c|c|c|}
		\hline
		Interpretation           & Symbol                       & Example a & Example b & Example c \\ \hline \hline
		Input uncertainty        & $H(A^{T})$                   & 1.9319    & 1.9054    & 1.9158    \\ \hline
		Reactor uncertainty      & $H_{R}$                      & 1.1911    & 1.5804    & 1.9158    \\ \hline
		A posteriori uncertainty & $H(A^{T}|B^{T})$             & 1.0303    & 1.2371    & 1.4183    \\ \hline
		Mutual information       & $I(A^{T};B^{T})$             & 0.9016    & 0.6684    & 0.4975    \\ \hline
		Leakage                  & $I(A^{T} \rightarrow B^{T})$ & 0.1608    & 0.3433    & 0.4975    \\ \hline
		Feedback information     & $I(B^{T} \rightarrow A^{T})$ & 0.7408    & 0.3250    & 0.0000    \\ \hline						
	\end{tabular}		
	\caption{Values of the entropy and directed information for Examples \texttt{a}, \texttt{b}, and \texttt{c}, where $I(A^{T};B^{T}) = H(A^{T}) - H(A^{T}|B^{T})$ and $I(A^{T} \rightarrow B^{T}) = H_{R} - H(A^{T}|B^{T})$}
	\label{tab:example-values}
\end{table}
\setlength{\extrarowheight}{0pt}
		
In Example \texttt{a}, since the probability of observables does  not depend on the history of secrets, there is (almost) no information flowing from the input to the output, and the directed information $I(A^{T} \rightarrow B^{T})$ is close to zero, i.e. the leakage is low. The only reason why the leakage is not zero is because the end of an auction needs to be signaled. Due to presence of feedback, however, the directed information in the other sense $I(B^{T} \rightarrow A^{T})$ is non-zero, and so is the mutual information $I(A^{T};B^{T})$. This is an example where the mutual information does not correspond to the real information leakage, since  some (in this case, most) of the correlation between input and output can be attributed to the feedback.
	
In Example \texttt{b} the information flow from input to output $I(A^{T} \rightarrow B^{T})$ is significantly higher than zero, but still, due to feedback, the information flow from outputs to inputs $I(B^{T} \rightarrow A^{T})$ is not zero and  the mutual information $I(A^{T};B^{T})$ is higher than the directed information $I(A^{T} \rightarrow B^{T})$.
		
In Example \texttt{c}, the absence of feedback implies that $I(B^{T} \rightarrow A^{T})$ is zero. In that case the values of $I(A^{T};B^{T})$ and $I(A^{T} \rightarrow B^{T})$ coincide, and represent the real leakage.

Finally, Figure~\ref{fig:examples-values} shows a comparison between the values of the entropy and of the directed information in the examples. The totality of the mutual information $I(A^{T};B^{T})$ is represented by the height of the correspondent bar, and we emphasize the contribution of the directed information in each direction by splitting the bar into two parts. This figure highlights the fact that mutual information can be misleading as a measure of leakage. The greatest mutual information is obtained in Example \texttt{a}, followed by Example \texttt{b} and then by Example \texttt{c}. The \emph{real leakage}, however, given by $I(A^{T} \rightarrow B^{T})$, respects exactly the inverse order, namely Example \texttt{a} presents the lowest value while Example \texttt{c} presents the highest one. Indeed, in Example \texttt{a} the value of $I(A^{T} \rightarrow B^{T})$ represents only $18\%$ of the mutual information, while in Example \texttt{b} it represents $51\%$ and in Example \texttt{c} it amounts to $100\%$.
	\begin{figure}[!htb]
		\centering
		\includegraphics[width=0.8\textwidth]{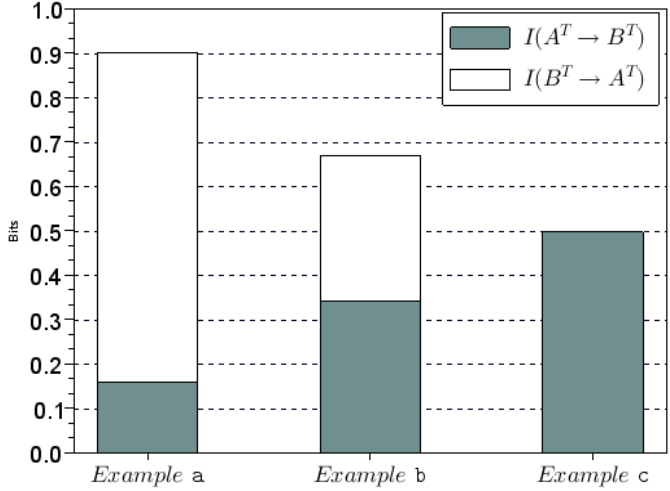}	
		\vspace{-.5cm}			
		\caption{Comparison between the leakage in Examples \texttt{a}, \texttt{b}, and \texttt{c}}		
		\label{fig:examples-values}			
	\end{figure}

\section{Topological properties of \IIHSs and their capacity}	
\label{section:topological-properties}
In this section we show how to extend to \IIHSs the notion of pseudometric defined in \cite{Desharnais:02:LICS} for Concurrent Labeled Markov Chains, and we prove that the capacity of the corresponding channels  is a continuous function with respect to this pseudometric. The pseudometric construction is sound for general \IIHSs, but the result on capacity is only valid for secret-nondeterministic \IIHSs.

Given a set of states $S$, a pseudometric is a function $d$ that yields a non-negative real number for each pair of states and satisfies the following: 

\begin{itemize}[(i)]
	\item $d(s,s)=0$;
	\item $d(s,t)=d(t,s)$; and
	\item $d(s,t)\leq d(s,u)+d(u,t)$. 
\end{itemize}
  
We say that a pseudometric $d$ is $c$-bounded if $\forall s,t: d(s,t)\leq c$, where $c$ is a positive real number.

Note that, in contrast to metrics, in pseudometrics two elements  can have distance $0$ without being identical. We consider pseudometrics instead of metrics because our purpose is to extend the notion of (probabilistic) bisimulation:  having distance $0$ will correspond to being bisimilar.

We now define a complete lattice \revision{structure} on pseudometrics, in order  to define the distance between \IIHSs as the greatest fixpoint of a particular transformation, in line with the coinductive theory of bisimilarity. Since larger bisimulations identify more, the natural extension of the ordering to pseudometrics must shorten the distances as we go up in the lattice:

\begin{definition}
$\CM$ is the class of $1$-bounded pseudometrics on states with the
  ordering
\[d\preceq d'\ \textrm{if}\ \forall s,s'\in S: d(s,s')\geq
  d'(s,s').\]
\end{definition}

It is easy to see that $(\CM,\preceq)$ is a complete lattice. In order to define pseudometrics on \IIHSs, we now need to lift the pseudometrics on states to pseudometrics on distributions in $\distr({\mathcal L} \times S)$. Following standard lines \cite{vanBreugel:01:ICALP,Desharnais:02:LICS,Deng:05:QAPL}, we apply the construction based on the Kantorovich metric \cite{Kantorovich:42:DAN}.

\begin{definition}
	For $d\in\CM$, and $\mu,\mu'\in\distr({\mathcal L}\times S)$, we define $d(\mu,\mu')$ (overloading the notation $d$) as 
	\begin{equation*}
		d(\mu,\mu') \ = \  \max\sum_{(\ell_i,s_i)\in{\mathcal L} \times S}(\mu(\ell_i,s_i)-\mu'(\ell_i,s_i))x_i
	\end{equation*} 
	
	\noindent where the maximum is taken over all possible values of the  $x_i$'s, subject to the constraints $0\leq x_i\leq 1 $ and $x_i-x_j\leq  \hat{d}((\ell_i,s_i),(\ell_j,s_j))$, where
	\begin{equation*}
		\hat{d}((\ell_i,s_i),(\ell_j,s_j)) \ = \ \left\{\begin{array}{ll}
    	                                     1 & \textrm{if}\ \ell_i\not=\ell_j\\
      	                          d(s_i,s_j)\quad & \textrm{otherwise}
        		                          \end{array}\right.
	\end{equation*}
\end{definition}

It can be shown that with this definition $m$ is a pseudometric on $\distr({\mathcal L} \times S)$.

\begin{definition}
	\label{def:sm}
	A pseudometric $d\in\CM$ is a {\em bisimulation  pseudometric} \footnote{In literature a pseudometric with this property is also known as bisimulation metric, although it is still a pseudometric.} if, for all $\epsilon\in [0,1)$, $d(s,s')\leq\epsilon$ implies that if $s\to\mu$, then there exists some $\mu'$ such that $s'\to\mu'$ and $d(\mu,\mu')\leq\epsilon$.
\end{definition}

Note that it is not necessary to require the converse of the condition in Definition \ref{def:sm} to get a complete analogy with bisimulation: the converse is indeed implied by the symmetry of $d$ as a pseudometric.  Note also that we prohibit $\epsilon$ to be $1$ because, throughout this chapter, $1$ represents the maximum distance,  which includes the case where one state may perform a transition and the other may not. 

The greatest bisimulation pseudometric is 
\begin{equation}
	\label{eqn:max}
	d_{\textit{max}}=\bigsqcup\{d\in\CM\mid d \mbox{ is a bisimulation pseudometric}\}
\end{equation}

We now characterize $d_\textit{max}$ as a fixed point of a monotonic function $\Phi$ on $\CM$. Eventually we are interested in the distance between \IIHSs, and for the sake of simplicity, from now on we consider only the distance between states belonging to different  \IIHSs. The extension to the general case is trivial. For  clarity purposes, we assume that  different \IIHSs have disjoint sets of states.

\begin{definition}
	Given two \IIHSs   with transition relations $\theta$ and $\theta'$ respectively, and a pseudometric $d$ on states, define  $\Phi:\CM\rightarrow\CM$ as:
\begin{equation*}
	\Phi(d)(s,s')=\left\{\begin{array}{ll}
	\max_i{d(s_i,s'_i)}  & \mbox{if  }\  \ \ \ \ \ \  \vartheta(s)  =  \{\delta_{(a_1,s_1)},\ldots,\delta_{(a_m,s_m)}\} \\
  	                   & \mbox{and } \ \vartheta'(s') \ \! =  \{\delta_{(a_1,s'_1)},\ldots,\delta_{(a_m,s'_m)}\}
	\\[2mm]
	d(\mu,\mu') & \mbox{if  }  \vartheta(s) = \{\mu\} \mbox{ and } \vartheta'(s')= \{\mu'\}
	\\[2mm]
	0 &\mbox{if } \vartheta(s)= \vartheta'(s')=\emptyset
	\\[2mm]
	1 & \mbox{otherwise }
	\end{array}
	\right.
\end{equation*}
\end{definition}

It is easy to see that the definition of $\Phi$ is a particular case of the function $F$ defined in \cite{Desharnais:02:LICS,Deng:05:QAPL}, which is characterized as follows (cf. Lemma 3.8 in the full version of  \cite{Desharnais:02:LICS}, and Definition 2.7 in \cite{Deng:05:QAPL}):
\begin{equation*}
	F(d)(s,s')=\max\{\sup_{s\to \mu}\inf_{s'\to\mu'} d(\mu,\mu')\ , \  \sup_{s'\to \mu'}\inf_{s\to\mu} d(\mu,\mu')\}
\end{equation*}

Hence it can be proved, as an instance of the analogous result for $F$ (cf. Lemma 2.8  in \cite{Deng:05:QAPL}), that $\Phi(d)$ is a pseudometric, and that the following property holds.

\begin{lemma}
	For  $\epsilon\in [0,1)$, $\Phi(d)(s,s')\leq\epsilon$ holds if and only if whenever  $s\to\mu$, there exists some $\mu'$ such that $s'\to\mu'$ and $d(\mu,\mu')\leq\epsilon$.
\end{lemma}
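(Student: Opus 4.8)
The plan is to prove the equivalence by a direct case analysis following the four clauses in the definition of $\Phi$, reducing everything to a single auxiliary computation: the value of the Kantorovich lifting on Dirac measures. First I would establish that for any label $a\in\Lset$ and states $r,r'$ one has $d(\delta_{(a,r)},\delta_{(a,r')}) = d(r,r')$, whereas $d(\delta_{(a,r)},\delta_{(b,r')}) = 1$ whenever $a\neq b$. Both facts follow immediately from the optimisation defining $d(\mu,\mu')$ on $\distr(\Lset\times S)$: for Dirac measures the only nonzero summands are $(\mu-\mu')x$ on the two support points, so the objective collapses to $x_{(a,r)}-x_{(a,r')}$ (respectively $x_{(a,r)}-x_{(b,r')}$), and the binding constraint $x_i-x_j\leq\hat{d}$ forces the maximum to equal $\hat{d}((a,r),(a,r'))$, which is $d(r,r')$ in the matched-label case and $1$ in the mismatched one (using $1$-boundedness of $d$).

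With this in hand, the non-"otherwise" clauses are routine. If $s$ and $s'$ are both terminal ($\vartheta(s)=\vartheta'(s')=\emptyset$), then $\Phi(d)(s,s')=0\leq\epsilon$ and the right-hand side holds vacuously. If $s$ and $s'$ are both fully probabilistic with unique transitions $s\to\mu$ and $s'\to\mu'$, the only possible witness is $\mu'$ itself, so the right-hand side reads exactly $d(\mu,\mu')\leq\epsilon$, which is $\Phi(d)(s,s')\leq\epsilon$ by definition. For the secret-nondeterministic clause, where $\vartheta(s)=\{\delta_{(a_i,s_i)}\}_{i=1}^m$ and $\vartheta'(s')=\{\delta_{(a_i,s_i')}\}_{i=1}^m$ share the same labels, I would use Condition~\ref{item:restr-add} to note that the $a_i$ are pairwise distinct, so that the auxiliary fact makes $\delta_{(a_i,s_i')}$ the unique $s'$-transition at distance below $1$ from $\delta_{(a_i,s_i)}$. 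Hence the right-hand side is equivalent to $d(s_i,s_i')\leq\epsilon$ for every $i$, i.e. to $\max_i d(s_i,s_i')=\Phi(d)(s,s')\leq\epsilon$.

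The main obstacle is the "otherwise" clause, where $\Phi(d)(s,s')=1>\epsilon$ and one must show the right-hand side fails. The subcases in which the transition structures differ nontrivially (mismatched label sets, or a secret-nondeterministic state against a fully probabilistic one) are handled by exhibiting a transition $s\to\mu$ for which every $s'\to\mu'$ gives $d(\mu,\mu')=1>\epsilon$, again via the Dirac computation. The delicate point is reconciling the \emph{one-sided} phrasing of the statement with the symmetry of $\Phi(d)$; the cleanest route is to recall, as noted in the text, that $\Phi$ is the restriction of the function $F$ of \cite{Desharnais:02:LICS,Deng:05:QAPL} to the transition shapes of \IIHSs, and to invoke the established characterisation of $F$ (the analogue of Lemma~2.8 in \cite{Deng:05:QAPL}), which is symmetric in $s,s'$. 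The one-sided form then follows because, on the IIHS case structure, the condition from $s$ to $s'$ together with the symmetry of $d$ as a pseudometric yields the converse condition from $s'$ to $s$ — precisely the observation made in the remark following Definition~\ref{def:sm}. I expect this reduction to the symmetric $F$-characterisation, rather than the per-case verification, to carry the weight of the argument and to be the step requiring the most care.
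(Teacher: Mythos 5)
Your treatment of the first three clauses of $\Phi$ is correct and is essentially what a written-out proof would contain: the computation $d(\delta_{(a,r)},\delta_{(a,r')})=d(r,r')$ and $d(\delta_{(a,r)},\delta_{(b,r')})=1$ for $a\neq b$ is right (the candidate assignment $x_w=\hat d\bigl(w,(a,r')\bigr)$ is feasible because $\hat d$ inherits the triangle inequality and $1$-boundedness from $d$), and together with Condition~\ref{item:restr-add} of Definition~\ref{def:IIHS} and $\epsilon<1$ it settles the Dirac-set and fully probabilistic clauses. Note that the paper does not argue these cases at all: it obtains the lemma wholesale as an instance of the characterisation of $F$ from \cite{Deng:05:QAPL}, so your per-clause verification is more self-contained than the source.

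The gap is in the \qm{otherwise} clause, and it sits exactly at the step you flag as delicate. Symmetry of $d$ (or of $\Phi(d)$) does not let you pass, for a \emph{fixed} pair $(s,s')$, from \qm{every move of $s$ is matched by $s'$ within $\epsilon$} to the converse matching: the remark after Definition~\ref{def:sm} works only because that definition is quantified over all ordered pairs with $d(s,s')\leq\epsilon$, so one may re-apply it to $(s',s)$; here nothing is assumed about $d$ and nothing is quantified. Concretely, if $\vartheta(s)=\emptyset$ while $\vartheta'(s')\neq\emptyset$, or if the secret labels enabled at $s$ form a strict subset of those enabled at $s'$, then the right-hand side of the lemma holds (vacuously, respectively by matching each $\delta_{(a_i,s_i)}$ with $\delta_{(a_i,s'_i)}$) while $\Phi(d)(s,s')=1>\epsilon$; so no transition of $s$ witnesses the failure, contrary to your claim for the mismatched subcases, and the same phenomenon defeats the reduction to the $F$-characterisation, which requires matching in \emph{both} directions. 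To close the argument you must either prove the lemma with the two-sided matching condition on the right-hand side (which is what is actually used downstream, e.g.\ in Corollary~\ref{l:smfix}, where the converse matching is supplied by re-applying Definition~\ref{def:sm} to the pair $(s',s)$), or argue separately that these degenerate configurations cannot occur for the pairs of states being compared.
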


From the above lemma and Definition~\ref{def:sm} we derive (see also Lemma 2.9 in \cite{Deng:05:QAPL}):

\begin{corollary}
\label{l:smfix}
	A pseudometric $d$ is a bisimulation pseudometric if and only if $d\preceq \Phi(d)$.
\end{corollary}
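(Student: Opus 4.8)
The plan is to obtain the equivalence as an immediate consequence of the two facts that precede it: Definition~\ref{def:sm}, which says that $d$ is a bisimulation pseudometric exactly when, for every $\epsilon\in[0,1)$, the inequality $d(s,s')\leq\epsilon$ forces the transfer condition (if $s\to\mu$ then some $\mu'$ with $s'\to\mu'$ satisfies $d(\mu,\mu')\leq\epsilon$), and the Lemma stated just above, which says that this very transfer condition is equivalent to $\Phi(d)(s,s')\leq\epsilon$. Since both the defining property of a bisimulation pseudometric and the inequality $d\preceq\Phi(d)$ are ultimately expressed through the same transfer condition, the corollary is essentially a change of formulation, and I would prove the two implications separately.

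For the direction $d\preceq\Phi(d)\Rightarrow d$ is a bisimulation pseudometric, I would fix $\epsilon\in[0,1)$ and states $s,s'$ with $d(s,s')\leq\epsilon$. Because $d\preceq\Phi(d)$ unfolds, by the definition of $\preceq$, to the pointwise inequality $\Phi(d)(s,s')\leq d(s,s')$, we obtain $\Phi(d)(s,s')\leq\epsilon$; applying the Lemma then yields exactly the transfer condition required by Definition~\ref{def:sm}. This settles one implication with no case analysis.

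For the converse, I would fix arbitrary states $s,s'$ and aim to show $\Phi(d)(s,s')\leq d(s,s')$, taking $\epsilon=d(s,s')$. The natural move is to apply Definition~\ref{def:sm} at this $\epsilon$ to extract the transfer condition, and then the Lemma to conclude $\Phi(d)(s,s')\leq\epsilon=d(s,s')$. The one genuine subtlety — and the only obstacle worth flagging — is that both Definition~\ref{def:sm} and the Lemma are stated only for $\epsilon\in[0,1)$, so this argument is valid only when $d(s,s')<1$. To cover the remaining case $d(s,s')=1$, I would invoke that $\Phi(d)$ is itself a $1$-bounded pseudometric in $\CM$ (the instance of the analogue for $F$ quoted just above), whence $\Phi(d)(s,s')\leq 1=d(s,s')$ holds trivially. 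Combining the two cases gives the pointwise inequality $\Phi(d)(s,s')\leq d(s,s')$ for all $s,s'$, i.e.\ $d\preceq\Phi(d)$, which completes the proof.
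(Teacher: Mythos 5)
Your proof is correct and follows essentially the same route as the paper, which simply derives the corollary from Definition~\ref{def:sm} and the preceding lemma without spelling out the details. Your explicit treatment of the boundary case $d(s,s')=1$ (using that $\Phi(d)\in\CM$ is $1$-bounded) is a careful and necessary touch that the paper leaves implicit.
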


By applying Corollary~\ref{l:smfix} to \eqref{eqn:max} we obtain 
	\[d_{\textit{max}}=\bigsqcup\{d\in\CM\mid d\preceq \Phi(d)\}\]
Furthermore,  by adapting the proof of the \revision{monotonicity of} $F$ (cf. Lemma 3.9 in the full version of \cite{Desharnais:02:LICS}) we can prove the following: 
\begin{lemma}
	\label{l:monotonicity}
	$\Phi$ is monotonic on $(\CM\preceq)$.
\end{lemma}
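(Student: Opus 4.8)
The plan is to prove monotonicity directly from the definition of $\Phi$, reducing it to a pointwise inequality and then treating each of the four branches of the case split separately. Unfolding the order, $d \preceq d'$ means $d(s,s') \geq d'(s,s')$ for all $s,s' \in S$, and the goal $\Phi(d) \preceq \Phi(d')$ means $\Phi(d)(s,s') \geq \Phi(d')(s,s')$ for all $s,s'$. The first — and organizationally crucial — observation I would record is that which branch of the case split applies depends only on the transition structures $\vartheta(s)$ and $\vartheta'(s')$, not on the pseudometric being fed in. Hence, for any fixed pair $(s,s')$, both $\Phi(d)(s,s')$ and $\Phi(d')(s,s')$ are computed by the same clause, so it suffices to verify the inequality clause by clause.

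Three of the four branches are immediate. If $\vartheta(s) = \vartheta'(s') = \emptyset$ then both values are $0$; in the catch-all ``otherwise'' branch both values are $1$; in either case equality holds and the inequality is trivially satisfied. In the secret-nondeterministic branch, where $\vartheta(s) = \{\delta_{(a_1,s_1)},\ldots,\delta_{(a_m,s_m)}\}$ and $\vartheta'(s') = \{\delta_{(a_1,s'_1)},\ldots,\delta_{(a_m,s'_m)}\}$ share the same labels, we have $\Phi(d)(s,s') = \max_i d(s_i,s'_i)$ and $\Phi(d')(s,s') = \max_i d'(s_i,s'_i)$ over the same index set. Since $d(s_i,s'_i) \geq d'(s_i,s'_i)$ termwise and the maximum is monotone in each argument, $\max_i d(s_i,s'_i) \geq \max_i d'(s_i,s'_i)$, which is the desired inequality.

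The hard part is the fully probabilistic branch, where $\Phi(d)(s,s') = d(\mu,\mu')$ and $\Phi(d')(s,s') = d'(\mu,\mu')$ are the Kantorovich liftings of $d$ and $d'$. Here I would exploit the structure of the linear program defining the lifting. The objective $\sum_{(\ell_i,s_i)}(\mu(\ell_i,s_i) - \mu'(\ell_i,s_i))x_i$ does not involve the pseudometric at all; only the feasibility constraints $x_i - x_j \leq \hat{d}((\ell_i,s_i),(\ell_j,s_j))$ do, through $\hat{d}$. Since $\hat{d}$ equals $1$ when labels differ (independent of the metric) and equals $d(s_i,s_j)$ when they agree, the hypothesis $d \geq d'$ pointwise yields that the $\hat{d}$ built from $d$ dominates the one built from $d'$ pointwise. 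Consequently every $x$ feasible for the $d'$-program is also feasible for the $d$-program, so the feasible polytope for $d$ contains that for $d'$. Maximizing the same objective over the larger set can only increase the optimum, whence $d(\mu,\mu') \geq d'(\mu,\mu')$.

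Combining the four branches gives $\Phi(d)(s,s') \geq \Phi(d')(s,s')$ for every pair $(s,s')$, i.e. $\Phi(d) \preceq \Phi(d')$, establishing monotonicity. This argument is exactly the instantiation to $\Phi$ of the monotonicity proof for the general functional $F$ of Desharnais et al., the only genuine content being the feasible-region containment for the Kantorovich metric; that $\Phi(d)$ is itself a $1$-bounded pseudometric is already inherited from the corresponding result for $F$.
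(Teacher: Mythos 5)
Your proof is correct and is essentially the argument the paper intends: the paper disposes of this lemma by citing the adaptation of the monotonicity proof for the general functional $F$ of Desharnais et al., and what you have written out (the branch-by-branch case analysis plus the feasible-region containment for the Kantorovich lifting, which is the only non-trivial step) is precisely that adaptation made explicit.
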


Thanks to Lemma~\ref{l:monotonicity}, and using Tarski's fixed point theorem as formulated in \cite{Tarski:55:PJM}, we have that $d_{\textit{max}}$ is the greatest fixed point of $\Phi$. Furthermore, by Corollary~\ref{l:smfix} we know that $d_{\textit{max}}$ is indeed a bisimulation pseudometric, and that it is the greatest bisimulation pseudometric.

In addition, the finite branching property of \IIHSs ensures that the closure ordinal of $\Phi$ is $\omega$ (cf. Lemma 3.10 in the full version of \cite{Desharnais:02:LICS}). Therefore we can proceed in a standard way  to show that 
\[d_{\textit{max}}=\bigsqcap\ \{\Phi^i(\top)\mid i\in{\mathbb N}\},\]
\noindent where $\top$ is the greatest pseudometric  (i.e. $\top(s,s')=0$ for every $s,s'$), and $\Phi^0(\top)=\top$.

Given two  \IIHSs $\Isys$ and ${\Isys}'$, with initial states $s$ and $s'$ respectively, we define the distance between $\Isys$ and ${\Isys}'$ as $d({\Isys}, {\Isys}') = d_{\textit{max}}(s,s').$ The following properties are auxiliary to the theorem which states the continuity of the capacity.

\begin{lemma}
	\label{lem:decrdist}
	Consider two \IIHSs $\Isys$ and ${\Isys}'$ with transition functions $\vartheta$ and $\vartheta'$ respectively. Given $t\geq 2$ and two sequences $\alpha^t$ and $\beta^{t}$, assume that both ${\Isys}(\alpha^{t-1}, \beta^{t-1})$ and  ${\Isys}'(\alpha^{t-1}, \beta^{t-1})$ are defined. Assume also it is the case that $d_\textit{max}({\Isys}(\alpha^{t-1}, \beta^{t-1}), {\Isys}'(\alpha^{t-1}, \beta^{t-1}))<p(\beta_t\mid \alpha^t,\beta^{t-1})$, and $\vartheta({\Isys}(\alpha^{t}, \beta^{t-1}))\neq \emptyset$. Then:

	\begin{enumerate}
		\item $\vartheta'({\Isys}'(\alpha^{t}, \beta^{t-1}))\neq \emptyset$ holds as well,
		\item ${\Isys}(\alpha^{t}, \beta^{t})$ and ${\Isys}'(\alpha^{t}, \beta^{t})$ are both defined, $p(\beta_t\mid \alpha^t,\beta^{t-1})>0$, and 
		\[
d_\textit{max}({\Isys}(\alpha^t, \beta^{t}),{\Isys}'(\alpha^t, \beta^{t}))\leq \frac{d_\textit{max}({\Isys}(\alpha^{t-1}, \beta^{t-1}),{\Isys}'(\alpha^{t-1}, \beta^{t-1}))}{p(\beta_t\mid \alpha^t,\beta^{t-1}).}
		\]
\end{enumerate}
\end{lemma}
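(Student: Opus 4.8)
I would work in the secret-nondeterministic setting (the case relevant to the capacity result) and first fix notation along the path. By Proposition~\ref{prop:traces-states} the traces determine the states, so write $s=\Isys(\alpha^{t-1},\beta^{t-1})$, $u=\Isys(\alpha^t,\beta^{t-1})$, $v=\Isys(\alpha^t,\beta^t)$ in $\Isys$, and $s',u',v'$ for the states read off the same traces in $\Isys'$. Since the \IIHSs{} are normalized and $t\geq 2$, the state $s$ (reached after $t-1$ secrets and $t-1$ observables) is a secret state, $u$ is an observable state, $v$ is again a secret state, with transitions $s\stackrel{\alpha_t}{\to}u\stackrel{\beta_t}{\to}v$, and the same parities hold for the primed states. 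Abbreviate $\epsilon=d_{\textit{max}}(s,s')$ and $p=p(\beta_t\mid\alpha^t,\beta^{t-1})=\vartheta(u)(\beta_t,v)$ (Definition~\ref{def:def-matrix}). As $d_{\textit{max}}\geq 0$, the hypothesis $\epsilon<p$ gives at once $p>0$, one of the assertions of the second claim.

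Next I would push the distance down one level using the fixed-point identity $d_{\textit{max}}=\Phi(d_{\textit{max}})$ (Corollary~\ref{l:smfix} with Tarski's theorem). Because $\epsilon<p\leq 1$ we have $\epsilon<1$, so $(s,s')$ cannot fall into the ``otherwise'' branch of $\Phi$ (which returns $1$); hence the secret transitions of $s$ and $s'$, which at a secret-nondeterministic state are sets of Dirac measures (Condition~\ref{item:restr-theta4}), must match label-by-label, and $\Phi(d_{\textit{max}})(s,s')=\max_i d_{\textit{max}}(s_i,s'_i)$. In particular the $\alpha_t$-branch of $s'$ exists and determines $u'$, with $d_{\textit{max}}(u,u')\leq\epsilon$. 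Now $u$ is observable and $\vartheta(u)\neq\emptyset$ by hypothesis, so if $\vartheta'(u')$ were empty then $\Phi(d_{\textit{max}})(u,u')=1>\epsilon$, a contradiction; this establishes the first claim, $\vartheta'(\Isys'(\alpha^t,\beta^{t-1}))\neq\emptyset$. Consequently $u,u'$ each carry a single distribution (Condition~\ref{item:restr-theta3}), say $\vartheta(u)=\{\mu\}$, $\vartheta'(u')=\{\mu'\}$, and $d_{\textit{max}}(\mu,\mu')=\Phi(d_{\textit{max}})(u,u')=d_{\textit{max}}(u,u')\leq\epsilon$.

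The technical core, and the step I expect to be the main obstacle, is to turn $d_{\textit{max}}(\mu,\mu')\leq\epsilon$ into a bound on $d_{\textit{max}}(v,v')$ via the Kantorovich dual. By Condition~\ref{item:restr-add} all of the $\beta_t$-mass of $\mu$ sits on the unique successor $v$ and all of the $\beta_t$-mass of $\mu'$ on $v'$; set $p=\mu(\beta_t,v)$ and $p'=\mu'(\beta_t,v')$. I would first take the indicator $x_{(\ell,r)}=1$ iff $\ell=\beta_t$ (feasible: $0/1$-valued and $1$-Lipschitz for $\hat{d}$), which yields $d_{\textit{max}}(\mu,\mu')\geq p-p'$; since $d_{\textit{max}}(\mu,\mu')\leq\epsilon<p$ this forces $p'>0$, so $v'$ is reached with positive probability and $\Isys'(\alpha^t,\beta^t)$ is defined (as is $v$, since $p>0$). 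Then I would use the truncated test function
\[
x_{(\ell,r)}=\max\bigl(0,\ d_{\textit{max}}(v,v')-\hat{d}((\ell,r),(\beta_t,v))\bigr),
\]
checking that it lies in $[0,1]$ (as $d_{\textit{max}}$ is $1$-bounded) and satisfies the constraints $x_i-x_j\leq\hat{d}((\ell_i,s_i),(\ell_j,s_j))$, which hold because $\hat{d}$ is a pseudometric and truncation preserves $1$-Lipschitzness. This function vanishes off the $\beta_t$-labelled pairs, equals $d_{\textit{max}}(v,v')$ at $(\beta_t,v)$, and is $0$ at $(\beta_t,v')$ because $\hat{d}((\beta_t,v'),(\beta_t,v))=d_{\textit{max}}(v,v')$; so the Kantorovich sum retains only the $(\beta_t,v)$ term, giving $p\cdot d_{\textit{max}}(v,v')\leq d_{\textit{max}}(\mu,\mu')$.

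Finally I would chain the estimates: $d_{\textit{max}}(v,v')\leq d_{\textit{max}}(\mu,\mu')/p\leq\epsilon/p$, that is,
\[
d_{\textit{max}}(\Isys(\alpha^t,\beta^t),\Isys'(\alpha^t,\beta^t))\ \leq\ \frac{d_{\textit{max}}(\Isys(\alpha^{t-1},\beta^{t-1}),\Isys'(\alpha^{t-1},\beta^{t-1}))}{p(\beta_t\mid\alpha^t,\beta^{t-1})},
\]
completing the second claim. The delicate points are the $\Phi$ case analysis (using $\epsilon<1$ to exclude every non-matching branch) and the admissibility check for the truncated function in the Kantorovich program; everything else is bookkeeping with the $\Phi$-identities.
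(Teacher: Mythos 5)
Your proposal is correct and follows essentially the same route as the paper's proof: exploit the fixed-point identity $d_{\textit{max}}=\Phi(d_{\textit{max}})$ to push the distance from $(\alpha^{t-1},\beta^{t-1})$ down to $(\alpha^{t},\beta^{t-1})$, derive the first claim by contradiction from the ``otherwise'' branch of $\Phi$, and then obtain $p(\beta_t\mid\alpha^t,\beta^{t-1})\cdot d_{\textit{max}}(\Isys(\alpha^t,\beta^t),\Isys'(\alpha^t,\beta^t))\leq \Phi(d_{\textit{max}})(\Isys(\alpha^{t},\beta^{t-1}),\Isys'(\alpha^{t},\beta^{t-1}))$ from the Kantorovich definition. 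The only difference is that you make explicit, via the truncated $1$-Lipschitz test function, the Kantorovich inequality that the paper merely asserts ``by definition of $\Phi$ and of the Kantorovich metric,'' which is a welcome addition rather than a deviation.
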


\begin{proof}\ \ \ \ \
	\begin{enumerate}
	
		\item Assume $\vartheta({\Isys}(\alpha^{t}, \beta^{t-1}))\neq \emptyset$ and, by contradiction, $\vartheta'({\Isys}'(\alpha^{t}, \beta^{t-1}))= \emptyset$. Since $d_{\emph{max}}$ is a fixed point of $\Phi$, we have $d_{\emph{max}} = \Phi(d_{\emph{max}})$, and therefore
			\begin{equation*}
				\begin{array}{lll}
					d_{\emph{max}}({\Isys}(\alpha^{t}, \beta^{t-1}), {\Isys}'(\alpha^{t}, \beta^{t-1}))
					&=& \Phi(d_{\emph{max}})({\Isys}(\alpha^{t}, \beta^{t-1}), {\Isys}'(\alpha^{t}, \beta^{t-1})) \\[2mm]
					&=&1\\[2mm]
					&\geq& p(\beta_t\mid \alpha^t,\beta^{t-1}),
				\end{array}
			\end{equation*}
 
 \noindent which contradicts the hypothesis.\\
  
	\item If $\vartheta({\Isys}(\alpha^{t}, \beta^{t-1}))\neq\emptyset$, then, by the first point of this lemma, we have that $\vartheta'({\Isys}'(\alpha^{t}, \beta^{t-1}))\neq\emptyset$ holds as well, and therefore both ${\Isys}(\alpha^{t}, \beta^{t})$ and ${\Isys}'(\alpha^{t}, \beta^{t})$ are defined. The hypothesis $d_\textit{max}({\Isys}(\alpha^{t-1}, \beta^{t-1}), {\Isys}'(\alpha^{t-1}, \beta^{t-1}))<p(\beta_t\mid \alpha^t,\beta^{t-1})$ ensures that $p(\beta_t\mid \alpha^t,\beta^{t-1}) \geq 0$. 

Let us now prove the bound on $d_\textit{max}({\Isys}(\alpha^{t}, \beta^{t}), {\Isys}'(\alpha^{t}, \beta^{t}))$. By definition of $\Phi$, we have \[\Phi(d_\textit{max})({\Isys}(\alpha^{t-1}, \beta^{t-1}),{\Isys}'(\alpha^{t-1}, \beta^{t-1}))\geq d_\textit{max}({\Isys}(\alpha^{t}, \beta^{t-1}),{\Isys}'(\alpha^{t}, \beta^{t-1})).\] Since $d_\textit{max} = \Phi(d_\textit{max})$, we have
\begin{equation}
	\label{eqn:bounddist}
	d_\textit{max}({\Isys}(\alpha^{t-1}, \beta^{t-1}),{\Isys}'(\alpha^{t-1}, \beta^{t-1}))\geq d_\textit{max}({\Isys}(\alpha^{t}, \beta^{t-1}),{\Isys}'(\alpha^{t}, \beta^{t-1})).
\end{equation}

By definition of $\Phi$ and of the Kantorovich metric, we have
\begin{equation*} 
	\begin{array}{lcl}
		\Phi(d_\textit{max})({\Isys}(\alpha^{t}, \beta^{t-1}),{\Isys}'(\alpha^{t}, \beta^{t-1}))
		&\geq & p(\beta_t\mid \alpha^t,\beta^{t-1}) \cdot \\
		&&   d_\textit{max}({\Isys}(\alpha^{t}, \beta^{t}),{\Isys}'(\alpha^{t}, \beta^{t})).
		\end{array}
\end{equation*}

Using again $d_\textit{max} = \Phi(d_\textit{max})$, we get
\begin{equation*}
	\begin{array}{lcl}
		d_\textit{max}({\Isys}(\alpha^{t}, \beta^{t-1}),{\Isys}'(\alpha^{t}, \beta^{t-1}))
		&\geq& p(\beta_t\mid \alpha^t,\beta^{t-1}) \cdot \\
		&    &d_\textit{max}({\Isys}(\alpha^{t}, \beta^{t}),{\Isys}'(\alpha^{t}, \beta^{t})),
	\end{array}
\end{equation*}

\noindent which, together with \eqref{eqn:bounddist}, allows us to conclude.
\end{enumerate}
\end{proof}

\begin{lemma}
	\label{lem:channeldiff}
Consider two \IIHSs $\Isys$ and ${\Isys}'$, and let $p(\cdot\mid \cdot,\cdot)$ and $p'(\cdot\mid\cdot,\cdot)$ be their distributions on the output nodes. Given $T>0$, and two sequences $\alpha^T$ and $\beta^T$, assume that  $p(\beta_t\mid \alpha^t,\beta^{t-1})>0$ for every $t< T$. Let $m=\min_{1\leq t < T} p(\beta_t\mid \alpha^t, \beta^{t-1})$ and let $\epsilon\in(0,m^{T-1})$. Assume $d({\Isys},{\Isys}')<\epsilon$. Then, for every $t\leq T$, we have
\begin{equation*}
p(\beta_t\mid \alpha^t, \beta^{t-1})-p'(\beta_t\mid \alpha^t, \beta^{t-1}) < \frac{\epsilon}{m^{T-1}}.
\end{equation*}
\end{lemma}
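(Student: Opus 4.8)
The plan is to combine a single-step \emph{Kantorovich reduction}, which bounds one coordinate of the output difference by the bisimulation distance of the corresponding observable states, with an inductive propagation of that distance from the root of the tree using Lemma~\ref{lem:decrdist}.

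\emph{Reduction to the distance between observable states.} Fix $t\le T$ and let $s=\Isys(\alpha^t,\beta^{t-1})$ and $s'=\Isys'(\alpha^t,\beta^{t-1})$ be the states reached along the traces $\alpha^t$ and $\beta^{t-1}$. By Conditions~\ref{item:restr-theta1} and~\ref{item:restr-theta2} these are observable states, they are well defined by Condition~\ref{item:restr-add}, and by Condition~\ref{item:restr-theta3} each carries a single distribution, say $\vartheta(s)=\{\mu\}$ and $\vartheta'(s')=\{\mu'\}$. By Definition~\ref{def:def-matrix}, $p(\beta_t\mid\alpha^t,\beta^{t-1})$ and $p'(\beta_t\mid\alpha^t,\beta^{t-1})$ are exactly the total $\mu$-mass and $\mu'$-mass of the $\beta_t$-labelled transitions. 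In the linear program defining $d(\mu,\mu')$ I would then choose the test function $x_i=1$ on every pair labelled $\beta_t$ and $x_i=0$ on all others. This assignment is feasible: pairs with distinct labels satisfy $x_i-x_j\le 1=\hat d$, and pairs with equal labels give $x_i-x_j=0$. Since the two IIHSs are assumed to have disjoint state sets, $\mu$ and $\mu'$ live on disjoint fibres, so the objective value of this assignment is precisely $p(\beta_t\mid\alpha^t,\beta^{t-1})-p'(\beta_t\mid\alpha^t,\beta^{t-1})$. As the program takes a maximum, we obtain
\[
p(\beta_t\mid\alpha^t,\beta^{t-1})-p'(\beta_t\mid\alpha^t,\beta^{t-1})\ \le\ d(\mu,\mu')\ =\ d_\textit{max}(s,s').
\]

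\emph{Propagation of the distance from the root.} Writing $D_k=d_\textit{max}(\Isys(\alpha^k,\beta^k),\Isys'(\alpha^k,\beta^k))$ for the secret states on the path, I would prove by induction on $t$ that $D_{t-1}<\epsilon/m^{t-1}$ for $1\le t\le T$. The base case is $D_0=d(\Isys,\Isys')<\epsilon$, which holds by hypothesis. For the inductive step with $t\le T-1$, the bounds $\epsilon<m^{T-1}$, $m\le 1$ and $t\le T-1$ give $D_{t-1}<\epsilon/m^{t-1}<m^{T-t}\le m\le p(\beta_t\mid\alpha^t,\beta^{t-1})$, while $p(\beta_t\mid\alpha^t,\beta^{t-1})>0$ (valid since $t<T$) gives $\vartheta(\Isys(\alpha^t,\beta^{t-1}))\neq\emptyset$; these are exactly the hypotheses of Lemma~\ref{lem:decrdist}. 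Its part~2 then yields both that the next secret states are defined and that $D_t\le D_{t-1}/p(\beta_t\mid\alpha^t,\beta^{t-1})<(\epsilon/m^{t-1})/m=\epsilon/m^t$. Finally, the monotonicity inequality \eqref{eqn:bounddist}, which follows immediately from $d_\textit{max}=\Phi(d_\textit{max})$ and the definition of $\Phi$ at the secret state $\Isys(\alpha^{t-1},\beta^{t-1})$, gives $d_\textit{max}(s,s')\le D_{t-1}$.

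\emph{Conclusion.} Combining the two steps, for every $t\le T$,
\[
p(\beta_t\mid\alpha^t,\beta^{t-1})-p'(\beta_t\mid\alpha^t,\beta^{t-1})\ \le\ d_\textit{max}(s,s')\ \le\ D_{t-1}\ <\ \frac{\epsilon}{m^{t-1}}\ \le\ \frac{\epsilon}{m^{T-1}},
\]
where the last inequality uses $m\le 1$ and $t-1\le T-1$. I expect the main obstacle to be the Kantorovich reduction: verifying that the $0/1$ test function supported on $\beta_t$-labelled pairs is admissible under the constraints $x_i-x_j\le\hat d$ and that it isolates precisely the single conditional probability of interest, which relies on the disjointness of the two state spaces together with the uniqueness of successors (Condition~\ref{item:restr-add}). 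Once this is settled, the propagation in the second step is routine, the only care being the arithmetic with powers of $m\le 1$, needed both to check the hypothesis $D_{t-1}<p(\beta_t\mid\alpha^t,\beta^{t-1})$ of Lemma~\ref{lem:decrdist} at each stage and to justify the final comparison $\epsilon/m^{t-1}\le\epsilon/m^{T-1}$.
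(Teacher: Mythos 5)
Your proof is correct and follows essentially the same route as the paper: a Kantorovich/$\Phi$ step bounding the coordinate difference $p(\beta_t\mid\alpha^t,\beta^{t-1})-p'(\beta_t\mid\alpha^t,\beta^{t-1})$ by $d_\textit{max}$ at the preceding secret states, followed by $t-1$ applications of Lemma~\ref{lem:decrdist}(2) to propagate the bound back to $d(\Isys,\Isys')<\epsilon$ and conclude via $m\le 1$. You are merely more explicit than the paper about the feasibility of the $0/1$ test function in the Kantorovich linear program and about checking the hypotheses of Lemma~\ref{lem:decrdist} at each inductive stage, which the paper leaves implicit.
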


\begin{proof}
	Observe that, for every $t<T$, ${\Isys}(\alpha^{t}, \beta^{t})$ must be defined, and, by repeatedly applying Lemma~\ref{lem:decrdist}(1), we get that also ${\Isys}'(\alpha^{t}, \beta^{t})$ is  defined. By definition of $\Phi$, and of the Kantorovich metric, we have
\begin{equation*}
	p(\beta_t\mid \alpha^t, \beta^{t-1})-p'(\beta_t\mid \alpha^t, \beta^{t-1}) \leq \Phi(d_\textit{max})({\Isys}(\alpha^{t-1}, \beta^{t-1}),{\Isys}'(\alpha^{t-1}, \beta^{t-1})),
\end{equation*}

\noindent and since $d_\textit{max}$ is a fixed point of $\Phi$, we get
\begin{equation}
	\label{eqn:bound}
	p(\beta_t\mid \alpha^t, \beta^{t-1})-p'(\beta_t\mid \alpha^t, \beta^{t-1}) \leq d_\textit{max}({\Isys}(\alpha^{t-1}, \beta^{t-1}),{\Isys}'(\alpha^{t-1}, \beta^{t-1})).
\end{equation}

By applying Lemma~\ref{lem:decrdist}(2) $t-1$ times, from \eqref{eqn:bound} we get
\[
\begin{array}{lll}
p(\beta_t\mid \alpha^t, \beta^{t-1})-p'(\beta_t\mid \alpha^t, \beta^{t-1})
&\leq &
\frac{d_\textit{max}({\Isys}(\alpha^{0}, \beta^{0}),{\Isys}'(\alpha^{0}, \beta^{0}))}{m^{t-1}}\\[2mm]
&=&\frac{d({\Isys},{\Isys}')}{m^{t-1}}\\[2mm]
&\leq&\frac{d({\Isys},{\Isys}')}{m^{T-1}}\\[2mm]
&<&\frac{\epsilon}{m^{T-1}}
\end{array}
\]
\end{proof}

Note that previous lemma states a sort of continuity property of the matrices obtained from \IIHSs, but not uniform continuity, because of the dependence on one of the two \IIHSs. It is easy to see (from the proof of the Lemma) that uniform continuity does not hold. 

The main contribution of this section, stated in \revision{the} next theorem,  is the continuity of the capacity with respect to the pseudometric  on \IIHSs. For this theorem, we assume that the  \IIHSs are normalized. Furthermore, it is crucial that they are secret-nondeterministic (while the definition of the pseudometric holds in general).

\begin{theorem}
	\label{theo:cont}
	Consider two normalized \IIHSs $\Isys$ and ${\Isys}'$, and fix a  $T>0$. For every $\epsilon > 0$ there exists $\nu>0$ such that $ \mbox{if } \ d({\Isys},{\Isys}')<\nu \  \  \mbox{ then }\  \  |C_T({\Isys})-C_T({\Isys}') | < \epsilon.$
\end{theorem}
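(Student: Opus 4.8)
The plan is to exploit the fact that, for a secret-nondeterministic \IIHS, the capacity $C_T$ is obtained by optimizing a fixed functional over a fixed domain. Writing $g_{\Isys}:\Dset_T\to\mathbb{R}$ for the map that sends an input distribution $\{\Cprob{p}{\Aseq{t}{}}{\Aseq{}{t-1},\Bseq{}{t-1}}\}_{t=1}^T$ to $\frac{1}{T}I(A^T\to B^T)$ computed with the stochastic kernels of $\Isys$, we have $C_T(\Isys)=\sup_{\Dset_T}g_{\Isys}$. Since both \IIHSs are normalized over the same alphabets $\Aset,\Bset$ and truncated at depth $2T$, the histories $(\Aseq{}{t-1},\Bseq{}{t-1})$ range over the full product $\Aset^{t-1}\times\Bset^{t-1}$, so $\Dset_T$ is the \emph{same} compact product of probability simplices for $\Isys$ and ${\Isys}'$. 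I would then use the elementary fact that $|\sup_X g-\sup_X g'|\leq\sup_X|g-g'|$ to reduce the theorem to a uniform bound $\sup_{\Dset_T}|g_{\Isys}-g_{{\Isys}'}|<\epsilon$.

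First I would establish uniform continuity of the functional in the joint distribution. The directed information $I(A^T\to B^T)=\sum_{t=1}^T I(A^t;B_t\mid B^{t-1})$ is a finite sum of conditional mutual informations, each a fixed algebraic combination of entropy terms $-x\log x$ evaluated at the finitely many coordinates of the joint distribution $\prob{p}{\Aseq{}{T},\Bseq{}{T}}$. Because $x\mapsto -x\log x$ is uniformly continuous on the compact interval $[0,1]$, the map $\prob{p}{\Aseq{}{T},\Bseq{}{T}}\mapsto \frac{1}{T}I(A^T\to B^T)$ is uniformly continuous on the simplex of distributions over sequences; let $\omega$ denote a modulus of continuity.

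Next I would control how close the two joint distributions are, uniformly over $\Dset_T$. For a fixed input distribution, \eqref{eq:joint-prob} factorizes the joint distribution of each system as the product of the (common) input conditionals and its own channel kernels $\{\Cprob{p}{\Bseq{t}{}}{\Aseq{}{t},\Bseq{}{t-1}}\}$. By Lemma~\ref{lem:channeldiff}, applied in both directions to bound $|p-p'|$, a small distance $d(\Isys,{\Isys}')$ forces the kernel entries of $\Isys$ and ${\Isys}'$ to be close; since the joint is a product of finitely many factors each bounded by $1$, closeness of the kernels propagates to closeness of the joints, uniformly in the chosen input distribution. Feeding $\epsilon$ backwards — first through $\omega$ to obtain the tolerated perturbation of the joints, then through the factorization, and finally through Lemma~\ref{lem:channeldiff} to obtain the matching $\nu$ on $d(\Isys,{\Isys}')$ — yields the result via the $\sup$ comparison of the first paragraph.

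The main obstacle is the treatment of sequences along which some kernel probability vanishes. Lemma~\ref{lem:channeldiff} only bounds kernel differences along sequences with $p(\beta_t\mid\alpha^t,\beta^{t-1})>0$ for $t<T$, and its bound $\epsilon/m^{T-1}$ degrades as the minimum positive kernel value $m$ tends to $0$; conversely, a transition present in one system but absent in the other is detected by distance $1$ in the pseudometric (the \qm{otherwise} branch of $\Phi$), so $d<1$ already matches the support structure. The delicate point is to combine the $m^{-(T-1)}$ blow-up in Lemma~\ref{lem:channeldiff} with the fact that sequences of near-zero joint probability contribute negligibly to the directed information (since $-x\log x\to 0$ as $x\to 0$): one splits the sum defining $I(A^T\to B^T)$ into the high-probability sequences, where Lemma~\ref{lem:channeldiff} applies with $m$ bounded away from $0$, and the low-probability remainder, whose total entropy contribution is uniformly small. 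Reconciling these two regimes so that the overall bound is uniform over $\Dset_T$ is where the real work lies.
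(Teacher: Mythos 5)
Your overall architecture coincides with the paper's: the reduction $|C_T(\Isys)-C_T(\Isys')|\leq\frac{1}{T}\sup_{\Dset_T}|I(A^T\rightarrow B^T)-I(A'^T\rightarrow B'^T)|$, the (uniform) continuity of the directed information in the underlying distributions, and Lemma~\ref{lem:channeldiff} to convert a small pseudometric distance into a small perturbation of the kernels. Your justification of the continuity step (conditional mutual informations as combinations of joint entropies, each a sum of $-x\log x$ over marginals that depend linearly on the joint) is actually more explicit than the paper's. One simplification you are missing: the $m$ in Lemma~\ref{lem:channeldiff} is a minimum of \emph{kernel} probabilities, which are intrinsic to the two (finite, depth-$2T$) trees and independent of the input distribution; so over all sequences whose kernels are strictly positive, $m$ is bounded below by a single constant $m_0>0$, and there is no continuum of degradation to manage. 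The only genuine degeneracy is a sequence along which some kernel probability is exactly $0$.

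That degenerate case is where your proof stops being a proof: you name the obstacle and assert that a split into high- and low-probability sequences will resolve it, but you do not carry it out, and your supporting claim that a support mismatch is $\qm{detected by distance $1$}$ is not correct — the $\qm{otherwise}$ branch of $\Phi$ fires only when one state has no transition at all, whereas two Dirac-free distributions with different supports can be at small Kantorovich distance. The paper disposes of the problem differently and much more cheaply: wherever $p(\beta_t\mid\alpha^t,\beta^{t-1})=0$, it \emph{redefines} the output distributions on the entire subtree rooted at $\Isys(\alpha^t,\beta^t)$ to coincide with those of the corresponding subtree of $\Isys'$ (and symmetrically). Since that subtree is reached with probability $0$ under every input distribution, the directed information of neither system changes, and after the modification Lemma~\ref{lem:channeldiff} applies everywhere with $\nu=\epsilon'\cdot\min$ of the two systems' minimum positive kernel values. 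If you want to salvage your splitting argument instead, you would have to show that a sequence with a vanishing kernel in one system has kernel at most $\nu/m'^{\,T-1}$ in the other (which does follow from Lemma~\ref{lem:channeldiff} with the roles swapped) and then bound its entropy contribution; this can be made to work, but it is strictly more labor than the paper's redefinition trick and is currently only a sketch.
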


\begin{proof}
Consider two normalized \IIHSs $\Isys$ and ${\Isys}'$ and choose $T,\epsilon >0$. Let $\Dset_{T}$ be the set of all  input distributions in presence of feedback. Observe that
\[
\begin{array}{rcl} |C_T({\Isys})-C_T({\Isys}') | &=& | \displaystyle\max_{\Dset_{T}}\frac{1}{T} I(A^T\rightarrow B^T) -  \displaystyle\max_{\Dset_{T}}\frac{1}{T}I(A'^T\rightarrow B'^T) | \\[4mm]
                                                &\leq& \frac{1}{T}\displaystyle\max_{\Dset_{T}} | I(A^T\rightarrow B^T) -I(A'^T\rightarrow B'^T) |
\end{array}
 \]
 
Since the directed information $I(A^T\rightarrow B^T)$ is defined by means of arithmetic operations and logarithms on the joint probabilities $p(\alpha^t,\beta^t)$ and on the conditional probabilities $p(\alpha^t,\beta^t)$,  $p(\alpha^t,\beta^{t-1})$, which in turn can be obtained by means of arithmetic operations from the probabilities $p(\beta_t \mid \alpha^t,\beta^{t-1})$ and ${p}_F(\varphi^t)$, we have that $I(A^T\rightarrow B^T)$ is a continuous \revision{function}  of the distributions   $p(\beta_t \mid \alpha^t,\beta^{t-1})$ and ${p}_F(\varphi^t)$, for every $t\leq T$. Let $p(\beta_t \mid \alpha^t,\beta^{t-1})$, $p'(\beta_t \mid \alpha^t,\beta^{t-1})$ be the distributions on the output nodes of $\Isys$ and ${\Isys}'$, modified in the following way: starting from level $T$, whenever $p(\beta_t \mid \alpha^t,\beta^{t-1}) = 0$, then we redefine the distributions at all the output nodes of the subtree rooted in  ${\Isys}(\alpha^t,\beta^{t})$ so that they coincide with the distribution of the corresponding nodes of  in ${\Isys}'$, and analogously for $p'(\beta_t \mid \alpha^t,\beta^{t-1})$. Note that this transformation does not change the directed information, because the subtree rooted in ${\Isys}(\alpha^t,\beta^{t})$  does not contribute to it, due to the fact that the probability of reaching any of its nodes is $0$. The continuity of   $I(A^T\rightarrow B^T)$ implies that there exists $\epsilon' >0$ such that, if $|p(\beta_t \mid \alpha^t,\beta^{t-1})-p'(\beta_t \mid \alpha^t,\beta^{t-1})| < \epsilon'$ for all $t\leq T$ and all sequences $\alpha^t$, $\beta^t$, then, for any ${p}_F(\varphi^t)$, we have $| I(A^T\rightarrow B^T) -I(A'^T\rightarrow B'^T) | < \epsilon$. The result then follows from Lemma~\ref{lem:channeldiff}, by choosing
\begin{align*}
	\nu & = \epsilon' \cdot \min \left( \min_{{\begin{array}{c}1\leq t < T\\ p(\beta_t \mid \alpha^t,\beta^{t-1}) > 0\end{array}}}p(\beta_t \mid \alpha^t,\beta^{t-1}), \right. \\
	& \quad \quad \quad \quad \quad \quad \quad \quad \quad \quad \quad \quad \quad \quad  \left. \min_{{\begin{array}{c}1\leq t < T\\ p'(\beta_t \mid \alpha^t,\beta^{t-1}) > 0\end{array}}}p'(\beta_t \mid \alpha^t,\beta^{t-1}) \right).
\end{align*}
	
\end{proof}

We conclude this section with an example showing that the continuity result for the capacity does not hold if the construction of the channel is done starting from a system in which the secrets are endowed with a probability distribution. This is also the reason  why we could not simply adopt the proof technique of the continuity result in \cite{Desharnais:02:LICS} and we had to come up \revision{with different reasoning}. 

\begin{example}\label{exa:non-cont}
Consider the two following programs, where $a_1, a_2$ are secrets, $b_1$, $b_2$ are observable, $\parallel$ is the parallel operator, and $+_p$ is a binary probabilistic choice that assigns probability $p$ to the left branch, and probability $1-p$ to the right one.
\begin{description}
\item[s)]  $(\textit{send}(a_1) +_p\  \textit{send}(a_2)) \parallel \textit{receive}(x). \textit{output}(b_2)$
\item[t)]  $(\textit{send}(a_1) +_q \ \textit{send}(a_2)) \parallel \textit{receive}(x). \textit{if} \ x = a_1 \ \textit{then} \ \textit{output}(b_1) \ \textit{else} \ \textit{output}(b_2)$.
\end{description}

Table~\ref{tab:non-cont} shows the fully probabilistic \IIHSs corresponding to these programs, and their associated channels, which in this case (since the secret actions are all at the top-level) are classical channels, i.e. memoryless and without feedback. As usual for \revision{classical} channels, they do not depend on $p$ and $q$. It is easy to see that the capacity of the first channel is $0$ and the capacity of the second one is $1$. Hence their difference is $1$, \revision{independently of} $p$ and $q$.

Let now $p=0$ and $q=\epsilon$. It is easy to see that the distance between $s$ and $t$ is $\epsilon$. Therefore (when the automata have probabilities on the secrets), the capacity is not a continuous function of the distance.
\end{example}

	\begin{figure}[!htb]
		\centering
    \includegraphics[width=0.6\linewidth]{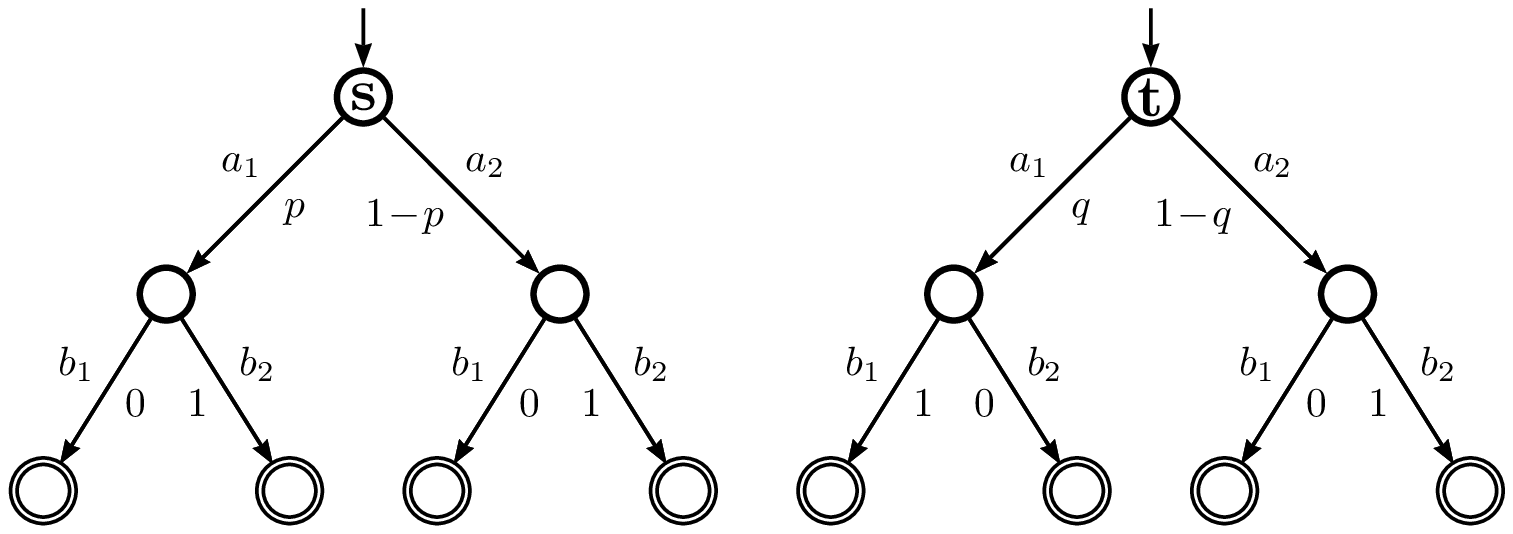}	
		\label{fig:continuityCEx}
	\end{figure}
     
	\begin{table}[ht]
		\centering	
		\subbottom[(Channel for $s$]{
			\centering
			\begin{tabular}{|c|c|c|}
			\hline
               $\ \textbf{s}\ $ & $\ b_1\ $ & $\ b_2\ $ \\ \hline
							 $\ a_1\ $ & $\ 0\ $ & $\ 1\ $ \\ \hline
							 $\ a_2\ $ & $\ 0\ $ & $\ 1\ $ \\ \hline						
			\end{tabular}
		}
		\hspace{0.5cm}
   \subbottom[Channel for $t$]{
  		\centering
			\begin{tabular}{|c|c|c|}
				\hline
               $\ \textbf{t}\ $ & $\ b_1\ $ & $\ b_2\ $ \\ \hline
							 $\ a_1\ $ & $\ 1\ $ & $\ 0\ $ \\ \hline
							 $\ a_2\ $ & $\ 0\ $ & $\ 1\ $ \\ \hline
			\end{tabular}
   	}	
   	\caption{The \IIHSs of Example~\ref{exa:non-cont} and their corresponding channels }
   	\label{tab:non-cont}
	\end{table}

\section{Related work}
\label{section:related-work-interactive-systems}

Gray investigated a concept similar to directed information in~\cite{Gray:91:SSP}. In contrast to our model, which is based on an eavesdropper scenario, he considered leakage in a sender-receiver model. More precisely, he considered a system based on Millen's synchronous state machine \cite{Millen:90:CSFW}, and connected to \qm{low} and \qm{high} environments via communication channels. His purpose was to measure the flow of information from the high environment to the low one, assuming that the only way for the low environment to learn about the high one (and vice versa) is through the system. To this end, he defined a notion of \qm{quasi-directed information} by extending Gallager's formula for discrete finite state channels~\cite{Gallager:68:BOOK}. He also conjectured a correspondence between the quasi-directed information and the transmission rate of the channel. His formulation of quasi-directed information, however, is not completely the same as directed information, and as a result the conjecture does not hold. 

The continuity of the channel capacity was also proved in \cite{Desharnais:02:LICS} for simple channels, but the proof does not adapt to the case of channels with 
memory and feedback and we had to devise a different technique.

\section{Chapter summary and discussion}
\label{section:conclusion-interactive-systems}

In this chapter we have investigated the problem of information leakage in interactive systems, and proved that these systems can be modeled as channels with memory and feedback. We have also proved that the channel capacity is a continuous function of a pseudometric based on the Kantorovich metric.

We have considered various kinds of automata corresponding to different combinations of nondeterministic and probabilistic choice, as summarized in Table \ref{tab:comp-model-iihs-a}. Note that in this the third row corresponds to the limit case in which the reactor is a Dirac measure, i.e. the probability is all concentrated on exactly one  $\Fseq{}{T}\in \Fset$. It is easy to see that in this case $I(A^{T} \rightarrow B^{T}) = 0$ (all the entropies that constitute $I(A^{T} \rightarrow B^{T})$ are $0$), although $I(B^{T} \rightarrow A^{T}) \neq 0$. Therefore there is no leakage. In the classic case this corresponds to the situation in which the input distribution is a Dirac measure. 

	\begin{table}[!htb]
		\small
		\centering
		\subbottom[The various models considered in this chapter]{
			\centering
				\begin{tabular}{|l|l|l|}
					\hline
					\textbf{\IIHSs as automata} & \textbf{\IIHSs as channels} & \textbf{Notion of leakage} \\ \hline \hline
					Normalized \IIHSs with & Sequence of & \\
					nondeterministic secrets & stochastic kernels & Leakage as capacity \\
					and probabilistic observables & $\{ \Cprob{p}{\Bseq{t}{}}{\Aseq{}{t},\Bseq{}{t-1}} \}_{t=1}^{T}$ & \\ \hline
					& Sequence of & \\
					Fully probabilistic & stochastic kernels & Leakage as directed \\
					normalized \IIHSs & $\{ \Cprob{p}{\Bseq{t}{}}{\Aseq{}{t},\Bseq{}{t-1}} \}_{t=1}^{T}$ & information \\
					& + reactor & $I(A^{T} \rightarrow B^{T})$ \\
					& $\{ \Cprob{p}{\Fseq{t}{}}{\Fseq{}{t-1}} \}_{t=1}^{T}$ & \\ \hline
					& Sequence of & \\
					Normalized \IIHSs with a & stochastic kernels	& \\
					deterministic scheduler & $\{ \Cprob{p}{\Bseq{t}{}}{\Aseq{}{t},\Bseq{}{t-1}} \}_{t=1}^{T}$ & No leakage \\
					solving the nondeterminism & + reaction function & \\			
					& sequence $\Fseq{}{T}$ & \\ \hline
				\end{tabular}
			\label{tab:comp-model-iihs-a}			
		}
		\subbottom[Classical channels vs. channels with memory and feedback]{
			\centering
			\begin{tabular}{|l||l|}
				\hline
				\textbf{Classical channels}   & \textbf{Channels with memory and feedback} \\ \hline \hline
				The system is modeled in     & The system is modeled in several \\
				independent uses of the channel, & consecutive uses of the channel. \\
				often a unique use.       & \\
			 \hline
			 & \\ [-2ex]
				The channel is defined on                & \\
				$\Aset^{T} \to \Bset^{T}$, i.e. its input is      & The channel is defined on $\Fset \to \Bset$, i.e. \\
				a single string $\Aseq{}{T}=\Aseq{1}{}\ldots\Aseq{T}{}$ & its input is a reaction function $\Fseq{t}{}$ \\
				of secret symbols and its output            & and its output is an observable $\Bseq{t}{}$. \\
				is a single string $\Bseq{}{T}=\Bseq{1}{}\ldots\Bseq{T}{}$ & \\
				of observable symbols.                 & \\
			 \hline
				The channel is memoryless and     & The channel has memory. Despite the\\
				in general it is implicitly assumed  & fact that the channel defined on $\Fset \to \Bset$ \\
				the absence of feedback.        & does not have feedback, the internal \\
				                & stochastic kernels do.\\
			 \hline
				The capacity is calculated using   & The capacity is calculated using mutual \\
				mutual information $I(A^{T};B^{T})$. & directed information $I(A^{T} \rightarrow B^{T})$. \\
			 \hline
			\end{tabular}
			\label{tab:comp-model-iihs-b}			
		}
		\caption{Summary of results}
		\label{tab:comp-model-iihs}
	\end{table}	

Table~\ref{tab:comp-model-iihs-b} summarizes the comparison between the channels with memory and feedback investigated in this \revision{chapter}, and the  classic channels.

Throughout this chapter we have assumed that the dependence of the secret choices on the observables is part of the external knowledge and, therefore, not considered leakage. The reader may wonder what would happen if this assumption were dropped. We argue that in this case $I(B^{T} \rightarrow A^{T})$ \emph{could be considered as part of the leakage}. In the cases \texttt{a} and \texttt{b} of the cocaine auction example in Section \ref{section:full-example}, for instance, one may want to consider the information that we can deduce about the secrets (the identities of the bidder) from the observables (the increments of the seller) as a leak due to the protocol. 

In some other cases the flow of information from the observables to the secrets may even be considered as a consequence of the active attacks of an adversary, which uses the observables to modify the probability of the secrets. In this case $I(B^{T} \rightarrow A^{T})$ could represent a measure of the effectiveness of the adversary.

As future work, we would like to provide algorithms to compute the leakage and maximum leakage of interactive systems. These are rather challenging problems given the exponential growth of reaction functions (needed to compute the leakage) and the quantification over infinitely many reactors (given by the definition of maximum leakage in terms of capacity). One possible solution is to study the relation between deterministic schedulers and sequence of reaction functions. In particular, we believe that for each sequence of reaction functions and distribution over it there exists a probabilistic scheduler for the automata representation of the secret-nondeterministic IIHS. In this way, the problem of computing the leakage and maximum leakage would reduce to a standard probabilistic model checking problem (where the challenge is to compute probabilities ranging over infinitely many schedulers).

In addition, we plan to investigate measures of leakage for interactive systems other than mutual information and capacity.

We intend to study the applicability of our framework to the area of game theory. In particular, the interactive nature of games such as \emph{Prisoner Dilemma}~\cite{Poundstone:92:Doubleday} and \emph{Stag and Hunt}~\cite{Skyrms:03:BOOK} (in their iterative versions) can be modeled as channels with memory and feedback following the techniques proposed in this work. Furthermore, (probabilistic) strategies can be encoded as reaction functions. In this way, optimal strategies are attained by reaction functions maximizing the leakage of the channel.

\chapter{Differential privacy: the trade-off between leakage and utility}
\label{chapter:differential-privacy}
\mscite{If you have nothing to hide, then you don't have a life.}{cited by Daniel J. Solove}

In this chapter we consider the differential privacy approach to the problem of statistical disclosure control. In general a statistical database contains data of a group of individuals, and users can pose queries to obtain statistical information about the sample in the dataset. To preserve the privacy of the the participants in the database, it is desirable to restrict the amount of information that the system leaks about their individual values. One way of dealing with the problem is by using randomization mechanisms: to avoid leakage, the real answer is modified with some carefully added noise before being reported to the users. A very popular and studied way of doing so is based on the concept of differential privacy.

In our work we consider the relation between differential privacy and quantitative information flow. We address the problem of characterizing the protection that differential privacy provides to individuals with respect to information leakage, and the problem of the utility, i.e. the measure of how close the reported answer is to the true answer.

\paragraph{Contribution} The main contributions of this chapter can be summarized as follows.

\begin{itemize}
		\item We propose an information-theoretic framework to reason about both information leakage and utility.
		
		\item We explore the graph-theoretic foundations of the adjacency relation on databases\footnote{The adjacency relation on databases will be defined precisely in Section~\ref{section:model}.}, and we point out two types of symmetries which allow us to establish a strict link between differential privacy and information leakage.

		\item We prove that $\epsilon$-differential privacy implies a tight bound on the min-entropy leakage.

		\item We prove that $\epsilon$-differential privacy implies a bound on the utility, measured in terms of binary gain functions. We prove that, under certain conditions, the bound is tight.

		\item We identify a method that, under certain conditions, constructs \revision{randomization mechanisms that maximize} utility while providing $\epsilon$-differential privacy. 
\end{itemize}

\textbf{Plan of the Chapter} This chapter is organized as follows. In Section~\ref{section:differential-privacy} we formalize the notion of differential privacy and present an alternative interpretation for it in the special case where the adjacency relation on databases is complete (i.e. every two distinct databases are adjacent). In Section~\ref{section:model} we introduce our model to reason about leakage and utility for randomized functions in the case where the query and the randomization mechanism can be split into two distinct channels. In Section~\ref{section:graph-symmetries} we review some concepts from graph theory and present two special classes of graphs having symmetries that we will explore to make the connection between differential privacy and quantitative information flow. We also show that the graph structure on databases, induced by the adjacency relation and the query, presents these symmetries. In Section~\ref{section:graph-bounds} we use the results of the previous section to prove a bound on the a posteriori min-entropy of the channel matrix. Then we apply this bound to derive our results for leakage in Section~\ref{section:leakage} and for utility in Section~\ref{section:utility}. Finally, in Section~\ref{section:related-work-diff-priv} we review some of the related work in the literature, and in Section~\ref{section:conclusion-diff-priv} we make our final remarks and conclude this chapter.

\section{Differential privacy}
\label{section:differential-privacy}

Databases are commonly used for obtaining statistical information about their participants. Simple examples of statistical queries are, for instance, the predominant disease in a certain population, or the average salary of a group of people. The fact that the answer is publicly available may, however, constitute a threat for the privacy of the individuals. 

In order to illustrate the problem, consider a database that stores the values of the salaries of a set of individuals, and assume that a user can pose the query \qm{what is the average salary of the participants in the database?}. In principle we would like to consider the \emph{global information} relative to the database as \emph{public}, and the \emph{individual information} about a participant as \emph{private}. In this example, we would like to obtain the average salary without being able to infer the salary of any specific participant. Unfortunately this is not always possible. In particular, if the number of participants in the database is known, and an individual is removed from (or included in) the database, it is possible to infer his salary by querying again the database and calculating the influence of the removal (or inclusion) on the reported answer to the query.

Another kind of private information we may want to protect is whether a specific individual is \emph{participating or not} in a database. If we know that a particular individual earns, say, $5.000$\euro\ a month, and all the other individuals earn less than $4.000$\euro\ a month, then learning that the average salary is greater that $4.000$\euro\ will reveal immediately the presence of our individual of interest in the database.

A common approach to this problem is to introduce some output perturbation mechanism based on randomization: instead of the exact answer, the querying mechanism reports a \qm{noisy} answer. Namely, a randomized function is used to produce answers according to some probability distribution that depends on the database. The goal is to report this randomized answer, \revision{which} ideally should be \qm{close enough} to the real one, yet should make it harder for the user to guess the values of individual participants. For certain distributions, however, it may still be possible to guess the value of an individual with a high probability of success. The notion of \emph{differential privacy}, due to Dwork \cite{Dwork:06:ICALP,Dwork:09:STOC,Dwork:10:SODA,Dwork:11:CACM}, is a proposal to control the risk of violating privacy for both kinds of threats described above (value and participation). The idea is to say that a randomized function $\mathcal{K}$ satisfies $\epsilon$-differential privacy (for some $\epsilon>0$) if the ratio between the probabilities that two adjacent databases give \revision{a certain} answer is bound by $e^\epsilon$, where by \qm{adjacent} we mean that the databases differ in only one individual (either for the value of an individual or for the presence/absence of an individual). The notion of differential privacy was developed to be independent of the \emph{side (or auxiliary) information} the user can have about the database, and how it can affect his knowledge about the database before posing the query. This information can come from external sources (e.g. newspapers, common knowledge, etc), but does not affect the  guarantees assured by differential privacy.

In this chapter we explore the similarities between differential privacy and quantitative information flow. We base our approach on the following observations: at the motivational level, the concern about privacy is akin the concern about information leakage. At the conceptual level, the randomized function $\mathcal{K}$ can be seen as an information-theoretic channel, and the limit case of $\epsilon = 0$, for which the privacy protection is total, corresponds to a $0$-capacity channel, which does not allow any leakage. More specifically, we investigate the notion of differential privacy and its implications in the light of the min-entropy framework for information flow discussed in Chapter~\ref{chapter:probabilistic-info-flow}. 

\subsection{Formal definition}
\label{section:formal-definition}
	
Let $\calx$ be the set of all possible databases. Two databases $x,x' \in \calx$ are \emph{adjacent} (or \emph{neighbors}), written $x \sim x'$, if they differ in the value of exactly one individual. Note that the structure $(\calx, \sim)$ forms an undirected graph.

Intuitively, differential privacy is based on the idea that a randomized query function provides sufficient protection if the ratio between the probabilities of two adjacent databases to give a certain answer is bound by $e^\epsilon$, for some $\epsilon > 0$. Formally:

\begin{definition}[\cite{Dwork:11:CACM}]
	\label{def:diff-privacy-1}
	A randomized function $\mathcal{K}$ from $\calx$ to $\calz$ satisfies {$\epsilon$-differential privacy} if for all pairs $x,x'\in \calx$, with $x\sim x'$, and all $S \subseteq \calz$, we have:
	\begin{equation*}
		\mathit{Pr}[\mathcal{K}(x) \in S] \leq e^{\epsilon} \times \mathit{Pr}[\mathcal{K}(x') \in S]		
	\end{equation*}	
	
\end{definition}

In this thesis we consider $\calz$ to be finite, \revision{therefore each of its probability distributions is finite} and we can rewrite the property of $\epsilon$-differential privacy more simply. Using the notation of conditional probabilities, and considering both quotients, we can say that $\epsilon$-differential-privacy holds in the discrete case if, for all $x,x'\in \calx$ with $x\sim x'$, and all $z\in \calz$:
\begin{equation}
	\label{eq:dp-classical}
	\frac{1}{e^\epsilon} \leq \frac{\mathit{Pr}[Z=z|X=x] }{\mathit{Pr}[Z=z|X=x'] } \leq e^\epsilon 
\end{equation}

\noindent where $X$ and $Z$ represent the random variables associated to $\calx$ and $\calz$, respectively. 

Intuitively, \eqref{eq:dp-classical} implies that, if a value of one single individual changes in a dataset (either by inclusion, removal or modification), the probability of the querying mechanism to report a specific answer will not \qm{vary much}. In other words, the influence of a single individual in a database is \qm{negligible} with respect to the whole set of individuals. Of course the notion of what is meant by \qm{much} and \qm{negligible} depends on the value of $\epsilon$.


\subsection{Alternative interpretation in the case of cliques}
\label{section:dp-interpretation-cliques}

A special interpretation of differential privacy is possible in the case where every two distinct databases in $\calx$ are neighbors. More precisely, if $(\calx, \sim)$ is a clique (i.e. a complete graph), it is possible to ensure that he ratio between any a priori knowledge $\mathit{Pr}[X=x]$ of the user (before the query is posed) and his a posteriori knowledge $\mathit{Pr}[X=x|Z=z]$ (after the answer to the query is reported) is bound by $e^\epsilon$. Formally, if for every $x,x'\in \calx$ with $x \neq x'$ we have $x\sim x'$ then:
\begin{align}
	\label{eq:dp-alternative}
	\frac{1}{e^\epsilon} \leq \frac{\mathit{Pr}[X=x|Z=z]}{\mathit{Pr}[X=x]} \leq e^\epsilon & \quad \quad \text{for all priors $\mathit{Pr}[X=x]$, } \\[-4mm]
	\nonumber   																																						& \quad \quad \text{all $x \in \calx$, and all $z \in \calz$}
\end{align}

\noindent where $X$ and $Z$ represent the random variables associated to $\calx$ and $\calz$, respectively. 

Intuitively, \eqref{eq:dp-alternative} states that the observation of the reported answer should not \qm{change much} the user's knowledge about the database. The next proposition shows that in the special case of every pair of distinct databases are neighbors, the above formulation of differential privacy is equivalent to the classic one.

\begin{proposition}
	If for all $x,x' \in \calx$ with $x \neq x'$ we have $x \sim x'$, then \eqref{eq:dp-classical} and \eqref{eq:dp-alternative} are equivalent.
\end{proposition}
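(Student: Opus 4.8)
The plan is to prove the equivalence by establishing each implication separately, exploiting the clique hypothesis to relate the two conditional-probability ratios via Bayes' rule. Throughout, write $p(x) = \mathit{Pr}[X=x]$ for the prior, $p(z\mid x) = \mathit{Pr}[Z=z\mid X=x]$ for the channel probabilities, and $p(x\mid z) = \mathit{Pr}[X=x\mid Z=z]$ for the posterior, assuming as usual that we never condition on zero-probability events.

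\medskip

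First I would prove that \eqref{eq:dp-classical} implies \eqref{eq:dp-alternative}. Fix a prior $p(\cdot)$, an element $x \in \calx$, and an observable $z \in \calz$. By Bayes' rule and the law of total probability,
\begin{equation*}
	\frac{p(x\mid z)}{p(x)} = \frac{p(z\mid x)}{p(z)} = \frac{p(z\mid x)}{\sum_{x'\in\calx} p(z\mid x')\,p(x')}.
\end{equation*}
The key observation is that, since $(\calx,\sim)$ is a clique, every $x'\neq x$ satisfies $x\sim x'$, and the case $x'=x$ is trivial; hence by \eqref{eq:dp-classical} we have $e^{-\epsilon} p(z\mid x) \leq p(z\mid x') \leq e^{\epsilon} p(z\mid x)$ for \emph{every} $x'\in\calx$. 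Multiplying by $p(x')$ and summing over $x'$ (using $\sum_{x'} p(x') = 1$) gives
\begin{equation*}
	e^{-\epsilon}\, p(z\mid x) \;\leq\; \sum_{x'\in\calx} p(z\mid x')\,p(x') \;\leq\; e^{\epsilon}\, p(z\mid x).
\end{equation*}
Substituting this two-sided bound for the denominator $p(z)$ in the Bayes expression above yields $e^{-\epsilon} \leq p(x\mid z)/p(x) \leq e^{\epsilon}$, which is exactly \eqref{eq:dp-alternative}.

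\medskip

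For the converse, I would prove that \eqref{eq:dp-alternative} implies \eqref{eq:dp-classical}. Fix adjacent $x\sim x'$ and an observable $z$. The natural approach is to write the target ratio $p(z\mid x)/p(z\mid x')$ in terms of posteriors, again via Bayes' rule:
\begin{equation*}
	\frac{p(z\mid x)}{p(z\mid x')} = \frac{p(x\mid z)\,p(z)/p(x)}{p(x'\mid z)\,p(z)/p(x')} = \frac{p(x\mid z)}{p(x)} \cdot \frac{p(x')}{p(x'\mid z)}.
\end{equation*}
Since \eqref{eq:dp-alternative} is quantified over \emph{all} priors, I can choose a convenient prior (for instance one giving positive mass to both $x$ and $x'$) and apply the bound to both factors: the first factor is at most $e^{\epsilon}$ and the reciprocal of the second, namely $p(x'\mid z)/p(x')$, is at least $e^{-\epsilon}$, so their product is bounded above by $e^{\epsilon}$. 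The symmetric argument (or swapping the roles of $x$ and $x'$) gives the lower bound $e^{-\epsilon}$, establishing \eqref{eq:dp-classical}.

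\medskip

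The main obstacle, and the point deserving the most care, is the converse direction: one must verify that the bound on the channel ratio obtained from a \emph{particular} prior is genuinely independent of the prior, and that the chosen prior legitimately assigns nonzero probability to the relevant databases so that all conditionals are well defined. The clique hypothesis is essential in the forward direction (it is what lets us bound $p(z\mid x')$ for every $x'$ and thereby sandwich the marginal $p(z)$), whereas in the converse the quantification over all priors does the work; I would make sure to state clearly where each hypothesis is used and to confirm that the degenerate cases ($p(x)=0$ or $p(z)=0$) are excluded by the standing convention on conditioning.
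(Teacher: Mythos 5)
Your forward direction (\eqref{eq:dp-classical} $\Rightarrow$ \eqref{eq:dp-alternative}) is correct and is essentially the paper's own argument: expand the marginal $\mathit{Pr}(z)$ by total probability, use the clique hypothesis to bound every term $\mathit{Pr}(z\mid x')$ between $e^{-\epsilon}\mathit{Pr}(z\mid x)$ and $e^{\epsilon}\mathit{Pr}(z\mid x)$, and sandwich the Bayes ratio.

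The converse direction as you have written it does not establish the claimed bound. From your identity
\begin{equation*}
	\frac{\mathit{Pr}(z\mid x)}{\mathit{Pr}(z\mid x')} \;=\; \frac{\mathit{Pr}(x\mid z)}{\mathit{Pr}(x)}\cdot\frac{\mathit{Pr}(x')}{\mathit{Pr}(x'\mid z)},
\end{equation*}
bounding the two factors separately via \eqref{eq:dp-alternative} gives $\frac{\mathit{Pr}(x\mid z)}{\mathit{Pr}(x)}\leq e^{\epsilon}$ and $\frac{\mathit{Pr}(x')}{\mathit{Pr}(x'\mid z)}\leq e^{\epsilon}$ (the latter because its reciprocal is at least $e^{-\epsilon}$), so the product is only bounded by $e^{2\epsilon}$, not by $e^{\epsilon}$ as you assert. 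Your argument therefore derives $2\epsilon$-differential privacy from \eqref{eq:dp-alternative}, losing a factor of two in the exponent, which is not the stated equivalence. Moreover, your insistence on a prior that assigns positive mass to both $x$ and $x'$ is precisely what locks you into this two-factor decomposition. The paper exploits the universal quantification over priors more sharply: it instantiates the prior with the Dirac measure $\delta_{x'}$, under which the marginal collapses to $\mathit{Pr}(z)=\mathit{Pr}(z\mid x')$, so that a \emph{single} application of the hypothesis to $\frac{\mathit{Pr}(x\mid z)}{\mathit{Pr}(x)}=\frac{\mathit{Pr}(z\mid x)}{\mathit{Pr}(z)}=\frac{\mathit{Pr}(z\mid x)}{\mathit{Pr}(z\mid x')}$ yields the bound $e^{\epsilon}$ directly. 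If the degenerate prior makes you uneasy (since $\mathit{Pr}(x)=0$ there), the same effect is obtained by taking $\mathit{Pr}_\delta(x)=\delta$, $\mathit{Pr}_\delta(x')=1-\delta$ and letting $\delta\to 0$; the essential point is that the hypothesis must be applied once to the ratio $\mathit{Pr}(z\mid x)/\mathit{Pr}(z)$, not twice to two separate posterior ratios. You are right that the clique hypothesis plays no role in this direction.
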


\begin{proof}
		Let us represent by $X$ and $Z$ the random variables associated to $\calx$ and $\calz$, respectively. For better readability, we will denote $\mathit{Pr}[X=x]$, $\mathit{Pr}[Z=z]$, $\mathit{Pr}[Z=z|X=x]$ and $\mathit{Pr}[X=x|Z=z]$ by $\mathit{Pr}(x)$, $\mathit{Pr}(z)$, $\mathit{Pr}(x|z)$ and $\mathit{Pr}(z|x)$, respectively.

	\begin{itemize}
		\item \eqref{eq:dp-classical} $\implies$ \eqref{eq:dp-alternative} \\	\\
			\begin{align*}
				\mathit{Pr}(x|z) & =    \frac{\mathit{Pr}(z|x) \mathit{Pr}(x)}{\mathit{Pr} (z)} & \text{(by the Bayes law)} \\[2mm]
												 & =    \frac{\mathit{Pr}(z|x) \mathit{Pr}(x)}{\sum_{x'\in \calx} \left( \mathit{Pr}(x') \mathit{Pr}(z|x') \right) } &  \\[2mm]
												 & \geq \frac{\mathit{Pr}(z|x) \mathit{Pr}(x)}{\sum_{x'\in \calx} \left( \mathit{Pr}(x') \cdot e^\epsilon \mathit{Pr}(z|x) \right) } & \text{by \eqref{eq:dp-classical}} \\[2mm]
												 & =    \frac{\mathit{Pr}(z|x) \mathit{Pr} (x)}{e^\epsilon \mathit{Pr}(z|x)} &  \\[2mm]
												 & =    \frac{\mathit{Pr}(x)}{e^\epsilon} &  \\
			\end{align*}
			
			\noindent from which it follows that $\frac{\mathit{Pr}(x)}{\mathit{Pr}(x|z)} \leq e^\epsilon$. The case of $\frac{1}{e^\epsilon} \leq \frac{\mathit{Pr}(x)}{\mathit{Pr}(x|z)}$ is a analogous: just take the symmetrical step when applying \eqref{eq:dp-classical} in the derivation above.
			
		\item \eqref{eq:dp-alternative} $\implies$ \eqref{eq:dp-classical} \\ \\
		For every prior $\mathit{Pr}(x)$ we have
		\begin{align*}
			\frac{\mathit{Pr}(x|z)}{\mathit{Pr}(x)} & = \frac{\mathit{Pr}(z|x)}{p(z)} & \text{(by the Bayes law)} \\[2mm]
																							& = \frac{\mathit{Pr}(z|x)}{\sum_{x''} \left( \mathit{Pr}(x'') \mathit{Pr}(z|x'') \right) }
		\end{align*}
		
		In particular, the above is valid for every prior of the form $\mathit{Pr}(x) = \delta_{x'}(x)$, where $x' \in \calx$. Therefore, for all $x' \in \calx$ 
		\begin{align*}
		\frac{\mathit{Pr}(x|z)}{\mathit{Pr}(x)} & = \frac{\mathit{Pr}(z|x)}{\sum_{x''} \left( \delta_{x'}(x'') \mathit{Pr}(z|x'') \right) } &  \\[2mm]
																						& = \frac{\mathit{Pr}(z|x)}{\mathit{Pr}(z|x')}
		\end{align*}
		
		Since by \eqref{eq:dp-alternative} we have $\frac{1}{e^\epsilon} \leq \frac{\mathit{Pr}(z|x)}{\mathit{Pr}(x)} \leq e^{\epsilon}$ for every prior $\mathit{Pr}(x)$, it follows from the derivation above that also $\frac{1}{e^\epsilon} \leq \frac{\mathit{Pr}(z|x)}{\mathit{Pr}(z|x')} \leq e^{\epsilon}$ for all $x' \in \calx$.
	
	\end{itemize}
	
\end{proof}

\section{A model of utility and privacy for statistical databases}
\label{section:model}

In this section we present a model of statistical queries on databases, where noise is carefully added to protect the privacy of the participants in the sample, and the reported answer to a query does not need to be the real one. In this model, the notion of information leakage is to measure the amount \revision{of} information that an adversary can learn about the database by posing queries and then analyzing the reported answers. Note that in principle the adversary can be a user of the database, and therefore the privacy guarantees should not depend on distinctions of who is posing the queries. Our model will also allow us to quantify the utility of the query, i.e. how much information about the real answer can be obtained from the reported one. In our work we focus on the case in which all the values of interest are discrete.

We fix a finite set \revision{$\ind=\lbrace 0,1,\ldots, u-1\rbrace$} of $u$ individuals participating in the database. In addition, we fix a finite set \revision{$\val = \lbrace  {\textsl{v}}_{0}, {\textsl{v}}_{1}, \ldots, {\textsl{v}}_{v-1} \rbrace$}, representing the set of ($v$ different) possible values for the \emph{sensitive attribute} of each individual (e.g. disease-name in a medical database). In the more general case where there are several sensitive attributes in the database (e.g. salary and security number in a census sample), we can think of the elements of $\val$ as tuples. The absence of an individual in the database, if allowed, can be modeled with one special value in $\val$ (see the discussion in Section~\ref{section:note-values}). A database $D = d_0 \ldots d_{u-1}$ is a $u$-tuple where each $d_i \in \val$ is the value of the corresponding individual. The set of all databases is $\calx = \val^{u}$. Two databases $x,x'$ are \emph{adjacent}, written $x \sim x'$, if and only if they differ in the value of exactly one individual. \revision{As we already pointed out, the} structure $(\calx, \sim)$ forms an undirected graph, and we \revision{call} $\sim$ its \emph{adjacency relation}.

Let $\calk$ be a randomized function from $\calx$ to $\calz$, where $\calz=Range(\calk)$ (see Figure~\ref{fig:mechanism-k}). This function can be modeled by a channel $(\calx, \revision{\calz}, p_{Z|X}(\cdot|\cdot))$, where $\calx$ and $\calz$ are the input and output alphabets, respectively, and $p_{Z|X}(\cdot|\cdot)$ is the channel matrix. The random variables modeling the input and output of the channel are denoted by $X$ and $Z$, respectively. The definition of differential privacy can be directly expressed as a property of the channel: it satisfies $\epsilon$-differential privacy if
\begin{equation*}
	p(z|x)\le e^\epsilon p(z|x') \quad \text{for all $x,x'\in \calx$ with $x\sim x'$, and all $z \in \calz$}
\end{equation*}

\begin{figure}[!htb]
	\centering
	\includegraphics[width=0.35\textwidth]{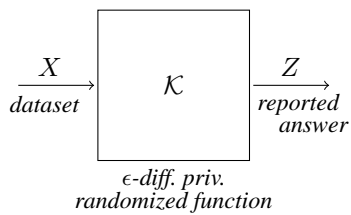}%
	\caption{Randomized function $\mathcal{K}$}%
	\label{fig:mechanism-k}%
\end{figure}

Intuitively, the correlation between $X$ and $Z$ measures how much information about the complete database the attacker can obtain by observing the reported answer. We will refer to this correlation as the \emph{leakage} of the channel, denoted by $\call(X,Z)$. In Section~\ref{section:leakage} we will discuss how this leakage can be quantified using notions from information theory, and we will study the behavior of the leakage for differentially private queries.

In our model the true answer to the query $f$ is modeled by the random variable $Y$ ranging over $\caly = Range(f)$. The correlation between $Y$ and $Z$ measures how much we can learn about the real answer from the reported one. We will refer to this correlation as the \emph{utility} of the channel, denoted by $\calu(Y,Z)$. In Section~\ref{section:utility} we will discuss in detail how the utility can be quantified, and we will investigate how to construct a randomization mechanism, i.e. a way of adding noise to the query outputs, so that utility is maximized while preserving differential privacy.

In practice, the randomization mechanism is often \emph{oblivious}, meaning that the reported answer $Z$ only depends on the real answer $Y$ and not on the database $X$. In this case, the randomized function $\mathcal{K}$, seen as \revision{a} channel, can be decomposed into two parts: a channel modeling the query $f$, and a channel modeling the oblivious randomization mechanism $\calh$. \revision{These two channels are said to be \emph{in cascade}, as the output of the first one is the input for the second one.} The definition of utility can be then simplified as it only depends on properties of the sub-channel \revision{corresponding} to $\mathcal{H}$. The leakage relating $X$ and $Y$ and the utility relating $Y$ and $Z$ for a decomposed randomized function are shown in Figure~\ref{fig:utility-privacy}. 

\begin{figure}[!bt]
	\centering
	\includegraphics[width=0.85\textwidth]{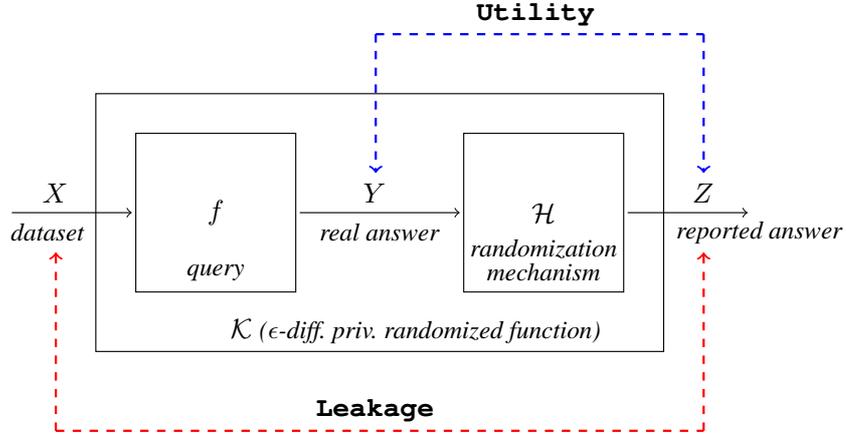}%
	\caption{Leakage and utility for oblivious mechanisms}%
	\label{fig:utility-privacy}%
\end{figure}

We capture the notion of the attacker's side information as the prior distribution on $X$, which is standard in information flow and also in papers on differential privacy \cite{Ghosh:09:STC,Kasiviswanathan:08:CORR}.

\subsection{Leakage about an individual}

As already discussed, $\call(X,Z)$ can be used to quantify the information that the attacker can learn about the whole database. Protecting the entire database at once, however, is not the main goal of differential privacy. In fact, some information will necessarily be revealed, otherwise the query would not be useful. Instead, differential privacy aims at protecting the value of \emph{any single individual}, even in the worst case where the values of all other individuals are known. To quantify this information leakage we can define smaller channels, where only the information of a specific individual varies. Let $x^- \in \val^{u-1}$ be a $(u-1)$-tuple with the values of all individuals but one (the individual whose degree of protection we want to quantify). We create a channel $\mathcal{K}_{x^-}$ whose input alphabet is the set of all databases in which the $u-1$ other individuals have the same values as in $x^-$. Note that, since $x^-$ is fixed, to define the input of the channel it is enough to specify the value of the individual of interest. In this way the input for the channel can be seen as a random variable $V$ ranging over the set $\val$. Intuitively, the information leakage of this channel measures how much information about one particular individual the attacker can learn if the values of all others are known to be $x^-$. This leakage will be studied in Section~\ref{section:individual-leakage}. 

\subsection{A note on the choice of values} 
\label{section:note-values}

The choice of the set $\val$ depends on the assumptions about the attacker's knowledge. In particular, if the attacker does not know which individuals participate in the database, a distinguished value in $\val$ could be interpreted as absence (e.g. the value $0$ or the special value $null$). As discussed in \cite{Dwork:11:CACM}, a database $x'$ adjacent to $x$ can be though of either \revision{as} being a superset (or subset) of $x$ with one extra (or missing) row, or \revision{as} being exactly the same database as $x$ in all rows \revision{except for} one which has a different (non-\emph{null}) value. Our definition of $\sim$ with the possibility of $null$ values covers all these cases.

At this point an important observation should be made about the choice of $\val$. Most often we are interested in protecting the \emph{actual value} of an individual, not only his participation in the database. In this case, the definition of differential privacy (as well as the channels we are constructing) should include databases with all possible values for each individual, not just the \qm{real} ones. In other words, to prevent the attacker from finding out the individual's value, the probability $p(z|x)$, where $x$ contains the individual's true value, should be close to $p(z|x')$ where $x'$ contains a hypothetical value for this individual. This might seem unnecessary at first sight, since differential privacy is often \revision{thought of} as protecting \revision{the} participation of an individual in a database. Hiding the participation of an individual, however, does not imply hiding his value. Consider the following example: we aim at learning the average salary of employees in a small company, and it happens that all of them have exactly the same salary $s$. We allow anyone to participate or not, while offering $\epsilon$-differential privacy. If we only consider $s$ as the value in all possible databases, then the query is always constant, so answering it any number of times without any noise should satisfy differential privacy for any \revision{$\epsilon \geq 0$}. Since all reported answers are $s$, the attacker can deduce that the salary of all employees, including those not participating in the query, is $s$. Indeed, the attacker cannot find out who participated, despite the value of all individuals is revealed.

In other cases, we are only interested in hiding the \revision{identity of the participants} (e.g. in a database with information about anonymous donations). Thus, $\val$ should be properly selected according to the application. If \revision{who has participated} is known and we only wish to hide the values, then $\val$ should contain all possible values, e.g. all possible salaries in the example above. If the values are known and participation is to be hidden, then $\val$ can contain just the values $0$ and $1$ denoting absence and presence respectively. Finally, if both the value and the \revision{the identities of the participants} are to be protected, then $\val$ should contain all values plus $\mathit{null}$.

\subsection{The questions we explore with the help of our model}
\label{section:goals}

We will use the model we just introduced to explore the following questions: 

\begin{enumerate}
	\item \label{item:q1} Does $\epsilon$-differential privacy induce a bound on the information leakage of the randomized function $\mathcal{K}$? 
	\item \label{item:q2} Does $\epsilon$-differential privacy induce a bound on the information leakage \emph{relative to an individual}? 
	\item \label{item:q3} Does $\epsilon$-differential privacy induce a bound on the utility? 
	\item \label{item:q4} Given a query $f$ and a value $\epsilon > 0$, can we construct a randomized function $\mathcal{K}$ which satisfies $\epsilon$-differential privacy and also presents maximum utility?
\end{enumerate}

We will see that the answers to \ref{item:q1} and \ref{item:q2} are positive in case we take the measure of leakage to be the min-entropy leakage, and we provide bounds that are tight (i.e. for every $\epsilon$ there is a $\mathcal{K}$ whose leakage reaches the bound). For \ref{item:q3} we are able to give a tight bound in some cases which depend on the structure of the query, and for the same cases, we are able to construct an oblivious $\mathcal{K}$ with maximum utility (defined in terms of a binary gain function), as requested by \ref{item:q4}.

\section{Graph symmetries}
\label{section:graph-symmetries}

In this section we explore some classes of graphs that will allow us to derive a strict correspondence between $\epsilon$-differential privacy and the a posteriori entropy of the input. As we already mentioned, the input domain of databases and the adjacency relation forms an undirected graph, and this fact will be used to derive bounds on information leakage and utility. We will present two classes of graphs, distance-regular and $\vtt$, that will be used in the next section to transform a generic channel matrix into a matrix with a symmetric structure, while preserving the a posteriori min-entropy and the $\epsilon$-differential privacy.

Let us first recall some basic notions. Given a graph $G=(\mathcal{V}, \sim)$, the \emph{distance} $d(\textsl{v},w)$ between two vertices $\textsl{v},w\in \mathcal{V}$ is the number of edges in a shortest path connecting them. The \emph{diameter} $\diameter$ of $G$ is the maximum distance between any two vertices in $\mathcal{V}$. The \emph{degree} of a vertex is the number of edges incident to it. $G$ is called \emph{regular} if every vertex has the same degree. A regular graph with vertices of degree $k$ is called a $k$\emph{-regular graph}. An \emph{automorphism} of $G$ is a permutation $\sigma$ on the vertex set $\mathcal{V}$, such that for any pair of vertices $\textsl{v},w$, if $\textsl{v}\sim w$, then $\sigma(\textsl{v})\sim\sigma(w)$. If $\sigma$ is an automorphism, and $\textsl{v}$ is a vertex, the \emph{orbit} of $\textsl{v}$ under $\sigma$ is the set $\{\textsl{v}, \sigma(\textsl{v}), \ldots, \sigma^{k-1}(\textsl{v})\}$ where $k$ is the smallest positive integer such that $\sigma^k(\textsl{v}) = \textsl{v}$. Clearly, the orbits of the vertices under $\sigma$ define a partition of $\mathcal{V}$. If $\mathcal{V}$ is the set of vertices of $G$, we denote by $\border{\mathcal{V}}{d}{\textsl{v}}$ the subset of vertices in $\mathcal{V}$ that are at distance $d$ from the vertex $\textsl{v}$.

The following two definitions introduce the classes of graphs that we are interested in. The first class is well known in literature. 

\begin{definition}[Distance-regular graph]
	\label{def:distance-regular}
	A graph $G=(\mathcal{V},\sim)$ is called \emph{distance-regular} if there exist integers $b_d$ and $c_d$ ($d \in \{ 0,\ldots,\diameter \}$) (called \emph{intersection numbers}) such that, for all vertices $\textsl{v},w$ at distance $d(\textsl{v},w)=d$, there are exactly
	\begin{itemize}
		\item $b_d$ neighbors of $w$ in $\border{\mathcal{V}}{d+1}{\textsl{v}}$
		\item $c_d$ neighbors of $w$ in $\border{\mathcal{V}}{d-1}{\textsl{v}}$
	\end{itemize}
\end{definition}

Some examples of distance-regular graphs are illustrated in Figure~\ref{fig:dist-reg}.

	\begin{figure}[!htb]%
		\centering
		\subbottom[Tetrahedral graph]{
			\includegraphics[width=0.23\columnwidth]{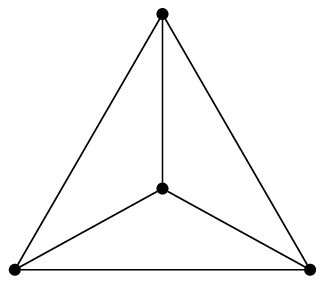}%
			\label{fig:dist-reg3}%
		}
		\qquad
		\subbottom[Cubical graph]{
			\includegraphics[width=0.22\columnwidth]{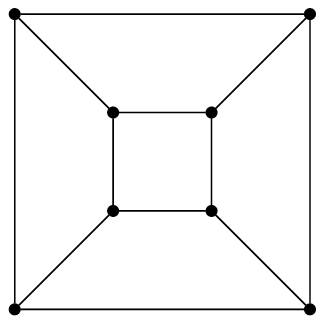}%
			\label{fig:dist-reg4}%
		}
		\qquad
		\subbottom[Petersen graph]{
			\includegraphics[width=0.22\columnwidth]{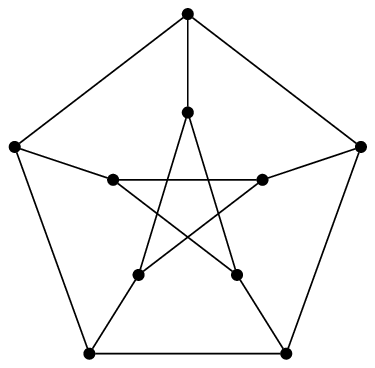}%
			\label{fig:dist-reg5}%
		}	
		\caption{Some distance-regular graphs with degree $3$}
		\label{fig:dist-reg}	
	\end{figure}
	
The second class we are interested in is a variant of the VT (vertex-transitive\footnote{A graph $G = (\mathcal{V},\sim)$ is said to be \emph{vertex-transitive} if for any pair $\textsl{v}, w \in \mathcal{V}$ there exists an automorphism $\sigma$ such that $\sigma(\textsl{v}) = w$.}) class: 

\begin{definition}[$\vtt$ graph]
A graph $G = (\mathcal{V},\sim)$ is $\vtt$ (\emph{vertex-transitive +}) if there are $n$ automorphisms $\sigma_0$, $\sigma_1$, \ldots $\sigma_{n-1}$, where $n = |\mathcal{V}|$, such that, for every vertex $\textsl{v} \in \mathcal{V}$, we have that $\{\sigma_i(\textsl{v})\mid 0\leq i \leq n-1\} = \mathcal{V}$. 
\end{definition}

In particular, the graphs for which there exists an automorphism $\sigma$ which induces only one orbit are $\vtt$: it is sufficient to define $\sigma_i=\sigma^i$ for all $i$ from $0$ to $n-1$. Figure \ref{fig:hexagons} illustrates some $\vtt$ graphs with a single-orbit automorphism.

\begin{figure}[!htb]%
	\centering
	\subbottom[Cycle: degree $2$]{
		\includegraphics[width=0.22\columnwidth]{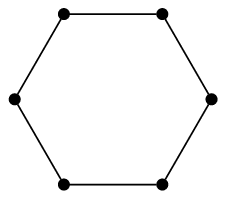}%
		\label{fig:hex1}%
	}
	\qquad
	\subbottom[Degree 4]{
		\includegraphics[width=0.22\columnwidth]{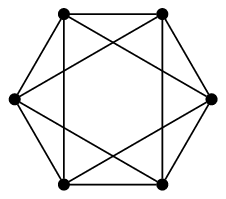}%
		\label{fig:hex2}%
	} 
	\qquad
	\subbottom[Clique: degree 5]{
		\includegraphics[width=0.22\columnwidth]{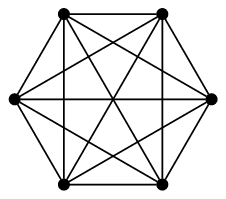}%
		\label{fig:hex3}%
	}	
	\caption{Some $\vtt$ graphs}
	\label{fig:hexagons}	
\end{figure}
	
From graph theory we know that neither of the two classes subsumes the other. They have however a non-empty intersection, which contains in particular all the structures of the form $(\val^{u},\sim)$, i.e. the database domains. 

The two next propositions show that the structure $(\mathcal{X},\sim) = (\val^{u},\sim)$ is both a distance-regular graph and a $\vtt$ graph.

\begin{proposition}
	\label{prob:database-is-dr}
	\revision{If $v \geq 2$, the} graph $(\val^{u}, \sim)$ is a connected distance-regular graph with diameter $\delta = u$, and intersection numbers $b_d = (u - d)(v-1)$ and $c_d = d$, for all $0 \leq d \leq \delta$.
\end{proposition}

\begin{proof}
	The vertices of ($\val^{u}, \sim$) are $u$-tuples $(\textsl{v}_1,\ldots,\textsl{v}_u), \textsl{v}_i\in \val$ and two vertices are adjacent if and only if the differ in exactly one element $\textsl{v}_i$. It is easy to see that the distance between two vertices is the number of elements in which they differ. Let $x_1,x_2 \in \val^{u}$ with $d(x_1,x_2)=d$, so they differ in exactly $d$ elements. To go at distance $d+1$ from $x_1$ we can select any of the remaining $u-d$ elements and change it in $v-1$ possible ways, so the total number is $(u-d)(v-1)$ and depends only on $d$, not on $x_1,x_2$. Similarly, by changing one of the differing elements of $x_2$ to match the value of $x_1$ we get a vertex at distance $d-1$, and there are $d$ such elements.
\end{proof}

\begin{proposition}
	\label{prob:database-is-vtt}
	The graph $(\val^{u}, \sim)$ is a $\vtt$ graph.
\end{proposition}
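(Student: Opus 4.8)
The plan is to exploit the natural group structure on $\val^{u}$. First I would fix an arbitrary abelian group structure on the value set $\val$ --- for concreteness, identify $\val$ with $\mathbb{Z}_{v} = \{0, 1, \ldots, v-1\}$ equipped with addition modulo $v$. This turns the vertex set $\calx = \val^{u}$ into the product group $(\mathbb{Z}_{v})^{u}$ under componentwise addition, which I will denote by $+$.

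For each group element $g \in \val^{u}$, I would define the translation map $\sigma_{g} \colon \val^{u} \to \val^{u}$ by $\sigma_{g}(x) = x + g$. The key step is to verify that every $\sigma_{g}$ is a graph automorphism of $(\val^{u}, \sim)$. Each $\sigma_{g}$ is a bijection, with inverse $\sigma_{-g}$. To see that it preserves adjacency, recall (as established in the proof of Proposition~\ref{prob:database-is-dr}) that two vertices are adjacent exactly when they differ in precisely one coordinate. Since addition by $g$ acts coordinatewise and is a bijection in each coordinate, the set of coordinates in which $x$ and $x'$ differ coincides with the set of coordinates in which $x+g$ and $x'+g$ differ. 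Hence $x \sim x'$ if and only if $\sigma_{g}(x) \sim \sigma_{g}(x')$, so $\sigma_{g}$ is an automorphism of $(\val^{u},\sim)$.

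Next I would index the required automorphisms by the group elements themselves. The translations $\{\sigma_{g} \mid g \in \val^{u}\}$ are pairwise distinct (if $\sigma_{g} = \sigma_{g'}$, then evaluating at $0$ gives $g = g'$), so there are exactly $n = |\val^{u}| = v^{u}$ of them, matching the cardinality demanded in the definition of $\vtt$. Finally, the orbit condition follows immediately from the transitivity of translation: for any fixed vertex $\textsl{v} \in \val^{u}$, as $g$ ranges over all of $\val^{u}$ the image $\sigma_{g}(\textsl{v}) = \textsl{v} + g$ ranges over all of $\val^{u}$, so $\{\sigma_{g}(\textsl{v}) \mid g \in \val^{u}\} = \val^{u}$.

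I do not anticipate a genuine obstacle here: the argument is essentially the observation that $(\val^{u}, \sim)$ is a Cayley graph of the abelian group $(\mathbb{Z}_{v})^{u}$ with respect to a translation-invariant connection set, and Cayley graphs are always vertex-transitive via translations. The only point requiring mild care is the bookkeeping that the number of translations equals $n$ and that they collectively sweep each vertex over the whole vertex set, which is precisely the strengthened condition demanded by $\vtt$ (as opposed to plain vertex-transitivity). An alternative, matching the remark following the definition of $\vtt$, would be to produce a single automorphism whose iterates form one orbit; this is possible when $(\mathbb{Z}_{v})^{u}$ is cyclic, but fails in general, so the translation approach is preferable as it works uniformly for all $v$ and $u$.
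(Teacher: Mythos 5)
Your proof is correct and is essentially the same as the paper's: the paper's family $\{\sigma_\iota\}_{\iota=0}^{v^u-1}$, built from componentwise applications of the cyclic shift $\rho(\textsl{v}_j)=\textsl{v}_{j\oplus 1}$ indexed via the base-$v$ expansion of $\iota$, is exactly your family of translations by the elements of $(\mathbb{Z}_v)^u$. The only difference is presentational (group-theoretic/Cayley-graph language versus explicit index bookkeeping), so no further comparison is needed.
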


\begin{proof}

	Recall that we assume the values in the set $\val$ to be indexed, i.e. $ \val = \{\textsl{v}_{0}, \ldots, \textsl{v}_j, \ldots, \textsl{v}_{v-1} \}$, where $v = |\val|$. Note that, for convenience, we opt to use here the indexing from $0$ to $v-1$. Let us define an \revision{bijective} function $\rho: \val \rightarrow \val$ as
	\begin{equation*}
		\rho(\textsl{v}_j) = \textsl{v}_{j \oplus 1}
	\end{equation*}
	
	\noindent for every $\textsl{v}_j \in \val$, and where $\oplus$ represents the sum modulo $v$. We define the composition of $\rho$ with itself $i$ times as
	\begin{equation*}
		\rho^{i}(\textsl{v}_j) = \underbrace{\rho \circ \rho \circ \ldots \circ \rho}_{i \ \text{times}}(\textsl{v}_j)
	\end{equation*}
	
	\noindent Note that since $\rho$ is injective, $\rho^i$ is injective as well. 
	
	We represent a database in $\val^u$ as $x = \textsl{v}_{k_0} \dots \textsl{v}_{k_\ell} \ldots \textsl{v}_{k_{u-1}}$, with $0 \leq \ell \leq u-1$ and \revision{$0 \leq k_\ell \leq v-1$}. We now define a family $\{ \sigma_{\iota} \}_{\iota = 0}^{v^{u}-1}$ of automorphisms as follows. Given a $0 \leq \iota \leq v^u-1$, consider the representation in base $v$ of $\iota$: 
	\begin{equation}
		\label{eq:iota}
		\iota = i_0 \cdot v^{0} + \ldots + i_\ell \cdot v^\ell + \ldots + i_{u-1} \cdot v^{u-1} 
	\end{equation}
	
	\noindent where $0 \leq i_\ell \leq v-1$. Then define
	\begin{equation}
		\label{eq:automorphisms}
		\sigma_{\iota}(x) =   \rho^{i_0}(\textsl{v}_{k_0}) \ldots \rho^{i_{\ell}}(\textsl{v}_{k_{\ell}}) \ldots \rho^{i_{u-1}}(\textsl{v}_{k_{u-1}})  \\
	\end{equation}
	
	\noindent where $x = \textsl{v}_{k_0} \ldots \textsl{v}_{k_\ell} \ldots \textsl{v}_{k_{u-1}}$.
		
	We have to show that:
	
	\begin{itemize}
	
		\item \emph{$\sigma_\iota$ is an automorphism for all $0 \leq \iota \leq v^u - 1$.}
		
		First we show that $\sigma_\iota$ is injective. Let us consider two arbitrary databases $x = \textsl{v}_{k_{0}} \ldots \textsl{v}_{k_\ell} \ldots \textsl{v}_{k_{u-1}}$ and $x' =   \textsl{v}_{k_{0}'} \ldots \textsl{v}_{k_\ell'} \ldots \textsl{v}_{k_{u-1}'}$, and assume $\sigma_{\iota} = \rho^{i_0}(\cdot) \ldots \rho^{i_\ell}(\cdot) \ldots \rho^{i_{u-1}}(\cdot)$. If $x \neq x'$ then $\textsl{v}_{k_\ell} \neq \textsl{v}_{k_\ell'}$ for some $\ell$, and since \revision{an arbitrary} $\rho^{i_\ell}$ is injective we have $\rho^{i_\ell}(\textsl{v}_{k_\ell}) \neq \rho^{i_\ell}(\textsl{v}_{k_{\ell}'})$. Therefore $\sigma_\iota(x) \neq \sigma_\iota(x')$.
		
		Now we show that if $x \sim x'$ then $\sigma_\iota(x) \sim \sigma_\iota(x')$. Consider \revision{an} arbitrary pair of adjacent databases $x =   \textsl{v}_{k_{0}} \ldots \textsl{v}_{k_{\ell}} \ldots \textsl{v}_{k_{u-1}}  $ and $x' =   \textsl{v}_{k_{0}} \ldots \textsl{v}_{k_{\ell}'} \ldots \textsl{v}_{k_{u-1}}$, where $x$ and $x'$ differ exactly for  $\textsl{v}_{k_{\ell}} \neq \textsl{v}_{k_{\ell}'}$. We know that $\sigma_{\iota}(x) =   \rho^{i_{0}}(\textsl{v}_{k_{0}}) \ldots \rho^{i_{\ell}}(\textsl{v}_{k_{\ell}}) \ldots \rho^{i_{u-1}}(\textsl{v}_{k_{u-1}})$ and we also know that $\sigma_{\iota}(x') =   \rho^{i_{0}}(\textsl{v}_{k_{0}}) \ldots \rho^{i_{\ell}}(\textsl{v}_{k_{\ell}'}) \ldots \rho^{i_{u-1}}(\textsl{v}_{k_{u-1}})$. Therefore $\sigma_{\iota}(x)$ and $\sigma_{\iota}(x')$ can differ at most in $\rho^{i_{\ell}}(\textsl{v}_{k_{\ell}})$ and $\rho^{i_{\ell}}(\textsl{v}_{k_{\ell}'})$. Since $\rho^{i_\ell}$ is injective, we have $\rho^{i_{\ell}}(\textsl{v}_{k_{\ell}}) \neq \rho^{i_{\ell}}(\textsl{v}_{k_{\ell}'})$, and it follows that $\sigma_{\iota}(x) \sim \sigma_{\iota}(x')$.
	
		\item \emph{For every $x = \textsl{v}_{k_{0}} \ldots \textsl{v}_{k_{\ell}} \ldots \textsl{v}_{k_{u-1}}  $ in $\val^u$ we have $\bigcup_{\iota = 0}^{v^u-1}\{\sigma_{\iota}(x)\} = \val^u$.}
			
		Take an arbitrary element $x' =   \textsl{v}_{k_{0}'} \ldots \textsl{v}_{k_{\ell}'} \ldots \textsl{v}_{k_{u-1}'}  $ in $\val^u$. Note that $\rho^{k_m}(\textsl{v}_{k_n}) = \textsl{v}_{k_{m \oplus n}}$ for all $0 \leq m,n \leq v-1$. Therefore the automorphism $\sigma_{} =   \rho^{k_{0}' \ominus k_{0}}(\cdot) \ldots \rho^{k_\ell' \ominus k_\ell}(\cdot) \ldots \rho^{k_{u-1}' \ominus k_{u-1}}(\cdot)$, where $\ominus$ represents the subtraction modulo $v$, satisfies $\sigma_{}(x) = x'$. Since $0 \leq k_{\ell}' \ominus k_\ell \leq v-1$ we have that $\sigma_{} = \sigma_{\iota}$ for $\iota = (k_{0}' \ominus k_{0}) \cdot v^0 + \ldots + (k_{\ell}' \ominus k_{\ell}) \cdot v^\ell + \ldots + (k_{u-1}' \ominus k_{u-1}) \cdot v^{u-1}$, and therefore $\sigma_{}$ belongs to the family $\{\sigma_{\iota}\}_{\iota = 0}^{v^u-1}$.
	
	\end{itemize}

\end{proof}

Figure \ref{fig:hypercubes} illustrates some examples of structures $(\val^{u},\sim)$. Note that when $|\val| = 2$, $(\val^{u},\sim)$ is the $u$-dimensional hypercube. 

\begin{figure}[!htb]%
	\centering
	\subbottom[$u=4, \val=\{a,b\}$ ($4$-dimensional hypercube)]{
		\includegraphics[width=0.40\columnwidth]{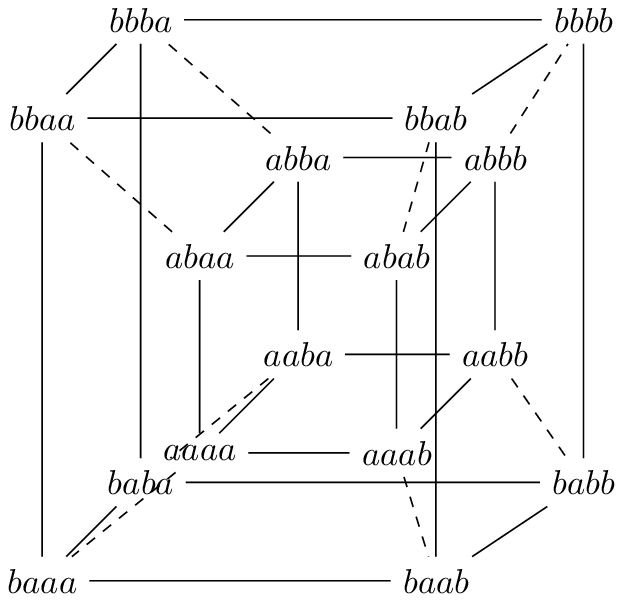}%
		\label{fig:hyp1}%
	}
	\qquad
	\subbottom[$u=3, \val = \{a,b,c\}$ (for readability sake we show only part of the graph)]{
		\includegraphics[width=0.40\columnwidth]{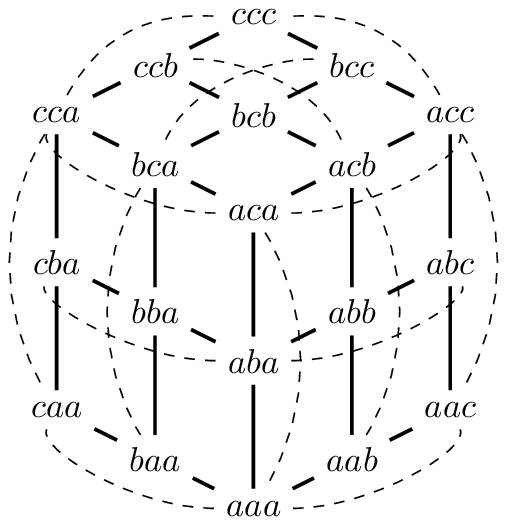}%
		\label{fig:hyp2}%
	}
	\caption{Some $(\val^{u},\sim)$ graphs}
	\label{fig:hypercubes}	
\end{figure}

The relation between graph structures we consider in this chapter is summarized in Figure \ref{fig:venn}. We remark that in general the graphs $(\val^{u},\sim)$ do not have a single-orbit automorphism.

\begin{figure}[!htb]%
	\centering
	\includegraphics[width=0.4\columnwidth]{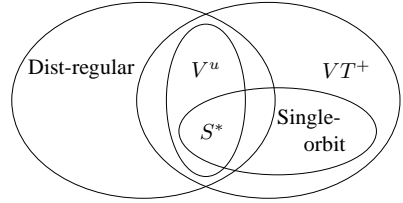}%
	\caption{\revision{Venn diagram for the classes of graphs considered in this section. Here $S^* = \{\val^u \ \ | \ \ |\val| = 2, u \leq 2 \}$}}
	\label{fig:venn}%
\end{figure}

\section{Deriving the relation between differential privacy and quantitative information flow on the basis of the graph structure}
\label{section:graph-bounds}

In this section we present the main technical contribution of the chapter: a general technique that explores the graph structure induced by the adjacency relation $\sim$ on $\calx$ and the query $f$ to determine  relations between $\epsilon$-differential privacy and min-entropy leakage, and between $\epsilon$-differential privacy and utility. We use the symmetries of the graph structure $(\calx,\sim)$ to transform the channel matrix into an equivalent matrix with certain regularities. These regularities are the key that allow us to establish the link between $\epsilon$-differential privacy and the a posteriori min-entropy (i.e. the conditional min-entropy associated to the channel). The establishment of bounds on the a posteriori \revision{entropy} will allow us to derive bounds on leakage and utility: in Section~\ref{section:leakage} we will cope with leakage and in Section~\ref{section:utility} we will cope with utility.

But first, in Section~\ref{section:matrix-transformation} we will present how to perform the transformation on the channel matrix, and in Section~\ref{section:bound-a-post-entropy} we will show how to derive a bound on the a posteriori min-entropy for the matrix obtained. It is important to note that we consider the case where \emph{the channel input has the uniform distribution}. This is not a restriction for our bounds on the leakage: as seen in Chapter~\ref{chapter:probabilistic-info-flow}, the maximum min-entropy leakage is achieved in the uniform input distribution and, therefore, any bound for the uniform input distribution is also a bound for all other input distributions. In the case of utility the assumption of uniform input distribution is more restrictive, but we will see that it still provides interesting results for several practical cases.

Before we present formally our technique, let us fix some notation.

\subsection{Assumptions and notation}
\label{section:notation}

In the rest of this section we consider channels (usually referred to by $M$, $M'$, $M''$ or $N$) with input $A$ and output $B$, with finite carriers \revision{$\mathcal{A} = \{ a_0,\ldots,a_{n-1}\}$ and $\mathcal{B} = \{b_0,\ldots,b_{m-1}\}$}, respectively, and we assume that the probability distribution of $A$ is uniform. Furthermore, we assume that $|\mathcal{A}| = n \leq |\mathcal{B}| = m$. If it is the case that $n > m$, we just add to the matrix enough zero-ed columns, i.e. columns containing only $0$'s, so \revision{as} to match the number of rows. Note that adding zero-ed columns does not change the min-entropy leakage nor the conditional min-entropy of the channel. We assume as well an adjacency relation $\sim$ on $\mathcal{A}$, i.e. that $(\mathcal{A}, \sim)$ is an undirected graph structure. With a slight abuse of notation, we will also write $i\sim h$ when $i$ and $h$ are associated to adjacent elements of $\mathcal{A}$, and we will write $d(i,h)$ to denote the distance between the elements of $\mathcal{A}$ associated to $i$ and $h$. \revision{More generally, we may use the number $i$ to denote the element $a_i$ of $\mathcal{A}$ (or, equivalently, the element $b_i$ of $\mathcal{B}$) whenever it is clear from the context}.

We note that a channel matrix $M$ satisfies $\epsilon$-differential privacy if for each column $j$ and for each pair of rows $i$ and $h$ such that $i\sim h$ we have that:
\begin{equation*}
	\frac{1}{e^\epsilon}\leq \frac{M_{i,j}}{M_{h,j}}\leq e^\epsilon.
\end{equation*}

The a posteriori entropy of a channel with matrix $M$ will be denoted by $H^M_\infty(A|B)$, and its min-entropy leakage by $I^M_\infty(A;B)$.

We denote by $M[l \to k]$  the matrix obtained by \qm{collapsing} the column $l$ into $k$, i.e.
\begin{equation*}
	M[l\to k]_{i,j} =
	\begin{cases}
		M_{i,k} + M_{i,l}	& \text{if } j = k, \\
		0				  				& \text{if } j = l, \\
		M_{i,j}						& \text{otherwise}
	\end{cases}
\end{equation*}

Given a partial function $\rho: \mathcal{A} \rightarrow \mathcal{B}$, the image of $\mathcal{A}$ under $\rho$ is $\rho(\mathcal{A}) = \{ \rho(a) | a \in \mathcal{A}, \rho(a) \neq \bot \}$, where $\bot$ stands for \qm{undefined}. 

In the proofs we will need to use several indices, and we will typically use the letters $i,j,h,k,l$  to range over rows and columns (usually $i,h,l$ will range over rows and $j, k$ will range over columns). Given a matrix $M$, we denote by $\maxj{M}{j}$ the maximum value of column $j$ over all rows $i$, i.e. $\maxj{M}{j} = \max_{i}M_{i,j}$, and by $\max^M = \max_{i,j} M_{i,j}$ the maximum element of the matrix.

Finally, given a graph $G = (\mathcal{V},\sim)$ with diameter $\diameter$, we denote by $\Delta_{G}$ the set $\{0, 1, \ldots, \diameter\}$. We may omit the subscript and denote the set only by $\Delta$ if the context does not allow any confusion. The notation $\border{\mathcal{V}}{d}{\textsl{v}}$ represents the subset of $\mathcal{V}$ of all elements $w$ at distance $d$ from $\textsl{v}$. For a fixed $d$, we define $n_d = |\border{\mathcal{V}}{d}{\textsl{v}}|$ as the number of vertices in $\mathcal{V}$ at distance $d$ from $\textsl{v}$, and we \revision{intend} that it will be always clear by the context to which set of vertices $\mathcal{V}$ and element $\textsl{v}$ the value $n_d$ is associated to.

\subsection{The matrix transformation}
\label{section:matrix-transformation}

The transformation on the channel matrices is divided into two steps, and we start this section by giving an overview of the process. Consider a channel whose matrix $M$ has at least as many columns as rows and assume that the input distribution is uniform. First, we transform $M$ into a matrix $M'$ in which each of the first $n$ columns has a maximum in the diagonal, and the remaining columns are all $0$'s. Second, under the assumption that the input domain is distance-regular or $\vtt$, we transform $M'$ into a matrix $M''$ whose diagonal elements are all the same, and coincide with the maximum element $\max^{M''}$ of $M''$. The transformation ensures that both $M'$ and $M''$ are valid channel matrices (i.e. each row is a probability distribution), also respect $\epsilon$-differential privacy, and preserve the value of the a posteriori entropy for the uniform input distribution. A scheme of the transformation is shown in Figure \ref{fig:mat-transf}, where Lemma~\ref{lemma:transform-diagonal} (\emph{Step $1$}) is applied on the first step of the transformation, and on the second step either Lemma~\ref{lemma:transform-dist-reg} (\emph{Step $2a$}) or Lemma~\ref{lemma:transform-vtt} (\emph{Step $2b$}) is applied, depending on whether the graph structure is distance-regular or $\vtt$, respectively.

\begin{figure}[!htb]
	\centering
	\includegraphics[width=0.65\columnwidth]{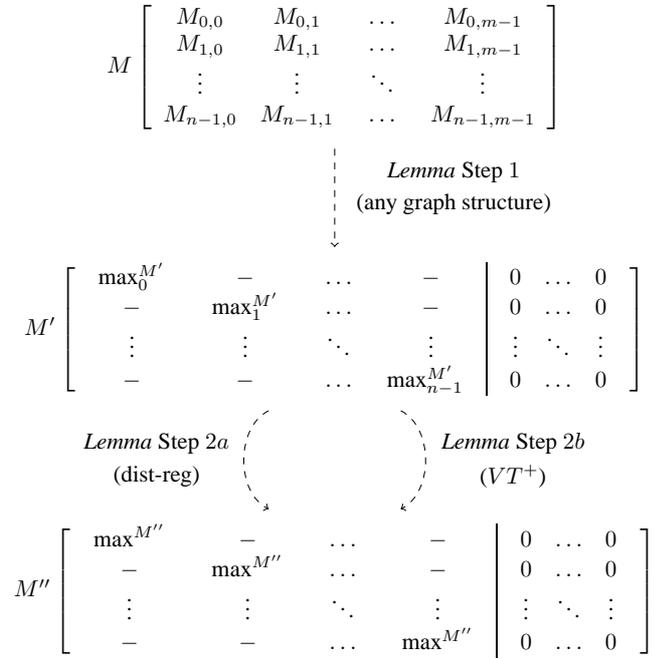}%
	\caption{\revision{Steps of the matrix transformation for distance-regular and $\vtt$ graphs}}%
	\label{fig:mat-transf}%
\end{figure}

We now present formally the transformation. The next Lemma is relative to the first step. 

\begin{lemma}[Step $1$]
	\label{lemma:transform-diagonal}
	Let $M$ be a channel matrix of dimensions $n \times m$ with at least as many columns as rows, and assume that $M$ satisfies $\epsilon$-differential privacy. Then it is possible to transform $M$ into a matrix $M'$ satisfying the following conditions: 
	
	\begin{enumerate}[(i)]
		\item \label{item:td-a} $M'$ is a valid channel matrix: \revision{$\sum_{j=0}^{m-1} M'_{i,j} = 1$ for all $0 \leq i \leq n-1$};
		\item \label{item:td-b} Each of the first $n$ columns has a maximum in the diagonal: $M'_{i,i}= \max^{M'}_{i}$ for all $0 \leq i \leq n-1$;
		\item \label{item:td-c} The $m-n$ last columns contain only $0$'s: $M'_{i,j}= 0$ for all $0 \leq i \leq n-1$ and all $n \leq j \leq m-1$;
		\item \label{item:td-d} $M'$ satisfies $\epsilon$-differential privacy: $\frac{M'_{i,j}}{M'_{h,j}} \leq e^\epsilon$ for all $0 \leq i,h \leq n-1$ \revision{s.t. $i \sim h$} and all $0 \leq j \leq m-1$;
		\item \label{item:td-e} $H^{M'}_\infty(A|B) = H^{M}_\infty(A|B)$, if $A$ has the uniform distribution.
	\end{enumerate}
	
\end{lemma}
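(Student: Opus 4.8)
The plan is to reduce the whole statement to a single scalar invariant. For the uniform input distribution, Smith's conditional min-entropy (cf. \eqref{eqn:SmithCondEntropyInfty-prem}) specializes to $H^M_\infty(A\mid B) = -\log\left(\frac{1}{n}\sum_{j=0}^{m-1}\maxj{M}{j}\right)$, because $p(a_i,b_j)=\frac{1}{n}M_{i,j}$. Hence the a posteriori min-entropy depends on $M$ only through the quantity $V(M):=\sum_{j=0}^{m-1}\maxj{M}{j}$, the sum of the column maxima. So to secure \ref{item:td-e} it suffices to exhibit a transformation that leaves $V$ unchanged, while producing the diagonal shape of \ref{item:td-b}--\ref{item:td-c} and preserving validity \ref{item:td-a} and $\epsilon$-differential privacy \ref{item:td-d}.

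First I would introduce two elementary column operations and check that each preserves all of validity, $\epsilon$-differential privacy, and $V$ (and therefore the a posteriori min-entropy). The first is a permutation of columns, which is clearly harmless: it merely reorders the column maxima and acts column-wise, so $V$, the per-column differential-privacy ratios, and the row sums are all untouched. The second is the collapse $M[l\to k]$, applied only when columns $k$ and $l$ attain their maxima in a common row $i^\star$. Collapsing moves mass within each row, so \ref{item:td-a} survives; the bound $M_{i,k}+M_{i,l}\le e^{\epsilon}(M_{h,k}+M_{h,l})$ follows by adding the two differential-privacy inequalities for $k$ and for $l$ whenever $i\sim h$, so \ref{item:td-d} survives; and since $M_{i^\star,k}+M_{i^\star,l}=\maxj{M}{k}+\maxj{M}{l}\ge M_{h,k}+M_{h,l}$ for every row $h$, the merged column has its maximum at $i^\star$ with value $\maxj{M}{k}+\maxj{M}{l}$ while column $l$ becomes identically zero; thus the contribution $\maxj{M}{k}+\maxj{M}{l}$ of these two columns to $V$, and hence $V$ itself, is preserved.

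Next I would drive $M$ to the diagonal form. For every column $j$ with $\maxj{M}{j}>0$ I pick one row $r(j)$ realizing the maximum (breaking ties arbitrarily) and I collapse together, one pair at a time, all columns sharing the same value of $r(j)$; by the previous paragraph each such collapse is legitimate since the columns in a group all attain their maximum at the common row $r(j)$. Let $R\subseteq\{0,\dots,n-1\}$ be the set of rows so obtained and $k=|R|$. After this step exactly $k$ columns are nonzero, each with its maximum on a distinct row of $R$, and the remaining $m-k$ columns are identically zero. Finally I permute the columns: each of the $k$ nonzero columns is sent to the position equal to its maximizing row, and the $m-k$ zero columns are used to fill all remaining positions, namely the $n-k$ diagonal positions of the rows in $\{0,\dots,n-1\}\setminus R$ together with the $m-n$ trailing positions. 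This fixes $M'$: for $i\in R$ column $i$ has its maximum on the diagonal by construction; for $i\notin R$ column $i$ is a zero column, so $\maxj{M'}{i}=0=M'_{i,i}$; and the trailing columns are zero. Conditions \ref{item:td-a}--\ref{item:td-e} then follow from the invariance of validity, $\epsilon$-differential privacy and $V$ established above.

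The step I expect to be delicate is the bookkeeping around rows that never realize any column maximum. A row may be strictly dominated in every column (this already happens for some $3\times 3$ matrices), and then collapsing and permuting alone cannot place a genuine maximum on its diagonal. The resolution, which is the crux of the counting argument, is that such a row can be handed one of the surplus zero columns: a zero column has maximum $0$, attained in every row, so putting it on the diagonal vacuously fulfils \ref{item:td-b}. What must be verified is that the zero columns are exactly enough, and indeed the number available, $m-k$, matches the number of positions to fill, $(n-k)+(m-n)=m-k$, so the final permutation is well defined. The hypothesis $n\le m$ (ensured in any case by padding with zero columns, as noted in the notation conventions) is what guarantees that the $n$ diagonal positions exist among the $m$ columns and that $m-n\ge 0$, making this assignment possible.
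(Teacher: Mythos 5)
Your proof is correct and follows essentially the same route as the paper's: collapse together columns whose maxima lie in a common row, permute the resulting columns so each maximum sits on the diagonal, hand the surplus zero columns to the rows that never realize a maximum, and observe that validity, the per-column $\epsilon$-ratio bounds, and the sum of column maxima (hence the a posteriori min-entropy under the uniform prior) are invariant under both operations. Your write-up is in fact somewhat more explicit than the paper's on the two delicate points — why a collapse of columns with a common maximizing row preserves the sum of maxima, and why the $m-k$ zero columns exactly cover the $(n-k)+(m-n)$ remaining positions — but the underlying argument is the same.
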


\begin{proof}
	We first show that there exists a matrix $N$ of dimensions $n\times m$, and an injective total function $\rho: \mathcal{A} \rightarrow \mathcal{B}$ such that\revision{~\footnote{\revision{To avoid a heavy notation, here we will use the convention established in Section~\ref{section:notation} and denote $N_{a_i,b_j}$, where $a_i \in \mathcal{A}$ and $b_j \in \mathcal{B}$, simply by $N_{i,j}$.}}}:
	
	\begin{itemize}
		\item $N_{i,\rho(i)} = 	\maxj{N}{\rho(i)}$ for all $i \in \mathcal{A}$, and
		\item $N_{i,j} = 0$ for all $j \in \mathcal{B} \backslash \rho(\mathcal{A})$ and all $i \in \mathcal{A}$.
	\end{itemize}
	
	We iteratively construct $\rho$ and $N$ \qm{column by column} via a sequence of approximating partial functions $\rho_s$ and matrices $N_s$ ($0\leq s \leq m$).
	
	\begin{itemize}
	
		\item \emph{Initial step} ($s = 0$)\\
			
			Define $\rho_0(i) = \bot$ for all $i \in \mathcal{A}$ and $N_0 = M$. \\	
		
		\item \emph{$s^{th}$ step} ($1 \leq s \leq m$)\\
			
			Let $j$ be the $s$-th column and let $i \in \mathcal{A}$ be one of the rows containing the maximum value of column $j$ in $M$, i.e. $M_{i,j} = \maxj{M}{j}$. There are two cases:
			\begin{enumerate}
				\item $\rho_{s-1}(i) = \bot$. We define:
					\begin{align*}
						\rho_{s} & = \rho_{s-1} \cup \{ i \mapsto j \} & \text{and} \\
						N_s      & = N_{s-1}                           &
					\end{align*}
					
				\item \label{item:b} $\rho_{s-1}(i) = k \in \mathcal{B}$. We \qm{collapse} column $j$ into column $k$ \revision{(recall the notation introduced in Section~\ref{section:notation})}:
					\begin{align*}
						\rho_s & = \rho_{s-1}      & \text{and} \\
						N_s    &= N_{s-1}[j \to k] & 
					\end{align*}
			\end{enumerate} 
			
	\end{itemize}

	Since the operation of \qm{collapsing} assigns $j$ in $\rho_s$ and then zeroes the column $j$ in $N_s$, all unassigned columns $\mathcal{B} \setminus \rho_m(\mathcal{A})$ must be zero in $N_m$. We finish the construction by taking $\rho$ to be the same as $\rho_m$ after assigning to each unassigned row one of the columns in $\mathcal{B} \setminus \rho_m(\mathcal{A})$ (there are enough such columns since $n \leq m$). We also take $N = N_m$. Note that by construction $N$ is a channel matrix.

	Thus we get a matrix $N$ and a function $\rho: \mathcal{A} \rightarrow \mathcal{B}$ which, by construction, is injective and satisfies $N_{i,\rho(i)} = \maxj{N}{\rho(i)}$ for all $i \in \mathcal{A}$, and $N_{i,j} = 0$ for all $j \in \mathcal{B} \backslash \rho(\mathcal{A})$ and all $i \in \mathcal{A}$. Furthermore, $N$ provides $\epsilon$-differential privacy (condition  (\ref{item:td-d})) because each column is a linear combination of columns of $M$. It is also easy to see that $\sum_{j} \maxj{N}{j} = \sum_{j} \maxj{M}{j}$, and from that it immediately follows that $H_\infty^{N}(A|B) = H_\infty^{M}(A|B)$ (recall that $A$ has the uniform distribution and therefore the a posteriori entropy is a function of the sum of the maximum of each column), so condition (\ref{item:td-e}) is satisfied.

	Finally, we create our claimed matrix $M'$ from $N$ just by rearranging the columns according to $\rho$. Note that the order of the columns is irrelevant, since any permutation represents the same conditional probabilities and therefore the same channel\revision{~\footnote{\revision{Note that by rearranging the columns of the channel matrix we may change the marginal probability of the outputs. This, however, does not pose a problem for our purposes, since the maximum a posteriori entropy of the channel will be maintained. If we want the marginal probability of the outputs to remain unchanged, we can just \qm{relabel} the columns after the rearrangement so they will match the correct outputs.}}}. The resulting matrix $M'$ has all maxima in the diagonal $M'_{i,i}$ for $0 \leq i \leq n-1$, and every element in the columns $n \leq j \leq m-1$ are $0$, which satisfies conditions (\ref{item:td-b}) and (\ref{item:td-c}). Also, since $N$ is a valid channel matrix, so is $M'$ and condition (\ref{item:td-a}) is also satisfied.

\end{proof}

The second step of the transformation depends on the graph structure of $(\mathcal{A}, \sim)$. But before we discuss this step, let us introduce a notion of distance between elements in $\mathcal{B}$, derived from the notion of distance between elements in $\mathcal{A}$. Let $M$ be a channel matrix in which the maximum of each column is in the diagonal, as in Figure~\ref{fig:matrix-distance}. Then we define the distance between two elements $j_1, j_2 \in \mathcal{B}$ as follows:
\begin{equation}
	\label{eq:dist-b}
	d(j_1,j_2) =
		\begin{cases}
			d(i_1,i_2) & \text{if there are $i_1, i_2 \in \mathcal{A}$ such that $i_1 = j_1$ and $i_2 = j_2$,} \\
			\bot       & \text{otherwise}.
		\end{cases}
\end{equation} 

Note that the range of the notion of distance defined above is the set $\Delta = \{0, 1, \ldots, \diameter\}$, where $\diameter$ is the diameter of $(\mathcal{A},\sim)$. Based on \eqref{eq:dist-b}, we define the set $\border{\mathcal{B}}{d}{j}$ as the subset of $\mathcal{B}$ of elements at distance $d$ from an element $j \in \mathcal{B}$. It is clear that for any $j \in \mathcal{B}$, we have $\bigcup_{d \in \Delta} \border{\mathcal{B}}{d}{j} = \mathcal{B}$.

\begin{figure}[!htb]%
	\centering
	\includegraphics[width=0.75\columnwidth]{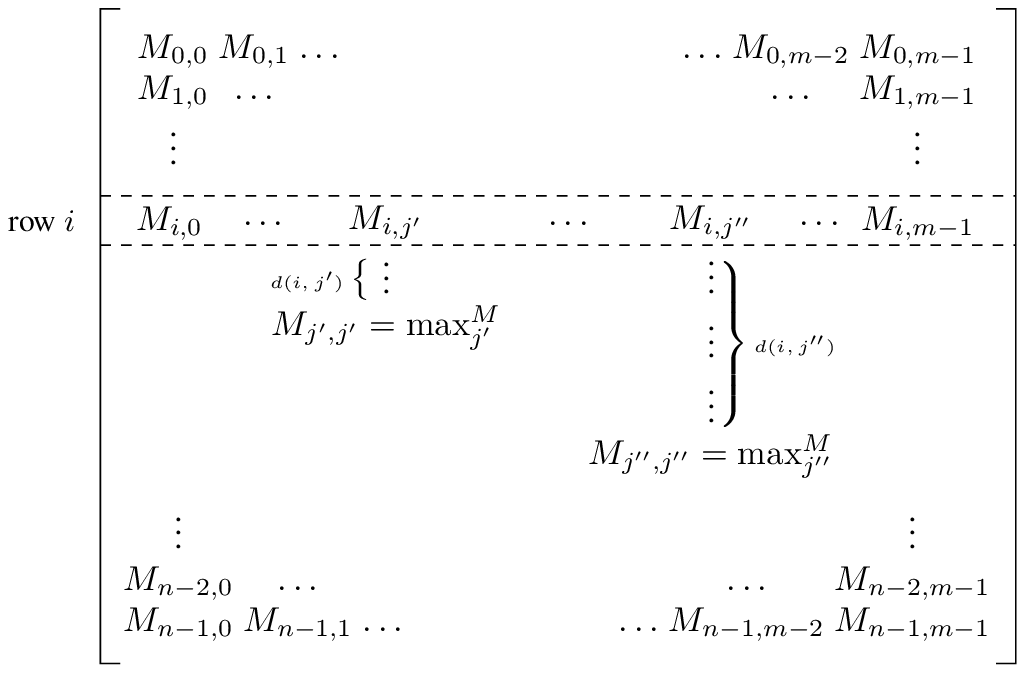}%
	\caption{The relation between elements of a row $i$ and the elements in the diagonal}%
	\label{fig:matrix-distance}%
\end{figure}

We can extend the adjacency relation $\sim$ on $\mathcal{A}$ to an adjacency relation $\sim'$ on $\mathcal{B}$ by using the notion of distance of \eqref{eq:dist-b}. For any $j_1,j_2 \in \mathcal{B}$, we have $j_1 \sim' j_2$ if and only if $d(j_1,j_2) = 1$. Therefore, if $(\mathcal{A},\sim)$ is distance-regular, so it is $(\mathcal{B},\sim')$.

Now we are ready to present the lemma for the second step of the transformation, in the case of distance-regular graphs.

\begin{lemma}[Step $2a$]
	\label{lemma:transform-dist-reg}
	Let $M'$ be a channel matrix of dimensions $n \times m$ with at least as many columns as rows, and assume that $M'$ satisfies $\epsilon$-differential privacy. Let $\sim$ be an adjacency relation on $\mathcal{A}$ such that the graph $(\mathcal{A},\sim)$ is connected and distance-regular. Assume that the maximum value of each column is on the diagonal, that is $M_{i,i} = \maxj{M}{i}$ for all $i\in \mathcal{A}$, and that all the last $m-n$ columns have only zero elements, i.e. $M'_{i,j} = 0$ for all $0\leq i \leq n-1$ and $n \leq j \leq m-1$. Then it is possible to transform $M'$ into a matrix $M''$ satisfying the following conditions: 
	
	\begin{enumerate}[(i)]
		\item \label{item:dr-a} $M''$ is a valid channel matrix: \revision{$\sum_{j=0}^{m-1} M''_{i,j} = 1$ for all $0 \leq i \leq n-1$};			
		\item \label{item:dr-b} The elements of the diagonal are all the same, and are equal to the maximum of the matrix: $M''_{i,i}= \max^{M''}$ for all $0 \leq i \leq n-1$;
		\item \label{item:dr-c} The $m-n$ last columns contain only $0$'s: $M''_{i,j}= 0$ for all $0 \leq i \leq n-1$ and all $n \leq j \leq m-1$;
		\item \label{item:dr-d} $M''$ satisfies $\epsilon$-differential privacy: $\frac{M'_{i,j}}{M'_{h,j}} \leq e^\epsilon$ for all $0 \leq i,h \leq n-1$ \revision{s.t. $i \sim h$} and all $0 \leq j \leq m-1$;
		\item \label{item:dr-e} $H^{M''}_\infty(A|B) = H^{M'}_\infty(A|B)$, if $A$ has the uniform distribution.
	\end{enumerate}

\end{lemma}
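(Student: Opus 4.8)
The goal is to transform the matrix $M'$—which by Step 1 already has its column maxima on the diagonal and the last $m-n$ columns zeroed—into a matrix $M''$ whose diagonal entries are all equal to the global maximum, while preserving the channel-matrix property, $\epsilon$-differential privacy, and the a posteriori min-entropy under the uniform input. The natural tool is the distance-regularity of $(\mathcal{A},\sim)$ together with the $\vtt$-style symmetry already established for database graphs in Proposition~\ref{prob:database-is-dr}. My approach is to build $M''$ by \emph{averaging} $M'$ over the automorphisms (or, for distance-regular graphs, over the orbits determined by distance), so that the regularity of the graph forces the diagonal to become constant.

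\textbf{Construction via symmetrization.} First I would observe that, because the maximum of each column sits on the diagonal and $(\mathcal{A},\sim)$ is distance-regular, the notion of distance $d(j_1,j_2)$ between output symbols defined in \eqref{eq:dist-b} is well-defined on the first $n$ columns and makes $(\mathcal{B},\sim')$ distance-regular too. The key structural fact I would exploit is that in a distance-regular graph the intersection numbers $b_d, c_d$ depend only on the distance $d$, not on the particular pair of vertices. I would then define $M''$ by replacing each diagonal entry $M'_{i,i}$ with the \emph{average} of all diagonal entries, $\max^{M''} := \frac{1}{n}\sum_{i=0}^{n-1} M'_{i,i}$, and correspondingly redistributing the off-diagonal mass so that entries at a fixed distance $d$ from the diagonal are averaged together. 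Concretely, I would set $M''_{i,j}$ to depend only on $d(i,j)$, by averaging $M'_{i',j'}$ over all pairs $(i',j')$ with $d(i',j') = d(i,j)$. The distance-regular structure guarantees that for each row the number of entries at each distance $d$ is the constant $n_d$, so this averaging preserves row sums (establishing condition (\ref{item:dr-a})) and yields a constant diagonal equal to $\max^{M''}$ (condition (\ref{item:dr-b})); the zeroed columns are untouched, giving (\ref{item:dr-c}).

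\textbf{Preserving privacy and entropy.} For condition (\ref{item:dr-d}), I would argue that $\epsilon$-differential privacy is preserved under averaging: each entry of $M''$ is a convex combination of entries of $M'$ lying in a common \qm{distance class,} and since the ratio bound $M'_{i,j}/M'_{h,j} \leq e^\epsilon$ holds entrywise for adjacent $i \sim h$, it is inherited by convex combinations of rows (a ratio bound is preserved under the mediant/averaging inequality). For condition (\ref{item:dr-e}), the crucial point is that the a posteriori min-entropy under the uniform distribution is—as recalled after \eqref{eqn:SmithCondEntropyInfty-prem} and in the results of Braun et al.—a function solely of $\sum_j \maxj{M}{j}$, the sum of column maxima. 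Since the symmetrization permutes and averages entries within distance classes without changing which entries are the column maxima (the maxima stay on the diagonal and their \emph{sum} is preserved by the averaging), $\sum_j \maxj{M''}{j} = \sum_j \maxj{M'}{j}$, so $H^{M''}_\infty(A|B) = H^{M'}_\infty(A|B)$.

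\textbf{The main obstacle.} The delicate step is verifying that the averaging over distance classes genuinely produces a \emph{valid} channel matrix with the maxima still on the diagonal and the global maximum equal to the averaged diagonal value. The risk is that after averaging, some off-diagonal entry in a near distance class could exceed the new diagonal value, breaking condition (\ref{item:dr-b}). I expect this to be controlled precisely by the $\epsilon$-differential privacy bound combined with the fact that the diagonal already held the column maxima in $M'$: since adjacent entries differ by at most a factor $e^\epsilon$ and the diagonal dominates each column, the averaged distance-$1$ entries cannot overtake the averaged diagonal. Making this inequality rigorous—tracking how the constant intersection numbers $b_d, c_d$ bound the redistribution of mass across distance shells so that monotonicity in $d$ is maintained—is where the real work lies, and it is exactly here that distance-regularity (as opposed to mere regularity) is indispensable.
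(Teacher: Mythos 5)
Your construction is exactly the one the paper uses: $M''_{i,j}$ is defined as the average of $M'_{h,k}$ over all pairs $(h,k)$ with $d(h,k)=d(i,j)$, which is well defined because distance-regularity makes $|\border{\mathcal{A}}{d}{k}|$ independent of $k$; conditions (\ref{item:dr-a}), (\ref{item:dr-c}) and (\ref{item:dr-e}) then come out essentially as in the paper. However, you have mislocated the difficulty, and the step you dispatch in one line is the one with a genuine gap.

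The part you flag as \qm{the main obstacle} --- that the averaged diagonal still dominates every other entry --- is in fact immediate and needs neither the differential-privacy bound nor any monotonicity in $d$: since the column maxima of $M'$ sit on the diagonal, $M'_{h,k}\le M'_{k,k}$ for every $h$, so averaging over the $n\,n_d$ pairs in the distance-$d$ shell gives $M''_{i,j}\le \frac{1}{n\,n_d}\sum_k n_d\, M'_{k,k} = \frac{1}{n}\sum_k M'_{k,k} = M''_{i,i}$. No \qm{redistribution of mass across shells} has to be tracked for condition (\ref{item:dr-b}).

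The real gap is condition (\ref{item:dr-d}). For adjacent $i\sim i'$, the entries $M''_{i,j}$ and $M''_{i',j}$ are averages over \emph{different} distance shells, since $d(i,j)$ and $d(i',j)$ may differ by one; they are not convex combinations of a common family of entries with pairwise ratio bounds, so the mediant/averaging inequality does not apply as you state it. What is needed is a comparison between the shell-sums $\sum_{h\in\border{\mathcal{A}}{d}{k}}M'_{h,k}$ and $\sum_{h\in\border{\mathcal{A}}{d+1}{k}}M'_{h,k}$. The paper obtains it by summing the adjacency inequality $M'_{h,k}\le e^\epsilon M'_{h',k}$ over the $b_d$ neighbours $h'$ of $h$ lying in the outer shell, then over all $h$ in the inner shell, noting that each outer vertex is counted exactly $c_{d+1}$ times, and finally using the double-counting identity $c_{d+1}\,|\border{\mathcal{A}}{d+1}{k}| = b_d\,|\border{\mathcal{A}}{d}{k}|$ to convert the resulting inequality into $M''_{i,j}\le e^\epsilon M''_{i',j}$. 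This intersection-number argument, organized as a three-way case analysis on whether $d(i,j)$ equals $d(i',j)$, $d(i',j)-1$ or $d(i',j)+1$, is exactly where distance-regularity is indispensable, and it is absent from your proposal; without it the claim that averaging preserves $\epsilon$-differential privacy is unsupported.
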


\begin{proof}	
	
	Let us define $\mathcal{B}^{*} = \{0, 1, \ldots, n-1\}$, i.e. the subset of $\mathcal{B}$ that excludes the zero-ed columns of $M'$ from $n$ to $m-1$. Note that we can safely use the set $\mathcal{B}^*$ instead of $\mathcal{B}$ in this proof because the zero-ed columns do not contribute to the a posteriori entropy, and trivially respect $\epsilon$-differential privacy.
	
	We then define the matrix $M''$ as follows.
	\begin{equation*}
		M''_{i,j} = 
			\begin{cases}
				\frac{1}{n|\border{\mathcal{A}}{d(i,j)}{i}|} \sum_{k \in \mathcal{B}^*} \sum_{h \in \border{\mathcal{\mathcal{A}}}{d(i,j)}{k}} M'_{h,k} & \text{if $j \in \mathcal{B}^*$,} \\
				0 & \text{otherwise.}
			\end{cases}
	\end{equation*}
	
	By the definition above, condition (\ref{item:dr-c}) is immediately satisfied. We then show that this definition also induces a channel matrix. We have 
	\begin{align*}
		\sum_{j \in \mathcal{B}^*} M''_{i,j} &=
			\sum_{j \in \mathcal{B}^*} 
				\frac{1}{n|\border{\mathcal{A}}{d(i,j)}{i}|} \sum_{k \in \mathcal{B}^*} \sum_{h \in \border{\mathcal{A}}{d(i,j)}{k}} M'_{h,k} \\ 
				& = \frac{1}{n} \sum_{k \in \mathcal{B}^*} \sum_{j \in \mathcal{B}^*} \frac{1}{|\border{\mathcal{A}}{d(i,j)}{i}|}\sum_{h \in \border{\mathcal{A}}{d(i,j)}{k}} M'_{h,k}
	\intertext{%
		Recall that $\Delta=\{0,\ldots,\delta\}$, where $\delta$ is the diameter of the graph. Note that for every $i$, $\mathcal{B}^* = \bigcup_{d\in \Delta}\border{\mathcal{B}^*}{d}{i}$, \revision{and for} different values of $d$ the sets $\border{\mathcal{B}^*}{d}{i}$ are disjoint. Therefore the summation over $j \in \mathcal{B}^*$ can be split as follows
	}
		& = \frac{1}{n} \sum_{k \in \mathcal{B}^*} \sum_{d\in \Delta}\sum_{j\in\border{\mathcal{B}^*}{d}{i}} \frac{1}{|\border{\mathcal{A}}{d}{i}|}\sum_{h \in \border{\mathcal{A}}{d}{k}} M'_{h,k} \\
		& = \frac{1}{n} \sum_{k \in \mathcal{B}^*} \sum_{d\in \Delta} \sum_{h \in \border{\mathcal{A}}{d}{k}} M'_{h,k} \sum_{j\in\border{\mathcal{B}^*}{d}{i}}\frac{1}{|\border{\mathcal{A}}{d}{i}|}
	\intertext{%
		as  $\displaystyle \sum_{j\in \border{\mathcal{B}^*}{d}{i}}\frac{1}{|\border{\mathcal{A}}{d}{i}|} =1$, we obtain 
	}
		& = \frac{1}{n} \sum_{k \in \mathcal{B}^*} \sum_{d\in \Delta} \sum_{h \in \border{\mathcal{A}}{d}{k}} M'_{h,k} 
	\intertext{%
		and now the summations over $h$ can be joined together
	}
		& = \frac{1}{n} \sum_{k \in \mathcal{B}^*} \sum_{h \in \mathcal{A}} M'_{h,k} \\
		& = 1
	\end{align*}
	\noindent which implies that condition (\ref{item:dr-a}) is satisfied. 
	
	We now turn our attention to the elements of the diagonal. We have
	\begin{equation*}
		M''_{i,i} = \frac{1}{n}\sum_{h\in \mathcal{A}}M'_{h,h}
	\end{equation*}
	
	\noindent and so they are all identical. To fulfill condition (\ref{item:dr-b}) we still need to show that $M''_{i,i} = \maxj{M''}{i}$ for all $i\in \mathcal{A}$.
	\begin{align*}
		M''_{i,j} & =    \frac{1}{n |\border{\mathcal{A}}{d(i,j)}{i}|} \sum_{k \in \mathcal{B}^*} \sum_{h \in \border{\mathcal{A}}{d(i,j)}{k}} M'_{h,k} & \\
						  & \leq \frac{1}{n |\border{\mathcal{A}}{d(i,j)}{i}|} \sum_{k \in \mathcal{B}^*} \sum_{h \in \border{\mathcal{A}}{d(i,j)}{k}} M'_{h,h} & \text{(since the biggest element} \\[-7mm]
						  &                                                                                                                          & \text{is in the diagonal)} \\
						  & =    \frac{1}{n} \sum_{k \in \mathcal{B}^*} M'_{h,h} \frac{1}{|\border{\mathcal{A}}{d(i,j)}{i}|} \sum_{h \in \border{\mathcal{A}}{d(i,j)}{k}} 1 & \\
						  & =    \frac{1}{n} \sum_{k \in \mathcal{B}^*} M'_{h,h} \frac{|\border{\mathcal{A}}{d(i,j)}{k}|}{|\border{\mathcal{A}}{d(i,j)}{i}|} & \\
						  & =    \frac{1}{n} \sum_{k \in \mathcal{B}^*} M'_{h,h} \cdot 1 & \text{(since the graph} \\[-6mm]
						  &                                                            & \text{is distance-regular)} \\
						  & =    M''_{i,i}
	\end{align*}
	
	Since $A$ has the uniform distribution, $H_\infty^{M'}(A|B) = H_\infty^{M''}(A|B)$ (condition (\ref{item:dr-e})) follows immediately. 
		
	It remains to show that $M''$ satisfies $\epsilon$-differential privacy (condition (\ref{item:dr-d})). We need to show that
	\begin{equation*}
		M''_{i,j} \le e^\epsilon M''_{i',j} \quad \quad  \forall j \in \mathcal{B}, i,i' \in \mathcal{A} : i \sim i'
	\end{equation*}
	
	From the triangular inequality we have (since $d(i,i')=1$)
	\begin{equation*}
		d(i',j) - 1 \le d(i,j) \le d(i',j) + 1
	\end{equation*}
	
	Thus, there are $3$ possible cases: 
	
	\begin{enumerate}
	
		\item $d(i,j)=d(i',j)$ \\ \\
			The result is immediate since $M''_{i,j}=M''_{i',j}$.
		
		\item $d(i,j)=d(i',j)-1$\\ \\
			We define the set of neighbors of $h$ \qm{one step further away} from $k$:
			\begin{equation*}
				\mathcal{F}_{h,k} = \{ h' \sim h \ | \ h' \in \border{\mathcal{A}}{d(h,k)+1}{k} \}
			\end{equation*}	
	
			Note that $|\mathcal{F}_{h,k}| = b_{d(h,k)}$ since the graph is distance-regular. The following inequalities hold for any $h,h' \in\mathcal{A}$:
			\begin{align}
				M'_{h,k} &\le e^\epsilon M'_{h',k}	\quad \quad \quad \quad \forall h' \in \mathcal{F}_{h,k} & \text{(diff. privacy)}\Rightarrow \notag\\
				b_{d(h,k)} M'_{h,k} &\le e^\epsilon \sum_{h'\in \mathcal{F}_{h,k}} M'_{h',k} &\text{(sum of the above)} \notag
				\intertext{%
				we now fix a distance $d$ and sum the above inequalities for all vertices at distance $d$ from $h$:
				}
				\sum_{h \in \border{\mathcal{A}}{d}{k}} b_d M'_{h,k}
				&\le e^\epsilon \sum_{h \in \border{\mathcal{A}}{d}{k}} \sum_{h'\in \mathcal{F}_{h,k}} M'_{h',k}
				&& \notag
				\intertext{%
				Note that each $h' \in \border{\mathcal{A}}{d+1}{k}$ is contained in $\mathcal{F}_{h,k}$ for exactly $c_{d+1}$ different $h \in \border{\mathcal{A}}{d}{k}$. So the right-hand side above sums all vertices of  $\border{\mathcal{A}}{d+1}{k}$ exactly $c_{d+1}$ times each. Thus we get that for all $k \in \mathcal{B}^*, d\in \Delta$:
				}
				\label{eq1-distreg}
				b_d\sum_{h \in \border{\mathcal{A}}{d}{k}} M'_{h,k}
				&\le e^\epsilon\ c_{d+1} \sum_{h \in \border{\mathcal{A}}{d+1}{k}} M'_{h,k}
			\end{align}
	
			Finally, note that $c_{d+1}|\border{\mathcal{A}}{d+1}{i}| = b_d|\border{\mathcal{A}}{d}{i}|$ (both sides count the number of edges between a vertex at distance $d$ and a vertex at distance $d+1$). So we have
			\begin{align*}
				M''_{i,j} & = \frac{1}{n|\border{\mathcal{A}}{d}{i}|} \sum_{k \in \mathcal{B}^*} \sum_{h \in \border{\mathcal{A}}{d}{k}} M'_{h,k}	\\
									&   \leq e^\epsilon\ \frac{1}{n|\border{\mathcal{A}}{d}{i}|} \frac{c_{d+1}}{b_d} \sum_{k \in \mathcal{B}^*} \sum_{h \in \border{\mathcal{A}}{d+1}{k}} M'_{h,k} & \text{(from \eqref{eq1-distreg})} \\
								  & = e^\epsilon\ \frac{1}{n|\border{\mathcal{A}}{d+1}{i}|} \sum_{k \in \mathcal{B}^*} \sum_{h \in \border{\mathcal{A}}{d+1}{k}} M'_{h,k}	\\
									& = e^\epsilon M''_{i',j}
			\end{align*}
		
		\item $d(i,j)=d(i',j)+1$ \\ \\
			This case is analogous to the case case where $d(i,j)=d(i',j)-1$.
			
	\end{enumerate}
	
\end{proof}

The next lemma is relative to the second step of the transformation, for the case of $\vtt$ graphs. 

\begin{lemma}[Step $2b$]
	\label{lemma:transform-vtt}
	Consider a channel matrix $M'$ satisfying the assumptions of Lemma \ref{lemma:transform-dist-reg}, except for the assumption about distance-regularity, which we replace by the assumption that $(\mathcal{A}, \sim)$ is $\vtt$. Then it is possible to transform $M'$ into a matrix $M''$ with the same properties as in Lemma~\ref{lemma:transform-dist-reg}.
\end{lemma}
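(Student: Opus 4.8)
The plan is to follow the same two-layer strategy as the proof of Lemma~\ref{lemma:transform-dist-reg}, but to replace the averaging over distance classes (which relied on the intersection numbers $b_d$, $c_d$ of a distance-regular graph) by an averaging over the family of automorphisms witnessing the $\vtt$ property. Concretely, since the maximum of each of the first $n$ columns of $M'$ lies on the diagonal, I identify the nonzero columns $\mathcal{B}^{*}=\{0,\ldots,n-1\}$ with the rows $\mathcal{A}$ through this diagonal, so that an automorphism $\sigma$ of $(\mathcal{A},\sim)$ acts simultaneously on rows and on columns. Let $\sigma_0,\ldots,\sigma_{n-1}$ be the automorphisms given by the $\vtt$ hypothesis, and define
\begin{equation*}
  M''_{i,j}=
    \begin{cases}
      \frac{1}{n}\sum_{k=0}^{n-1}M'_{\sigma_k(i),\,\sigma_k(j)} & \text{if } j\in\mathcal{B}^{*},\\
      0 & \text{otherwise.}
    \end{cases}
\end{equation*}
This immediately gives condition (iii), and the rest of the proof consists of checking the remaining four conditions.

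The easy conditions follow from the symmetry of the construction. For the channel-matrix condition (i), I would fix $k$ and note that, as $j$ ranges over $\mathcal{B}^{*}$, the image $\sigma_k(j)$ ranges over all of $\mathcal{A}$ because $\sigma_k$ is a permutation; hence each inner sum $\sum_{j\in\mathcal{B}^{*}}M'_{\sigma_k(i),\sigma_k(j)}$ equals the full row sum of $M'$, which is $1$, and averaging over $k$ preserves this. The crucial use of the $\vtt$ hypothesis appears on the diagonal: since $\{\sigma_k(i)\mid 0\le k\le n-1\}=\mathcal{A}$ for every $i$, I get
\begin{equation*}
  M''_{i,i}=\frac{1}{n}\sum_{k=0}^{n-1}M'_{\sigma_k(i),\sigma_k(i)}=\frac{1}{n}\sum_{h\in\mathcal{A}}M'_{h,h},
\end{equation*}
which is independent of $i$, establishing that all diagonal entries coincide. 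To upgrade this to condition (ii) I would bound $M'_{\sigma_k(i),\sigma_k(j)}\le M'_{\sigma_k(j),\sigma_k(j)}=\maxj{M'}{\sigma_k(j)}$ (the maximum of each column of $M'$ is on its diagonal) and average, obtaining $M''_{i,j}\le M''_{j,j}=M''_{i,i}$, so the common diagonal value is indeed $\max^{M''}$. Finally, because $A$ is uniform and the maxima of $M''$ sit on the diagonal, $\sum_{j}\maxj{M''}{j}=\sum_{j}M''_{j,j}=\sum_{h}M'_{h,h}=\sum_{j}\maxj{M'}{j}$, and since the a posteriori min-entropy depends only on this sum, condition (v) follows.

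For the differential-privacy condition (iv) the key observation is that automorphisms preserve adjacency: if $i\sim i'$ then $\sigma_k(i)\sim\sigma_k(i')$ for every $k$, so the $\epsilon$-differential privacy of $M'$ yields $M'_{\sigma_k(i),\sigma_k(j)}\le e^{\epsilon}M'_{\sigma_k(i'),\sigma_k(j)}$ termwise. Summing over $k$ and dividing by $n$ gives $M''_{i,j}\le e^{\epsilon}M''_{i',j}$, as required.

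I expect the construction itself --- recognizing that the right replacement for distance-class averaging is orbit-averaging over the $\vtt$ automorphism family, and that this family must be applied \emph{jointly} to rows and columns --- to be the conceptual heart of the argument. The main technical care is in justifying the column action: I must check that identifying columns with vertices through the diagonal is consistent with letting $\sigma_k$ permute the columns, and that this is exactly what makes the per-term adjacency argument go through. In fact, compared with the distance-regular case, the $\vtt$ case turns out cleaner, since preservation of adjacency by automorphisms removes the need for the combinatorial identities $c_{d+1}|\border{\mathcal{A}}{d+1}{i}|=b_d|\border{\mathcal{A}}{d}{i}|$ used in Lemma~\ref{lemma:transform-dist-reg}.
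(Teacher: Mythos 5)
Your construction is exactly the one used in the paper: the same orbit-average $M''_{i,j}=\frac{1}{n}\sum_{k}M'_{\sigma_k(i),\sigma_k(j)}$ over the $\vtt$ automorphism family acting jointly on rows and columns, followed by the same verifications (permutation argument for row sums, diagonal bound via the column maxima of $M'$, termwise adjacency preservation for differential privacy, and preservation of the sum of column maxima for the a posteriori entropy). The proposal is correct and matches the paper's proof essentially step for step.
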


\begin{proof}

	Let us define $\mathcal{B}^{*} = \{0, 1, \ldots, n-1\}$, i.e. the subset of $\mathcal{B}$ that excludes the zero-ed columns of $M'$ from $n$ to $m-1$. Note that we can safely use the set $\mathcal{B}^*$ instead of $\mathcal{B}$ in this proof because the zero-ed columns do not contribute to the a posteriori entropy, and trivially respect $\epsilon$-differential privacy.
	
	We then define the matrix $M''$ as follows.
	\begin{equation*}
		M''_{i,j} = 
			\begin{cases}
				\frac{1}{n} \sum_{h=0}^{n-1}M'_{\sigma_{h}(i),\sigma_{h}(j)} & \text{if $j \in \mathcal{B}^*$,} \\
				0 & \text{otherwise.}
			\end{cases}
	\end{equation*}
	
	By the definition above, condition (\ref{item:dr-c}) is immediately satisfied. We then show that this definition also induces a channel matrix. Recall that \revision{$\{\sigma_{h}(j)|0\leq h \leq n-1\}=\mathcal{A}$} since the graph is $\vtt$.
	\begin{align*}
		\sum_{j=0}^{n-1} M''_{i,j} & = \sum_{j=0}^{n-1} \frac{1}{n} \sum_{h=0}^{n-1}M'_{\sigma_{h}(i),\sigma_{h}(j)} & \\
															 & = \sum_{h=0}^{n-1} \frac{1}{n} \sum_{j=0}^{n-1}M'_{\sigma_{h}(i) ,\sigma_{h}(j)} & \\
															 & = \sum_{h=0}^{n-1} \frac{1}{n} \cdot 1 & \text{\revision{(since $\sigma_{h}$ is a permutation)}} \\
															 & = 1 & \\
	\end{align*}
	
	\noindent which implies that condition (\ref{item:dr-a}) is satisfied. 
	
	Now we prove that the diagonal contains the maximum values of the matrix (condition (\ref{item:dr-b})), i.e. for every $i$, \revision{$M''_{i,i} = \max^{M''}$}. It is easy to see that, by definition, the elements of the diagonal are all the same \revision{(they are the average of the diagonal elements of $M'$)}. Then we need to show that they are the maximum of each column, from which it follows that they are the maximum of the matrix.
	\begin{align*}
		M''_{i,i} & =    \frac{1}{n} \sum_{h=0}^{n-1} M'_{\sigma_{h}(i),\sigma_{h}(i)} & \\[3ex]
					    & \geq \frac{1}{n} \sum_{h=0}^{n-1} M'_{\sigma_{h}(i),\sigma_{h}(j)} & \text{{(since $M'_{\sigma_{h}(j),\sigma_{h}(j)}=\textstyle \maxj{M'}{\sigma_i(j)}$)}}& \\[3ex]
						  & =    M''_{i,j} & \\
	\end{align*}
	
	We now prove that $M''$ provides $\epsilon$-differential privacy (condition (\ref{item:dr-d})). For every pair $i \sim i'$ and every $j$:
	\begin{align*}
		M''_{i,j} & =    \frac{1}{n} \sum_{h=0}^{n-1} M'_{\sigma_{h}(i),\sigma_{h}(j)} & \\[3ex]
						  & \leq \frac{1}{n} \sum_{h=0}^{n-1} e^\epsilon M'_{\sigma_{h}(i'),\sigma_{h}(j)}& \text{(by $\epsilon$-diff. privacy, for some $i'$} \\[-5mm]
						  &   																														  						 & \text{s.t. $\sigma_{h}(i') = \sigma_{h}(j)$)} \\[3ex]
						  & =    e^\epsilon M''_{i',j} & \\
	\end{align*}
	
	Finally, we prove condition (\ref{item:dr-e}):
	\begin{align*}
		H_\infty^{M''}(A|B) & = \revision{\frac{1}{n}} \sum_{i=0}^{n-1} M'_{h,h} & \\
										    & = \frac{1}{n} \sum_{i=0}^{n-1} \revision{\frac{1}{n}} \sum_{h=0}^{n-1}   M'_{\sigma_{h}(i),\sigma_{h}(i)} & \text{} \\
				  					    & = \frac{1}{n} \sum_{i=0}^{n-1}  H_\infty^{M'}(A|B) &  \text{(since $M'_{\sigma_{h}(i),\sigma_{h}(i)}= \textstyle{\maxj{M'}{\sigma_i(i)}}$)} \\
									   	  & = H_\infty^{M'}(A|B) & \\
	\end{align*}
	
\end{proof}

\subsection{The bound on the a posteriori entropy of the channel}
\label{section:bound-a-post-entropy}

Once the transformation presented in the previous section has been applied, and the channel matrix respects the properties of $M''$, we can use again the graph structure of $(\mathcal{A}, \sim)$ to determine a bound on the a posteriori entropy $H^{M''}_\infty(A|B)$ of $M''$. Recall that our matrix transformation preserves the value of the a posteriori conditional entropy, so the bound we find is also valid for the original channel matrix we started with.

It is a known result in literature (cfr. ~\cite{Braun:09:MFPS}) that, if the distribution on $A$ is uniform, then the a posteriori entropy of the channel $M$ is given by
\begin{equation*}
	H_{\infty}^{M}(A|B) = -\log_2 \revision{\frac{1}{n}} \sum_{j \in \mathcal{B}} \textstyle{\maxj{M}{j}}
\end{equation*}
	
Hence, under our assumption that the input distribution $A$ is uniform, and knowing that matrix the $M''$ the diagonal elements are all equal to the maximum $\max^{M''}$, we have
\begin{equation}
	\label{eq:cond-entropy-uniform}
		H^{M''}_\infty(A|B) = -\log_2 \textstyle{\max^{M''}}
\end{equation}

Therefore to find a bound on the a posteriori entropy of the channel $M''$ it is enough to find a bound on $\max^{M''}$. This is exactly what we do in this section. 

We proceed by noting that the property of $\epsilon$-differential privacy induces a relation between the ratio of elements at any distance: 

\begin{remark}
	Let $M$ be a matrix satisfying $\epsilon$-differential privacy. Then, for any column $j$, and any pair of rows $i$ and $h$ we have that:
	\begin{equation*}
		\frac{1}{e^{\epsilon \,d(i,h)}}\leq \frac{M_{i,j}}{M_{h,j}}\leq e^{\epsilon \,d(i,h)}
	\end{equation*}
\end{remark}

In particular, as we know that the diagonal elements of $M$ are equal to the maximum element $\max^{M}$, then for each element $M_{i,j}$ we have that:
\begin{equation}
	\label{eq:other-elements}
	M_{i,j}\geq \frac{\max^{M}}{\displaystyle e^{\epsilon \,d(i,j)}}
\end{equation}

\noindent which motivates the next proposition.

\begin{proposition}
	\label{prop:bound-max-M}
	Let $M$ be a channel matrix \revision{satisfying $\epsilon$-differential privacy} where the diagonal elements are the maximum element $max^{M}$ of the matrix. Then:
	\begin{equation*}
		\textstyle{\max^{M}} \leq \frac{1}{\sum_{d \in \Delta} \frac{n_d}{e^{\epsilon d}}}
	\end{equation*}
	\noindent where $\Delta = \{0, 1, \ldots, \diameter\}$, $\delta$ is the diameter of the graph $(\mathcal{A},\sim)$, and $n_d = \border{\mathcal{A}}{d}{j}$ is the number of elements $M_{i,j}$ that are at distance $d$ from the corresponding diagonal element $M_{j,j}$, i.e. such that $d(i,j) = d$.
\end{proposition}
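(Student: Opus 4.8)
The plan is to exploit the fact that each row of $M$ is a probability distribution, so it sums to $1$, combined with the pointwise lower bound on each entry given by differential privacy. I would fix an arbitrary row $i$ (thinking of $i$ as one of the diagonal indices, so that $M_{i,i} = \max^M$) and write the normalization condition $\sum_{j \in \mathcal{B}} M_{i,j} = 1$. The key input is inequality \eqref{eq:other-elements}, which says that every entry $M_{i,j}$ is at least $\max^M / e^{\epsilon\, d(i,j)}$. Substituting this lower bound into the row sum will turn the equality $\sum_j M_{i,j} = 1$ into an inequality of the form $\sum_j \max^M/e^{\epsilon\, d(i,j)} \leq 1$, which I can then solve for $\max^M$.

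The main step is to reorganize the sum $\sum_{j \in \mathcal{B}} e^{-\epsilon\, d(i,j)}$ by grouping the columns according to their distance from $i$. First I would partition $\mathcal{B}$ into the shells $\border{\mathcal{B}}{d}{i}$ for $d \in \Delta = \{0,1,\ldots,\delta\}$, using the fact noted in Section~\ref{section:graph-symmetries} that $\bigcup_{d \in \Delta} \border{\mathcal{B}}{d}{i} = \mathcal{B}$ and that the shells for distinct $d$ are disjoint. Within each shell the distance $d(i,j)$ is constant and equal to $d$, and by hypothesis the number of columns in that shell is $n_d = |\border{\mathcal{A}}{d}{i}|$ (which, because the underlying structure is regular enough that $n_d$ does not depend on the base vertex, is well defined as stated). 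Thus I obtain
\begin{equation*}
	1 = \sum_{j \in \mathcal{B}} M_{i,j} \geq \sum_{j \in \mathcal{B}} \frac{\max^M}{e^{\epsilon\, d(i,j)}} = \max^M \sum_{d \in \Delta} \frac{n_d}{e^{\epsilon d}}.
\end{equation*}
Dividing through by the (strictly positive) factor $\sum_{d \in \Delta} n_d / e^{\epsilon d}$ yields the claimed bound $\max^M \leq \left(\sum_{d \in \Delta} n_d/e^{\epsilon d}\right)^{-1}$.

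The only delicate point I anticipate is the bookkeeping around the shells: I must make sure that the columns indexed by $j$ with $n \leq j \leq m-1$ (the zero-ed columns, if $m > n$) do not spoil the count, and that the distance $d(i,j)$ defined in \eqref{eq:dist-b} gives exactly the same shell sizes $n_d$ as the graph distance on $\mathcal{A}$. Since those extra columns are identically zero they only strengthen the inequality (they contribute nonnegatively to the row sum but their lower bound via \eqref{eq:other-elements} is effectively absorbed, so restricting attention to $\mathcal{B}^* = \{0,\ldots,n-1\}$ is harmless), and the identification $n_d = |\border{\mathcal{A}}{d}{i}|$ is precisely the content of the distance notion transported from $\mathcal{A}$ to $\mathcal{B}$. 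Everything else is a direct substitution, so the argument is short once the shell decomposition is set up correctly.
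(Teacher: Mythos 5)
Your argument is correct and is essentially the paper's own proof: both fix a row, use the normalization $\sum_j M_{i,j}=1$ together with the pointwise lower bound of \eqref{eq:other-elements}, group the columns into distance shells of size $n_d$, and solve for $\max^M$. The extra remarks about zero-ed columns and the transport of distances from $\mathcal{A}$ to $\mathcal{B}$ are sound but do not change the substance of the argument.
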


\begin{proof}
	The elements of any given row $i$ of $M$ represent a probability distribution, therefore they \revision{sum} to $1$.
	\begin{equation*}
		\sum_{j} M_{i,j} = 1
	\end{equation*}
	
	 By substituting \eqref{eq:other-elements} in the equation above we obtain:
	 \begin{align*}
	 	\nonumber \sum_{j} \left( \frac{\textstyle \max^{M}}{e^{\epsilon d(i,j)}} \right) \leq 1 \\
	 	\nonumber \sum_{d} \left( \frac{n_{d}}{e^{\epsilon d}} {\textstyle \max^{M}}  \right) \leq 1 &
	 \end{align*}
	 
	 \noindent and therefore
	 \begin{equation*}
	 	{\textstyle \max^{M}} \leq \frac{1}{\sum_{d} \frac{n_d}{e^{\epsilon d}}}
	 \end{equation*}
	 
\end{proof}

Putting together all the steps of this section, we obtain our main result. 

\begin{theorem}
	\label{theo:bound-cond-min-entropy}
	Consider a channel matrix $M$ satisfying $\epsilon$-differential privacy for some $\epsilon > 0$, and assume that $(\mathcal{A}, \sim)$ is either distance-regular or $\vtt$. Then we have:
	\begin{equation}
		\label{eq:bound-impro}
		H^{M}_\infty(A|B) \geq - \log_2 \frac{1}{\sum_{d} \frac{n_d}{e^{\epsilon \,d }}}
	\end{equation}
	
	\noindent where $n_d = |\border{\mathcal{A}}{d}{i}|$ is the number of nodes $j \in \mathcal{A}$ at distance $d$ from $i \in \mathcal{A}$.
	
	Moreover, this bound it tight, in the sense that we can build a matrix for which \eqref{eq:bound-impro} holds with equality. 
	
\end{theorem}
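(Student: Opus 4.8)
The plan is to assemble the machinery developed in Sections~\ref{section:matrix-transformation} and~\ref{section:bound-a-post-entropy} into a single chain of inequalities, and then to exhibit a witness matrix for tightness. First I would take an arbitrary channel matrix $M$ satisfying $\epsilon$-differential privacy over an input graph $(\mathcal{A},\sim)$ that is distance-regular or $\vtt$. Since adding zero-ed columns changes neither the conditional min-entropy nor the differential-privacy property (as observed in Section~\ref{section:notation}), I may assume $M$ has at least as many columns as rows. Applying Lemma~\ref{lemma:transform-diagonal} (Step~$1$) yields a matrix $M'$ with the column maxima on the diagonal, still $\epsilon$-differentially private and with $H^{M'}_\infty(A|B) = H^{M}_\infty(A|B)$ under the uniform input distribution. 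Then, depending on whether $(\mathcal{A},\sim)$ is distance-regular or $\vtt$, I would apply Lemma~\ref{lemma:transform-dist-reg} (Step~$2a$) or Lemma~\ref{lemma:transform-vtt} (Step~$2b$) to obtain a matrix $M''$ whose diagonal entries are all equal to $\max^{M''}$, again preserving both $\epsilon$-differential privacy and the a posteriori min-entropy.

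Next I would combine the closed form~\eqref{eq:cond-entropy-uniform}, which gives $H^{M''}_\infty(A|B) = -\log_2 \max^{M''}$ for the uniform input, with the upper bound on $\max^{M''}$ established in Proposition~\ref{prop:bound-max-M}, namely $\max^{M''} \leq 1 / \sum_{d} (n_d / e^{\epsilon d})$. Since $-\log_2(\cdot)$ is monotonically decreasing, this upper bound on $\max^{M''}$ translates directly into the lower bound $H^{M''}_\infty(A|B) \geq -\log_2 \frac{1}{\sum_{d} n_d / e^{\epsilon d}}$. Because the whole transformation chain preserves the a posteriori min-entropy, $H^{M}_\infty(A|B) = H^{M''}_\infty(A|B)$, and~\eqref{eq:bound-impro} follows.

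For tightness I would construct, on the same graph $(\mathcal{A},\sim)$, the square matrix $N$ defined by $N_{i,j} = c \, e^{-\epsilon \, d(i,j)}$ with $c = 1 / \sum_{d} (n_d / e^{\epsilon d})$. The normalizing constant makes each row sum to $1$: since the graph is distance-regular (or vertex-transitive), the number $n_d$ of vertices at distance $d$ from any fixed vertex is independent of that vertex, so $\sum_{j} N_{i,j} = c \sum_{d} n_d \, e^{-\epsilon d} = 1$ for every $i$, and $N$ is a valid channel matrix. Differential privacy follows from the triangle inequality: for $i \sim h$ one has $d(i,h) = 1$, hence $N_{i,j} / N_{h,j} = e^{\epsilon (d(h,j) - d(i,j))}$ with $|d(h,j) - d(i,j)| \leq 1$, so the ratio lies in $[e^{-\epsilon}, e^{\epsilon}]$. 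The maximum of each column sits on the diagonal (distance $0$) and equals $c$, whence~\eqref{eq:cond-entropy-uniform} gives $H^{N}_\infty(A|B) = -\log_2 c$, which is exactly the right-hand side of~\eqref{eq:bound-impro}; equality in Proposition~\ref{prop:bound-max-M} is attained because~\eqref{eq:other-elements} holds with equality for $N$.

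I do not expect a genuine obstacle in the theorem itself, since the substance has already been discharged in the transformation lemmas and in Proposition~\ref{prop:bound-max-M}; the proof is essentially bookkeeping. The two points requiring care are, first, tracking the uniform-input hypothesis, which is precisely what licenses both~\eqref{eq:cond-entropy-uniform} and the entropy-preservation clauses of the lemmas, and second, checking in the tightness construction that $n_d$ is base-point independent in the $\vtt$ case, where this follows from vertex-transitivity rather than from the intersection numbers that guarantee it in the distance-regular case.
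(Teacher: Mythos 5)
Your proof is correct and follows essentially the same route as the paper's: the paper's own argument is just the terse statement that the inequality "follows directly from \eqref{eq:cond-entropy-uniform} and Proposition~\ref{prop:bound-max-M}" (both of which presuppose the Step-1/Step-2 transformation you spell out) and that tightness is obtained by taking \eqref{eq:other-elements} with equality, which is exactly your matrix $N_{i,j} = c\,e^{-\epsilon d(i,j)}$. You have merely made explicit the bookkeeping (entropy preservation along the transformation chain, row normalization, the triangle-inequality check for differential privacy, and base-point independence of $n_d$ in the $\vtt$ case) that the paper leaves implicit.
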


\begin{proof}
		The inequality follows directly from \eqref{eq:cond-entropy-uniform} and Proposition~\ref{prop:bound-max-M}. To prove that the bound is tight, it is sufficient to define each element $M_{i,j}$ according to \eqref{eq:other-elements} with equality instead of inequality. 
\end{proof}

In the next sections we will see how to use this theorem for establishing a bound on the leakage and on the utility. 

\section{Application to leakage}
\label{section:leakage}

As discussed in the Section~\ref{section:model}, the correlation $\call(X,Z)$ between $X$ and $Z$ measures the information that the attacker can learn about the database by observing the reported answers. In this section we consider the min-entropy leakage as a measure of this information, that is $\call(X,Z) = I_\infty(X;Z)$. We then investigate bounds on information leakage imposed by differential privacy.

Before we continue, let us make a very important observation about the results we obtain in this section.

\begin{remark}
The bounds on the min-entropy leakage we present in this section (Theorem~\ref{theo:bound-leakage}, Proposition~\ref{prop:new-bound}, and Proposition~\ref{prop:ind}) are derived under the assumption that the input distribution $X$ for the channel is uniform. As seen in Chapter~\ref{chapter:probabilistic-info-flow}, we know from the literature \cite{Braun:09:MFPS,Smith:09:FOSSACS} that the min-entropy leakage $I_\infty^{M}(X;Z)$ of a given matrix $M$ is maximum when input distribution is uniform (even though it may not be the only case). Therefore the bounds we present in this section, although based on the assumption that $X$ has the uniform distribution, are valid for every possible input distribution. As we model side information as input distributions, and as we provide bounds on the leakage for any possible input distribution, it follows that our bounds on the min-entropy leakage are valid for any possible side information the attacker may have.
\end{remark}

Our first result shows that the min-entropy leakage of a randomized function $\mathcal K$ is bounded by a quantity depending on $\epsilon$, and on the numbers $u = |\ind|$ and $v = |\val|$ of individuals and values respectively. We assume that $v\geq 2$.

As seen in Section~\ref{section:model}, $\mathcal{K}$ can be modeled as a channel with input $X$ and output $Z$. From Propositions \ref{prob:database-is-dr} and \ref{prob:database-is-vtt} we know that $(\calx, \sim)$ is both distance-regular and $\vtt$, and therefore we can apply Theorem~\ref{theo:bound-cond-min-entropy}. Then, by \eqref{eq:other-elements} we know that \revision{for} $j \in \border{\mathcal{X}}{d}{x}$ (i.e. every $j$ in $\calx$ at distance $d$ from a given $x$) it is the case that $M_{x,j} \geq \frac{\max^{M}}{e^{ \epsilon d}}$. Furthermore we note that each element $j$ at distance $d$ from $x$ can be obtained by changing the value of $d$ individuals in the $u$-tuple representing $i$. We can choose those $d$ individuals in $\binom{u}{d}$ possible ways, and for each of these individuals we can change the value (with respect to the one in $x$) in $v-1$ possible ways. Therefore $|\border{\calx}{d}{x}| = \binom{u}{d} (v-1)^{d}$, and we obtain that the number of databases at distance $d$ from $x$ is 
\begin{equation}
	\label{eq:border} 
	n_d = |\border{\mathcal{X}}{d}{x}| = \left(\begin{array}{c}u\\d\end{array}\right)\,(v-1)^{d}
\end{equation}
	
In fact, recall that $x$ can be represented as a $u$-tuple with values in $V$. We need to select $d$ individuals in the $u$-tuple and then change their values, and each of them can be changed in $v-1$ different ways. 

Using the value of $n_d$ from \eqref{eq:border} in Theorem~\ref{theo:bound-cond-min-entropy} we obtain the following result.

\begin{theorem}
	\label{theo:bound-leakage}
	If $\mathcal{K}$ satisfies $\epsilon$-differential privacy, then the information leakage is bound from above as follows: 
	\begin{equation*}
		I_\infty(X;Z)\leq u\, \log_2\frac{v\,e^{\epsilon}}{v-1+e^\epsilon} = \bound(u,v,\epsilon)
	\end{equation*}
\end{theorem}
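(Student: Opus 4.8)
The plan is to combine the general bound on the a posteriori min-entropy from Theorem~\ref{theo:bound-cond-min-entropy} with the specific combinatorial structure of the database graph $(\calx,\sim)$. Since we adopt $\call(X,Z) = I_\infty(X;Z)$ as our measure of leakage, and since by the remark at the start of this section the min-entropy leakage is maximized under the uniform input distribution, it suffices to bound $I_\infty(X;Z)$ assuming $X$ is uniform. First I would recall the definition of min-entropy leakage, namely $I_\infty(X;Z) = H_\infty(X) - H_\infty(X|Z)$, so that a lower bound on the a posteriori entropy $H_\infty(X|Z)$ translates directly into an upper bound on the leakage.

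Next I would compute the two entropy terms explicitly. The a priori term is immediate: with $X$ uniform over $\calx = \val^u$, we have $H_\infty(X) = \log_2 |\calx| = \log_2 v^u = u\log_2 v$. For the a posteriori term, the key inputs are Propositions~\ref{prob:database-is-dr} and~\ref{prob:database-is-vtt}, which guarantee that $(\calx,\sim)$ is both distance-regular and $\vtt$, so that Theorem~\ref{theo:bound-cond-min-entropy} applies. That theorem gives
\begin{equation*}
	H_\infty(X|Z) \geq -\log_2 \frac{1}{\sum_{d} \frac{n_d}{e^{\epsilon d}}},
\end{equation*}
where $n_d = |\border{\calx}{d}{x}| = \binom{u}{d}(v-1)^d$ is the count derived in \eqref{eq:border} (choose $d$ of the $u$ individuals and alter each in one of $v-1$ ways). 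The central computation is then to evaluate the sum $\sum_{d=0}^{u} \binom{u}{d}(v-1)^d e^{-\epsilon d}$ in closed form. By the binomial theorem this is simply $\left(1 + \frac{v-1}{e^\epsilon}\right)^u = \left(\frac{e^\epsilon + v - 1}{e^\epsilon}\right)^u$, since the diameter $\delta$ of $(\calx,\sim)$ equals $u$ (Proposition~\ref{prob:database-is-dr}) so the sum runs over exactly $d=0,\ldots,u$.

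Finally I would assemble the pieces. Substituting the closed-form sum into the entropy bound yields $H_\infty(X|Z) \geq -\log_2 \left(\frac{e^\epsilon}{e^\epsilon + v - 1}\right)^u = u\log_2\frac{e^\epsilon + v - 1}{e^\epsilon}$, and subtracting this from $H_\infty(X) = u\log_2 v$ gives
\begin{equation*}
	I_\infty(X;Z) \leq u\log_2 v - u\log_2\frac{e^\epsilon + v - 1}{e^\epsilon} = u\log_2\frac{v\,e^\epsilon}{v-1+e^\epsilon},
\end{equation*}
which is exactly the claimed bound $\bound(u,v,\epsilon)$. I expect the only genuinely non-routine step to be recognizing the weighted distance-sum as a binomial expansion; everything else is bookkeeping once Theorem~\ref{theo:bound-cond-min-entropy} and the count \eqref{eq:border} are in hand. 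A minor point worth checking is that the theorem's applicability requires $v \geq 2$ (assumed here), which is precisely the hypothesis of Proposition~\ref{prob:database-is-dr}, and that the uniform-distribution assumption does not weaken the conclusion, which is justified by the maximality of leakage at the uniform prior.
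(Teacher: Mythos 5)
Your proposal is correct and follows essentially the same route as the paper: both reduce to the uniform prior, invoke the distance-regular/$\vtt$ structure of $(\calx,\sim)$ to bound the a posteriori min-entropy via the count $n_d=\binom{u}{d}(v-1)^d$, and close the sum with the binomial theorem. The only cosmetic difference is that you cite Theorem~\ref{theo:bound-cond-min-entropy} as a black box, whereas the paper's proof re-runs the matrix transformation and re-derives the bound on $\max^M$ inline before taking logarithms; the content is identical.
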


\begin{proof}
	For this proof we need a matrix with all column maxima on the diagonal, and all equal. We obtain such a matrix by transforming the matrix associated to $\mathcal{K}$ as follows: first we apply Lemma~\ref{lemma:transform-diagonal} to it (with $A=X$ and $B=Z$), and then we apply either Lemma~\ref{lemma:transform-dist-reg} or Lemma~\ref{lemma:transform-vtt} (we can choose \revision{either} of them, since $(\mathcal{X},\sim)$ is both distance-regular and $\vtt$). The final matrix $M$ has all non-zero elements on its $n\times n$ submatrix, with $n = |\mathcal{X}| = \val^{u}$, provides $\epsilon$-differential privacy, and for every row $i$ we have that $M_{i,i} = \max^{M}$. Furthermore, \revision{$I_\infty^M(X;Z)$} is equal to the min-entropy leakage of $\mathcal{K}$\revision{, assuming a uniform distribution on $X$}. 
	
	Then we can derive:
	\begin{align*}
		\sum_{j=1}^{n} M_{i,j} & \geq \sum_{d=0}^{u} n_d \frac{\textstyle{\max^M}}{(e^{ \epsilon})^d} & \\
													 & =    \sum_{d=0}^{u} \binom{u}{d} (v-1)^{d} \frac{\textstyle{\max^M}}{(e^{ \epsilon})^d} & \text{(by \eqref{eq:border})}											 
	\end{align*}

	Since each row represents a probability distribution, the elements of row $i$ must sum up to $1$:
	\begin{equation*}
		\sum_{d=0}^{u} \binom{u}{d} (v-1)^{d} \frac{\textstyle{\max^M}}{(e^{ \epsilon})^d} \leq 1
	\end{equation*}
	
	\noindent and by multiplying both sides of the inequality by $e^{\epsilon u}$ we get
	\begin{equation*}
			\textstyle{\max^M} \sum_{d=0}^{u} \binom{u}{d} (v-1)^{d} e^{\epsilon(u-d)} \leq e^{\epsilon u}
	\end{equation*}
	
	Since by the binomial expansion $\displaystyle \sum_{d=0}^{u} \binom{u}{d} (v-1)^{d} ({e^\epsilon})^{u-d}\, = \, (v-1+e^\epsilon)^u$, we obtain:
	\begin{equation}		
		\label{eq:theorem:bl}
			\textstyle{\max^M} \leq \left(\frac{{e^\epsilon}}{v-1+e^\epsilon}\right)^u
	\end{equation}

	Therefore:	
	\begin{align*}		
	 	 I_{\infty}^{M}(X;Y) & =    H_{\infty}(X) - H_{\infty}^{M}(X|Y) & \text{(by definition)} \\
												 & =    \log_{2} \val^{u} + \log_2 \textstyle{\max^{M}} & \text{(by \eqref{eq:cond-entropy-uniform})} \\
												 & \leq \log_{2} \val^{u} + \log_2 \left(\frac{{e^\epsilon}}{v-1+e^\epsilon}\right)^u & \text{(by \eqref{eq:theorem:bl})} \\[2ex]
					 							 & =    u \log_{2} \frac{v\,{e^\epsilon}}{v-1+e^\epsilon} & \text{} \\
	\end{align*}
	
	To conclude our proof we recall that, since the above bound on $I_{\infty}^{M}(X;Y)$ is valid for the case where $X$ has the uniform distribution, it is also valid for any distribution on $X$.
	
\end{proof}

Note that the bound $\bound(u,v,\epsilon) = u\, \log_2\frac{v\, e^\epsilon}{(v-1 +  e^\epsilon)}$  is a continuous function in $\epsilon$, has value $0$ when $\epsilon=0$, and converges to $u\, \log_2 v$ as $\epsilon$ approaches infinity. Figure~\ref{fig:plots} shows the growth of  $\bound(u,v,\epsilon)$ along with $\epsilon$, for various fixed values of $u$ and $v$.

\begin{figure}[!htb]
	\centering
	\includegraphics[width=0.36\textwidth]{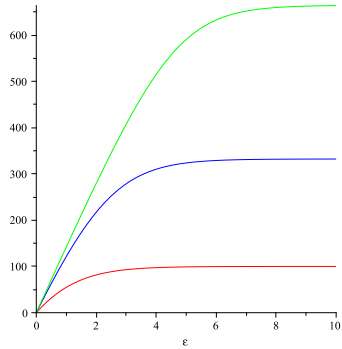}%
	\caption{Graphs of $\bound(u,v,\epsilon)$ for $u\!\!=\!\!100$ and $v\!\!=\!\!2$ (lowest line), $v\!\!=\!\!10$ (intermediate line), and $v\!\!=\!\!100$ (highest line), respectively.}%
	\label{fig:plots}%
\end{figure}

	The next proposition shows that the bound obtained in previous theorem is tight.

\begin{proposition} 
	\label{prop:tight}
	For every $u$, $v$, and $\epsilon$ there exists a randomized function $\mathcal K$ which provides $\epsilon$-differential privacy and whose min-entropy leakage, for the uniform input distribution, is $I_\infty(X;Z)=\bound(u,v,\epsilon)$.
\end{proposition}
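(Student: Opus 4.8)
The plan is to prove tightness by explicitly constructing a channel matrix that achieves the bound with equality, and then showing this matrix is the conditional probability matrix of a genuine $\epsilon$-differentially private randomized function $\mathcal{K}$. The construction is essentially dictated by the equality case of Theorem~\ref{theo:bound-cond-min-entropy}: I want a channel with input and output both indexed by $\mathcal{X} = \val^u$, so $n = m = v^u$, in which the diagonal holds the common maximum value and every off-diagonal entry is determined by the differential-privacy ratio at exactly the distance $d(x,x')$ in the graph $(\calx,\sim)$.

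First I would define the matrix directly. Set $\max^M = \left(\frac{e^\epsilon}{v-1+e^\epsilon}\right)^u$, which is the value forced by \eqref{eq:theorem:bl}, and for each pair $x,x'\in\calx$ put
\begin{equation*}
	M_{x,x'} = \frac{\max^M}{e^{\epsilon\, d(x,x')}},
\end{equation*}
i.e. \eqref{eq:other-elements} taken with equality. I then need to verify three things. (i) \emph{$M$ is a valid channel matrix:} each row must sum to $1$. This is exactly the computation already carried out inside the proof of Theorem~\ref{theo:bound-leakage}: using $n_d = \binom{u}{d}(v-1)^d$ from \eqref{eq:border} and the binomial expansion $\sum_{d=0}^u \binom{u}{d}(v-1)^d (e^\epsilon)^{u-d} = (v-1+e^\epsilon)^u$, one gets $\sum_{x'} M_{x,x'} = \max^M \cdot \frac{(v-1+e^\epsilon)^u}{e^{\epsilon u}} = 1$, and this holds for every row $x$ because $n_d$ does not depend on $x$ (the graph $(\calx,\sim)$ is distance-regular). (ii) \emph{$M$ provides $\epsilon$-differential privacy:} for adjacent $x_1\sim x_2$ and any output column $x'$, the triangle inequality gives $|d(x_1,x')-d(x_2,x')|\le d(x_1,x_2)=1$, so $M_{x_1,x'}/M_{x_2,x'} = e^{\epsilon(d(x_2,x')-d(x_1,x'))}\in\{e^{-\epsilon},1,e^{\epsilon}\}$, which lies within $[e^{-\epsilon},e^{\epsilon}]$ as required by \eqref{eq:dp-classical}.

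Finally, (iii) \emph{the leakage equals the bound.} Since the diagonal entries $M_{x,x}=\max^M$ are all equal and are the column maxima (each $M_{x,x'}\le \max^M$ by construction, as $d\ge 0$ makes $e^{\epsilon d}\ge 1$), the a posteriori min-entropy under the uniform input distribution is given by \eqref{eq:cond-entropy-uniform}, namely $H^M_\infty(X\mid Z) = -\log_2 \max^M$. Then
\begin{align*}
	I_\infty(X;Z) & = H_\infty(X) - H^M_\infty(X\mid Z) \\
	              & = \log_2 v^u + \log_2 \max^M \\
	              & = u\log_2 v + \log_2\left(\tfrac{e^\epsilon}{v-1+e^\epsilon}\right)^u \\
	              & = u\log_2\tfrac{v\,e^\epsilon}{v-1+e^\epsilon} = \bound(u,v,\epsilon),
\end{align*}
so $\mathcal{K}$ with conditional probabilities $M$ meets the bound exactly. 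I do not anticipate a serious obstacle here: every ingredient (the row-sum computation, the distance-regularity of $(\calx,\sim)$ from Proposition~\ref{prob:database-is-dr}, and the closed form for the uniform-input a posteriori min-entropy) is already available in the excerpt. The only point demanding mild care is confirming that $M$ is well-defined as a stochastic matrix without appeal to the transformation lemmas—that is, checking row-stochasticity and the column-maximum property directly from the formula rather than inheriting them—but this is precisely the distance-regular structure guaranteeing $n_d$ is row-independent. If one wished to relax the clean $n=m=v^u$ setup, the mildest subtlety would be ensuring no output symbol is redundant, but taking output alphabet equal to $\calx$ avoids that entirely.
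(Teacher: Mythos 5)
Your construction is exactly the one the paper uses: set $\mathcal{Z}=\mathcal{X}$, put $M_{x,x'}=\max^M/e^{\epsilon\,d(x,x')}$ with the normalizing constant forced by the row-sum computation, and read off the leakage from the uniform-input a posteriori min-entropy. Your write-up is in fact slightly more careful than the paper's (which conflates the normalizing constant with $\bound(u,v,\epsilon)$ notationally and only gestures at the $d=1$ case for differential privacy, where your triangle-inequality check is the complete argument), but the approach is the same.
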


\begin{proof}
	The adjacency relation in $\mathcal X$ determines a graph structure $G_\mathcal{X}$. Set $\mathcal{Z}=\mathcal{X}$ and define the matrix of $\mathcal K$ as follows: 
		\begin{equation}
			\label{eq:tight-bound}
			p_{\mathcal K}(z|x) = \frac{\bound(u,v,\epsilon)}{(e^\epsilon)^d}
		\end{equation}
		
	\noindent where $d$ is the distance between $x$ and $z$ in $G_\mathcal{X}$.
	
	We need to show that $p_{\mathcal{K}}(\cdot |x)$ is a probability distribution for every $x$:
	\begin{align*}
		\sum_{z \in \calz} \frac{\bound(u,v,\epsilon)}{(e^\epsilon)^d} & = \bound(u,v,\epsilon) \sum_{z \in \calz} \frac{1}{(e^\epsilon)^d} & \\
																																	 & = \bound(u,v,\epsilon) \sum_{d} \frac{n_d}{(e^\epsilon)^d} & \\
																																	 & = \bound(u,v,\epsilon) \frac{1}{\textstyle{\max^M}} & \text{by Proposition~\ref{prop:bound-max-M}}\\
																																	 & = \bound(u,v,\epsilon) \frac{1}{\bound(u,v,\epsilon)} & \text{take $d=0$ in \eqref{eq:tight-bound}}\\
		                                                    					 & = 1 
	\end{align*} 
	
	To see that $\mathcal K$ provides $\epsilon$-differential privacy, just take $d=1$ in \eqref{eq:tight-bound}, and to see that $I_\infty(X;Z)=\bound(u,v,\epsilon)$ take $d=0$ in the same equation.
	
\end{proof}

We now \revision{give} an example of the use of $\bound(u,v,\epsilon)$ as a bound for the min-entropy leakage.

\begin{example}
	\label{exa:eyes}
	Assume that we are interested in the \revision{eye color} of a certain population $\ind=\{\textit{Alice}, \textit{Bob}\}$. Let $\val=\{\mathtt{a},\mathtt{b},\mathtt{c}\}$ where $\mathtt{a}$ stands for $\mathit{absent}$ (i.e. the \emph{null} value), $\mathtt{b}$ stands for $\mathit{blue}$, and $\mathtt{c}$ stands for $\mathit{coal black}$. We can represent each dataset as a tuple $d_0 d_1$, where $d_0\in \val$ represents the \revision{eye color} of  $\textit{Alice}$ (cases $d_0=b$ and $d_0=c$), or that $\textit{Alice}$ is not in the dataset (case $d_0=a$). $d_1$ provides the same kind of information for $\mathit{Bob}$. Note that $v=3$. Fig~\ref{fig:eyes.a} represents the set $\mathcal X$ of all possible datasets and its adjacency relation. Fig~\ref{fig:eyes.b} represents the matrix with input $\mathcal X$ which provides $\epsilon$-differential privacy and has the highest min-entropy leakage. In the representation of the matrix, the generic entry $\alpha$ stands for $\frac{\max^M}{e^{\epsilon \, \alpha}}$, where $\max^M$ is the highest value in the matrix, i.e. $\max^M= \frac{e^\epsilon}{(v-1 +  e^\epsilon)} =  \frac{e^\epsilon}{(2 +  e^\epsilon)}$.

	\begin{figure}[!htb]%
		\centering
		\subbottom[The datasets and their adjacency relation]{
			\centering
			\includegraphics[width=0.3\linewidth]{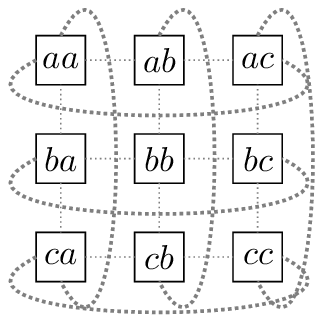}%
			\label{fig:eyes.a}%
		} \hspace{2cm}
		\subbottom[The representation of the matrix]{
			\centering
			\includegraphics[width=0.3\linewidth]{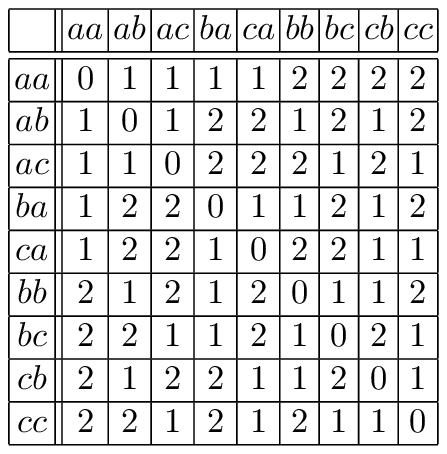}%
			\label{fig:eyes.b}%
		}
		\caption{Universe and highest min-entropy leakage matrix giving $\epsilon$-differential privacy for Example~\ref{exa:eyes}.}%
	\end{figure}

\end{example}
	
Note that the bound $\bound(u,v,\epsilon)$ is guaranteed to be reached with the uniform input distribution. The construction of the matrix for Proposition~\ref{prop:tight} gives a square matrix of dimension $\val^{u} \times \val^{u}$. Often, however, the range of $\mathcal K$ is fixed, as it is usually related to the possible answers to the query $f$. Hence it is natural to consider the scenario in which we are given a number $r<\val^{u}$, and want to consider only those  $\mathcal K$'s whose range has  cardinality  at most $r$. Proposition~\ref{prop:new-bound} shows that in n this restricted setting we can find a better bound than the one given by Theorem~\ref{theo:bound-leakage}. But first we need the following lemma.

\begin{lemma}
	\label{lemma:new-bound}
	Let $\mathcal K$ be a randomized function with input $X$, where $\mathcal{X}=\val^{u}$, providing $\epsilon$-differential privacy. Assume that $r = |\mathit{Range}({\mathcal{K})}|=v^\ell$, for some $\ell<u$. Let $M$ be the matrix associated to $\mathcal K$. Then it is possible to build a square matrix $M'$ of size $v^\ell\times v^\ell$, with row and column indices in $\mathcal{A}\subseteq\mathcal{X}$, and a binary relation $\sim' \subseteq\mathcal{A}\times \mathcal{A}$ such that $(\mathcal{A}, \sim')$ is isomorphic to $(\val^\ell, \sim_\ell)$, and such that:

	\begin{enumerate}[(i)]
		\item \label{item:nb-a} $M'$ is a valid channel matrix: \revision{$\sum_{j=0}^{m-1} M'_{i,j} = 1$ for all $0 \leq i \leq n-1$};		
		\item \label{item:nb-b} $M'_{i,j}\leq (e^\epsilon)^{u-l + d}\, M'_{h,j}$ for all $i,h \in \mathcal{X}$ and $j \in \caly $, where $d$ is the $\sim'$-distance between $i$ and $h$;
	  \item \label{item:nb-c} The elements of the diagonal are all equal to the maximum element of the matrix: $M'_{i,i} = \max^{M'}$ for all $i \in \mathcal{X}$; 
	  \item \label{item:nb-d} $H^{M'}_\infty(X|Y) = H^{M}_\infty(X|Y)$, if $X$ has the uniform distribution.
	\end{enumerate}
			
\end{lemma}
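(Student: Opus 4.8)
The plan is to mimic the two-step matrix transformation already used in Lemmas~\ref{lemma:transform-diagonal} and~\ref{lemma:transform-dist-reg}/\ref{lemma:transform-vtt}, but now adapted to the setting where the output range has cardinality $v^\ell$ strictly smaller than $|\mathcal{X}| = v^u$. The essential idea is that even though the channel $\mathcal{K}$ maps $\mathcal{X} = \val^u$ into a smaller range, we can identify the outputs with a subset $\mathcal{A} \subseteq \mathcal{X}$ of databases, where $(\mathcal{A},\sim')$ is isomorphic to the smaller hypercube-like structure $(\val^\ell, \sim_\ell)$. First I would apply Lemma~\ref{lemma:transform-diagonal} (Step~1) to the matrix $M$ associated to $\mathcal{K}$ (taking $A = X$, $B = Y$), which produces a matrix in which each of the first $v^\ell$ columns has its maximum on the diagonal, the remaining columns are zero, and both the a posteriori min-entropy (for uniform $X$) and $\epsilon$-differential privacy are preserved. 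This immediately delivers a square $v^\ell \times v^\ell$ submatrix with maxima on the diagonal and with row indices drawn from $\mathcal{A}$.

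The delicate point is establishing condition~(\ref{item:nb-b}), namely that $M'_{i,j} \leq (e^\epsilon)^{u-\ell+d} M'_{h,j}$, where $d$ is the $\sim'$-distance. The intuition is that two elements of $\mathcal{A}$ that are at $\sim'$-distance $d$ in the collapsed structure correspond to databases in $\mathcal{X}$ whose genuine Hamming distance in $(\val^u,\sim)$ can be as large as $u-\ell+d$: collapsing $u-\ell$ coordinates and then moving $d$ steps in the surviving ones. Since the original matrix $M$ satisfies $\epsilon$-differential privacy and hence, by the remark preceding Proposition~\ref{prop:bound-max-M}, $M_{i,j} \leq e^{\epsilon\, d(i,h)} M_{h,j}$ for all rows $i,h$ and columns $j$, I would track how the distance in the full graph relates to the $\sim'$-distance after the collapsing performed in Step~1. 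The plan is to show that the column-collapsing operation $M[l \to k]$, which forms sums of columns, never violates the privacy ratio by more than the extra factor $(e^\epsilon)^{u-\ell}$ accumulated over the $u-\ell$ collapsed coordinates, so that the surviving ratios obey the claimed exponent $u-\ell+d$.

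Once (\ref{item:nb-b}) is in place, I would perform the second step analogously to Lemma~\ref{lemma:transform-dist-reg} or~\ref{lemma:transform-vtt}, using the fact that $(\mathcal{A},\sim')$ is isomorphic to $(\val^\ell,\sim_\ell)$ and is therefore both distance-regular and $\vtt$ (by Propositions~\ref{prob:database-is-dr} and~\ref{prob:database-is-vtt}). Averaging over the appropriate orbits (in the $\vtt$ case) or over distance shells (in the distance-regular case) equalizes the diagonal so that every $M'_{i,i} = \max^{M'}$, yielding condition~(\ref{item:nb-c}), while preserving that each row sums to $1$ (condition~(\ref{item:nb-a})) and that the a posteriori min-entropy for the uniform distribution is unchanged (condition~(\ref{item:nb-d}), which follows from \eqref{eq:cond-entropy-uniform} since the sum of column maxima is invariant under these averaging operations). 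The verification that $M'$ remains a valid channel matrix and that the uniform-input a posteriori entropy is preserved is essentially the same computation as in the proofs of the earlier lemmas and should carry over mutatis mutandis.

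The main obstacle I anticipate is the careful bookkeeping in condition~(\ref{item:nb-b}): one must argue precisely why the exponent is $u-\ell+d$ rather than some other function of the distances, and this requires understanding exactly how the $\sim'$-distance on $\mathcal{A}$ relates to the original adjacency after the $v^u - v^\ell$ columns have been collapsed into $v^\ell$ of them. In particular I would need to confirm that the identification of $\mathcal{A}$ with $\val^\ell$ can be chosen so that a single $\sim'$-step corresponds to changing exactly one of the $\ell$ surviving coordinates, and that the $u-\ell$ collapsed coordinates each contribute at most a factor $e^\epsilon$ to the privacy ratio. Getting this counting exactly right, rather than merely up to a constant, is where the real work lies; the rest of the argument is a faithful adaptation of the machinery already developed in Section~\ref{section:matrix-transformation}.
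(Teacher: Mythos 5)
Your proposal follows essentially the same route as the paper's proof: a Step-1-style rearrangement (with the injection taken from columns to the rows containing their maxima, and the rows left unassigned deleted) to obtain a square $v^\ell\times v^\ell$ matrix with diagonal maxima, the relation $\sim'$ obtained by transporting the $\val^\ell$ adjacency onto $\mathcal{A}$, condition~(ii) justified by the observation that column-collapsing preserves the per-column privacy ratios while two rows at $\sim'$-distance $d$ are at true distance at most $(u-\ell)+d$ in $(\val^u,\sim)$, and finally Step~2a/2b to equalize the diagonal while preserving the row sums and the a posteriori min-entropy. The bookkeeping you flag as the main remaining obstacle is precisely the point the paper dispatches with \qm{it is easy to see}, so your account is, if anything, more explicit than the published argument.
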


\begin{proof}
We first apply a procedure similar to that of Lemma~\ref{lemma:transform-diagonal} to construct a square matrix of size $v^\ell\times v^\ell$ which has the maximum values of each column in the diagonal. (In this case we construct an injection from the columns to rows containing their maximum value, and we eliminate the rows that at the end are not associated to any column.) Then  define $\sim'$ as the projection of $\sim_u$ on $\val^\ell$. It is easy to see that condition (\ref{item:nb-b}) in is satisfied by this definition of $\sim'$. Finally, apply the procedure in Lemma~\ref{lemma:transform-dist-reg}, or equivalently the procedure in Lemma~\ref{lemma:transform-vtt}, on the structure $(\mathcal{X}, \sim')$ to make all elements in the diagonal equal to the maximum element of the matrix (condition (\ref{item:nb-c})). Note that this procedure preserves the property of condition (\ref{item:nb-b}), and conditional min-entropy ((\ref{item:nb-d})). Also the matrix obtained is a valid channel matrix (condition (\ref{item:nb-a})).
\end{proof}

Now we are ready to prove the proposition.

\begin{proposition}
	\label{prop:new-bound}
	Let $\mathcal K$ be a randomized function with associated channel matrix $M$, and let $r = |\mathit{Range}({\mathcal{K})}|$. If $\mathcal{K}$ provides \emph{$\epsilon$-differential privacy} then the min-entropy leakage associated to $\mathcal{K}$ is bounded from above as follows: 
	
	\begin{equation*}
		I_\infty^{M}(X;Z) \, \leq \,\log_2\frac{r\,(e^{\epsilon})^u}{(v-1+e^\epsilon)^\ell-(e^{\epsilon})^\ell+(e^{\epsilon})^u}
	\end{equation*} 
	
	\noindent where $\ell=\lfloor \log_v r\rfloor$.
\end{proposition}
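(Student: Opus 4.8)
The plan is to follow the same overall strategy as in the proof of Theorem~\ref{theo:bound-leakage}, but exploit the fact that the range of $\mathcal{K}$ has restricted cardinality $r$ with $\ell = \lfloor \log_v r \rfloor$. First I would invoke Lemma~\ref{lemma:new-bound}, which does the heavy lifting of producing a square matrix $M'$ of size $v^\ell \times v^\ell$ whose diagonal elements are all equal to $\max^{M'}$, whose row/column index structure $(\mathcal{A},\sim')$ is isomorphic to $(\val^\ell,\sim_\ell)$, and which satisfies the weakened privacy condition $M'_{i,j} \leq (e^\epsilon)^{u-\ell+d}\,M'_{h,j}$ where $d$ is the $\sim'$-distance between $i$ and $h$. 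Crucially, this transformation preserves both validity as a channel matrix and the conditional min-entropy (hence the leakage) for the uniform input distribution. So the whole problem reduces to bounding $\max^{M'}$ for this smaller, symmetric matrix.

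Next I would bound $\max^{M'}$ by summing a row. By condition~(\ref{item:nb-c}) the diagonal entry equals $\max^{M'}$, and by condition~(\ref{item:nb-b}) every entry at $\sim'$-distance $d$ from the diagonal satisfies $M'_{i,j} \geq \max^{M'}/(e^\epsilon)^{u-\ell+d}$. Since $(\mathcal{A},\sim')$ is isomorphic to $(\val^\ell,\sim_\ell)$, the number of elements at distance $d$ is $n_d = \binom{\ell}{d}(v-1)^d$ by the same counting as in \eqref{eq:border} (now over an $\ell$-tuple rather than a $u$-tuple). Summing the entries of a row and using that they form a probability distribution gives
\begin{equation*}
	1 = \sum_j M'_{i,j} \geq \max^{M'}\left(1 + \sum_{d=1}^{\ell}\binom{\ell}{d}(v-1)^d \frac{1}{(e^\epsilon)^{u-\ell+d}}\right).
\end{equation*}
The isolated $d=0$ term is handled separately because its exponent is $(e^\epsilon)^{u-\ell}$ rather than $(e^\epsilon)^{0}$: the diagonal entry itself is $\max^{M'}$, not $\max^{M'}/(e^\epsilon)^{u-\ell}$. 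This asymmetry in the exponent is exactly what produces the slightly awkward denominator in the stated bound.

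To finish, I would clear denominators by multiplying through by $(e^\epsilon)^u$ and use the binomial theorem on the distance-$\geq 1$ part:
\begin{equation*}
	\sum_{d=1}^{\ell}\binom{\ell}{d}(v-1)^d (e^\epsilon)^{\ell-d} = (v-1+e^\epsilon)^\ell - (e^\epsilon)^\ell,
\end{equation*}
where the subtracted $(e^\epsilon)^\ell$ accounts for the missing $d=0$ term of the full binomial expansion $(v-1+e^\epsilon)^\ell$. Combining with the $d=0$ term contributing $(e^\epsilon)^u$, I obtain
\begin{equation*}
	\max^{M'} \leq \frac{(e^\epsilon)^u}{(v-1+e^\epsilon)^\ell - (e^\epsilon)^\ell + (e^\epsilon)^u}.
\end{equation*}
Then, since the input distribution is uniform over $\mathcal{A}$ with $|\mathcal{A}| = r$ (here I would note that one may freely pad the index set so the effective input size matches $r$, the range cardinality), the leakage is $I_\infty^{M}(X;Z) = \log_2 r + \log_2 \max^{M'}$ by \eqref{eq:cond-entropy-uniform}, and substituting the bound on $\max^{M'}$ yields the claimed inequality. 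The bound then extends from the uniform distribution to arbitrary input distributions because min-entropy leakage is maximized at the uniform prior. The main obstacle I anticipate is bookkeeping the shifted exponent $u-\ell+d$ correctly: the combination of the range restriction (which fixes the matrix size at $v^\ell$) with the full privacy budget spread over $u$ individuals (which appears as the $(e^\epsilon)^{u-\ell}$ factor) is what makes this bound genuinely sharper than $\bound(u,v,\epsilon)$, and getting the $d=0$ term and the binomial truncation to line up with the stated denominator is the delicate part.
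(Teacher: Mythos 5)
Your overall strategy is the one the paper uses: reduce to the square matrix $M'$ of Lemma~\ref{lemma:new-bound}, exploit the weakened privacy condition $M'_{i,j}\geq \max^{M'}/(e^\epsilon)^{u-\ell+d}$ together with the count $n_d=\binom{\ell}{d}(v-1)^d$, sum a row, and apply the truncated binomial expansion. Your derivation of
\begin{equation*}
	\textstyle{\max^{M'}} \leq \frac{(e^{\epsilon})^u}{(v-1+e^\epsilon)^\ell-(e^{\epsilon})^\ell+(e^{\epsilon})^u}
\end{equation*}
is exactly the computation in the paper, including the correct treatment of the $d=0$ term.

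There is, however, a genuine gap in your final assembly, and it concerns the case $v^\ell < r$, which is the generic case given $\ell=\lfloor \log_v r\rfloor$. Lemma~\ref{lemma:new-bound} is stated and proved only under the hypothesis $r=v^\ell$, and it produces a matrix with $v^\ell$ columns, so the leakage of the reduced matrix under the uniform prior is $\log_2\bigl(v^\ell\,\max^{M'}\bigr)$ by \eqref{eq:cond-entropy-uniform} (the sum of the $v^\ell$ equal column maxima), \emph{not} $\log_2\bigl(r\,\max^{M'}\bigr)$; your identification $|\mathcal{A}|=r$ conflates the reduced column set (of size $v^\ell$) with the range (of size $r$), and "padding the index set" goes in the wrong direction — you need to \emph{shrink} the number of columns from $r$ to $v^\ell$ before the lemma applies. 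The paper does this by collapsing the $r-v^\ell$ columns with the smallest maxima into the columns with the largest maxima; this collapsing can increase the sum of column maxima by at most a factor of $r/v^\ell$, and that factor is precisely what replaces $v^\ell$ by $r$ in the numerator of the stated bound. Without this step your argument only proves the proposition when $r$ is an exact power of $v$; to close the gap you need to justify the inequality $\sum_j \maxj{M}{j}\leq \frac{r}{v^\ell}\sum_j \maxj{N}{j}$ for the collapsed matrix $N$, which is the one nontrivial claim your write-up omits.
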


\begin{proof}
	Assume first that $r$ is of the form $v^\ell$. We transform the matrix $M$ associated to ${\mathcal{K}}$ by applying Lemma~\ref{lemma:new-bound}, and let $M'$ be the resulting matrix. Let us denote by $\textstyle{\max^{M'}}$ the value of every element in the diagonal of $M'$, i.e. $\textstyle{\max^{M'}} = M'_{i,i}$ for every row $i$, and let us denote by  $\border{\mathcal{A}'}{d}{i}$ the set of elements whose $\sim'$-distance from $i$ is $d$. Note that for every  $j \in \border{\mathcal{A}'}{d}{i}$ we have that $ M'_{j,j}\leq {M'_{i,j}}{(e^{ \epsilon})^{u-\ell+d}}$, hence 
	\begin{equation*}
		M'_{i,j} \geq \frac{\textstyle{\max^M}}{(e^{ \epsilon})^{u-\ell+d}}
	\end{equation*}
	
	Furthermore each element $j$ at $\sim'$-distance $d$ from $i$ can be obtained by changing the value of $d$ individuals in the $\ell$-tuple representing $i$ (remember that $(\mathcal{A},\sim')$ is isomorphic to $(\val^\ell, \sim_\ell)$). We can choose those $d$ individuals in $\binom{\ell}{d}$ possible ways, and for each of these individuals we can change the value (with respect to the one in $i$) in $v-1$ possible ways. Therefore 
	\begin{equation*}
		|\border{\mathcal{A}'}{d}{i}| = \binom{\ell}{d} (v-1)^{d}
	\end{equation*}

	Taking into account that for $M'_{i,i}$ we do not need to divide by $(e^{ \epsilon})^{u-\ell+d}$, we obtain:
	\begin{equation*}
		\textstyle{\max^M} +\sum_{d=1}^{\ell} \binom{\ell}{d} (v-1)^{d} \frac{\textstyle{\max^M}}{(e^{ \epsilon})^{u-\ell+d}} \, \leq \, \sum_{j} M'_{i,j} 
	\end{equation*}

	Since each row represents a probability distribution, the elements of row $i$ must sum up to $1$. Hence:
	\begin{equation}
		\label{eq:bl1}
		\textstyle{\max^M} + \sum_{d=1}^{u} \binom{u}{d} (v-1)^{d} \frac{\textstyle{\max^M}}{(e^{ \epsilon})^{u-\ell+d}} \, \leq \, 1
	\end{equation}
	
	By performing some simple calculations, similar to those of the proof of Theorem~\ref{theo:bound-leakage}, we obtain:
	\begin{equation*}
		\textstyle{\max^M} \, \leq \,\frac{(e^{\epsilon})^u}{(v-1+e^\epsilon)^\ell-(e^{\epsilon})^\ell+(e^{\epsilon})^u}
	\end{equation*}
	
	Therefore:	
	\begin{align}
		\label{eq:bl2}
		I_{\infty}^{M'}(X;Z) & =     H_{\infty}(X) - H_{\infty}^{M'}(X|Z) & \quad& \text{(by definition)} \\
											   & =     \log_{2} v^{u} + \log_2 \sum_{j=1}^{v^\ell} \textstyle{\max^M} \frac{1}{v^{u}} & \\
											   & =     \log_{2} v^{u} + \log_2\frac{1}{v^{u}} + \log_2 (v^\ell\,\textstyle{\max^M}) & \\
											   & \leq  \log_2 \frac{v^\ell\,(e^{\epsilon})^u}{(v-1+e^\epsilon)^\ell-(e^{\epsilon})^\ell+(e^{\epsilon})^u} && \text{(by \eqref{eq:bl1} )}
	\end{align}

Consider now the case in which $r$ is not of the form $v^\ell$. Let $\ell$ be the maximum integer such that $v^\ell < r$, and let $m = r-v^\ell$. We transform the matrix $M$ associated to ${\mathcal{K}}$ by collapsing the $m$ columns with the smallest maxima into the $m$ columns with highest maxima. Namely, let $j_1,j_2,\ldots,j_m$ the indices of the columns which have smallest maxima values, i.e. $\maxj{M}{j_t}\leq\maxj{M}{j}$ for every column $j\neq j_1,j_2,\ldots,j_m$. Similarly, let $k_1,k_2,\ldots,k_m$ be the indexes of the columns which have maxima values. Then, define
	\begin{equation*}
		N = M[j_1\rightarrow k_1] 	[j_2\rightarrow k_2] 	\ldots [j_m\rightarrow k_m]
	\end{equation*}
	
	Finally, eliminate the $m$ zero-ed columns to obtain a matrix with exactly $v^\ell$ columns. It is easy to show that 
	\begin{equation*}
		I_\infty^M(X;Z)\, \leq \, I_\infty^N(X;Z) \frac{r}{v^\ell}
	\end{equation*}

	After transforming $N$ into a matrix $M'$ with the same min-entropy leakage as described in the first part of this proof, from \eqref{eq:bl2} we conclude
	\begin{equation*}
		I_\infty^M(X;Z)\, \leq \, I_\infty^{M'}(X;Z) \frac{r}{v^\ell} \, \leq \, \displaystyle \log_2 \frac{r\,(e^{\epsilon})^u}{(v-1+e^\epsilon)^\ell-(e^{\epsilon})^\ell+(e^{\epsilon})^u} 
	\end{equation*}

\end{proof}

Note that this bound can be much smaller than the one provided by Theorem~\ref{theo:bound-leakage}. For instance, if $r=v$  this bound becomes:

\begin{equation*} 
	\log_2 \frac{v\,(e^{\epsilon})^u}{v-1+(e^{\epsilon})^u} 
\end{equation*}

\noindent which  for large  values of $u$ is much smaller than $\bound(u,v,\epsilon)$. 

Let us clarify that there is no contradiction with the fact that the bound $\bound(u,v,\epsilon)$ is strict: in fact it is strict when we are free to choose the range, but here we fix the dimension of the range.

\subsection{Measuring the leakage about an individual}
\label{section:individual-leakage}

As discussed in Section~\ref{section:model}, the main goal of differential privacy is not to protect information about the complete database, but about each of its individual participants. To capture the leakage about a particular individual, we start from a tuple $x^- \in \val^{u-1}$ containing the given (and known) values of all other $u-1$ individuals. Then we create a channel whose input $V$ ranges over the values in $\val$ and represents the value of our individual of interest. Note that this means that we take into consideration all possible input databases where the values of the other individuals are exactly those of $x^-$ and only the value of the selected individual varies. Intuitively, $I_{\infty}^{x^{-}}(V;Z)$ measures the leakage about the individual's value where all other values are known to be as in $x^-$. (Similarly, $H_{\infty}^{x^{-}}(V|Z)$ represents the conditional entropy of $V$ given $Z$ for a fixed database where all other values are $x^-$.) As all these databases are adjacent, differential privacy provides a stronger bound for this leakage.

Therefore, the \emph{leakage for a single individual} can be characterized as follows. 

\begin{proposition}
	\label{prop:ind}
	Assume that $\mathcal{K}$ satisfies $\epsilon$-differential privacy. Then the information leakage for an individual is bound from above by: 
	\begin{equation*}
		I^{x^{-}}_\infty(V;B)\leq \log_2\frac{v\,e^{\epsilon}}{v-1+e^\epsilon}
	\end{equation*}
\end{proposition}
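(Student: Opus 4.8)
The plan is to reduce this single-individual bound to the general leakage bound already established in Theorem~\ref{theo:bound-leakage}, but applied to a much smaller and simpler graph structure. The key observation is that fixing $x^- \in \val^{u-1}$ and letting only one individual's value vary produces a channel whose input alphabet is exactly $\val$, of cardinality $v$. Crucially, \emph{all} these $v$ input databases are pairwise adjacent in $(\calx, \sim)$: any two of them differ only in the value of the single varying individual, so they are neighbors under $\sim$. Hence the adjacency structure on the restricted input set $\val$ is the complete graph $K_v$ (a clique), which is trivially both distance-regular and $\vtt$, with diameter $\diameter = 1$.

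First I would make precise the construction of the restricted channel $\mathcal{K}_{x^-}$ with input random variable $V$ ranging over $\val$ and output $Z$: its channel matrix is obtained from the matrix of $\mathcal{K}$ by selecting exactly the $v$ rows corresponding to the databases agreeing with $x^-$ on all other individuals. Since every pair of these rows corresponds to adjacent databases, $\epsilon$-differential privacy of $\mathcal{K}$ directly yields $\epsilon$-differential privacy for $\mathcal{K}_{x^-}$ with respect to the clique adjacency relation on $\val$. Then I would invoke Theorem~\ref{theo:bound-leakage} with the parameters of this restricted system: the number of individuals is effectively $u = 1$ (only one varying individual) and the number of values is $v$. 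Substituting $u = 1$ into $\bound(u,v,\epsilon) = u\,\log_2\frac{v\,e^\epsilon}{v-1+e^\epsilon}$ immediately gives
\begin{equation*}
	I^{x^{-}}_\infty(V;Z) \leq \log_2\frac{v\,e^{\epsilon}}{v-1+e^\epsilon},
\end{equation*}
which is exactly the claimed bound.

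Alternatively, and perhaps more transparently, I would apply Theorem~\ref{theo:bound-cond-min-entropy} directly to the clique $(\val, \sim) = K_v$. Here the diameter is $1$, so $\Delta = \{0,1\}$ with $n_0 = 1$ and $n_1 = v-1$. The theorem gives $H^{M}_\infty(V|Z) \geq -\log_2 \frac{1}{1 + (v-1)/e^\epsilon} = -\log_2\frac{e^\epsilon}{v-1+e^\epsilon}$. Combining this with $I_\infty(V;Z) = H_\infty(V) - H_\infty(V|Z)$ and $H_\infty(V) = \log_2 v$ (for the uniform distribution on $\val$) yields the bound, and as in the leakage section the uniform-input case is the worst case so the bound holds for every prior. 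The main subtlety to get right is simply verifying that the restricted databases genuinely form a clique and that differential privacy transfers to the restricted channel without loss — this is the conceptual heart, but it is essentially immediate from the definition of $\sim$, so I expect no serious obstacle; the proof should be short.
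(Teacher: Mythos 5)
Your proposal is correct and, in its second (``alternative'') formulation, is essentially identical to the paper's own proof: the paper also observes that the $v$ databases agreeing with $x^-$ form a clique, plugs $n_0 = 1$, $n_1 = v-1$ into Theorem~\ref{theo:bound-cond-min-entropy}, and combines the resulting bound on $H^{x^-}_\infty(V|Z)$ with $H^{x^-}_\infty(V) = \log_2 v$ for the uniform prior. Your first route via Theorem~\ref{theo:bound-leakage} with $u=1$ is just a repackaging of the same computation, so there is nothing substantive to add.
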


\begin{proof}
	Let us fix a database $x$, and a particular individual $i$ in $\ind$. The possible ways in which we can change the value of $i$ in $x$ are $v-1$. All the new databases obtained in this way are adjacent to each other, i.e. the graph structure associated to the input is a clique of $v$ nodes. Recall that $n_d$ is the number of elements of the input at distance $d$ from a given element $x$. In this case we have
	\begin{equation*}
		n_d = 
			\begin{cases}
	 			1 & \text{for $d=0$,} \\
				v-1 & \text{for $d=1$,} \\
				0 & \text{otherwise.} \\
			\end{cases}
	\end{equation*}

	By substituting this value of $n_d$ in Theorem~\ref{theo:bound-cond-min-entropy}, we get
	\begin{align*}
		H^{x^{-}}_\infty(V|Z) & \geq - \log_2 \frac{1}{\displaystyle 1 + \frac{v-1}{e^{\epsilon}}} & \\
															& = - \log_2\frac{e^\epsilon}{v-1 + e^\epsilon} & \\
	\end{align*}

	The particular individual can present $v$ different values, and thus in the case the input distribution is uniform its min-entropy is $H_{\infty}^{x^{-}}(V) = \log_2 v$.
	\begin{align*}
		I_{\infty}^{x^{-}}(V;Z) & = H_{\infty}^{x^{-}}(V) - H_{\infty}^{x^{-}}(V|Y) & \text{(by definition)} \\
														       & = \log_2 v + \log_2\frac{e^\epsilon}{v-1 + e^\epsilon}  & \text{(by the derivations above)} \\
															     & = \log_2\frac{v \, e^\epsilon}{v-1 + e^\epsilon}  & \text{} \\
	\end{align*}
	
	Since the min-entropy leakage is maximum in the case of the uniform input distribution, the result follows.

\end{proof}

Note that the bound on the leakage for an individual does not depend on the size $u$ of $\ind$, nor on the database $x^{-}$ that we fix.
			
\section{Application to utility}
\label{section:utility}

As discussed in Section~\ref{section:model}, the utility of a randomized function $\mathcal{K}$ is the correlation between the real answers $Y$ for a query and the reported answers $Z$. 
	
For our analysis we assume an oblivious randomization mechanism. As discussed in Section~\ref{section:model}, in this case the system can be decomposed into \revision{the cascade of} two channels, and the utility becomes a property of the channel associated to the randomization mechanism $\calh$ which maps the real answer $y \in \mathcal{Y}$ into a reported answer $z \in \mathcal{Z}$ according to given probability distributions $p_{Z|Y}(\cdot|\cdot)$. The user, however, does not necessarily take $z$ as her guess for the real answer, since she can use some Bayesian post-processing to maximize the probability of success, i.e. a right guess. Thus for each reported answer $z$ the user can remap her guess to a value $y' \in \mathcal{Y}$ according to some strategy that maximizes her expected gain. 

The standard way to define utility is by means of $\gain$ functions (see for instance \cite{Bernardo:94:BOOK}). We define $\mathit{gain}:\caly\times\caly\rightarrow \mathbb{R}$ and the value $\mathit{gain}(y,y')$ represents the reward for guessing the answer $y'$ when the correct answer is $y$. 

It is natural to define the global utility of the mechanism $\mathcal H$ as the expected gain:

\begin{equation}
	\mathcal{U}(Y,Z) = \displaystyle \sum_{y} p(y) \sum_{y'} p(y'|y) \gain(y,y') 
	\label{eq:gain}
\end{equation}

\noindent where $p(y)$ is the prior probability of real answer $y$, and $p(y'|y)$ is the probability of the user guessing $y'$ when the real answer is $y$.
	
Assuming that the user uses a remapping function $\guess : \mathcal{Z} \rightarrow \mathcal{Y}$, we can derive the following characterization of the utility. Recall that  $\delta_x(\cdot)$ represents the probability distribution which has value $1$ on $x$ and $0$ elsewhere.
\begin{align}		
	\label{eq:utility-gain}
	\nonumber \mathcal{U}(Y,Z) & = \sum_{y} p(y) \sum_{y'} p(y'|y) \gain(y,y') & \text{(by \eqref{eq:gain})} \\
	\nonumber	     					   & = \sum_{y} p(y) \sum_{y'} \left( \sum_{z} p(z|y)p(y'|z) \right) \gain(y,y') & \text{} \\
	\nonumber	     				  	 & = \sum_{y} p(y) \sum_{y'} \left( \sum_{z} p(z|y) \delta_{y'}(\guess(z)) \right) \gain(y,y') & \text{($y' = \guess(z)$)} \\
	\nonumber	     			  		 & = \sum_{y} p(y) \sum_{z} p(z|y) \sum_{y'} \delta_{y'}(\guess(z))\gain(y,y') & \text{} \\
	\nonumber	     		  			 & = \sum_{y,z} p(y,z) \sum_{y'} \delta_{y'}(\guess(z))\gain(y,y') & \text{} \\
		     					  				 & = \sum_{y,z} p(y,z) \gain(y,\guess(z)) & \text{}
\end{align}
 
We focus here on the so-called \emph{binary} gain function, which is defined as
\begin{equation*}
	\bingain(y,y') = 
		\begin{cases}
				1 & \text{if } y = y',\\[1mm]
				0 & \text{otherwise.}
		\end{cases}
\end{equation*}

Note that in the above equation the value $y'$ represents the user's guess after the observed answer $z$. Therefore we have
\begin{equation*}
	\bingain = \delta_{y}(\guess(z))
\end{equation*}

This kind of function represents the case in which there is no reason to prefer \revision{one answer over another}, except if it is the \emph{correct} answer. More precisely, we obtain some gain if and only if we guess the right answer. Note that if the answer domain is equipped with a notion of distance (i.e. even if two answers are wrong, one of them may be \qm{closer} to the correct one than the other) then the gain function could take into account the proximity of the reported answer to the real one. In this case a \qm{close} answer, even if wrong, is considered better than a distant one. We do not assume here a notion of distance, and therefore we will focus on the binary case. The use of binary gain functions in the context of differential privacy was also investigated in \cite{Ghosh:09:STC}\footnote{The authors of \cite{Ghosh:09:STC} used the dual notion of \emph{loss functions} instead of gain functions, but the final result is equivalent.}.

By substituting $\gain$ with $\bingain$ in \eqref{eq:utility-gain} we obtain:
\begin{equation}
	\mathcal{U}(Y,Z) = \displaystyle \sum_{y,z} p(y,z) \delta_{y}(\guess(z))
	\label{eq:utility-function}
\end{equation}
	
\noindent which tells us that the expected utility is the greatest when $\guess(z) = y$ is chosen to maximize $p(y,z)$. Assuming that the user chooses such a maximizing remapping, we have:
	\begin{align}
		\label{eq:utility-renyi}
		\nonumber \mathcal{U}(Y,Z) & = \sum_{z} \max_{y} p(y,z) & \text{} \\
															 & = \sum_{z} \max_{y} (p(y) \, p(z|y)) & \text{(by the Bayes law)}	
	\end{align}
	
If the gain function is binary, and the function $\mathit{guess}$ is chosen to optimize utility (i.e. it represents the user's best strategy), then there is a well-known correspondence between $\calu$ and the Bayes risk / the a posteriori min-entropy. \revision{This} correspondence is expressed by the following proposition:

\begin{proposition}
	\label{prop:utility-entropy}
	Assume that function $\mathit{gain}$ is binary and the function $\mathit {guess}$ is optimal. Then:
	\begin{equation*}
		\calu(Y,Z) = \sum_{z} \max_y (p(y) \, p(z|y)) = 2^{-H_\infty(Y|Z)}
	\end{equation*}
\end{proposition}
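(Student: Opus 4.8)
The statement to prove has already done most of the work: equation \eqref{eq:utility-renyi} establishes that $\calu(Y,Z) = \sum_z \max_y (p(y)\,p(z|y))$ when the gain function is binary and $\guess$ is chosen optimally. So the only remaining content is the identity $\sum_z \max_y(p(y)\,p(z|y)) = 2^{-H_\infty(Y|Z)}$. The plan is therefore to recall the definition of conditional min-entropy adopted in this thesis (Smith's formulation, introduced in \eqref{eqn:SmithCondEntropyInfty-prem}), and to show that the sum $\sum_z \max_y(p(y)\,p(z|y))$ is exactly the quantity inside the logarithm of that definition.

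First I would write out the definition of the a posteriori min-entropy as fixed in Chapter~\ref{chapter:probabilistic-info-flow}, namely
\begin{equation*}
	H_\infty(Y|Z) = -\log_2 \sum_{z} \max_y p(y,z).
\end{equation*}
Next I would use the Bayes law to rewrite the joint probability $p(y,z) = p(y)\,p(z|y)$, so that the summand becomes $\max_y(p(y)\,p(z|y))$, matching the expression obtained in \eqref{eq:utility-renyi}. Combining the two gives
\begin{equation*}
	H_\infty(Y|Z) = -\log_2 \sum_{z} \max_y (p(y)\,p(z|y)) = -\log_2 \calu(Y,Z),
\end{equation*}
and exponentiating both sides (i.e. applying $2^{(\cdot)}$) yields $\calu(Y,Z) = 2^{-H_\infty(Y|Z)}$, which is the claim.

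The one point that requires care, rather than being purely mechanical, is justifying why the optimal $\guess$ makes the utility equal to $\sum_z \max_y p(y,z)$ and not merely bounded by it. This is already the content of the reasoning leading to \eqref{eq:utility-function} and \eqref{eq:utility-renyi}: from \eqref{eq:utility-function}, $\calu(Y,Z) = \sum_{y,z} p(y,z)\,\delta_y(\guess(z))$, and for each fixed $z$ the inner contribution is $p(\guess(z), z)$, which is maximized over the choice of $\guess(z) \in \caly$ precisely by selecting $\guess(z) = \arg\max_y p(y,z)$. Since $\guess$ can be defined independently at each $z$, choosing the maximizing value pointwise is feasible and attains the supremum simultaneously for all $z$, giving exactly $\sum_z \max_y p(y,z)$. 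I expect this to be the only place where one must argue (as opposed to compute), and the argument is short: the optimization over remapping functions decomposes into independent per-$z$ maximizations. The remaining steps are a direct substitution of the definition of $H_\infty(Y|Z)$ and an application of Bayes' rule, so there is no real obstacle beyond bookkeeping.
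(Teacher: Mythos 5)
Your proof is correct and follows essentially the same route as the paper's: the paper's own proof is the one-line observation that substituting \eqref{eq:utility-renyi} into Smith's definition $H_\infty(Y\mid Z) = -\log_2 \sum_{z}\max_y (p(y)\,p(z|y))$ gives the claim. Your additional justification of the pointwise optimality of $\guess$ is sound but is already contained in the discussion leading to \eqref{eq:utility-function} and \eqref{eq:utility-renyi}, so it is not new content relative to the paper.
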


\begin{proof}
	Just substitute \eqref{eq:utility-renyi} in the definition of conditional min-entropy: $H_\infty(Z \mid Y) = -\log_2 \sum_{z} \max_y ((p(y) \, p(z|y))$.
\end{proof}

\subsection{The bound on the utility}
\label{section:utility-bound}

In this section we show that\revision{,in some special cases,} the fact that $\mathcal K$ provides $\epsilon$-differential privacy induces a bound on the utility as defined in terms of a binary gain function. We start by extending the adjacency relation $\sim$ from the datasets $\calx$ to the real answers $\mathcal Y$\revision{, in such a way that two values in $\caly$ are adjacent if they have pre-images that are adjacent}. Intuitively, the function $f$ associated to the query determines a partition on the set of all databases ($\mathcal X$, i.e. $\val^{u}$), and we say that two classes are adjacent if they contain an adjacent pair. More formally: 

\begin{definition}
	\label{def:neighbor-y}
	Given $y,y'\in\mathcal{Y}$, with $y\neq y'$, we say that $y$ and $y'$ are adjacent (notation $y\sim y'$), if and only if there exist $x,x'\in \val^{u}$ with $x \sim x'$ such that  $ y = f(x)$ and  $y'=f(x')$.
\end{definition}

Since $\sim$ is symmetric on databases, it is also symmetric on $\caly$, therefore also $(\mathcal{Y},\sim)$ forms an undirected graph. 

\revision{Using the above concept of neighborhood for the inputs of the randomization mechanism $\mathcal{H}$, we can show that in an oblivious mechanisms (see Figure~\ref{fig:utility-privacy}) if the query $f$ is deterministic, then the randomized function $\mathcal{K}$ provides $\epsilon$-differential privacy with respect to neighbor databases if and only if $\mathcal{H}$ respects $\epsilon$-differential privacy with respect to neighbor answers. Intuitively, this result follows from the fact that a deterministic query $f$ remaps every database $x \in \mathcal{X}$ to a sole answer $y \in \mathcal{Y}$, working as a sort of \qm{relabeling} that substitutes databases for answers in the adjacency graph structure, and therefore preserving $\epsilon$-differential privacy. Note also that if $\mathcal{K}$ is oblivious, the probability of any reported answer $z \in \mathcal{Z}$ does not depend on the database, but solely on the real answer $y$. Therefore under a deterministic $f$, two databases $x$ and $x'$ can be mapped to same value of $y$ only if, for all $z$, $\mathcal{K}(z|x) = \mathcal{K}(z|x')$.

\begin{proposition}
	\label{prop:k-h}
	If the query function $f$ is deterministic, then the randomized function $\mathcal{K}$ satisfies $\epsilon$-differential privacy with respect to every pair of neighbor databases $x,x' \in \mathcal{X}$ if and only if the randomization mechanism $\mathcal{H}$ satisfies $\epsilon$-differential privacy with respect to every pair of neighbor answers $y,y' \in \mathcal{Y}$. 
\end{proposition}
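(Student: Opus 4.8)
The plan is to prove both directions of the biconditional by exploiting the fact that an oblivious mechanism factorizes the reported-answer probabilities through the real answer. Recall that obliviousness means $p(z \mid x) = p(z \mid y)$ whenever $y = f(x)$, i.e. the conditional probability of a reported answer depends only on the real answer and not on the underlying database. I would begin by fixing notation: for a deterministic query $f$ and an oblivious randomization mechanism $\mathcal{H}$ with kernel $p_{Z\mid Y}$, the composite mechanism $\mathcal{K}$ has kernel $p_{Z\mid X}(z \mid x) = p_{Z\mid Y}(z \mid f(x))$ for every $x \in \mathcal{X}$ and $z \in \mathcal{Z}$.

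For the direction ($\Leftarrow$), assume $\mathcal{H}$ satisfies $\epsilon$-differential privacy with respect to neighbor answers. Take any pair of neighbor databases $x \sim x'$ and any $z \in \mathcal{Z}$. Let $y = f(x)$ and $y' = f(x')$. There are two cases. If $y = y'$, then by obliviousness $p_{Z\mid X}(z\mid x) = p_{Z\mid Y}(z\mid y) = p_{Z\mid Y}(z\mid y') = p_{Z\mid X}(z\mid x')$, so the differential-privacy ratio is exactly $1$ and the bound $e^{\epsilon}$ holds trivially. If $y \neq y'$, then since $x \sim x'$, Definition~\ref{def:neighbor-y} gives $y \sim y'$, and the $\epsilon$-differential privacy of $\mathcal{H}$ yields $p_{Z\mid Y}(z\mid y) \leq e^{\epsilon} p_{Z\mid Y}(z\mid y')$; applying obliviousness on both sides transfers this to $p_{Z\mid X}(z\mid x) \leq e^{\epsilon} p_{Z\mid X}(z\mid x')$. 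Hence $\mathcal{K}$ satisfies $\epsilon$-differential privacy.

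For the direction ($\Rightarrow$), assume $\mathcal{K}$ satisfies $\epsilon$-differential privacy and take any pair of neighbor answers $y \sim y'$ with $y \neq y'$. By Definition~\ref{def:neighbor-y}, there exist databases $x, x'$ with $x \sim x'$ such that $f(x) = y$ and $f(x') = y'$. For any $z$, obliviousness gives $p_{Z\mid Y}(z\mid y) = p_{Z\mid X}(z\mid x)$ and $p_{Z\mid Y}(z\mid y') = p_{Z\mid X}(z\mid x')$, and the $\epsilon$-differential privacy of $\mathcal{K}$ applied to the adjacent pair $x \sim x'$ gives $p_{Z\mid X}(z\mid x) \leq e^{\epsilon} p_{Z\mid X}(z\mid x')$. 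Combining these yields $p_{Z\mid Y}(z\mid y) \leq e^{\epsilon} p_{Z\mid Y}(z\mid y')$, so $\mathcal{H}$ is $\epsilon$-differentially private on neighbor answers.

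The only genuinely delicate point, and the step I would scrutinize most, is the well-definedness implicit in obliviousness combined with determinism: for the $(\Rightarrow)$ direction I must ensure that the witnesses $x, x'$ produced by Definition~\ref{def:neighbor-y} actually realize the answer-level probabilities, which is guaranteed because $f$ is deterministic (so each $x$ maps to a unique $y = f(x)$) and $\mathcal{H}$ is oblivious (so $p_{Z\mid X}(z\mid x)$ genuinely equals $p_{Z\mid Y}(z\mid f(x))$ rather than depending on $x$ beyond its image). I would state this consistency as a preliminary observation before the two cases, since both directions hinge on it; everything else is a routine transfer of the inequality across the relabeling induced by $f$.
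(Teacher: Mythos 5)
Your proof is correct and follows essentially the same route as the paper's: both hinge on the identity $\mathcal{K}(z|x) = \mathcal{H}(z|f(x))$, which the paper obtains by multiplying the channel matrices for $f$ and $\mathcal{H}$ and collapsing the sum over $y$ with the Dirac delta at $f(x)$. Your write-up is somewhat more careful than the paper's concluding ``it follows immediately'': you explicitly treat the degenerate case $f(x) = f(x')$ for adjacent databases and explicitly invoke Definition~\ref{def:neighbor-y} to produce the witnesses $x \sim x'$ in the $(\Rightarrow)$ direction, both of which the paper leaves implicit.
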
	

\begin{proof}

	Since the matrix $\mathcal{K}$ can be obtained by the product of the two matrices corresponding to $f$ and $\mathcal{H}$, we can derive that, for every pair of neighbor databases $x$ and $x'$ and for all reported answer $z$:
	
	\begin{align*}
		\frac{\mathcal{K}(z|x)}{\mathcal{K}(z|x')} & = \frac{Pr[Z=z|X=x]}{Pr[Z=z|X=x']} & \text{} \\[2mm]
																							 & = \frac{\sum_{y}Pr[Y=y|X=x]Pr[Z=z|Y=y]}{\sum_{y}Pr[Y=y|X=x']Pr[Z=z|Y=y]} & \text{(matrix multiplication)} \\[2mm]
																							 & = \frac{\sum_{y} \delta_{f(x)}(y) Pr[Z=z|Y=y]}{\sum_{y} \delta_{f(x')}(y) Pr[Z=z|Y=y]} & \text{(since $f$ is deterministic)}	\\[2mm]
																							 & = \frac{Pr[Z=z|Y=f(x)]}{Pr[Z=z|Y=f(x')]} & \text{(applying the Dirac $\delta$)} \\[2mm]
																							 & = \frac{\mathcal{H}(z|f(x)]}{\mathcal{H}(z|f(x')]}	& \text{} \\
	\end{align*}
	
	Therefore it follows immediately that $\frac{\mathcal{K}(z|x)}{\mathcal{K}(z|x')} \leq e^\epsilon$ if and only if $\frac{\mathcal{H}(z|f(x))}{\mathcal{H}(z|f(x'))} \leq e^\epsilon$.
	
\end{proof}

The link the above proposition establishes between the randomized function $\mathcal{K}$ and the randomization mechanism $\mathcal{H}$ will help us find determine a bound on the utility of $\mathcal{H}$, since, in the case the query $f$ is deterministic, requiring $\mathcal{K}$ to respect $\epsilon$-differential privacy is equivalent to requiring that $\mathcal{H}$ does.
}

\begin{theorem}
	\label{theo:util}
	Consider a randomized mechanism $\mathcal{H}$, and let $y$ be an element of $\caly$. \revision{Assume that the distribution of $Y$ is uniform and that} $({\caly}, \sim)$ is either distance-regular or $\vtt$ and that $\mathcal{H}$ satisfies $\epsilon$-differential privacy. For each distance $d \in \{0, 1, \ldots, \diameter \}$, where $\diameter$ is the diameter of $({\caly}, \sim)$, we have that:
	\begin{equation}\label{eq:util}
		\mathcal{U}(Y,Z) \leq \frac{1}{\displaystyle \sum_{d} \frac{n_d}{e^{\epsilon \,d}}} 
	\end{equation}
	
	\noindent where $n_d$ is the number of nodes $y' \in \caly$ at distance $d$ from $y$. 

\end{theorem}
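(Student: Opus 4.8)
The plan is to recognize that Theorem~\ref{theo:util} is essentially a direct corollary of the machinery already assembled in Section~\ref{section:bound-a-post-entropy}, applied to the randomization mechanism $\mathcal{H}$ instead of to $\mathcal{K}$. Since we assume the distribution of $Y$ is uniform, Proposition~\ref{prop:utility-entropy} gives us the exact correspondence $\mathcal{U}(Y,Z) = 2^{-H_\infty(Y|Z)}$, converting the utility claim into an equivalent claim about a lower bound on the a posteriori min-entropy $H_\infty(Y|Z)$ of the channel associated to $\mathcal{H}$. This is exactly the quantity that Theorem~\ref{theo:bound-cond-min-entropy} bounds from below, so the proof is really a matter of checking that the hypotheses of that theorem are met and then translating the entropy inequality back into a utility inequality.

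First I would set up the channel: view $\mathcal{H}$ as a channel matrix with input $Y$ (ranging over $\caly$) and output $Z$ (ranging over $\calz$), with input alphabet $\mathcal{A} = \caly$ and adjacency relation $\sim$ on $\caly$ as defined in Definition~\ref{def:neighbor-y}. Since $\mathcal{H}$ satisfies $\epsilon$-differential privacy with respect to neighbor answers, the channel matrix satisfies the differential-privacy property in the matrix form stated in Section~\ref{section:notation}. The hypothesis of the theorem is precisely that $(\caly,\sim)$ is either distance-regular or $\vtt$, which is what Theorem~\ref{theo:bound-cond-min-entropy} requires. I would then invoke Theorem~\ref{theo:bound-cond-min-entropy} directly (with $A=Y$, $B=Z$) to obtain
\[
	H^{\mathcal{H}}_\infty(Y|Z) \geq - \log_2 \frac{1}{\sum_{d} \frac{n_d}{e^{\epsilon \,d}}},
\]
where $n_d = |\border{\caly}{d}{y}|$ is the number of answers at distance $d$ from $y$.

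The final step is purely algebraic: applying Proposition~\ref{prop:utility-entropy} (valid because $Y$ is uniform and the user's $\guess$ is optimal) and the monotonicity of $2^{-(\cdot)}$, the min-entropy lower bound becomes
\[
	\mathcal{U}(Y,Z) = 2^{-H^{\mathcal{H}}_\infty(Y|Z)} \leq 2^{\log_2 \frac{1}{\sum_{d} \frac{n_d}{e^{\epsilon \,d}}}} = \frac{1}{\displaystyle\sum_{d} \frac{n_d}{e^{\epsilon \,d}}},
\]
which is exactly \eqref{eq:util}.

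I do not anticipate a genuinely hard mathematical obstacle, since the heavy lifting—the matrix transformation of Lemmas~\ref{lemma:transform-diagonal}, \ref{lemma:transform-dist-reg}, \ref{lemma:transform-vtt} and the bound of Proposition~\ref{prop:bound-max-M}—is encapsulated inside Theorem~\ref{theo:bound-cond-min-entropy}. The only point requiring care is the justification that the roles are legitimately swapped: the earlier bounds were derived with $X$ uniform as a worst case for leakage, whereas here the uniformity of $Y$ is a genuine standing assumption rather than a worst-case choice, so I must be explicit that the hypothesis \emph{$Y$ is uniform} is what licenses both the use of \eqref{eq:cond-entropy-uniform} inside the bound and the clean equality in Proposition~\ref{prop:utility-entropy}. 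A secondary subtlety worth a remark is that the utility is a property of $\mathcal{H}$ alone (the mechanism is oblivious), and that by Proposition~\ref{prop:k-h}, for a deterministic query $f$, requiring $\mathcal{K}$ to be $\epsilon$-differentially private is equivalent to requiring the same of $\mathcal{H}$; this is what makes the hypothesis on $\mathcal{H}$ the natural one to state.
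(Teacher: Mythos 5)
Your proposal is correct and follows essentially the same route as the paper's own proof, which likewise applies Theorem~\ref{theo:bound-cond-min-entropy} to the channel $\mathcal{H}$ with input $Y$ and then substitutes the resulting lower bound on the a posteriori min-entropy into Proposition~\ref{prop:utility-entropy}. Your added care about why the uniformity of $Y$ is needed, and the remark connecting the hypothesis on $\mathcal{H}$ to that on $\mathcal{K}$ via Proposition~\ref{prop:k-h}, merely make explicit what the paper leaves implicit.
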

	
\begin{proof}
	Since $({\caly}, \sim)$ is distance-regular or $\vtt$, we can apply Theorem~\ref{theo:bound-cond-min-entropy} to derive that $H_{\infty}^{M}(Z|Y) \geq -\log_2 \frac{1}{\sum_{d} \frac{n_d}{e^{\epsilon \,d }}}$. Then we just substitute this result in Proposition~\ref{prop:utility-entropy}.
\end{proof}

The above bound is tight, in the sense that (provided $(\caly,\sim)$ is distance-regular or $\vtt$) we can construct a mechanism $\mathcal{H}$ which satisfies \eqref{eq:util} with equality. More precisely, for $0 \leq i \leq n-1$ and $0 \leq j \leq n-1$, we define $\mathcal{H}$ (here identified with its channel matrix for simplicity) as follows:
\begin{equation}
	\label{eq:construction-h}
	\mathcal{H}_{i,j}= \frac{\gamma}{\displaystyle e^{\epsilon \,d(i,j)}}
\end{equation}

\noindent where 
\begin{equation}
	\label{eq:gamma}
	\gamma = \frac{1}{\displaystyle \sum_{d} \frac{n_d}{e^{\epsilon \,d}}}
\end{equation}

Note that $\mathcal{H}$ is a square matrix of dimension $n \times n$, where $n = |\calx|$. This is not a problem because since we assume $(\caly,\sim)$ to be either distance-regular or $\vtt$, via Theorem~\ref{theo:bound-cond-min-entropy} we can transform the channel matrix into an equivalent one such that all non zero elements are in the submatrix of dimensions $n \times n$. Let us introduce now $\mathcal{Z}^{*} = \{0, 1, \ldots, n-1\}$, i.e. the subset of $\mathcal{Z}$ that excludes the zero-ed columns of the channel matrix from $n$ to $m-1$. Note that for the following result we can safely use the set $\mathcal{Z}^*$ instead of $\mathcal{Z}$ because the zero-ed columns do not contribute to the a posteriori entropy, and trivially respect $\epsilon$-differential privacy.

\begin{theorem}
	\label{theorem:construction-h}
	Assume $({\caly}, \sim)$ is distance-regular or $\vtt$ \revision{and that the distribution of $Y$ is uniform}. Then the matrix $\mathcal{H}$ defined in \eqref{eq:construction-h} satisfies $\epsilon$-differential privacy and has maximal utility: 
	\begin{equation*}
		\label{eq:maxutil}
		\mathcal{U}(Y,Z) = \frac{1}{\displaystyle \sum_{d} \frac{n_d}{e^{\epsilon \,d}}} 
	\end{equation*}
	
\end{theorem}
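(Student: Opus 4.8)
The plan is to verify two separate claims about the matrix $\mathcal{H}$ defined in \eqref{eq:construction-h}: first that it provides $\epsilon$-differential privacy, and second that it achieves the utility bound of Theorem~\ref{theo:util} with equality. Since the bound in Theorem~\ref{theo:util} was already shown to be a valid upper bound, proving equality for this particular $\mathcal{H}$ simultaneously establishes tightness and maximality, so the whole burden is to check that the explicit construction behaves as advertised.

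First I would confirm that $\mathcal{H}$ is a valid channel matrix, i.e. that each row is a probability distribution. Fixing a row $i$ and summing over $j \in \mathcal{Z}^*$, I would group the columns by their distance $d$ from $i$, using the fact that the number of $j$ at distance $d$ is $n_d$ (here the distance-regular or $\vtt$ assumption guarantees $n_d$ is independent of the reference node $i$). This gives $\sum_{j} \mathcal{H}_{i,j} = \gamma \sum_{d} \frac{n_d}{e^{\epsilon\,d}} = 1$ directly from the choice of $\gamma$ in \eqref{eq:gamma}. Next, for $\epsilon$-differential privacy, I would take any two neighbors $i \sim i'$ (so $d(i,i')=1$) and any column $j$, and use the triangle inequality $|d(i,j) - d(i',j)| \leq 1$ to conclude that $\frac{\mathcal{H}_{i,j}}{\mathcal{H}_{i',j}} = e^{\epsilon(d(i',j) - d(i,j))}$ lies in $[e^{-\epsilon}, e^{\epsilon}]$, which is exactly the required ratio condition on the entries.

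For the utility computation, I would invoke Proposition~\ref{prop:utility-entropy}, which under the uniform-$Y$ assumption and the binary gain function with optimal guessing reduces $\calu(Y,Z)$ to $2^{-H_\infty(Y|Z)}$, equivalently to $\frac{1}{n}\sum_{z} \max_y \mathcal{H}_{y,z}$ scaled appropriately. The key observation is that the diagonal entries $\mathcal{H}_{i,i} = \gamma$ (distance $0$) are the maxima of each column: by \eqref{eq:construction-h} any off-diagonal entry in column $j$ equals $\gamma / e^{\epsilon\,d}$ with $d \geq 1$, hence strictly smaller. Summing the column maxima $\gamma$ over the $n$ columns of $\mathcal{Z}^*$ and accounting for the uniform prior yields $\calu(Y,Z) = \gamma = \frac{1}{\sum_d n_d / e^{\epsilon d}}$, matching \eqref{eq:maxutil}.

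I expect the main subtlety to be bookkeeping rather than deep difficulty: namely ensuring that the distance function $d(i,j)$ on $\caly$ (extended to the output index set $\mathcal{Z}^*$ via the identification used in Section~\ref{section:bound-a-post-entropy}) is well-defined and symmetric enough that $n_d$ genuinely does not depend on the chosen vertex — this is precisely what distance-regularity or the $\vtt$ property buys us, and I would cite Propositions~\ref{prob:database-is-dr} and~\ref{prob:database-is-vtt} or the graph-class definitions to justify it. The remaining potential gap is confirming that the optimal remapping $\guess$ indeed selects the diagonal (i.e. that the maximizing $y$ for each observed $z$ is the one with $d=0$), which follows immediately once the diagonal is identified as the column maximum. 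Everything else is the routine summation already rehearsed in the proofs of Lemma~\ref{lemma:transform-dist-reg} and Theorem~\ref{theo:bound-leakage}.
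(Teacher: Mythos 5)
Your proposal is correct and follows essentially the same route as the paper: verify the row sums equal $1$ by grouping columns by distance and using the definition of $\gamma$, then compute the utility via Proposition~\ref{prop:utility-entropy} under the uniform prior, observing that each column's maximum sits on the diagonal ($d=0$) so the sum of column maxima gives $\gamma$. Your explicit triangle-inequality check of the $\epsilon$-differential-privacy ratio is a small addition the paper's proof of this theorem actually omits (it only proves the channel-matrix and utility claims), so including it makes the argument slightly more complete without changing the approach.
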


\begin{proof}
		
	First we prove that the matrix as defined in \eqref{eq:construction-h} is a channel matrix, i.e. that each row is a probability distribution.
	\begin{align*}
		\sum_{j \in \calz^*} \mathcal{H}_{i,j} & = \sum_{j \in \calz^*} \frac{\gamma}{e^{\epsilon d(i,j)}} & \\
															 					   & = \gamma \sum_{j \in \calz^*} \frac{1}{e^{\epsilon d(i,j)}} & \\			
															 					   & = \gamma \sum_{d} \frac{n_d}{e^{\epsilon d}} & \text{by \eqref{eq:gamma}} \\			
																				   & = \gamma \frac{1}{\gamma} & \\			
																				   & = 1 & 
	\end{align*}
	
	Now we show that the utility is maximum.
	\begin{align*}
		\mathcal{U}(Y,Z) & = \sum_{z \in \calz^*} \max_{y} (p(y) \, \mathcal{H}(z|y) ) & \text{by \eqref{eq:utility-renyi}} \\
										 & = \sum_{z \in \calz^*} \max_{y} \frac{1}{|\caly|} \mathcal{H}(z|y) & \text{since $Y$ is uniform} \\
										 & = \frac{1}{|\caly|} \sum_{z \in \calz^*} \max_{y} \frac{\gamma}{\max_{d} e^{\epsilon d(i,j)} }  & \text{by \eqref{eq:construction-h}} \\
										 & = \frac{1}{|\caly|} \sum_{z \in \calz^*} \gamma  & \text{maximum is $d=0$} \\
										 & = \frac{1}{|\caly|} \cdot \revision{|\calz^*|} \gamma  &  \\
										 & = \gamma & \text{since $|\caly| = |\calz^*| = n$}
	\end{align*}

\end{proof}

Therefore we can always define $\mathcal{H}$ as in \eqref{eq:construction-h}: the matrix so defined will be a legal channel matrix, and it will satisfy $\epsilon$-differential privacy. If $({\caly}, \sim)$ is neither distance-regular nor $\vtt$, then the utility of such $\mathcal{H}$ is not necessarily optimal.

The conditions for the construction of the optimal matrix are strong, but there are some interesting scenarios in which they are satisfied. Depending on the degree of connectivity $c$ of the graph $(\caly,\sim)$, we can have $\lfloor \frac{|\mathcal{Y}|}{2}\rfloor - 1$ different cases (note that the case of $c=1$ is not possible because the datasets are fully connected via their adjacency relation), whose extremes are:

\begin{itemize}
	
	\item $(\mathcal{Y},\sim)$ is a \emph{clique}, i.e. every element has exactly $|\mathcal{Y}|-1$ adjacent elements. 
	
	\item $(\mathcal{Y},\sim)$ is a \emph{ring}, i.e. every element has exactly two adjacent elements. This is similar to the case of the counting queries considered in  \cite{Ghosh:09:STC}, with the difference that our \qm{counting} is in arithmetic modulo $|\mathcal{Y}|$. 
	
\end{itemize}

\begin{remark}
	\label{rem:relax2}
	Note that our method can be applied also when the conditions of  Theorem~\ref{theorem:construction-h} are not met: We can always  add \qm{artificial} adjacencies to the graph structure \revision{so as to meet} those conditions. Namely, for computing the distance in \eqref{eq:construction-h} we use,  instead of $(\caly,\sim)$, a structure $(\caly,\sim')$ which satisfies the conditions of  Theorem~\ref{theorem:construction-h}, and such that $\sim\, \subseteq \, \sim'$.  Naturally, the matrix constructed in this way provides $\epsilon$-differential privacy, but in general is not optimal. It is clear that, in general, the smaller $\sim'$ is, \revision{the higher} is the utility. 
\end{remark}

The matrices generated by \eqref{eq:construction-h} can be very different, depending on the value of $c$. The next two examples illustrate queries that give rise to the clique and to the ring structures, and show the corresponding matrices. 

\begin{example} Consider a database with electoral information where each entry corresponds to a voter and contains the following three fields:

	\begin{itemize}
		\item \emph{Id}: a unique (anonymized) identifier assigned to each voter;
		\item \emph{City}: the name of the city where the user voted;
		\item \emph{Candidate}: the name of the candidate the user voted for.
	\end{itemize}

Consider the query \emph{\qm{What is the city with the greatest number of votes for a given candidate $\mathit{cand}$?}}. For such a query the binary utility function \revision{could be taken as the natural choice: from the user's point of view, only the right city could give some gain, and all wrong answers would be equally bad}. It is easy to see that every two answers are neighbors, i.e. the graph structure of the answers is a clique. 

Let us consider the scenario where \emph{City} $=\{A,B,C,D,E,F\}$ and assume for simplicity that there is a unique answer for the query, i.e. there are no two cities with exactly the same number of individuals voting for candidate $\mathit{cand}$. Table~\ref{tab:city-cand-geo} shows two alternative mechanisms providing $\epsilon$-differential privacy (with \revision{$\epsilon = \ln 2$}). The first one, $M_{1}$, is based on the truncated geometric mechanism method used in \cite{Ghosh:09:STC} for counting queries (here extended to the case where every two distinct answers are neighbors). The second mechanism, $M_{2}$, is obtained by applying the definition of \eqref{eq:construction-h}. From Theorem~\ref{theorem:construction-h} we know that for the uniform input distribution $M_{2}$ gives optimal utility.

For the uniform input distribution, it is easy to see that $\mathcal{U}(M_1) = 0.2242 < 0.2857 = \mathcal{U}(M_2)$. Even for non-uniform distributions, our mechanism still provides better utility. For instance, for $p(A) = p(F) = 1/10$ and $p(B) = p(C) = p(D) = P(E) = 1/5$, we have $\mathcal{U}(M_1) = 0.2412 < 0.2857 = \mathcal{U}(M_2)$. This is not too surprising: the geometric mechanism, as well as the Laplacian mechanism proposed by Dwork, perform very well when the domain of answers is provided with a metric and the utility function is not binary\footnote{As we mentioned before, in the metric case the gain function can take into account the proximity of the reported answer to the real one, the idea being that a close answer, even if wrong, is better than a distant one.}. It also works well when $(\mathcal{Y}, \sim)$ has low connectivity, in particular in the cases of a ring and of a line. But in this example, we are not in these cases, because we are considering \emph{binary gain functions} and \emph{high connectivity}. 

\begin{table}[tb]
	\centering	
	\small
	\subbottom[$M_{1}$: truncated geometric mechanism]{	
		$
			\begin{array}{|c||c|c|c|c|c|c|}
				\hline
				\text{In/Out} & A & B & C & D & E & F \\ \hline \hline
				A & 0.535 & 0.060 & 0.052 & 0.046 & 0.040 & 0.267 \\ \hline
				B & 0.465 & 0.069 & 0.060 & 0.053 & 0.046 & 0.307 \\ \hline
				C & 0.405 & 0.060 & 0.069 & 0.060 & 0.053 & 0.353 \\ \hline
				D & 0.353 & 0.053 & 0.060 & 0.069 & 0.060 & 0.405 \\ \hline
				E & 0.307 & 0.046 & 0.053 & 0.060 & 0.069 & 0.465 \\ \hline
				F & 0.267 & 0.040 & 0.046 & 0.052 & 0.060 & 0.535 \\ \hline
			\end{array}
		$
		\label{tab:city-cand-geo-a}
	}	\ \ \ \ \ \ 
	\subbottom[$M_{2}$: our mechanism]{		
		$
			\begin{array}{|c||c|c|c|c|c|c|}
			 \hline
			 \text{In/Out} & A & B & C & D & E & F \\ \hline \hline
			 A & 2/7 & 1/7 & 1/7 & 1/7 & 1/7 & 1/7 \\ \hline
			 B & 1/7 & 2/7 & 1/7 & 1/7 & 1/7 & 1/7 \\ \hline
			 C & 1/7 & 1/7 & 2/7 & 1/7 & 1/7 & 1/7 \\ \hline
			 D & 1/7 & 1/7 & 1/7 & 2/7 & 1/7 & 1/7 \\ \hline
			 E & 1/7 & 1/7 & 1/7 & 1/7 & 2/7 & 1/7 \\ \hline
			 F & 1/7 & 1/7 & 1/7 & 1/7 & 1/7 & 2/7 \\ \hline
			\end{array}
		$
		\label{tab:city-cand-geo-b}
	}
	\caption{Mechanisms for the city with higher number of votes for candidate $\mathit{cand}$}
	\label{tab:city-cand-geo}
\end{table}

\end{example}

\begin{example}

Let us consider the same database as the previous example, but now assume a counting query of the form \emph{\qm{What is the number of votes for candidate $\mathit{cand}$?}}. 
It is easy to see that each answer has at most two neighbors. More precisely, the graph structure on  the answers is a line. For illustration purposes, let us assume that only $5$ individuals have participated in the election. Table~\ref{tab:count-geo} shows two alternative mechanisms providing $\epsilon$-differential privacy ($\epsilon = \log 2$): the truncated geometric mechanism  $M_{1}$ proposed in \cite{Ghosh:09:STC} and the mechanism we propose $M_{2}$. Note that in order to apply our method we have first to apply Remark~\ref{rem:relax2} to transform the graph structure from a line into a ring.


\revision{Let} us consider the uniform prior distribution. We see that the utility of $M_1$ is higher than the utility of $M_2$, in fact the first is $4/9$ and the second is $8/21$. This does not contradict our theorem, because our matrix is guaranteed to be optimal only in the case of a ring structure, not a line as we have in this example. If the structure were a ring, i.e. if the last row were adjacent to the first one, then $M_1$ would not provide $\epsilon$-differential privacy. In case of a line as in this example, the truncated geometric mechanism has been proved optimal \cite{Ghosh:09:STC}. 

\begin{table}[!htb]
	\centering
	\small
	\subbottom[$M_{1}$: truncated $\frac{1}{2}$-geom. mechanism]{
		$
			\begin{array}{|c||c|c|c|c|c|c|}
				\hline
				\text{In/Out} & 0    & 1    & 2    & 3    & 4    & 5    \\ \hline \hline
				0                   & 2/3  & 1/6  & 1/12 & 1/24 & 1/48 & 1/48 \\ \hline
				1                   & 1/3  & 1/3  & 1/6  & 1/12 & 1/24 & 1/24 \\ \hline
				2                   & 1/6  & 1/6  & 1/3  & 1/6  & 1/12 & 1/12 \\ \hline
				3                   & 1/12 & 1/12 & 1/6  & 1/3  & 1/6  & 1/6  \\ \hline
				4                   & 1/24 & 1/24 & 1/12 & 1/6  & 1/3  & 1/3  \\ \hline
				5                   & 1/48 & 1/48 & 1/24 & 1/12 & 1/6  & 2/3  \\ \hline
			\end{array}
		$
		\label{tab:count-geo-a}
	}\ \ \ 
	\subbottom[$M_{2}$: our mechanism]{
		$
			\begin{array}{|c||c|c|c|c|c|c|}
			 	\hline
				 \text{In/Out} & 0     & 1    & 2    & 3    & 4    & 5    \\ \hline \hline
				 0             & 8/21  & 4/21 & 2/21 & 1/21 & 2/21 & 4/21 \\ \hline
				 1             & 4/21  & 8/21 & 4/21 & 2/21 & 1/21 & 2/21 \\ \hline
				 2             & 2/21  & 4/21 & 8/21 & 4/21 & 2/21 & 1/21 \\ \hline
				 3             & 1/21  & 2/21 & 4/21 & 8/21 & 4/21 & 2/21  \\ \hline
				 8             & 2/21  & 1/21 & 2/21 & 4/21 & 8/21 & 4/21  \\ \hline
				 5             & 4/21  & 2/21 & 1/21 & 2/21 & 4/21 & 8/21  \\ \hline
			\end{array}
		$
		\label{tab:count-geo-b}
	}
	\caption{Mechanisms for the counting query ($5$ voters)}
	\label{tab:count-geo}	
\end{table}
\end{example}

\section{Related work}
\label{section:related-work-diff-priv}

To the best of our knowledge, the first work to investigate the relation between differential privacy and information-theoretic leakage \emph{for an individual} was \cite{Alvim:10:TechRep}. In this work, the definition of channel was relative to a given database $x$, and the channel inputs were all possible databases adjacent to $x$. Two bounds on leakage were presented, one for the min-entropy, and one for Shannon entropy. Our bound in Proposition~\ref{prop:ind} is an improvement with respect to the (min-entropy) bound in \cite{Alvim:10:TechRep}.

Barthe and K\"opf \cite{Barthe:11:CSF} were the first to investigate the (more challenging) connection between differential privacy and the min-entropy leakage \emph{for the entire universe of possible databases}. They considered the \qm{end-to-end differentially private mechanisms}, which correspond to what we call the randomized function $\mathcal{K}$ in this chapter, and proposed, like we do, to interpret them as information-theoretic channels. They provided a bound for the leakage, but pointed out that it was not tight in general. They also showed that there cannot be a domain-independent bound, by proving that for any number of individuals $u$ the optimal bound must be at least a certain expression $f(u,\epsilon)$. Finally, they showed that the question of providing optimal upper bounds for the leakage of $\epsilon$-differentially private randomized functions in terms of rational functions of $\epsilon$ is decidable, and left the actual function as an open question. In our work we used rather different techniques and found (independently) the same function $f(u,\epsilon)$ (the bound in Theorem~\ref{theo:bound-cond-min-entropy}), but we actually proved that $f(u,\epsilon)$ is the optimal bound\footnote{When discussing our result with Barthe and K\"opf, they said that they also conjectured that $f(u,\epsilon)$ is the optimal bound.}. Another difference between their work and ours is that \cite{Barthe:11:CSF} captures the case in which the focus of differential privacy is on hiding \emph{participation} of individuals in a database, whereas we consider both the participation and the \emph{values} of the participants.

Clarkson and Schneider also considered differential privacy as a case study of their proposal for quantification of integrity \cite{Clarkson:11:TECHREP}. There, the authors analyzed database privacy conditions from the literature (such as differential privacy, $k$-anonymity, and $l$-diversity) using their framework for utility quantification. In particular, they studied the relationship between differential privacy and a notion of leakage (which is different from ours - in particular their definition is based on Shannon entropy) and they provided a tight bound on leakage. 

Heusser and Malacaria \cite{Heusser:09:FAST} were among the first to explore the application of information-theoretic concepts to databases queries. They proposed to model database queries as programs, which allows for \revision{statistical} analysis of the information leaked by the query. \cite{Heusser:09:FAST}, however, did not attempt to relate information leakage to differential privacy.

In \cite{Ghosh:09:STC} the authors aimed at obtaining optimal-utility randomization mechanisms while preserving differential privacy. The authors proposed adding noise to the output of the query according to the geometric mechanism. Their framework is very interesting in the sense it provides a general definition of utility for a mechanism $M$ that captures any possible side information and preference (defined as a loss function) the users of $M$ may have. They proved that the geometric mechanism is optimal in the particular case of counting queries. Our results in Section \ref{section:utility} do not restrict to counting queries, but on the other hand we only consider the case of binary loss function.

\section{Chapter summary and discussion}
\label{section:conclusion-diff-priv}

In this chapter we have investigated the relation between $\epsilon$-differential privacy and leakage, and between $\epsilon$-differential privacy and utility. Our main contribution was the development of a general technique for determining these relations depending on the graph structure of the input domain, induced by the adjacency relation and by the query. We have considered two particular structures, the distance-regular graphs, and the $\vtt$ graphs, which allowed us to obtain tight bounds on the leakage and on the utility. We also constructed an optimal randomization mechanism satisfying $\epsilon$-differential privacy for some special cases.
 
As future work, we plan to extend our result to other kinds of utility functions. In particular, we are interested in the case in which the the answer domain is provided with a metric, and we are interested in taking into account the degree of accuracy of the inferred answer.

\chapter{Safe equivalences for security properties}
\label{chapter:safe-equivalences}
\mscite{Too much may be the equivalent of none at all.}{Lee Loevinger}
In the field of Security, process equivalences have been used to characterize various information-hiding properties (for instance secrecy, anonymity and noninterference) based on the principle that a protocol $P$ with a variable $x$ satisfies such a property if and only if, for every pair of secrets $s_1$ and $s_2$, $P[^{s_1}/ _x]$ is equivalent to $P[^{s_2}/ _x]$. We argue that, in the presence of nondeterminism, the above principle \revision{may rely} on the assumption that the scheduler \qm{works for the benefit of the protocol}, and this usually is not a safe assumption. Non-safe equivalences, in this sense, include complete-trace equivalence and bisimulation. 

The goal of this chapter is to present a formalism in which we can specify admissible schedulers and, correspondingly, safe versions of these equivalences. Then we are able to show that safe equivalences can be used to establish information-hiding properties.

\paragraph{Contribution} The main contributions of this chapter can be summarized as follows.

\begin{itemize}
  
	\item We propose a formalism for concurrent distributed systems  which accounts for both probabilistic and nondeterministic behavior, and in which the latter is of two kinds: \emph{global} and \emph{local}. The global nondeterminism represents the possible interleavings produced by the parallel components, which may be influenced by the attacker. The local nondeterminism is associated to the possible internal choices of each component, which may depend on the secrets or other unknown parameters, not controlled by the attacker. Correspondingly, we split the scheduler into two constituents: a global one and a local one. The latter is actually a tuple of local schedulers, one for each component of the system.
  
	\item We propose a notion of \emph{admissible scheduler} for the above systems, in which the global constituent is not allowed to see the secrets, and each local constituent  is not allowed to see any information about the other components. We then generalize the standard definition of strong (probabilistic) information hiding (such as noninterference and strong anonymity) to the case in which also nondeterminism is present, under the assumption  that the schedulers are admissible.
	
  \item We use admissible schedulers to define safe versions of complete-trace\footnote{In this chapter we may refer to \qm{complete traces} simply as \qm{traces}.} equivalence and bisimilarity which are specially tuned for security. This means that we account for the possibility that the global constituent of the scheduler is in collusion with the attacker,   and therefore does not necessarily help the system to obfuscate the secret. We show that the bisimilarity is still a congruence, \revision{as} in the classical case.

  \item We finally show that our notions of safe complete-trace equivalence and bisimilarity imply strong information hiding in the sense discussed above.
  
\end{itemize}

\paragraph{Plan of the Chapter} This chapter is organized as follows. In Section~\ref{section:equivalences-in-security} we review the role equivalences traditionally play in formalizing security properties. In Section~\ref{section:systems} we formalize the notions of distributed systems and components used in this chapter. In Section~\ref{section:admissible-schedulers} we focus on restricting the discerning power of global and local schedulers, and in Section~\ref{section:equivalences} we present our proposal for safe equivalences, namely safe complete-traces and safe bisimilarity. In Section~\ref{section:nd-ih} we define the notion of information hiding under the novel assumption that nondeterminism is handled partly in a demonic way and partly in an angelic way. Finally, in Section~\ref{section:related-work-equivalences} we review the related bibliography, and in Section~\ref{section:conclusion-safe-equivalences} we summarize the chapter and outline some future work. 

\section{The use of equivalences in security}
\label{section:equivalences-in-security}

As we have seen in Chapter~\ref{chapter:introduction}, one technique used to prevent an attacker of inferring the secret from the observables is to create \emph{noise}, namely to make sure that for every execution in which a given secret produces a certain observable, there is at least another execution in which a different secret produces the same observable. In practice this is often done by using randomization.

In the literature about the foundations  of computer security, however, the quantitative aspects are often abstracted away, and probabilistic  behavior is replaced by nondeterministic behavior. Correspondingly, there have been various approaches in which information-hiding properties are expressed in terms of equivalences based on nondeterminism, especially in a concurrent setting. For instance, \cite{Schneider:96:ESORICS} defines \emph{anonymity} as follows\footnote{The actual definition of \cite{Schneider:96:ESORICS} is more complicated, but the spirit is the same.}: A protocol $S$ is anonymous if, for every pair of culprits $a$ and $b$, $S[^a/ _x]$ and $S[^b/ _x]$ produce the same observable traces. A  similar definition is given in \cite{Abadi:99:IC} for \emph{secrecy}, with the difference that
$S[^a / _x]$ and $S[^b / _x]$ are required to be bisimilar. In \cite{Delaune:09:JCS}, an electoral system $S$ preserves the \emph{confidentiality of the vote} if
for any voters $v$ and $w$, the observable behavior of $S$ is the same if we swap the votes of $v$ and $w$, i.e. if $S[^a / _v\mid ^b / _w]$ is bisimilar to $S[^b / _v \mid ^a / _w]$.

These proposals are based on the implicit assumption that \emph{all the nondeterministic executions present in the  specification of $S$ will always be possible under every implementation of $S$}. Or at least, that the adversary will believe so. In concurrency, however, as argued in \cite{Chatzikokolakis:09:FOSSACS}, nondeterminism has a rather different meaning: if a specification $S$ contains some nondeterministic alternatives, typically it is because we want to abstract from specific implementations, such as the scheduling policy. A specification is considered correct, with respect to some property,  if every alternative satisfies the property. Correspondingly, an implementation is considered correct if all executions are among those possible in the specification, i.e. if the implementation is a refinement of the specification.  There is no expectation that the  implementation will actually make possible all the alternatives indicated by the specification.

We argue that the use of nondeterminism in concurrency corresponds to a \emph{demonic} view: the scheduler, i.e. the entity that will decide which alternative to select,
may try to choose the \qm{worst} alternative. Hence we need to make sure that all alternatives are \qm{good}, in the sense that they satisfy the intended property. In the approaches to formalize security properties mentioned above, on the contrary, the interpretation of  nondeterminism is \emph{angelic}: the scheduler is expected to actually help the protocol to confuse the adversary and thus protect the secret information.

There is another issue, orthogonal to the angelic/demonic dichotomy, but relevant for the achievement of security properties: the scheduler \emph{should not be able to make its choices dependent on the secret}, or else nearly every protocol would be insecure, i.e. the scheduler would always be able to leak the secret to an external observer (for instance by producing different interleavings of the observables, depending on the secret). This remark has been made several times already, and several approaches have  been proposed to cope with the problem of full-information schedulers (aka almighty, omniscient, clairvoyant, etc.), see for example  \cite{Canetti:06:WODES,Canetti:06:DISC,Chatzikokolakis:07:CONCUR,Chatzikokolakis:09:FOSSACS,Andres:10:TechRep}.

The risk of a naive use of nondeterminism to specify a security property is not only that it may rely on an implicit assumption that the scheduler behaves angelically, but also that it is clairvoyant (fully-informed), i.e. that it peeks at the secrets (that it is not supposed to be able to see) to achieve its angelic strategy.

\begin{example}\label{exa:CCS}
Consider the following system, \revision{presented} in a CCS-like syntax: $S \defsym (c) (A \ \parallel \ H_1 \ \parallel \ H_2 \ \parallel\ \mathit{Corr} )$, with $A \defsym \overline{c}\langle \mathit{sec}\rangle$, $H_1 \defsym c(s).\overline{out}\langle a \rangle$, $H_2 \defsym c(s).\overline{out}\langle b \rangle$, $\mathit{Corr} \defsym c(s).\overline{out}\langle s \rangle$. 
The name $\mathit{sec}$ represents a secret.

It is easy to see that we have $S\left[ ^{a}/_{sec}\right]\sim S\left[ ^{b}/_{sec}\right]$, as shown in the execution tress in Figure~\ref{fig:CCSExample}. Note that, in order to simulate the rightmost branch in $S\left[ ^{a}/_{sec}\right]$, the process $S\left[ ^{b}/_{sec}\right]$ needs to follow its leftmost branch. Vice-versa, in order to simulate the rightmost branch in $S\left[ ^{b}/_{sec}\right]$, the process $S\left[ ^{a}/_{sec}\right]$ needs to follow its middle branch. This means that, in order to achieve bisimulation, the scheduler needs to know the secret, and change its choice accordingly.
\end{example}
\begin{figure}[htb]%
	\centering
	\subbottom[$S \mathopen{[} ^{a}/_{sec} \mathclose{]}$]{
		\centering
		\includegraphics[width=0.80\columnwidth]{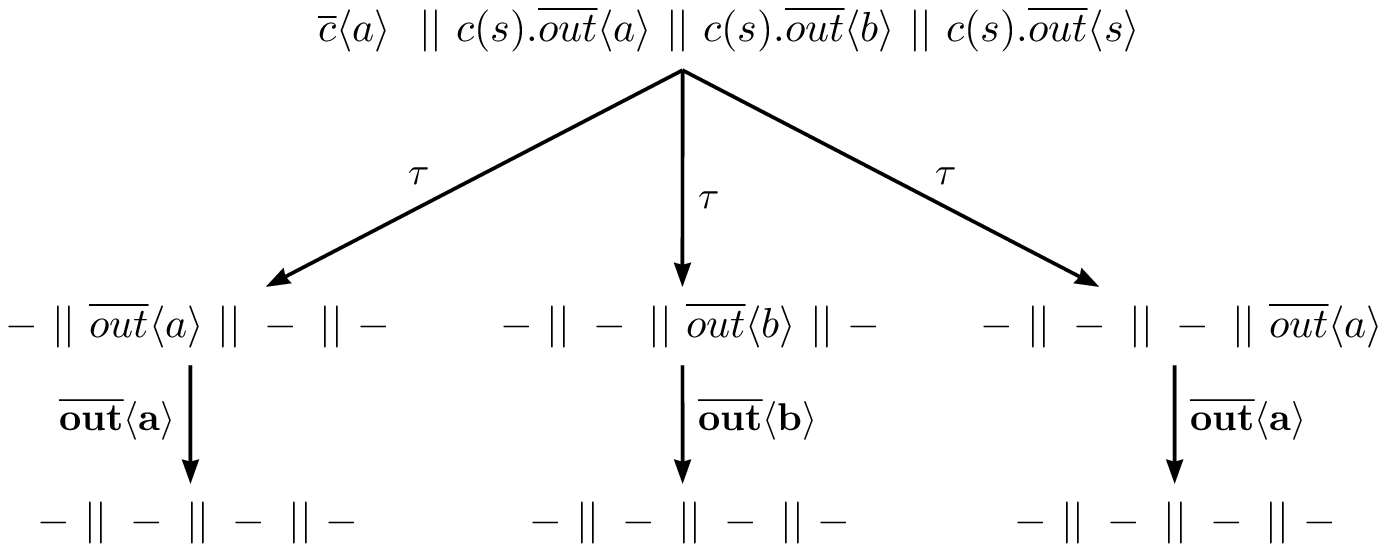}
		\label{fig:CCS-a}
		} \\[5mm]
  	\subbottom[$S \mathopen{[} ^{b}/_{sec} \mathclose{]}$]{  		
  		\centering
  		\includegraphics[width=0.80\columnwidth]{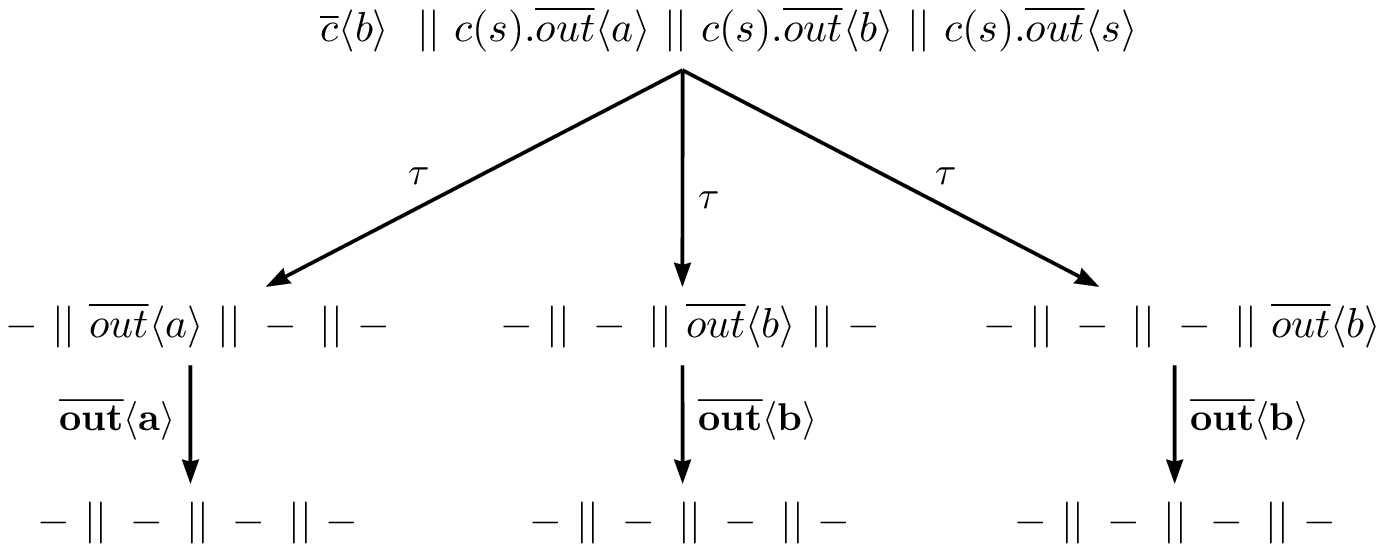}
  		\label{fig:CCS-b}
  	}
	\caption{Execution trees for Example~\ref{exa:CCS}}
	\label{fig:CCSExample}
\end{figure}
This example shows a distributed system that intuitively is not secure, because one of its components, $\mathit{Corr}$, reveals whatever secret it receives. According to the equivalence-based notions of security discussed above, however, \emph{it is secure}. But it is considered secure thanks to a scheduler that:

\begin{enumerate}[(i)]
	\item angelically helps the system to protect the secret; and
	\item does so by making its choices dependent on the secret.
\end{enumerate} 

We consider these assumptions on the scheduler to be excessively strong. 

Here we do not claim, however, that  we should  rule out the use of angelic nondeterminism in security: on the contrary, angelic nondeterminism can be a powerful specification concept. We only advocate a cautious use of this notion. In particular, it should not be used in a context in which the scheduler may be in collusion with the attacker. The goal of this chapter is to define a framework in which we can combine both angelic and demonic nondeterminism in a setting in which also probabilistic behavior may be present, and in a context in which the scheduler is restricted (i.e. not fully-informed). We define  \qm{safe} variant of  typical equivalence relations (complete traces and bisimulation), and we show how to use them to characterize information-hiding properties.

\section{Distributed systems and components}
\label{section:systems}

In this section we describe the kind of distributed systems we are dealing with. We start by introducing a variant of probabilistic automata, that we call \emph{Tagged Probabilistic Automata} (TPA). These systems are parallel compositions of probabilistic processes, called \emph{components}. Each component is equipped with a unique identifier, called \emph{tag}. Whenever a component (or a pair of components in case of synchronization) makes a step, the corresponding transition will be decorated with the associated tag (or pair of tags).

Similar systems have been already introduced in \cite{Andres:10:TechRep}. The main differences are that here the components  may contain nondeterminism

\subsection{Tagged Probabilistic Automata}

We now formalize the notion of TPA.

\begin{definition}
	\label{dfn:tpa}
  	A \emph{Tagged Probabilistic Automaton} (or \emph{TPA}) is a tuple \linebreak $(\Qset,\Tset,\Lset,\qi,\vartheta)$, where $\Qset$ is a set of \emph{states}, $\Tset$ is a set of \emph{tags}, $\Lset$ is a set of \emph{actions}, $\qi \in \Qset$ is the \emph{initial state}, and $\vartheta \colon \Qset \to \mathcal{P}(\Tset \times \Lset \times \distr(\Qset))$ is a \emph{transition function}.
\end{definition}

In the following we write $q\stackrel{t_{g}:a}{\longrightarrow}\mu$ for  $(t_{g},a, \mu) \in \vartheta(q)$, and we use $\avail(q)$ to denote the tags of the components that are enabled to make a transition. More formally:
\begin{equation*}
	\avail(q) \defsym \{t_{g}\in \Tset\mid \mathit{there \ exists}\ a\in \Lset, \mu \in \distr(\Qset)\mathit{ \  such\  that} \ q \stackrel{t_{g}:a}{\longrightarrow}\mu\}
\end{equation*}
In these systems, we can decompose the scheduler into two: a \emph{global scheduler}, which, via tags, decides which component or pair of components makes the next move, and a \emph{local scheduler}, which, also via tags, solves the internal nondeterminism of the selected component.

We  assume that the local scheduler can only select enabled transitions, and that the global scheduler can only select enabled components. This means that the execution does not stop unless all components are blocked. This is in line with the tradition  of process algebra and  of Markov Decision Processes, but contrasts with that of Probabilistic Automata \cite{Segala:95:NJC}. The results in this chapter, however, do not depend on this assumption.

\begin{definition}
	Let $M = (\Qset,\Tset,\Lset,\qi,\vartheta)$ be a TPA. Then:
	\begin{itemize}
		\item  A global scheduler for $M$ is a function $\zeta \colon \fpaths(M) \to (\Tset\cup \{\bot\})$ such that for all finite paths $\sigma$, if $\avail(\last(\sigma))\neq \emptyset$ then $\zeta(\sigma)\in\avail(\last(\sigma))$, and $\zeta(\sigma)= \bot$ otherwise.
		\item  A local scheduler for $M$ is a function $\xi \colon \fpaths(M) \to (\Tset \times \Lset \times \distr(\Qset)\cup \{\bot\})$ such that, for all finite paths $\sigma$, if $\vartheta(\last(\sigma))\neq \emptyset$ then  $\xi(\sigma)\in \vartheta(\last(\sigma))$, and $\xi(\sigma)=\bot$ otherwise.
		\item A global scheduler $\zeta$ and a local scheduler $\xi$ for $M$ are \emph{compatible} if,  for all finite paths $\sigma$, $\xi(\sigma) = (t_{g},a,\mu)$ implies $\zeta(\sigma) = t_{g}$, and $\xi(\sigma) = \bot$ implies $\zeta(\sigma) =\bot$.
		\item A scheduler  is a pair $(\zeta,\xi)$ of compatible global and local schedulers.
	\end{itemize}
\end{definition}

\subsection{Components}

We will use a simple probabilistic process calculus, very close to the \ccsp{} we introduced in Chapter~\ref{chapter:preliminaries}, to specify the components.

We assume a set of \emph{actions} or \emph{channel names}  $\Lset$ with elements $a, a_1,a_2, \cdots$, including the special symbol $\tau$ denoting a \emph{silent step}. Except \revision{for} $\tau$, each action $a$ has a co-action $\bar{a}\in\Lset$ and we assume $\bar{\bar{a}}=a$.
Components are specified by the following grammar:
\begin{equation*}
	q \quad \mbox{::=} \quad 0 \quad  \mid \quad a.q  \quad \mid \quad  q_1 + q_2  \quad \mid \quad  \sum_{i} p_i: q_i   \quad\mid \quad  q_1 | q_2   \quad\mid \quad  (a) q   \quad\mid \quad Q  
\end{equation*}

The constructs $0$, $ a.q$, $q_1 + q_2$, $q_1 | q_2$ and  $(a) q$ represent termination,   prefixing,   nondeterministic choice, parallel composition, and the restriction operator, respectively.  $\sum_{i} p_i: q_i$ is a  probabilistic choice, where $p_i$ represents the probability of the $i$-th branch and must satisfy $0\leq p_i\leq 1$ and
$\sum_{i} p_i=1$. The process call $Q$ is a simple process identifier. For each identifier, we assume a corresponding unique process declaration of the form $\smash{Q \stackrel{\textrm def}{=}q}$. The idea is that, whenever $Q$ is  executed, it triggers the execution of $q$. Note that  $q$ can contain $Q$ or another process identifier,
which means that our language allows (mutual) recursion. We will denote by ${\mathit fn}(q)$ the \emph{free channel names} occurring in $q$, i.e. the channel names not bound by a restriction operator.

\paragraph{Components' semantics:} The operational semantics consists of probabilistic transitions of the form $q{\stackrel{a}\rightarrow}\mu$ where $q\in \Qset$ is a process, $a\in \Lset$ is an action and $\mu\in\distr(\Qset)$ is a distribution on processes. They are specified by the following rules:
$$
\begin{array}{lcl}

	\text{\small PRF } \ \ \begin{tabular}{ c }
	  \\ \vspace{-0.35cm}\\
	  \hline \vspace{-0.35cm}
	  \\
	  $a.q\stackrel{a}\rightarrow \delta_{q}$
	\end{tabular}
	& &
	\text{\small NDT } \ \ \begin{tabular}{ c }
	$q_1\stackrel{a}\rightarrow \mu$
	  \\ \vspace{-0.35cm}\\
	  \hline \vspace{-0.35cm}
	  \\
	  $q_1 + q_2\stackrel{a}\rightarrow \mu$
	\end{tabular}\\[5mm]

	\text{\small PRB}\ \ \begin{tabular}{ c }
	  \\ \vspace{-0.35cm}\\
	  \hline \vspace{-0.35cm}
	  \\
	  $\sum_{i}p_i: q_i\stackrel{\tau}{\rightarrow }\probsum{i} p_i \cdot \delta_{q_i}$
	\end{tabular}
	& \quad &
	\text{\small PAR}\ \ \begin{tabular}{ c }
	$q_1\stackrel{a}{\rightarrow}\mu$
	  \\ \vspace{-0.35cm}\\
	  \hline \vspace{-0.35cm}
	  \\
	  $q_1 \mid q_2 \stackrel{a}{\rightarrow}\mu \mid q_2$
	\end{tabular} \\[5mm]
	
	\text{\small CALL }\ \	\begin{tabular}{ c }
	  $q \stackrel{a}{\rightarrow}\mu$
	  \\ \vspace{-0.35cm}\\
	  \hline \vspace{-0.35cm}
	  \\
	  $A \stackrel{a}{\rightarrow} \mu$
	\end{tabular} \ \  \text{\small if } {\scriptstyle A\defsym q}
	& \quad &
	\text{\small COM}\ \ \begin{tabular}{ c }
		$q_1\stackrel{a}{\rightarrow}\delta_{r_1}\quad q_2\stackrel{\bar{a}}{\rightarrow}\delta_{r_2}$
  	\\ \vspace{-0.35cm}\\
  	\hline \vspace{-0.35cm}
  	\\
  	$q_1 \mid q_2  \stackrel{\tau}{\rightarrow}\delta_{r_1 \mid r_2}$
	\end{tabular} \\
	
\end{array}
$$
$$
\begin{array}{c}
	\text{\small RST }\ \
	\begin{tabular}{ c }
	  $q  \stackrel{a}{\rightarrow} \mu$
	  \\ \vspace{-0.35cm}\\
	  \hline \vspace{-0.35cm}
	  \\
	 $(b) q  \stackrel{a}{\rightarrow} (b)\mu$
	\end{tabular} \ \ {\scriptstyle a,\bar{a}\neq b}\\[10mm]
\end{array}
$$
We assume also the symmetric versions of the rules NDT, PAR and COM. Recall that the symbol $\delta_q$ is the delta of Dirac, which assigns probability $1$ to $q$ and $0$ to all other processes. The symbol $\probsum{i}$ is the summation on distributions. Namely, $\probsum{i}p_i \cdot \mu_i$ is the distribution $\mu$ such that $\mu(x)=\sum_{i}p_i\cdot \mu_i{(x)}$. The notation $\mu\mid q$ represents the distribution $\mu'$ such that $\mu'(r) = \mu(q')$ if $r = q'\mid q$, and $\mu'(r) = 0$ otherwise. Similarly, $(b) \mu$ represents the distribution $\mu'$ such that $\mu'(q) = \mu(q')$ if $q = (b)q'$, and $\mu'(q) = 0$ otherwise.

\begin{remark} In some of the examples in this chapter we use an extension of our process calculus that allows message passing (cfr. Chapter~\ref{chapter:preliminaries}). Since the expressive power of our calculus with message passing or without it is the same, we consider explicit message passing simply as an alias for the correspondent encoding into the presentation of the calculus given above.
\end{remark}

\subsection{Distributed systems}

A distributed system has  the form $(A) \ q_1 \parallel q_2 \parallel \cdots \parallel q_n$, where the $q_i$'s are  components and $A\subseteq \Lset$. The restriction on $A$ enforces synchronization  on the channel names belonging to $A$, in accordance with the CCS spirit.

\paragraph{Systems' semantics} The semantics of a system gives rise to a TPA, where the states are terms representing systems during their evolution. A transition now is of the form $\smash{q\stackrel{t_{g}:a}{\longrightarrow} \mu}$ where $a\in\Lset$, $\mu\in\distr(\Qset)$, and $t_{g}\in \Tset$ is either the tag of the component which makes the move, or a (unordered) pair of tags representing the two partners of a synchronization. We can simply define $\Tset$ as $\Tset = I \cup I^2$ where $I = \{1,2,\ldots,n\}$ is the set of components' identifiers.

\[  \text{\small Interleaving} \ \ 
 \begin{tabular}{ c }
 $q_i \stackrel{a}{\rightarrow} \probsum{k} p_k \cdot \delta_{q_{ik}}$
  \\ \vspace{-0.35cm}\\
\hline \vspace{-0.35cm}
  \\
  $(A)\ q_1\parallel \cdots \parallel q_i\parallel \cdots \parallel q_n\stackrel{  i:a }{ \longrightarrow} \probsum{k}p_k \cdot \delta_{(A)q_1\parallel \cdots \parallel q_{ik}\parallel \cdots \parallel q_n}$
\end{tabular} \ {\scriptstyle a\not\in A}\]\\[-2mm]

\noindent where $i $ is the tag indicating that the component $i$ is making the step. Note that we assume that probabilistic choices are finite. This implies that every transition $\smash{q\stackrel{t_{g}:a}\longrightarrow \mu}$ can be written $\smash{q\stackrel{t_{g}:a}\longrightarrow  \probsum{k} p_k\cdot\delta_{q_k}}$, and justifies the notation used in the interleaving rule.

\[\text{\small Synch.}\ \ \begin{tabular}{ c }
  $q_i \stackrel{a}\rightarrow \delta_{ q_i^\prime}\qquad q_j \stackrel{\bar{a}}\rightarrow \delta_{q_j^\prime}$
  \\ \vspace{-0.35cm}\\
  \hline \vspace{-0.35cm}
  \\
  $(A)\ q_1\parallel \cdots \parallel q_i\parallel\cdots\parallel q_j \parallel \cdots \parallel q_n \stackrel{\lbrace i,j \rbrace:\tau}{\longrightarrow}\delta_{(A)  q_1\parallel \cdots \parallel q_{i}^\prime \parallel\cdots\parallel q_j^\prime \parallel \cdots \parallel q_n}$
\end{tabular} \]\\[-2mm]

\noindent here $\lbrace i,j \rbrace$ is the tag indicating that the components making the step are $i$ and $j$. Note that it is an unordered pair. Sometimes we will write $i,j$ instead of $\lbrace i,j\rbrace$, for simplicity.

	\begin{example}
		\label{exa:TPAs} 
		Consider again the systems of Example~\ref{exa:CCS}. Figures \ref{fig:semantics-ex-a} and  \ref{fig:semantics-ex-b} show the TPAs for $S\left[ ^{a}/_{sec}\right]$ and for $S\left[ ^{b}/_{sec}\right]$ respectively. For simplicity we do not write  the restriction on channels $c$ and $out$, nor the termination symbol $0$. We use '$-$' to denote a component that is stuck. The corresponding tags are indicated in the figure with numbers above the components.

The set of enabled transitions should be clear from the figures. For instance, we have $\avail(S\left[ ^{b}/_{sec}\right])=\lbrace \lbrace 1,2\rbrace , \lbrace 1,3\rbrace, \lbrace 1,4\rbrace\rbrace$ and $\avail(\  - \ ||\ \overline{out}\langle a \rangle\ ||\ -\ ||\ -\ )= \lbrace 2\rbrace$. The scheduler $\zeta$ defined as
\[\zeta(\sigma) \defsym
\begin{cases}
           \lbrace 1,4\rbrace & {~if~} \sigma=S\left[ ^{a}/_{sec}\right],\\
           2 & {~if~} \sigma=S\left[ ^{a}/_{sec}\right] \stackrel{1,2:\tau}\longrightarrow (\ -\  ||\ \overline{out}\langle a \rangle\ ||\ -\ ||\ - \ ),\\
           3 & {~if~} \sigma=S\left[ ^{a}/_{sec}\right] \stackrel{1,3:\tau}\longrightarrow (\ -\ ||\ -\ ||\ \overline{out}\langle b \rangle\ ||\ -\ ),\\
           4 & {~if~} \sigma=S\left[ ^{a}/_{sec}\right] \stackrel{1,4:\tau}\longrightarrow (\ -\ ||\ -\ ||\ -\ ||\ \overline{out}\langle a \rangle\ ),\\
          \bot & {~otherwise,}\\
\end{cases}
\]
\noindent is a global scheduler for $S\left[ ^{a}/_{sec}\right]$.
\end{example}

\begin{figure}[htb]%
	\centering
	\subbottom[$S \mathopen{[} ^{a}/_{sec} \mathclose{]}$]{
		\centering
		\includegraphics[width=0.80\columnwidth]{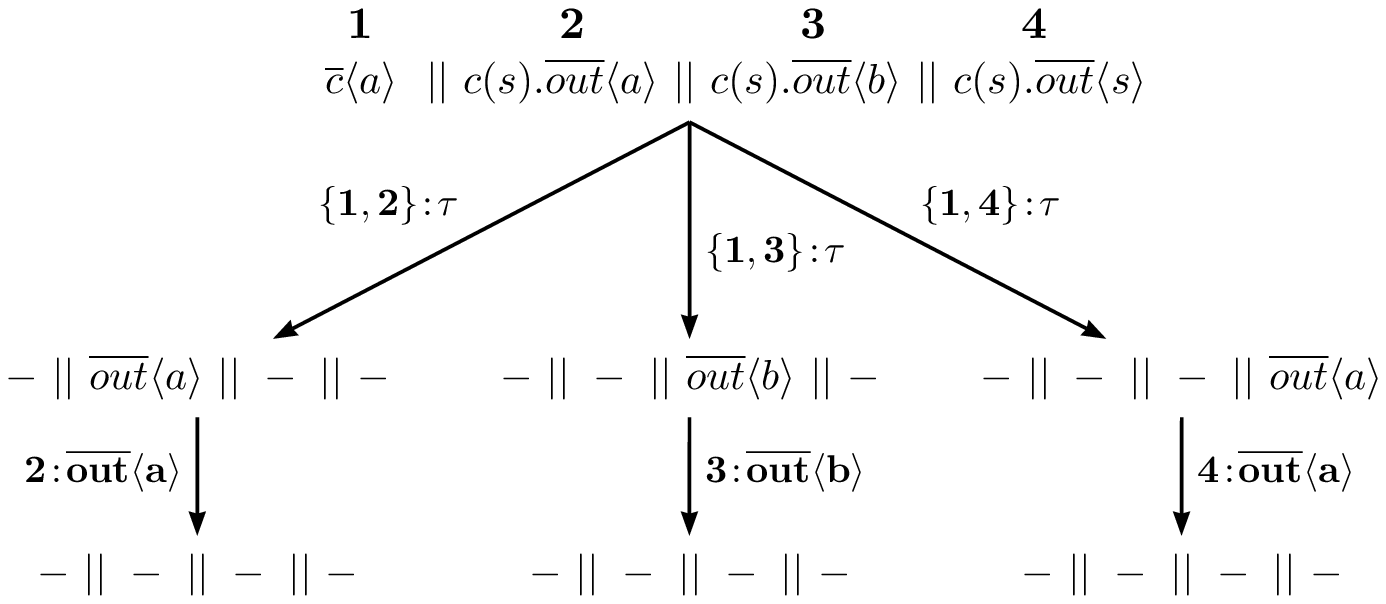}
		\label{fig:semantics-ex-a}
	} \\[5mm]
  \subbottom[$S \mathopen{[} ^{b}/_{sec} \mathclose{]}$]{  		
  	\centering
		\includegraphics[width=0.80\columnwidth]{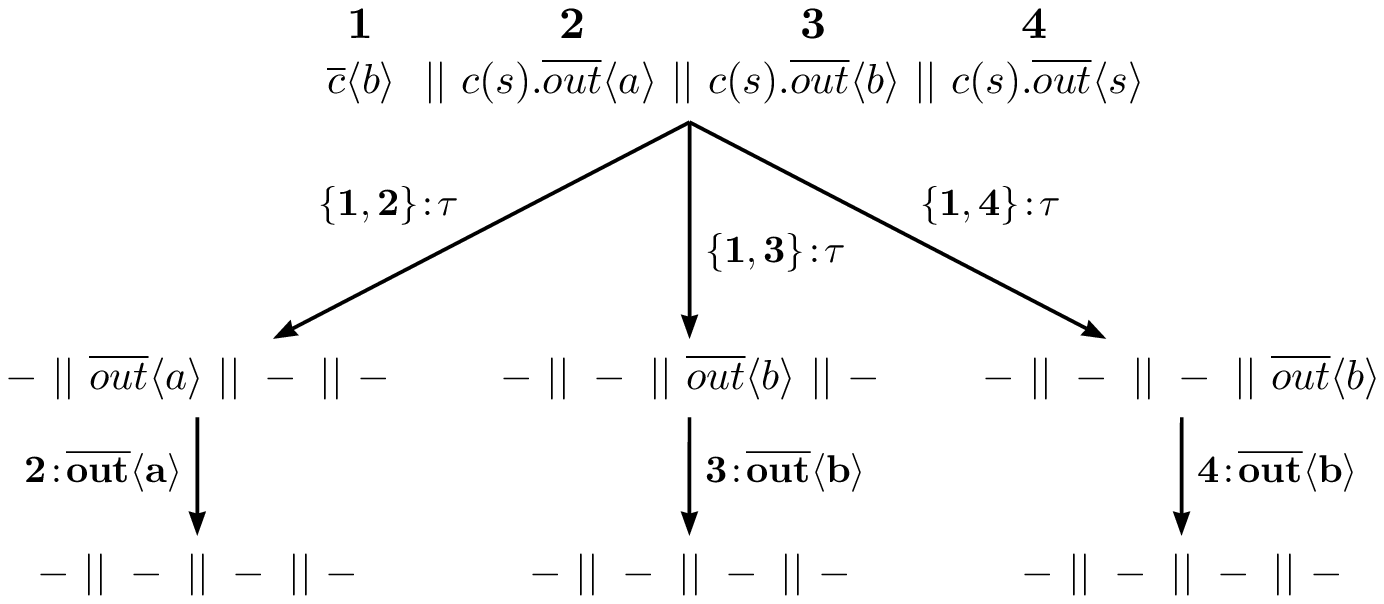}
		\label{fig:semantics-ex-b}
  }
	\caption{TPAs for Example~\ref{exa:TPAs}}
	\label{fig:semantics-ex}
\end{figure}

\section{Admissible schedulers}
\label{section:admissible-schedulers}

In this section we restrict the discerning power of the global and local schedulers in order to avoid the problem of the  information leakage induced by clairvoyant schedulers. We impose two kinds of restrictions: For the global scheduler, following  \cite{Andres:10:TechRep}, we assume that it can only see, and keep memory of, the observable actions and the components that are enabled, but not the secret actions. As for the local scheduler, we assume that the local nondeterminism of each component is solved on the basis of the view of the history local to that component, i.e. the projection of the  history of the system on that component. In other words, each component has to make decisions based only on the history of its own execution; it cannot see anything of the other components.

\subsection{Restricting global schedulers}

We assume that the set of actions $\Lset$ is divided in two disjoint sets, the \emph{secret actions} $\Sset$ and the \emph{observable actions} $\Oset$, such that $\Sset \cup \Oset = \Lset$. The secret actions are supposed to be invisible to the global scheduler. Formally, this can be achieved using a function $\sift$ with
\begin{equation*}
	\sift(a) = \left\{
		\begin{aligned}
			\tau  & \quad \text{if $a \in \Sset$,} \\ 
			a     & \quad \text{otherwise.}
		\end{aligned}
		\right.
\end{equation*}
Then, we restrict the power of the global scheduler by forcing it to make the same decisions on paths he cannot tell apart.

\begin{definition}
	Given a TPA $M$, a global scheduler $\zeta$ for $M$ is \emph{admissible} if for all paths $\sigma_1$ and $\sigma_2$ we have $\view(\sigma_1)=\view(\sigma_2)\mbox{~implies~}\zeta(\sigma_1)=\zeta(\sigma_2)$, where 	
\begin{equation*}
	\begin{split}
		\view \left( \qi \stackrel{t_{g1}:a_1}{\longrightarrow} q_1 \stackrel{t_{g2}:a_2}{\longrightarrow} \cdots \stackrel{t_{gn}:a_{n}}{\longrightarrow} q_{n+1} \right) \defsym \left( \avail(\qi), \sift(a_1), t_{g1} \right) \\ \left( \avail(q_1), \sift(a_2), t_{g2} \right) \cdots \left( \avail(q_n), \sift(a_n), t_{gn} \right)
	\end{split}
\end{equation*}	
\end{definition}

The idea is that $\view$ sifts the information of the path that the scheduler can see. Since $\sift$ \qm{hides} the secrets, the scheduler cannot take different decisions based on them.

\subsection{Restricting local schedulers}

The restriction on local schedulers is based on the idea that a step of the component $i$ of a system can only be based on the view that $i$ has of the history, i.e. its own history. In order to formalize this restriction, it is convenient to introduce the concept of $i$-view of a path $\sigma$, or \emph{projection} of $\sigma$ on $i$, which we will
denote by $\sigma_{\upharpoonright  i}$. We define it inductively:
\begin{equation*}
	(\sigma \stackrel{t_{g}:a}{\longrightarrow}\mu )_{\upharpoonright  i} = 
	\left\{\begin{array}{ll}
					\sigma_{\upharpoonright  i} \stackrel{i:b}{\longrightarrow} \delta_{q_i} & \text{if } t_{g} = \{i,j\} \text{ and } \mu = \delta_{(A)\ q_1\parallel  \ldots \parallel q_i\parallel   \ldots \parallel q_j\parallel  \ldots \parallel q_n}\\
					\sigma_{\upharpoonright  i} \stackrel{i:a}{\longrightarrow} \mu& \text{if } t_{g} =i\\	
  				\sigma_{\upharpoonright  i} &\text{otherwise}						
					\end{array}
	\right.
\end{equation*}
\commentMS{This is a bit of cheating. In the journal version we need to change the operational semantics so that we can retrieve the form of the premise from the conclusion.}

In the above definition, the first line represents the case of a synchronization step involving the component $i$, where we assume that the premise for $i$ is of the form
$\smash{q'_i\stackrel{b}{\longrightarrow} \delta_{q_i} }$.
The second line represents an interleaving step in which $i$ is the active component.
The third line represents  step in which the component $i$ is idle.

The restriction to the local scheduler can now be expressed as follows:

\begin{definition}
	Given a TPA $M$ and a local scheduler $\xi$  for $M$, we say that $\xi$ is \emph{admissible} if for all paths $\sigma$ and $\sigma'$, if whenever $\xi(\sigma)= (t_{g}, a,\mu)$, and $\xi(\sigma')= (t_{g}', a',\mu')$ we have:
	
	\begin{itemize}
		
		\item if $t_{g} =t_{g}'=i$ and $\sigma_{\upharpoonright  i}=\sigma'_{\upharpoonright  i}$, then  $\xi(\sigma)=\xi(\sigma')$,
	
		\item if $t_{g} =t_{g}'=\{i,j\}$, $\sigma_{\upharpoonright  i}=\sigma'_{\upharpoonright  i}$, and  $\sigma_{\upharpoonright  j}=\sigma'_{\upharpoonright  j}$ then $\xi(\sigma)=\xi(\sigma')$.
		
	\end{itemize}
	
\end{definition}
A pair of compatible schedulers $(\zeta,\xi)$ is called \emph{admissible} if $\zeta$ and $\xi$ are admissible.

\section{Safe equivalences}
\label{section:equivalences}
In this section we  revise  process equivalence notions  to make them safe for security.

\subsection{Safe complete traces}

We define here a safe version of complete-trace semantics. The idea is that we  compare two processes based not only on their traces, but also on the choices that the global scheduler makes at every step. We do this by recording explicitly the tags in the traces.

\begin{definition} Here we define the notion of safe complete traces.

	\begin{itemize}
		\item Given a TPA  $M = (\Qset,\Tset,\Lset,\qi,\vartheta)$,  the (complete) safe traces of $M$, denoted here by $\Traces_s$, are defined as the probabilities of sequences of tags and actions corresponding to all possible complete executions, i.e.
\begin{align*}
	\Traces_s(M) = & \{ \ f: (\Tset \times \Lset )^\infty \rightarrow [0,1] \mid  \\
	               & \  \ \text{there exists an admissible scheduler} (\zeta,\xi) \text{ s.t. } \\
	               & \  \ \forall t\in (\Tset \times \Lset )^\infty \\
	               & \  \ f(t) = \psp_{M,\zeta,\xi}(\{ \sigma\in\cpaths(M) \mid \trace_{ta}(\sigma) = t\}) \  \}
\end{align*}
\noindent where  $\psp_{M,\zeta,\xi}$ is the probability measure in $M$ under $(\zeta,\xi)$, and $\trace_{ta}$ extracts from a path  the sequence of tags and actions, i.e.
\begin{align*}
	\trace_{ta}(\epsilon) & = \epsilon \\
	\trace_{ta}(q\stackrel{t_{g}:a}{\longrightarrow}\sigma) & = t_{g}:a\cdot  \trace_{ta}(\sigma)
\end{align*}
		\item We denote by $\Traces_s(q)$ the safe traces of the automaton associated to a system $q$.
		
		\item Two systems $q_1$ and $q_2$ are safe-trace equivalent, denoted by $q_1\simeq_s q_2$,  if and only if $\Traces_s(q_1)=\Traces_s(q_2)$.	
	\end{itemize}
	
\end{definition}

The following example points out the difference between $\simeq_s$ and the standard (complete) trace equivalence.


\begin{example}
	\label{exa:CCSlab} 
Consider the TPAs of Example~\ref{exa:TPAs}. The two TPAs have the same complete traces. In fact we have
\begin{equation*}
	\Traces(S\left[ ^{a}/_{sec}\right]) \quad = \quad \lbrace \tau \cdot \overline{out}\langle a \rangle \ , \ \tau\cdot \overline{out}\langle b \rangle \rbrace \quad \quad = \quad \Traces(S\left[ ^{b}/_{sec}\right])
\end{equation*}
But on the other hand, we have 
\begin{equation*}
	\Traces_s(S\left[ ^{a}/_{sec}\right]) = \{f_1,f_2,f_3\} \quad \neq \quad  \{f_1,f_2,f_4\} = \Traces_s(S\left[ ^{a}/_{sec}\right])
\end{equation*}
\noindent where 
\begin{align*}
	f_1(t) = 
		\left\{ 
			\begin{aligned} 
				1 & \quad \text{if $t = \{1,2\}:\tau \cdot 2:\overline{out}\langle a \rangle$,} \\
				0 & \quad \text{for all other values of $t \in (\Tset \times \Lset )^\infty$.}
			\end{aligned} 
		\right. \\
	f_2(t) = 
		\left\{ 
			\begin{aligned} 
				1 & \quad \text{if $t = \{1,3\}:\tau \cdot 3:\overline{out}\langle b \rangle$,} \\
				0 & \quad \text{for all other values of $t \in (\Tset \times \Lset )^\infty$.}
			\end{aligned} 
		\right. \\
	f_3(t) = 
		\left\{ 
			\begin{aligned} 
				1 & \quad \text{if $t = \{1,4\}:\tau \cdot 4:\overline{out}\langle a \rangle$,} \\
				0 & \quad \text{for all other values of $t \in (\Tset \times \Lset )^\infty$.}
			\end{aligned} 
		\right. \\
	f_4(t) = 
		\left\{ 
			\begin{aligned} 
				1 & \quad \text{if $t = \{1,4\}:\tau \cdot 4:\overline{out}\langle b \rangle$,} \\
				0 & \quad \text{for all other values of $t \in (\Tset \times \Lset )^\infty$.}
			\end{aligned} 
		\right. \\
\end{align*}
\end{example}

\subsection{Safe bisimilarity}

In this section we propose a security-safe version of strong  bisimulation, that we call \emph{safe bisimulation}. This is an equivalence relation stricter than safe-trace equivalence, with  the advantage of being a congruence. Since in this chapter we assume that schedulers can always observe which component is making a step (even a silent step), it does not seem natural to consider weak bisimulation.

We start with some notation. Given a TPA  $M = (\Qset,\Tset,\Lset,\qi,\vartheta)$, and a global scheduler $\zeta$, we write
$ q\stackrel{a}{\longrightarrow}_\zeta \mu$  if there exists $\sigma \in \fpaths(M)$ such that $\zeta(\sigma)\not=\bot$,
$(\zeta(\sigma), a,\mu) \in \vartheta(q)$, and $q =\last(\sigma)$.
Note that the  restriction to  $\zeta$ 
still allows nondeterminism, i.e. there may be $\mu_1,  \mu_2$, such that  $ q\stackrel{a_1}{\longrightarrow}_\zeta \mu_1$ and $ q\stackrel{a_2}{\longrightarrow}_\zeta \mu_2$ (with either $a_1=a_2$ or $a_1\neq a_2$).

We now define the notion of safe bisimulation. The idea is that, if $q_{1}$ and $q_{2}$ are bisimilar states, then every move from $q_{1}$ should be mimicked by a move from $q_{2}$ \emph{using the same (admissible) scheduler}.

\begin{definition}
	\label{def:safe_sch}
	Given a TPA $M = (\Qset,\Tset,\Lset,\qi,\vartheta)$, we say that a relation ${\mathcal R} \subseteq \Qset \times \Qset$ is a safe bisimulation if and only if, \revision{whenever $q_1 \,{\mathcal R}\, q_2$:}
	
	\begin{enumerate}
		\item $\avail(q_1) = \avail(q_2)$, and
		
		\item for all admissible global schedulers $\zeta$ for $M$ such that \revision{$\zeta(\sigma_1) \, \mathcal{R} \, \zeta(\sigma_2)$} whenever $\last(\sigma_1)=q_1$ and $\last(\sigma_2)=q_2$:
		
			\begin{itemize}
				\item if $q_1 \stackrel{a}{\longrightarrow}_\zeta \mu_1$, then there exists $\mu_2$ such that $q_2 \stackrel{a}{\longrightarrow}_\zeta \mu_2$ and $\mu_1 \,{\mathcal R}\, \mu_2$, and 
				\item if $q_2 \stackrel{a}{\longrightarrow}_\zeta \mu_2$, then there exists $\mu_1$ such that $q_1 \stackrel{a}{\longrightarrow}_\zeta \mu_1$ and $\mu_1 \,{\mathcal R}\, \mu_2$,
			\end{itemize}
		
	\end{enumerate}

where $\mu_1 \,{\mathcal R}\, \mu_2$ means that for all equivalence classes $X\in \Qset_{\hat{\mathcal R}}$, we have $\mu_1(X)= \mu_2(X)$, where $\hat{\mathcal R}$ is the smallest equivalence class induced by $\mathcal R$.

\end{definition}

It is possible to simplify Definition~\ref{def:safe_sch}, restricting the schedulers to be history-independent. In other words, to show that two distributed systems are bisimilar, it suffices to consider one-step computations and show that two states are equivalent by using only history-independent schedulers. The lemma bellow justifies this claim.

\begin{lemma}
	\label{lem:history}
	Let $M=(\Qset,\Tset,\Lset,\qi,\vartheta)$ be a TPA, and let $\mathcal{R}$ be an equivalence relation on the set of states $\Qset$. Consider $\zeta$ to be a global scheduler for $M$ such that, for every pair of states $q_{1}, q_{2} \in \Qset$, if $q_{1} = \last(\sigma_{1}) \mathcal{R} \last(\sigma_{2}) = q_{2}$ then $\zeta(\sigma_{1}) = \zeta(\sigma_{2})$. In that case $\zeta$ is history-independent, i.e. it depends only on the last state of a path $\sigma$.
\end{lemma}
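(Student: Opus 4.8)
The plan is to derive history-independence directly from the reflexivity of $\mathcal{R}$. Recall that a scheduler is history-independent precisely when its value on a path depends only on the last state of that path; so the goal reduces to showing that $\zeta(\sigma_1) = \zeta(\sigma_2)$ holds for any two finite paths $\sigma_1, \sigma_2$ with $\last(\sigma_1) = \last(\sigma_2)$.

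First I would fix two arbitrary paths $\sigma_1, \sigma_2 \in \fpaths(M)$ satisfying $\last(\sigma_1) = \last(\sigma_2) = q$ for some state $q \in \Qset$. Since $\mathcal{R}$ is an equivalence relation it is in particular reflexive, hence $q \,\mathcal{R}\, q$, which is exactly the condition $\last(\sigma_1) \,\mathcal{R}\, \last(\sigma_2)$. The hypothesis on $\zeta$ then applies verbatim (instantiated with this pair of states and these witnessing paths) and yields $\zeta(\sigma_1) = \zeta(\sigma_2)$. As $\sigma_1, \sigma_2$ were arbitrary among paths sharing a common last state, this establishes that $\zeta$ factors through $\last(\cdot)$, i.e. it is history-independent.

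The only point that requires attention is the correct reading of the quantifier in the hypothesis: although it is phrased in terms of a pair of states $q_1, q_2$, it implicitly ranges over the paths $\sigma_1, \sigma_2$ witnessing those states as their last elements, and the conclusion $\zeta(\sigma_1) = \zeta(\sigma_2)$ is a statement about paths. Once this is made explicit, the argument is immediate and there is no genuine obstacle: the whole content of the lemma is that the diagonal case $q_1 = q_2 = q$ of the hypothesis, forced by reflexivity, already coincides with the definition of history-independence. I would therefore keep the proof to a couple of lines, emphasising only the appeal to reflexivity.
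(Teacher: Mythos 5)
Your proof is correct and follows essentially the same route as the paper's: both reduce the claim to the diagonal case of the hypothesis, observing that any two paths with the same last state $q$ satisfy $q \,\mathcal{R}\, q$ by reflexivity, so the hypothesis forces $\zeta(\sigma_1)=\zeta(\sigma_2)$. Your explicit appeal to reflexivity is just a cleaner phrasing of the paper's remark that ``$q_1$ and $q_2$ may be identical.''
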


\begin{proof}
	It is easy to see that the relation of having the same last state is an equivalence relation on paths, and therefore it determines a partition on the set of paths. Since the above $q_1$ and $q_2$ may be identical, the scheduler must give the same value on equivalent paths and it is, therefore, history-independent.
\end{proof}

Using the lemma above, in the following results about safe bisimulation we will usually write $\zeta(q)$ where $q$ is a state. Note however that this does not mean that in the computations of safely bisimilar systems the schedulers are necessarily history-independent: at each step of the computation we may change scheduler, and therefore we may change alternative when we pass by the same state $q$ at a later time.

The following result is analogous to the case of standard bisimulation. It implies that largest safe bisimulation exists, and coincides with the union of all safe bisimulations. We call it \emph{safe bisimilarity}, and we denote it by $\sim_s$.

\begin{proposition}
	\label{prop:union}
		The union of all the safe bisimulations is still a safe bisimulation.
\end{proposition}

\begin{proof}
	Assume that $q_1\sim_s q_2$. Then $q_1 \,{\mathcal R}\, q_2$ holds, for some safe bisimulation ${\mathcal R}$. Hence we have $\avail(q_1)=\avail(q_2)$, and for every global scheduler $\zeta$, if  $\zeta(q_1)=\zeta(q_2)$, and   $q_1 \stackrel{a}{\longrightarrow}_\zeta \mu_1$, then there exists $\mu_2$ such that $q_2 \stackrel{a}{\longrightarrow}_\zeta \mu_2$, and $\mu_1 \,{\mathcal R}\, \mu_2$. This implies that $\mu_1 \sim_s \mu_2$. In fact $\hat{\mathcal R}$ (the smallest equivalence class induced by $\mathcal R$) is a finer relation than $\hat\sim_s$, i.e. $q_1\, \hat{\mathcal R} \,q_2$ implies $q_1 \hat{\sim}_s q_2$. Also, $\hat{\mathcal R}$ is an equivalence relation, and therefore it induces a partition on each of the equivalence classes $X\in \Qset_{\hat\sim_s}$. Hence we have, for each   $X\in \Qset_{\hat{\sim}_s}$, $\mu_1(X) = \sum_{Y\in X_{\hat{\mathcal R}}} \mu_1(Y) =  \sum_{Y\in X_{\hat{\mathcal R}}} \mu_2(Y) = \mu_2(X)$.

We proceed analogously to show that, if $q_2 \stackrel{a}{\longrightarrow}_\zeta \mu_2$, then  there exists $\mu_1$ such that $q_1 \stackrel{a}{\longrightarrow}_\zeta \mu_1$ and $\mu_1\sim_s\mu_2$.

\end{proof}

Given two TPAs $M_1= (\Qset_1,\Tset,\Lset,\qi_1,\vartheta_1)$ and $M_2 = (\Qset_2,\Tset,\Lset,\qi_2,\vartheta_2)$ sharing the same set of tags $\Tset$ and actions $\Lset$, we can define bisimulation and bisimilarity across their states, i.e. as relations on $(\Qset_1\cup \Qset_2)$, in the obvious way, by constructing the TPA $M$ with a new initial state $\qi$ with transitions to $\delta_{\qi_1}$ and to $\delta_{\qi_2}$, respectively.

Given two components or systems $q_1$ and $q_2$, we will say that $q_1$ and $q_2$ are safely bisimilar, denoted by $q_1\sim_s q_2$, if the initial states of the corresponding TPAs are safely bisimilar. Note that $q_1\sim_s q_2$ is possible only if $q_1$ and $q_2$ have the same number of active components, where \qm{active}, for a component, means that  during the execution of the system it will make at least one step.
Note that in the case of components, or of systems constituted by one component only, safe bisimulation and safe bisimilarity coincide with standard bisimulation and bisimilarity (denoted by $\sim$), respectively. This is not the case for systems, as shown by the following example:
%
%

\begin{example} Consider again the TPAs of Example~\ref{exa:TPAs}. As pointed out earlier in this chapter, we have  $S\left[ ^{a}/_{sec}\right]\sim S\left[ ^{b}/_{sec}\right]$. Yet $S\left[ ^{a}/_{sec}\right]\not\sim_s S\left[ ^{b}/_{sec}\right]$. To show this, let us construct a new TPA (as described before) with initial state $\qi$ such that $\smash{\qi\stackrel{t_{g}:\tau}{\longrightarrow}S\left[ ^{a}/_{sec}\right]}$ and $\smash{\qi\stackrel{t_{g}:\tau}{\longrightarrow} S\left[ ^{b}/_{sec}\right]}$. 
Now consider the (admissible) global scheduler $\zeta$ such that
\[\zeta(\sigma) \defsym
\begin{cases}
           t_{g} & {~if~} \sigma=\qi,\\
           \{1,4\} & {~if~} \sigma=\qi\stackrel{t_{g}:\tau}{\longrightarrow}S\left[ ^{a}/_{sec}\right],\\
           2 & {~if~} \sigma=\qi\stackrel{t_{g}:\tau}{\longrightarrow}S\left[ ^{a}/_{sec}\right] \stackrel{1,2:\tau}\longrightarrow (\ -\  ||\ \overline{out}\langle a \rangle\ ||\ -\ ||\ - \ ),\\
           3 & {~if~} \sigma=\qi\stackrel{t_{g}:\tau}{\longrightarrow}S\left[ ^{a}/_{sec}\right] \stackrel{1,3:\tau}\longrightarrow (\ -\ ||\ -\ ||\ \overline{out}\langle b \rangle\ ||\ -\ ),\\
           4 & {~if~} \sigma=\qi\stackrel{t_{g}:\tau}{\longrightarrow}S\left[ ^{a}/_{sec}\right] \stackrel{1,4:\tau}\longrightarrow (\ -\ ||\ -\ ||\ -\ ||\ \overline{out}\langle a \rangle\ ),\\
           \{1,4\} & {~if~} \sigma=\qi\stackrel{t_{g}:\tau}{\longrightarrow}S\left[ ^{b}/_{sec}\right],\\
           2 & {~if~} \sigma=\qi\stackrel{t_{g}:\tau}{\longrightarrow}S\left[ ^{b}/_{sec}\right] \stackrel{1,2:\tau}\longrightarrow (\ -\  ||\ \overline{out}\langle a \rangle\ ||\ -\ ||\ - \ ),\\
           3 & {~if~} \sigma=\qi\stackrel{t_{g}:\tau}{\longrightarrow}S\left[ ^{b}/_{sec}\right] \stackrel{1,3:\tau}\longrightarrow (\ -\ ||\ -\ ||\ \overline{out}\langle b \rangle\ ||\ -\ ),\\
           4 & {~if~} \sigma=\qi\stackrel{t_{g}:\tau}{\longrightarrow}S\left[ ^{b}/_{sec}\right] \stackrel{1,4:\tau}\longrightarrow (\ -\ ||\ -\ ||\ -\ ||\ \overline{out}\langle b \rangle\ ),\\
           \bot & {~otherwise.}\\
\end{cases}
\]
\noindent It is easy to see that $S\left[ ^{b}/_{sec}\right]$ cannot mimic the transition $4\!:\!\overline{out}\langle a \rangle$ produced by $S\left[ ^{a}/_{sec}\right]$ using the same scheduler $\zeta$.
\end{example}

We now show that safe bisimulation is a congruence with respect to all the operators of our language.  In the following theorem, statements~\ref{item:compo-a} and \ref{item:compo-b} are just the standard compositionality result for probabilistic bisimulation.

\begin{theorem}
	\label{theo:compo} \ \ \ \ \ \
		\begin{enumerate}
		
			\item $\sim_s$ is an equivalence relation.
		
			\item Let $a\in \Lset$ be an action and $A,  B, B' \subseteq \Lset$ be sets of restrictions. Let $p_1,\ldots,p_n$ be probability values, and let $q, q_1, q_2, \ldots ,q_n$, $q'_1, q'_2, \ldots ,q'_n$ be components.
				\begin{enumerate}
					\item \label{item:compo-a} If $q_1\sim_s q_2$, \quad  then\quad  $a.q_1\sim_s a.q_2$,  \quad $q_1+ q\sim_s q_2 +q$, \quad $(a)q_1\sim_s (a)q_2$,  and \quad $q_1 \mid q\sim_s q_2 \mid q$.
					\item \label{item:compo-b} If $q_1\sim_s q'_1,\ldots, q_n\sim_s q'_n$ , \quad then \quad $\sum_i p_i : q_i \sim_s \sum_i p_i : q_i'$. 
					\item \label{item:compo-c} If $(B) \ q_1\parallel\ldots\parallel q_n \ \sim_s \ (B') \ q'_1\parallel \ldots \parallel q'_n$, \ \ and ${\mathit fn}(q)\not \in B\cup B'$, \quad then 
						\[(A \cup B)\  q_1\parallel \ldots\parallel q \parallel \ldots \parallel q_n \ \sim_s \ (A\cup B') \ q'_1\parallel\ldots\parallel q \parallel \ldots \parallel q'_n.\]
				\end{enumerate}
		\end{enumerate}
\end{theorem}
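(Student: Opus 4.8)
The plan is to prove Theorem~\ref{theo:compo} by establishing the three parts in sequence, relying on the characterization of $\sim_s$ as the largest safe bisimulation (Proposition~\ref{prop:union}) and the history-independence simplification (Lemma~\ref{lem:history}). For part~(1), I would verify reflexivity, symmetry, and transitivity. Reflexivity and symmetry are immediate from the definition of safe bisimulation; the interesting case is transitivity, where, given safe bisimulations $\mathcal{R}_1$ (relating $q_1,q_2$) and $\mathcal{R}_2$ (relating $q_2,q_3$), I would show that the relational composition $\mathcal{R}_1 \circ \mathcal{R}_2$ is again a safe bisimulation. The key point is that the admissibility constraint on the global scheduler $\zeta$ (that $\zeta(\sigma_1)\,\mathcal{R}\,\zeta(\sigma_2)$ whenever the last states are related) is preserved along the composition, and that the lifting $\mu_1\,\mathcal{R}\,\mu_2$ to distributions behaves correctly under composition of equivalence relations. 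Since $\avail$ is matched at each step, the $\avail$-equality condition transfers transitively as well.

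For parts~\ref{item:compo-a} and~\ref{item:compo-b}, the strategy is the standard one for probabilistic bisimulation: for each operator, exhibit an explicit candidate relation and check it is a safe bisimulation. For instance, to prove $a.q_1 \sim_s a.q_2$ from $q_1\sim_s q_2$, I would take $\mathcal{R} = {\sim_s} \cup \{(a.q_1, a.q_2)\}$ and verify the two conditions of Definition~\ref{def:safe_sch}; since these are single components, safe bisimulation coincides with standard bisimulation (as noted in the text), so these cases reduce to known results. The only subtlety I must respect is that, because we are dealing with components rather than systems, the tags play no discerning role and the scheduler restrictions are vacuous, which is precisely why the classical compositionality arguments go through unchanged. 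I would handle $+$, $(a)\cdot$, $\mid$, and probabilistic choice $\sum_i p_i : q_i$ analogously, using the operational rules PRF, NDT, RST, PAR, COM, and PRB respectively, and in each case checking that the distribution-lifting condition $\mu_1\,\mathcal{R}\,\mu_2$ is preserved by the operator.

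The hard part will be part~\ref{item:compo-c}, the compositionality of parallel composition \emph{at the level of systems} rather than components. Here safe bisimilarity genuinely differs from standard bisimilarity, so I cannot simply invoke the classical result. The plan is to start from a safe bisimulation $\mathcal{R}$ witnessing $(B)\,q_1\parallel\cdots\parallel q_n \sim_s (B')\,q'_1\parallel\cdots\parallel q'_n$, and build a candidate relation $\mathcal{R}'$ relating the enlarged systems $(A\cup B)\,q_1\parallel\cdots\parallel q\parallel\cdots\parallel q_n$ and $(A\cup B')\,q'_1\parallel\cdots\parallel q\parallel\cdots\parallel q'_n$ by pairing corresponding global states whose $q_i$-parts are $\mathcal{R}$-related and whose inserted component $q$ evolves identically on both sides. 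The crucial obstacle is that inserting the extra component $q$ changes the tag set (it adds new interleaving and synchronization tags), and the admissible global scheduler now sees a richer $\avail$ structure. I must argue that any admissible scheduler $\zeta$ for the enlarged system, when it selects a move involving only the original components, induces an admissible scheduler for the smaller system (via the $\view$ function and the projection semantics $\sigma_{\upharpoonright i}$), so that the bisimulation step can be mimicked; and conversely that moves involving $q$ are matched trivially on both sides since $q$ is syntactically identical. The condition ${\mathit fn}(q)\notin B\cup B'$ is exactly what guarantees that the restriction sets do not interfere with $q$'s free names, so that $q$'s transitions are available identically in both composed systems and the synchronization structure with the original components is preserved across $\mathcal{R}$.

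Throughout, I expect the main technical burden to be bookkeeping: tracking how tags, the $\avail$ sets, and the scheduler-admissibility constraints transform under each operator, and verifying that the distribution lifting $\mu_1\,\mathcal{R}'\,\mu_2$ survives these transformations. I would rely on Lemma~\ref{lem:history} to reduce all scheduler reasoning to history-independent (state-based) schedulers, which makes the case analysis on single transition steps tractable, and on Proposition~\ref{prop:union} to conclude that exhibiting \emph{some} safe bisimulation containing the relevant pair suffices to establish $\sim_s$.
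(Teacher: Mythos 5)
Your treatment of parts 2(a)--(c) follows essentially the same route as the paper: exhibit an explicit candidate relation for each operator (e.g.\ $\{(a.q_1,a.q_2)\}\cup{\sim_s}$ for prefixing, and for 2(c) the relation pairing the enlarged parallel systems whose underlying sub-systems are $\sim_s$-related), check the $\avail$-equality and transition-matching conditions of Definition~\ref{def:safe_sch} using the freshness hypothesis ${\mathit fn}(q)\notin B\cup B'$, and conclude via Proposition~\ref{prop:union}. Where you genuinely diverge is part (1). You propose the classical argument --- transitivity via closure of safe bisimulations under relational composition --- whereas the paper avoids composition entirely: it observes that the smallest equivalence $\hat{\mathcal R}$ containing a safe bisimulation $\mathcal R$ is again a safe bisimulation, applies this to ${\sim_s}$ itself (a safe bisimulation by Proposition~\ref{prop:union}) to get $\hat{\sim}_s\subseteq{\sim_s}$, and concludes ${\sim_s}=\hat{\sim}_s$. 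Your route is workable but carries two pieces of bookkeeping that the paper's trick sidesteps: first, since the lifting of $\mathcal R$ to distributions is defined via the equivalence classes of $\hat{\mathcal R}$, the composition $\mathcal R_1\circ\mathcal R_2$ of two non-equivalence bisimulations does not obviously lift correctly --- you need, for instance, to close the relations reflexively so that each class of $\widehat{\mathcal R_1\circ\mathcal R_2}$ decomposes into classes of each $\hat{\mathcal R}_i$, on which $\mu_1,\mu_2$ and $\mu_2,\mu_3$ respectively agree; second, you must check that a single admissible scheduler can be threaded through the intermediate state $q_2$ so that both matching steps use the same $\zeta$. Neither obstacle is fatal, but the paper's closure argument buys the same conclusion with none of this. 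One small caution on 2(a): the paper does not treat $q_1\mid q$ as a case where the scheduler restrictions are vacuous; it explicitly defers that case to the system-level parallel argument of 2(c), so if you invoke the coincidence with standard bisimulation for components you should still say which known congruence result for probabilistic parallel composition you are appealing to.
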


\begin{proof} \qquad
	
	\begin{enumerate}
		
		\item Although safe bisimulations are not equivalence relations in general, their union, i.e. safe bisimilarity, is an equivalence. In fact: 
			\begin{itemize}
				\item  It is easy to see that, if $\mathcal R$ is a safe bisimulation, then the smallest equivalence that includes $\mathcal R$, namely $\hat{\mathcal R}$, is also a safe bisimulation.
				\item From Proposition~\ref{prop:union} we know that $\sim_s$ is a safe bisimulation.
				\item Hence we derive that $\hat{\sim}_s$ is a safe bisimulation, and therefore $\hat{\sim}_s\subseteq \ \sim_s$. But since obviously $\sim_s\subseteq\  \hat\sim_s$, we conclude that $\sim_s = \ \hat{\sim}_s$, which means that $\sim_s$ is already an equivalence relation.\\
			\end{itemize}
	
		\item Assume that $a$, $A,  B, B' , p_1,\ldots,p_n, q, q_1, q_2, \ldots ,q_n$, $q'_1, q'_2, \ldots ,q'_n$ are of the type prescribed by the hypothesis of the theorem. 		
			\begin{enumerate}
				\item Assume  $q_1\sim_s q_2$.
					\begin{itemize}
						
						\item Let 
							\[{\mathcal R} = \{(a.q_1,a.q_2)\}\,\cup\, \sim_s.\]
						We show that ${\mathcal R}$ is a safe bisimulation, which is sufficient to prove that  $a.q_1\sim_s a.q_2$.  Note that,  since there is only one component in each of those states, and it is enabled,  we have $\avail(a.q_1)=\avail(a.q_2) = \{1\}$, and $\zeta(a.q_1)=\zeta(a.q_2)= 1$ for any global scheduler $\zeta$.  Given a global scheduler $\zeta$, there is exactly one transition from each of  $a.q_1$ and $a.q_2$: these are $a.q_1\stackrel{a}\rightarrow_\zeta \delta_{q_1}$ and $a.q_2\stackrel{a}\rightarrow_\zeta \delta_{q_2}$, respectively, which mimic each other in the action $a$. Finally, since $q_1 \sim_s q_2$, we have $\delta_{q_1} \sim_s \delta_{q_2}$ and therefore $\delta_{q_1} \, {\mathcal R}\, \delta_{q_2}$.\\[-2mm]

						\item Let
							\[{\mathcal R} = \{(q_1+q,q_2+q)\}\,\cup\, \sim_s.\]
						We show that ${\mathcal R}$ is a safe bisimulation, which is sufficient to prove that  $q_1+q\sim_s q_2+q$. We have that $\avail(q_1+q)=\avail(q_1)\cup\avail(q) = \avail(q_2)\cup\avail(q)=\avail(q_2+q)$, in fact $\avail(q_1) = \avail(q_2)$ since $q_1\sim_s q_2$. Correspondingly, given  a global scheduler $\zeta$, we have either $\zeta(q_1+q)=\zeta(q_2+q)=1$ or $\zeta(q_1+q)=\zeta(q_2+q)=\perp$, since there is only one component. Assume $q_1+q\stackrel{a}\rightarrow_\zeta \mu_1$. We have two cases: either  $q_1\stackrel{a}\rightarrow_\zeta \mu_1$, or $q \stackrel{a}\rightarrow_\zeta \mu_1$. The second case is obvious. In the first case, since $q_1\sim_s q_2$, we have that also $q_2\stackrel{a}\rightarrow_\zeta \mu_2$, with $\mu_1\sim_s\mu_2$. We derive that $\mu_1\,{\mathcal R}\, \mu_2$. For the transitions from $q_2+q$ we proceed in the analogous way. \\[-2mm]

						\item Let
 							\[{\mathcal R} = \{((a)q_1,(a)q_2)\mid q_1 \sim_s q_2\}.\]
 						We show that ${\mathcal R}$ is a safe bisimulation, which is sufficient to prove that, if $q_1\sim_s q_2$, then  $(a)q_1\sim_s (a)q_2$. First observe that $\avail((a)q_1)=\avail(q_1)=\{1\}$ if $q_1$ can make a transition with a label different from $a$, otherwise $\avail((a)q_1)=\emptyset$. The same holds for $(a)q_2$. Since $q_1\sim_s q_2$, we derive that $\avail((a)q_1)=\avail((a)q_2)$. Accordingly, given a  global scheduler $\zeta$, we have that either $\zeta((a)q_1)=\zeta((a)q_2)=1$, or $\zeta((a)q_1)=\zeta((a)q_2)=\perp$. Assume  \smash{$(a)q_1\stackrel{b}\rightarrow_\zeta \mu_1$}. Then we must have $b \neq a$ and $\mu_1= (a)\mu'_1$, where  \smash{$q_1\stackrel{b}\rightarrow_\zeta \mu'_1$}. Since $q_1\sim_s q_2$, we have also \smash{$q_2\stackrel{b}\rightarrow_\zeta \mu'_2$}, with $\mu'_1\sim_s \mu'_2$.
We derive \smash{$(a)q_2\stackrel{b}\rightarrow_\zeta (a) \mu'_2$}, and $(a)\mu'_1\,{\mathcal R}\, (a)\mu'_2$.

						We proceed in an analogous way for the transitions from $(a) q_2$.\\[-2mm]

						\item The case of the parallel operator in components is similar to the case of the parallel operator on systems (see the last item of this proof).\\
						
					\end{itemize}

				\item Assume $q_1\sim_s q'_1,\ldots, q_n\sim_s q'_n$. Let
					\[{\mathcal R} = \{(\sum_i p_i : q_i,\sum_i p_i : q_i')\}\,\cup\, \sim_s.\]
					We show that ${\mathcal R}$ is a safe bisimulation, which is sufficient to prove that  $\sum_i p_i : q_i \sim_s \sum_i p_i : q_i'$. Observe that both $\sum_i p_i : q_i$ and $\sum_i p_i : q_i'$ are enabled, and, since there is only one component, $\avail(\sum_i p_i : q_i)=\avail(\sum_i p_i : q_i')=\{1\}$. Accordingly, if $\zeta$ is a global scheduler, we have $\avail(\sum_i p_i : q_i)=\avail(\sum_i p_i : q_i')=1$. Given a global scheduler $\zeta$,  the only transitions from   $\sum_i p_i : q_i$ and $\sum_i p_i : q_i'$ are  $\sum_i p_i : q_i\stackrel{\tau}\rightarrow_\zeta \probsum{i} p_i \cdot \delta_{q_i}$ and $\sum_i p_i : q_i'\stackrel{\tau}\rightarrow_\zeta \probsum{i} p_i \cdot \delta_{q'_i}$ respectively, which mimic each other in the action $\tau$. It is easy to see that we have $(\sum_i p_i : q_i)\sim_s (\sum_i p_i : q'_i)$, and therefore $(\sum_i p_i : q_i)\,{\mathcal R}\, (\sum_i p_i : q'_i)$. \\

				\item Let
					\[{\mathcal R} = 
						\left\{ 
						\begin{array}{l}
							((A \cup B)\ q_1\parallel \ldots\parallel q \parallel \ldots \parallel q_n, \\
							(A\cup B') \ q'_1\parallel\ldots\parallel q \parallel \ldots \parallel q'_n) \mid \\ 
							(B) \ q_1\parallel\ldots\parallel q_n \ \sim_s \ (B') \ q'_1\parallel \ldots \parallel q'_n 
						\end{array}
	 					\right\}\]
					We show that ${\mathcal R}$ is a safe bisimulation, which is sufficient to prove that, if
					\[(B) \ q_1\parallel\ldots\parallel q_n \ \sim_s \ (B') \ q'_1\parallel \ldots \parallel q'_n\ ,\] 
					\noindent then
					\[(A \cup B)\  q_1\parallel \ldots\parallel q \parallel \ldots \parallel q_n \ \sim_s \ (A\cup B') \ q'_1\parallel\ldots\parallel q \parallel \ldots \parallel q'_n \ .\]
					Observe first that
					\begin{equation*}
						\begin{split}
							\avail((A \cup B)\  q_1\parallel \ldots\parallel q \parallel \ldots \parallel q_n) = \\ 
							\avail((A\cup B') \ q'_1\parallel\ldots\parallel q \parallel \ldots \parallel q'_n)
						\end{split}
					\end{equation*}
					In fact the enabled components  are the same as those of \linebreak $(B)\  q_1\parallel \ldots \parallel q_n$ and of $(B')\  q'_1\parallel \ldots \parallel q'_n$ (modulo the index shift), which are equal by the bisimilarity  hypothesis, plus possibly the component $q$, plus possibly the synchronizations with $q$, which again are equal by the bisimilarity hypothesis, minus the transitions with labels in $A$. Note that the hypothesis ${\mathit  fn}(q)\not\in B\cup B'$ is essential here to guarantee that the component  $q$ is enabled (or disabled) in both sides.

					Let us consider the synchronization case; the interleaving case is just a simplified variant. Given a global scheduler $\zeta$, assume 
					\[\zeta((A \cup B)\  q_1\parallel \ldots\parallel q \parallel \ldots \parallel q_n) = \zeta((A\cup B') \ q'_1\parallel\ldots\parallel q \parallel \ldots \parallel q'_n).\]
					Consider a move from the system in the left-hand side:
					\[(A \cup B)\ q_1\parallel \cdots \parallel q_i\parallel\cdots\parallel q_j \parallel \cdots \parallel q_n \stackrel{i,j:\tau}{\longrightarrow}\delta_{(A)  q_1\parallel \cdots \parallel r_{i} \parallel\cdots\parallel r_j \parallel \cdots \parallel q_n}.\]
					Then we must have
					\[q_i \stackrel{a}\rightarrow \delta_{ r_i}\quad ,  \quad  q_j \stackrel{\bar{a}}\rightarrow \delta_{r_j} \quad, \]
					where one of the $q_i, q_j$ could be $q$, and 
					\[\zeta((A \cup B)\ q_1\parallel \cdots \parallel q_i\parallel\cdots\parallel q_j \parallel \cdots \parallel q_n)=\{i,j\}.\]
					Since $q_i\sim_s q'_i$ and $q_j\sim_s q'_j$ (in case $q_i= q$ then $q'_i=q$ and therefore $q_i\sim_s q'_i$ because $\sim_q$ is reflexive, and analogously for $q_j$),
we must have
					\[q'_i \stackrel{a}\rightarrow \delta_{ r'_i}\quad ,  \quad  q'_j \stackrel{\bar{a}}\rightarrow \delta_{r'_j} \quad, \]
					\noindent for some $r'_i,r'_j$ such that $ \delta_{ r_i}\sim_s  \delta_{ r'_i}$ and $ \delta_{ r_j}\sim_s  \delta_{ r'_j}$.
We derive that
					\[(A \cup B)\ q'_1\parallel \cdots \parallel q'_i\parallel\cdots\parallel q'_j \parallel \cdots \parallel q'_n \stackrel{i,j:\tau}{\longrightarrow}\delta_{(A)  q'_1\parallel \cdots \parallel r_{i}^\prime \parallel\cdots\parallel r_j^\prime \parallel \cdots \parallel q'_n}\ , \]					
					\noindent and, since $\delta_{ r_i}\sim_s  \delta_{ r'_i}$, $ \delta_{ r_j}\sim_s  \delta_{ r'_j}$ imply $ { r_i}\sim_s   { r'_i}$, $  { r_j}\sim_s   { r'_j}$, and by the definition of $\mathcal R$, we conclude
					\[(\delta_{(A)  q_1\parallel \cdots \parallel r_{i} \parallel\cdots\parallel r_j \parallel \cdots \parallel q_n}) \ {\mathcal R}\ (\delta_{(A)  q'_1\parallel \cdots \parallel r_{i}^\prime \parallel\cdots\parallel r_j^\prime \parallel \cdots \parallel q'_n}).\]
					We proceed in an analogous way for the transitions from the right-hand side.
					
			 \end{enumerate}

	\end{enumerate}

\end{proof}

The following property shows that bisimulation is stronger than safe-trace equivalence, like in the standard case.

\begin{proposition}\label{prop:simsimeq}
	If $q_1\sim_s q_2$ then $q_1 \simeq_s q_2$.
\end{proposition}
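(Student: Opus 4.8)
The statement to prove is Proposition~\ref{prop:simsimeq}: safe bisimilarity implies safe-trace equivalence, i.e. $q_1 \sim_s q_2$ implies $q_1 \simeq_s q_2$. This is the "bisimulation refines traces" direction, familiar from the classical theory, and the plan is to adapt that standard argument to the present setting where traces record tags and actions and where schedulers are split into an admissible global/local pair.

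The plan is to show that, given $q_1 \sim_s q_2$, the two sets $\Traces_s(q_1)$ and $\Traces_s(q_2)$ coincide. I would argue by double inclusion, and by symmetry it suffices to show $\Traces_s(q_1) \subseteq \Traces_s(q_2)$. So fix an element $f \in \Traces_s(q_1)$, witnessed by an admissible scheduler pair $(\zeta_1, \xi_1)$ for the automaton of $q_1$. The goal is to build an admissible scheduler pair $(\zeta_2,\xi_2)$ for $q_2$ that induces exactly the same function $f$ on $(\Tset \times \Lset)^\infty$. The key structural fact I would exploit is that $\sim_s$ relates states step-by-step: whenever $q_1 \sim_s q_2$ and we pick a global scheduler that makes the \emph{same} choice on $\sim_s$-related states (as required in Definition~\ref{def:safe_sch}), every transition $q_1 \stackrel{a}{\longrightarrow}_\zeta \mu_1$ is matched by $q_2 \stackrel{a}{\longrightarrow}_\zeta \mu_2$ with $\mu_1 \,{\mathcal R}\, \mu_2$, carrying the \emph{same} tag $t_g$ (since the scheduler value, hence the tag, is shared) and the \emph{same} action $a$. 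Because the matched distributions assign equal mass to each $\hat{\sim}_s$-equivalence class, I can transfer the probability $\psp$ of each finite prefix from the $q_1$-side to the $q_2$-side.

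Concretely, I would proceed by building a relation between paths of $q_1$ and paths of $q_2$ that (i) preserves the tag-action trace $\trace_{ta}$ and (ii) matches probabilities of cones. I would define $\zeta_2, \xi_2$ by coinduction/induction on path length, mirroring the choices of $\zeta_1,\xi_1$ through the bisimulation: at each step, if on the $q_1$-side the chosen transition has tag $t_g$ and action $a$, the bisimilarity guarantees a matching transition on the $q_2$-side with the same $t_g$ and $a$, and I select it. Lemma~\ref{lem:history} lets me treat the relevant schedulers as depending only on the current ($\sim_s$-)state, which streamlines the matching. I would then prove, by induction on the length of a finite trace $t \in (\Tset \times \Lset)^*$, that the total probability of the set of finite paths of $q_1$ with $\trace_{ta} = t$ equals the corresponding total probability on the $q_2$-side; this uses that $\mu_1 \,{\mathcal R}\, \mu_2$ means equal mass on each equivalence class, summed over the finitely many branches of a probabilistic choice. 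Passing to complete paths and to the measure $\psp$ on cones then yields $f \in \Traces_s(q_2)$.

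The main obstacle I anticipate is the bookkeeping needed to make the scheduler-transfer genuinely \emph{admissible} on the $q_2$-side, rather than merely well-defined. The definition of $f$ quantifies over admissible $(\zeta,\xi)$, so the witness I construct for $q_2$ must respect both the global constraint (same $\view$ implies same choice, where $\view$ sifts out secrets via $\sift$) and the local constraints (each component's choice depends only on its own projection $\sigma_{\upharpoonright i}$). I would argue that these properties are inherited: since $\sim_s$ preserves $\avail$ and the tag/observable structure, $\sim_s$-related paths on the two sides have identical $\view$ and identical per-component projections of the relevant kind, so copying $\zeta_1,\xi_1$'s decisions through the bisimulation cannot violate admissibility. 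A secondary technical point is handling the probabilistic branching correctly: a single $\stackrel{t_g:a}{\longrightarrow}$ transition may lead to a distribution $\mu = \probsum{k} p_k \cdot \delta_{q_k}$, and I must match the \emph{whole} distribution class-by-class rather than branch-by-branch, which is exactly what the condition $\mu_1 \,{\mathcal R}\, \mu_2$ supplies; I expect this to go through cleanly once the inductive invariant is stated as an equality of cone-probabilities aggregated over $\hat{\sim}_s$-classes.
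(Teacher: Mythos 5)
Your argument is correct in outline, but it follows a genuinely different route from the paper's. The paper never constructs a matching scheduler: it characterizes $\Traces_{s}$ coinductively as the greatest fixed point of an operator ${\mathscr T}_{\textrm{Tr}}$, introduces a second operator ${\mathscr T}_{\textrm{Treq}}$ whose pre-fixed points (when they relate two states at every trace) force equality of safe traces, and then verifies that the constant lifting ${\mathscr R}(t) = {\sim_s}$ is such a pre-fixed point, letting Knaster--Tarski finish the job; the local-scheduler choices are absorbed into the existentials ``there exists $f'_{q'}\in F(q')$'' of the coinductive definition, so admissibility is never transferred explicitly. Your scheduler-transfer proof is more operational: it exhibits a concrete admissible witness $(\zeta_2,\xi_2)$ and proves equality of the induced trace functions by induction on the length of the tag--action prefix, which buys constructivity and forces you to confront the admissibility constraints that the paper's ``it is easy to see'' steps leave implicit. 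Two cautions. First, Lemma~\ref{lem:history} concerns the schedulers quantified inside Definition~\ref{def:safe_sch} (those constrained to agree on paths ending in the related states), not your witness $(\zeta_1,\xi_1)$, which is an arbitrary admissible and possibly history-dependent scheduler; to use the bisimulation clause at a step where $\zeta_1(\sigma_1)=t_{g}$, instantiate clause~2 with an admissible $\zeta$ that picks $t_{g}$ uniformly at $q_1$ and $q_2$ (possible since $\avail(q_1)=\avail(q_2)$), which still yields a matching transition with the same tag and action. Second, the genuine crux --- which you name but should not underestimate --- is that $\mu_1\,{\mathcal R}\,\mu_2$ only equates masses per $\hat{\sim}_s$-class, while $(\zeta_1,\xi_1)$ may schedule distinct states of the same class differently, so the continuation attached to a class on the $q_1$-side is in general a convex mixture of trace functions of its members, and your invariant must show that the $q_2$-side can realize the same class-wise mixtures. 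The paper's proof has exactly the same soft spot (its own footnote concedes that the condition it verifies is only sufficient), so your sketch is at a comparable level of rigor, but a fully detailed version of either argument needs this point spelled out.
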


\begin{proof}

For this proof, it is convenient to consider a coinductive approximation of safe-trace equivalence. We start with a coinductive characterization of the safe traces. This in itself is not a key notion of the proof, but will help understanding the definition of the approximation.

Given a TPA $M=(\Qset,\Tset,\Lset,\qi,\vartheta)$, consider the operator
	\[{\mathscr T}_{\textrm Tr}: (\Qset\rightarrow {\mathcal P}(\cpaths(M)\rightarrow [0,1]))\rightarrow (\Qset\rightarrow {\mathcal P}(\cpaths(M)\rightarrow [0,1]))\]
	\noindent defined as:
\[{\mathscr T}_{\textrm Tr}(F)(q) = \begin{array}[t]{l}\{\ f: (\Tset\times\Lset)^\infty \rightarrow [0,1] \mid
\\ \qquad\text{if } q\not\rightarrow \text{ then } f(\epsilon) =1,
 \text{ else }  f(\epsilon) = 0 \text{ and, } \\[1mm]
 	\qquad \text{ for all } t_{g}\in \Tset, a\in \Lset,\\[1mm]
	\qquad\bullet \text{ if there exists } \mu \text{ s.t. } q\stackrel{t_{g}:a}{\longrightarrow}\mu, \text{ then for each } q'\in \Qset \\[1mm]
	\qquad \text{ there exists } f'_{q'}\in F(q') \text{ s.t. for every } t\in  (\Tset\times\Lset)^\infty, \\[1mm]
	\qquad f (t_{g}:a\cdot t) = \sum_{q'} \mu(q') f'_{q'}(t) \\[1mm]
	\qquad \bullet \text{ if }  q\ \ \ \not\stackrel{\!\!\!t_{g}:a}{\!\!\!\longrightarrow}, \text{ then } f(q)(t_{g}:a\cdot t) =0 \  \  \ \}	
	\end{array}
	\]
	\noindent where $q\not\rightarrow$ means that for all $t_{g}\in \Tset, a\in \Lset$, we have $q\ \ \ \not\stackrel{\!\!\!\!\!\!\!t_{g}:a}{\!\!\!\longrightarrow}$.

Consider the ordering $\sqsubseteq$ on $\Qset\rightarrow {\mathcal P}(\cpaths(M)\rightarrow [0,1])$ given by
	\[ F\sqsubseteq F'\qquad \mbox{if and only if} \qquad \text{for all }q\in \Qset, F(q)\subseteq F'(q)\]
Clearly $(\cpaths(M)\rightarrow [0,1]),\sqsubseteq)$ is a complete lattice and ${\mathscr T}_{\textrm Tr}$ is monotonic, so by the theorem of Knaster-Tarski it has a greatest fixed point, which coincides with $\Traces_{s}$.

Following the definition of ${\mathscr T}_{\textrm Tr}$, we  now give a coinductive approximation of the equivalence relation induced by $\Traces_{s}$. Given a TPA $M=(\Qset,\Tset,\Lset,\qi,\vartheta)$, consider the operator
\[{\mathscr T}_{\textrm Treq}: (\cpaths(M)\rightarrow \Qset\times \Qset)\rightarrow (\cpaths(M)\rightarrow \Qset\times \Qset)\]
\noindent defined as:
\[q_1\  {\mathscr T}_{\textrm Treq} ({\mathscr R})(\epsilon) \ q_2 \quad \stackrel{\textrm def}{\Leftrightarrow} \quad (q_1\not\rightarrow \ \ \Leftrightarrow \ q_2 \not \rightarrow)
\]
and 	
\begin{equation*}
	\begin{split}
		q_1\  {\mathscr T}_{\textrm Treq} ({\mathscr R})(t_{g}:a\cdot t) \ q_2 \quad \stackrel{\textrm def}{\Leftrightarrow} \quad \quad \quad \quad \quad \quad \quad \quad \quad \quad \quad \quad \quad \quad \quad \quad \quad \quad \quad \quad \quad \\
		\left (
		\begin{array}{c}
			q_1\stackrel{t_{g}:a}{\longrightarrow}\mu_1 \Rightarrow \exists \mu_2. (q_2\stackrel{t_{g}:a}{\longrightarrow}\mu_2\ \wedge\  \mu_1\ {\mathscr R}(t) \ \mu_2)\\[1mm]
			\wedge\\[1mm]
			q_2 \stackrel{t_{g}:a}{\longrightarrow}\mu_2 \Rightarrow \exists \mu_1. (q_1\stackrel{t_{g}:a}{\longrightarrow}\mu_1\ \wedge\  \mu_1 \ {\mathscr R}(t)\  \mu_2)\\[1mm]
		\end{array}
		\right )
	\end{split}
\end{equation*}
	Consider the ordering $\preceq$ on $\cpaths(M)\rightarrow \Qset\times \Qset$ given by
	\[ {\mathscr R}\preceq {\mathscr R}' \qquad \mbox{if and only if} \qquad \text{for all }t\in \cpaths(M), \ {\mathscr R}(t)\subseteq {\mathscr R}'(t)\]
Clearly ($\cpaths(M)\rightarrow \Qset\times \Qset, \preceq)$ is a complete lattice and ${\mathscr T}_{\textrm Treq}$ is monotonic, hence by the Knaster-Tarski theorem it has a greatest fixed point, which also coincides with the greatest pre-fixed point, i.e. the greatest relation ${\mathscr R}$ such that ${\mathscr R}\preceq{\mathscr T}_{\textrm Treq} ({\mathscr R})$. Using the definition of ${\mathscr T}_{\textrm Tr}$ it is easy to see that, if ${\mathscr R}$ is a pre-fixed point, and $q_1\ {\mathscr R}(t) \ q_2$ for all $t\in\cpaths(M)$,  then $\Traces_{s}(q_1)=\Traces_{s}(q_2)$, i.e. $q_1\simeq_{s} q_2$. In fact, if $F(q_1)=F(q_2)$, and $q_1\ {\mathscr R}(t)\  q_2$ for all $t\in\cpaths(M)$, and ${\mathscr R}$ is a pre-fixed point of  ${\mathscr T}_{\textrm Treq}$, then ${\mathscr T}_{\textrm Tr} (F)(q_1) = {\mathscr T}_{\textrm Tr} (F)(q_2)$\footnote{Note that the condition is only sufficient, because $\sum_{q'} \mu_1(q') f'_{q' 1}(t) = \sum_{q'} \mu_2(q') f'_{q' 2}(t)$ may hold even if $ \mu_1$ and $\mu_2$ assign different probability to some equivalence class of $\hat{{\mathscr R}(t)}$.}. Consider now a safe bisimulation $\mathcal R$, and let us lift it to a constant function ${\mathscr R}: \cpaths(M)\rightarrow \Qset\times \Qset$ defined as ${\mathscr R}(t) = {\mathcal R}$. It is easy to see that ${\mathscr R}$ is a pre-fixed point of ${\mathscr T}_{\textrm Treq}$\footnote{Note that the converse does not hold, i.e. ${\mathscr R}$ could be a pre-fixpoint of ${\mathscr T}_{\textrm Treq}$ even if $\mathcal R$ is not a bisimulation. This is because $\mathcal R$ is sensitive to the (nondeterministic) branching structure, while ${\mathscr R}$ is not.}.

\commentMS{Comment on the footnote: I am not sure anymore that ${\mathscr R}$ is not sensitive to the branching structure. Need to check.}Assume now  $q_1\ {\mathcal R} \ q_2$. We trivially derive that $q_1\ {\mathscr R}(t) \ q_2$ for all $t\in\cpaths(M)$, from which we conclude $q_1\simeq q_2$.

\end{proof}

Like in the standard case, the vice-versa does not hold, and safe-trace equivalence is not a congruence\footnote{This is because we are considering the \emph{complete} traces.}.

\section{Safe nondeterministic information hiding}
\label{section:nd-ih}

In this section we define the notion of information hiding under the most general hypothesis that the nondeterminism
is handled partly in a demonic way and partly in an angelic way.
We assume that the demonic part is in the realm of the global scheduler, while the angelic part is controlled by the local scheduler.
The motivation is that in a protocol the local components can be thought of as programs running locally in a single machine, and
locally predictable and controllable, while
the network can be subject to attacks that make the interactions unpredictable.

We recall that, in a purely  probabilistic setting, the absence of leakage, such as noninterference and strong anonymity, is expressed as follows (see for instance \cite{Bhargava:05:CONCUR}).
Given a purely probabilistic automaton $M$, and a sequence $\tilde{a}=a_1a_2\ldots a_n$, let $\psp_M([\tilde{a}])$ represent the probability measure of all complete paths with trace $\tilde{a}$ in $M$. Let $S$ be a protocol containing a variable action $\mathit{secr}$, and let $s$ be secret actions. Let $M_s$ be the automaton corresponding  to $S[^s/_\mathit{secr}]$. Define $\mathit{Pr}(\tilde{a}\mid s)$ as $\psp_{M_s}([\tilde{a}])$. Then $S$ is leakage-free if for every observable trace $\tilde{a}$ , and for every secret $s_1$ and $s_2$, we have $\mathit{Pr}(\tilde{a}\mid s_1) =\mathit{Pr}(\tilde{a}\mid s_2)$.

In a purely nondeterministic setting, on the other hand, the absence of leakage has  been characterized  in the literature by the property
$S[^{s_1}/_\mathit{secr}] \cong S[^{s_2}/_\mathit{secr}]$,
where $\cong$ is an equivalence relation like trace equivalence, or bisimulation. As we have argued in the introduction, this definition assumes an angelic interpretation of nondeterminism.

We want to combine the above notions so to cope with both probability and nondeterminism. Furthermore, we want to extend it to the case in which part of the nondeterminism is interpreted demonically.
Let us first introduce some notation.

 Let $S$ be a system containing  a variable action $ \mathit{secr}$. Let $s$ be a secret action. Let $M_s$ be the TPA associated to $S[^s/_\mathit{secr}]$ and let $(\zeta,\xi)$ be a compatible pair of global and local schedulers for $M_s$.   The probability of an observable  trace $\tilde{a}$ given $s$ is defined as
$ \mathit{Pr}_{\zeta,\xi}(\tilde{a} \mid s) = \psp_{M_s,\zeta,\xi}([\tilde{a}]).$

The global nondeterminism is interpreted demonically, and therefore we need to ensure that the conditional of an observable, given the two secrets, are calculated with respect to the same global scheduler.
On the other hand, the local scheduler is interpreted angelically, and therefore we can compare  the conditional probabilities generated by the two secrets as sets under different schedulers. In other words, we have the freedom to  match conditional probability  from the first set with one of the other set, without requiring the local scheduler to be the same.

Either angelic or demonic, we want to avoid the  clairvoyant schedulers, i.e. a scheduler  should not be able to use the secret information  to achieve  its  goals.  For this purpose, we require both the global and the local scheduler to be admissible.

\begin{definition} A system  is leakage-free if, for every \revision{pair of} secrets $s_1$ and $s_2$, every admissible global scheduler $\zeta$, and every observable trace $\tilde a$,
\commentMS{There is a bit of an issue: how we ensure that we can have the same global scheduler on both sides -- something to think for the final version}
\begin{equation*}
	\begin{split}
		\{\mathit{Pr}_{\zeta,\xi}(\tilde{a} \mid s_1)\mid \xi \text{ is admissible and compatible with } \zeta\} = \\
   	\{\mathit{Pr}_{\zeta,\xi}(\tilde{a} \mid s_2)\mid \xi \text{ is admissible and compatible with } \zeta\}
  \end{split}
\end{equation*}
  \end{definition}

The safe equivalences defined in Section~\ref{section:equivalences} imply the absence of leakage:

\begin{theorem}\label{theo:abs_leak}
	Let $S$ be a system with a variable action $\mathit{secr}$ and assume $S[^{s_1}/_\mathit{secr}] \simeq_s  S[^{s_2}/_\mathit{secr}]$ for every pair of secrets $s_1$ and $s_2$. Then $S$ is leakage-free.
\end{theorem}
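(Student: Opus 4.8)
The plan is to show that safe-trace equivalence $\simeq_s$ implies leakage-freedom by establishing a correspondence between the safe traces of $S[^{s}/_\mathit{secr}]$ and the conditional probabilities $\mathit{Pr}_{\zeta,\xi}(\tilde{a}\mid s)$ that appear in the definition of leakage. The central observation is that a safe trace records not only the sequence of actions but also the sequence of tags, and moreover, by the admissibility condition on global schedulers, the tags and the observable (sifted) actions are exactly the information a global scheduler $\zeta$ is allowed to depend on. So the hypothesis $S[^{s_1}/_\mathit{secr}]\simeq_s S[^{s_2}/_\mathit{secr}]$ gives us a very tight handle on precisely the quantities being compared.

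First I would fix two secrets $s_1,s_2$, an admissible global scheduler $\zeta$, and an observable trace $\tilde a$, and take an arbitrary admissible local scheduler $\xi$ compatible with $\zeta$. The pair $(\zeta,\xi)$ is admissible, so it witnesses membership of some tag-action function $f_1 \in \Traces_s(S[^{s_1}/_\mathit{secr}])$. By the hypothesis $\Traces_s(S[^{s_1}/_\mathit{secr}]) = \Traces_s(S[^{s_2}/_\mathit{secr}])$, the same function $f_1$ belongs to $\Traces_s(S[^{s_2}/_\mathit{secr}])$, hence there is an admissible pair $(\zeta',\xi')$ on $M_{s_2}$ realizing it. The key lemma I would need to prove is that the global constituent realizing $f_1$ on the $s_2$ side can be taken to be the \emph{same} $\zeta$: this is where admissibility is essential, since $\sift$ hides the secret actions and $\view$ depends only on enabled tags and sifted actions, so $\zeta$ cannot distinguish the two instantiations at the level of views, and its choices on $M_{s_1}$ and $M_{s_2}$ are forced to agree on corresponding views. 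I would make this precise by arguing that the views generated under $(\zeta,\xi)$ on $M_{s_1}$ and under the realizing pair on $M_{s_2}$ coincide because secret actions are sifted to $\tau$.

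Next I would connect the tag-action safe traces to the observable trace probabilities. The function $\mathit{Pr}_{\zeta,\xi}(\cdot\mid s)$ is obtained from the safe-trace function $f$ by summing over all tag sequences and over all secret (non-observable) actions that project, after removing tags and secrets, onto a given observable trace $\tilde a$. Since the safe-trace functions coincide as whole functions over $(\Tset\times\Lset)^\infty$, their images under this (secret- and tag-forgetting) marginalization also coincide. Concretely, I would define the projection sending $f$ to $\tilde a \mapsto \sum\{f(t) \mid t \text{ projects to } \tilde a\}$ and observe it maps equal safe-trace sets to equal observable-probability sets. This yields, for the chosen $\xi$, an admissible $\xi'$ compatible with the \emph{same} $\zeta$ such that $\mathit{Pr}_{\zeta,\xi}(\tilde a\mid s_1) = \mathit{Pr}_{\zeta,\xi'}(\tilde a\mid s_2)$, establishing one inclusion of the required set equality; the reverse inclusion is symmetric.

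The main obstacle I anticipate is the lemma forcing the global scheduler to be identical on both sides. The definition of $\Traces_s$ existentially quantifies over an admissible pair $(\zeta,\xi)$, so a priori the scheduler realizing $f_1$ on the $s_2$ automaton need not be the $\zeta$ we started with. I expect to resolve this by exploiting that $\view$ is invariant under the substitution $[^{s_1}/_\mathit{secr}]$ versus $[^{s_2}/_\mathit{secr}]$ precisely because secret actions are sifted away, so any admissible global scheduler is in effect a function of data common to both $M_{s_1}$ and $M_{s_2}$; one can therefore canonically transport $\zeta$ from one automaton to the other and verify it still realizes $f_1$ together with a suitable admissible local scheduler. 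Care is also needed to confirm that this transported local scheduler remains admissible, i.e. respects the per-component projection constraint, which follows because the $i$-projections $\sigma_{\upharpoonright i}$ are unaffected by whether the secret was $s_1$ or $s_2$ once restricted to the observable behavior of the relevant components. Once this transport lemma is in place, the rest is the routine marginalization argument sketched above, and the statement for $\simeq_s$ follows; the case of $\sim_s$ is then immediate via Proposition~\ref{prop:simsimeq}.
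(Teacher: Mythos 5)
Your core argument is the same as the paper's. The paper's proof defines an abstraction operator $\beta$ that marginalizes each safe-trace function $f$ over everything except the (tagged) observable part of the trace, observes that equal sets of safe traces have equal images under $\beta$, and identifies the resulting equality with leakage-freedom; your ``secret- and tag-forgetting marginalization'' is exactly this step. Where you genuinely add something is the transport lemma: you correctly notice that $\Traces_s$ quantifies \emph{existentially} over the whole admissible pair $(\zeta,\xi)$, whereas leakage-freedom fixes one admissible global scheduler $\zeta$ and compares the two sets over compatible local schedulers only, so equality of safe-trace sets does not immediately yield the per-$\zeta$ statement. The paper's proof passes over this silently (the source even contains a suppressed author comment flagging exactly this issue), so your version is the more honest of the two.

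That said, the transport lemma is the real content of your proof and your sketch of it is not yet a proof, on two counts. First, $\view$ records not only the sifted actions and the tags but also $\avail(q_k)$ at every state along the path; $\sift$ erasing secret actions does not by itself make the views of corresponding paths in $M_{s_1}$ and $M_{s_2}$ coincide unless you also argue that the enabled-component sets agree at corresponding states, and safe-trace equivalence does not obviously force this (it records which tags fire, not which are available). Second, even once $\zeta$ is canonically transported to $M_{s_2}$, the hypothesis only tells you that \emph{some} admissible pair realizes $f_1$ there; you still have to show that the transported $\zeta$ together with \emph{some} compatible admissible local scheduler realizes $f_1$, i.e. that the realizing global constituent can be replaced by yours. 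Neither point is fatal, but both need explicit arguments before the ``routine marginalization'' can be invoked.
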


\begin{proof}

Consider the abstraction operator $\beta$ from safe traces to pairs of the form (tagged observable trace, probability) defined as:
	\[
	({\tilde a}, p)\in  \beta(F) \quad \stackrel{\textrm def}{\Leftrightarrow} \quad
	p \ \ =
	\sum_{\begin{array}{c}
	f\in F\\
	t_{\upharpoonright \Tset\times{\mathcal O}}= \tilde{a}
	\end{array}
	} f(t)
	\]
It is easy to see that $\beta$ is an abstraction, i.e. if $F_1 = F_2$ then $\beta(F_1)=\beta(F_2)$. Therefore, $S[^{s_1}/_\mathit{secr}] \simeq_{s}  S[^{s_2}/_\mathit{secr}] $ implies $\beta(\Traces_{s}(S[^{s_1}/_\mathit{secr}])=\beta(\Traces_{s}(S[^{s_2}/_\mathit{secr}])$. Finally, the latter holds (for every pair of secrets $s_1$, $s_2$) if and only if $S$ is leakage-free. 

\end{proof}

Note that the vice versa is not true, i.e. it is not the case that the leakage-freedom of $S$ implies $S[^{s_1}/_\mathit{secr}] \simeq_s S[^{s_2}/_\mathit{secr}]$. This is because in the definition of safe trace equivalence we compare the set of probability functions (determined by the schedulers) on traces,  while in the definition of leakage-freedom we compare the set of probabilities of each trace, which may come from different functions. This additional degree of freedom generated by the local scheduler  helps the system to obfuscate the secret, and  provides further  justification for  the adjective \qm{angelic}
for  the local nondeterminism.

From the above theorem and  from  Proposition~\ref{prop:simsimeq}, we also have the following corollary (with the same  premises as the previous theorem):
\begin{corollary}\label{coro:bisim}
If $S[^{s_1}/_\mathit{secr}] \sim_s S[^{s_2}/_\mathit{secr}] $ for every pair of secrets $s_1$ and $s_2$, then $S$ is leakage-free.
\end{corollary}

\section{Related work}
\label{section:related-work-equivalences}

The problem of deriving correct implementations from secrecy specifications has received a lot of attention already. One of the first works to address the problem was ~\cite{Jacob:89:SSP}, which showed that the fact that an implementation is a consistent refinement with respect to a specification  does not imply that the (information-flow) security properties  are preserved. More recently,~\cite{Alur:06:ICALP} has proposed a notion of secrecy-preserving refinement, and a simulation-based technique for proving that a system is the refinement of another. \cite{Clarkson:08:CSF} argues that important classes of security policies such as noninterference and average response time cannot be expressed by traditional notion of \emph{properties}, which consist  of sets of traces, and proposes to use \emph{hyperproperties} (sets of properties) instead. ~\cite{Dubreil:09:TAC} addresses the problem of supervisory control, i.e. given a critical system $G$ that may leak confidential information, how to design a controller $C$ so that the system $G|C$ dos not leak. An effective algorithm is presented to compute the most permissible controller such that the system is still opaque with respect to a secret.

Concerning angelic and demonic nondeterminism, there are various works which investigate their relation and possible combination. In ~\cite{Back:92:TCS} it is shown that angelic and demonic nondeterminism are dual. ~\cite{Martin:2007:SCP} uses multi-relations to express specifications involving both angelic and demonic nondeterminism. There are two kinds of agents, demonic and angelic ones, and there is the point of view of the internal system and the one of the external adversary. 

\cite{Morgan:09:SCP} considers the problem of refining specifications while preserving ignorance. While the focus is on the reduction of demonic nondeterminism of the specification, the hidden values are treated essentially in a angelic way.

The problem of the leakage caused by full-information schedulers has also been investigated  in the literature. \cite{Canetti:06:WODES} and \cite{Canetti:06:DISC} work in the framework of probabilistic automata and introduce a restriction on the scheduler to the purpose of making them suitable to applications in security protocols. Their approach is based on dividing the actions of each component of the system in equivalence classes (\emph{tasks}). The order of execution of different tasks is decided in advance by a so-called \emph{task scheduler}, which is history-independent and therefore  much more restricted than our notion of global scheduler. \cite{Andres:10:TechRep} proposes a notion of system and admissible scheduler very similar to our notion of system and admissible global scheduler. The main difference is that in that work the components are deterministic and therefore there is no notion of local scheduler.

The work in \cite{Chatzikokolakis:07:CONCUR,Chatzikokolakis:09:FOSSACS} is similar to ours in spirit, but in a sense  \emph{dual}  from a technical point of view. Instead of defining a restriction on the class of schedulers,  the authors a way to specify that a choice is transparent to the scheduler. They achieve this by introducing labels in process terms, used to represent both the states of the execution tree  and the next action or step to be scheduled. They make two states indistinguishable to schedulers, and hence the choice between them private, by associating to them the same label. We believe that every scheduler in our formalism can be expressed in theirs, too. In \cite{Chatzikokolakis:09:FOSSACS} the authors consider the problem of defining a safe version of bisimulation for expressing security properties. They call it \emph{demonic bisimulation}. The main difference with our work is that we consider a combination of angelic and demonic  nondeterminism, and this affects also the definition of bisimulation. Similarly, our definition of leakage-freedom reflects this combination. In  \cite{Chatzikokolakis:09:FOSSACS} the aspect of angelicity is not considered, although they may be able to simulate it with an appropriate labeling.

The fact that full-information  schedulers are unrealistic has also been observed in fields other than security.  First attempts used restricted schedulers in order to obtain rules for compositional reasoning~\cite{deAlfaro:01:CONCUR}. The justification for those restricted schedulers is the same as for ours, namely, that not all information is available to all entities in the system. That work considers a synchronous parallel composition, however, so the setting is rather different from ours. Later on, it was shown that model checking is unfeasible in its general form for the restricted schedulers in~\cite{deAlfaro:01:CONCUR} (see~\cite{Giro:07:FORMATS} and, more recently,~\cite{Giro:09:SBMF}). Despite of undecidability, not all results concerning such schedulers have been negative as, for instance, the technique of partial-order reduction can be improved by assuming that schedulers can only use partial information~\cite{Giro:09:CONCUR}.

\section{Chapter summary and discussion}
\label{section:conclusion-safe-equivalences}

In this chapter we have observed that some  definitions of security properties based on process equivalences may be too naive, in the sense that they assume the scheduler to be angelic, and, worse yet, to achieve its angelic strategy by peeking at the secrets. We have presented a formalism allowing us to specify a demonic constituent of the scheduler, possibly in collusion with the attacker, and an angelic one, under the control of the system. We have also considered restrictions on the schedulers to limit the power of what they can see, and extended to our nondeterministic  framework the   (probabilistic) information-hiding properties like non interference and strong anonymity. We then have defined \qm{safe} equivalences. In particular we have defined the notions of safe trace equivalence and safe bisimilarity, and we have shown that the latter is still a congruence. Finally, we have shown that the safe equivalences can be used to prove information-hiding properties.

For the future, we plan to extend our framework to quantitative notions of information leakage, possibly based on information theory. We also plan to implement model checking techniques to verify information hiding properties for our kind of systems.

\chapter{Conclusion}
\label{chapter:conclusion}
\mscite{To succeed, jump as quickly at opportunities as you do at conclusions.}{Benjamin Franklin}
In this thesis we concentrated on the problem of information hiding in the scenarios of interactive systems, statistical disclosure control, and the refinement of specifications. We started by giving a general overview of the field of information hiding, including a brief description of its historical development. We then discussed the main differences between the qualitative and the quantitative approaches to information hiding, and we introduced the background for the three main topics covered in this thesis: information flow (exemplified by anonymity), statistical disclosure control, and the refinement of specifications into implementations.

Having adopted the quantitative approach, we then continued to discuss the rationale of the use of information theory for quantitative information flow. We reviewed several formulations of entropy, with a special focus on Shannon entropy and min-entropy, and the related concept of mutual information and its interpretation in terms of attacks and information leakage.

We then proceeded to present the technical contributions of the thesis. We started with the scenario of interactive systems, i.e systems where secrets and observables can alternate and influence each other during the computation. In this type of systems the traditional information theoretical approach that makes use of classic memoryless channels, and the related concepts of mutual information and classical capacity, no longer works. We proposed to model interactive systems with a richer notion of channels, namely channels with memory and feedback. In this more general model it is possible to split the statistical correlation between secrets and observables (that correspond to the input and the output of the channel, respectively) into two causal components: the \emph{directed information from input to output} represents the flow of information through the channel, and the \emph{directed information from output to input} corresponds to the way the input is influenced by the output via feedback. We showed that the directed information is the correct measure of leakage in interactive systems, and so is the concept of directed capacity if we are interested in the worst case leakage. We also proved that our model is a proper extension of the classic one: in the absence of feedback (i.e interaction) our model collapses into the simpler classic model. Finally, we showed that the capacity of channels with memory and feedback is a continuous function of a pseudometric based on the Kantorovich metric. 

With respect to interactive systems, as future work we want to explore algorithms to calculate the leakage and the maximum leakage using our model. This is a rather challenging problem, given the exponential growth of reaction functions (a technical aspect of our model) and the quantification of possibly infinite many reactors (also another technicality of our model). We also want to explore other notions of entropy as a measure of leakage, as for instance the min-entropy and the corresponding notion of one-try attack.

In the sequence we moved to the problem of statistical disclosure control. We considered the problem of preserving the privacy of individuals participating in a database that allows statistical queries to be posed by users. Using differential privacy, databases that are similar, i.e differ by the contents of at most one row, should give statistically \qm{similar} answers to the same query. This is achieved by introducing noise in the query mechanism to blur the link between the reported answer and the data about individuals. We proposed a model where the differential privacy mechanism can be split into two channels in cascade, in the case the randomization mechanism is oblivious (i.e it only depends on the real answer to the query, and not on the database itself). The first channel corresponds to the query, and it maps the database to the real answer to the query. The second channel corresponds to the oblivious randomization mechanism, and it takes the real answer and maps it to a randomized answer to be reported to the user. In this scenario we see the \emph{leakage} as the correlation between the reported answer and the database, and the \emph{utility} as the correlation between the real answer and the reported one. We used this model to derive bounds for the leakage and utility based on the level of differential privacy designed for the system (namely the parameter $\epsilon$). As a measure of leakage we adopted the min-entropy leakage, and for utility we used the notion of gain functions, focusing on the binary gain function, which is strictly related to min-entropy leakage and Bayes risk. We used the graph structure on the input domain derived from the adjacency relation on databases to derive bounds for the maximum min-entropy leakage of channels. We showed that if the graph structure is distance-regular or $\vtt$ (which is always the case for the database domain), then we can derive bounds for the maximum min-entropy leakage associated to the channel. Finally, we found a way of constructing a utility-maximizing randomization function that respects differential privacy for a special class of graph structures.

In relation to statistical databases, as future work we intend to extend our results to other types of gain functions 
\revision{than} the binary one, namely gain functions that take into consideration a notion of distance between answers. We also want to investigate whether or not non-oblivious randomization mechanisms can be used to improve utility while still preserving differential privacy.

The last scenario we investigated in the thesis was the use of equivalence relations to specify security guarantees, which is a common approach when refining implementations into specifications. Under this perspective, two systems (e.g a specification and its implementation) are considered equivalently secure if they respect some equivalence relation defined to capture the intended security guarantee. Such equivalences include, for instance, trace-equivalence and bisimilarity. We showed that a naive use of these equivalences can lead to unrealistic assumptions about the scheduler: (i) that the scheduler is angelic, i.e that it will help to keep the secret information from the attacker; and (ii) that the scheduler can peek at the secrets to make its choices. Those assumptions are not safe in practical cases and, therefore, we proposed a model that deals with the problem. We introduced a formalism that explicitly separates the demonic and angelic parts of the scheduler, and we imposed restrictions to limit the power of the scheduler with respect to what it can see. Namely, the scheduler cannot peek at the secrets to make its choices. We then defined notions of safe-equivalences (safe trace equivalence and safe bisimilarity) and we showed that the latter is a congruence. Finally, we showed that safe equivalences can be used to prove information hiding properties.

As future work regarding safe equivalences, we want to extend our model to quantitative notions based on information theory, and we want to use model checking to certify information hiding properties for our systems.

As final remark, we believe that information hiding is a very promising field of research, and we are excited and thrilled by the promising challenges that lie ahead.

\bibliographystyle{alpha}
\bibliography{msalvim}

\end{document}